\newtheorem{theorem}{Theorem}[section]
\newtheorem{lemma}[theorem]{Lemma}
\newtheorem{corollary}[theorem]{Corollary}
\newtheorem{proposition}[theorem]{Proposition}
\newtheorem{definition}{Definition}[section]
\newtheorem{remark}{Remark}[section]
\newcommand{\abs}[1]{\left| #1 \right|} 
\newcommand{\Abs}[1]{\left\lVert #1 \right\rVert} 
\newcommand{\ket}[1]{\left| #1 \right>} 
\newcommand{\bra}[1]{\left< #1 \right|} 
\newcommand{\braket}[2]{\left< #1 \vphantom{#2} \middle| #2 \vphantom{#1} \right>} 
\newcommand{\ceil}[1] {\left\lceil #1 \right\rceil}
\newcommand{\floor}[1] {\left\lfloor #1 \right\rfloor}
\newcommand{\tr} {\operatorname{tr}}
\newcommand{\poly} {\operatorname{poly}}
\newcommand{\polylog} {\operatorname{polylog}}
\newcommand{\rank} {\operatorname{rank}}
\newcommand{\supp} {\operatorname{supp}}
\newcommand{\spanspace} {\operatorname{span}}
\newcommand{\BQP} {\mathsf{BQP}}
\newcommand{\QSZK} {\mathsf{QSZK}}
\newcommand{\footremember}[2]{%
    \footnote{#2}
    \newcounter{#1}
    \setcounter{#1}{\value{footnote}}%
}
\begin{document}

    \title{New Quantum Algorithms for Computing \\ Quantum Entropies and Distances}
        \author{
            Qisheng Wang \footremember{1}{Qisheng Wang is with the Graduate School of Mathematics, Nagoya University, Japan. Part of the work was done when the author was at the Department of Computer Science and Technology, Tsinghua University, Beijing, China. E-mail: \url{QishengWang1994@gmail.com}.}
            \and Ji Guan \footremember{2}{Ji Guan is with the Key Laboratory of System Software (Chinese Academy of Sciences) and State Key Laboratory of Computer Science, Institute of Software, Chinese Academy of Sciences, China. E-mail: \url{GuanJi1992@gmail.com}.}
            \and Junyi Liu \footremember{3}{Junyi Liu is with the Joint Center for Quantum Information and Computer Science, University of Maryland. Part of the work was done when the author was at the State Key Laboratory of Computer Science, Institute of Software, Chinese Academy of Sciences, China and the University of Chinese Academy of Sciences, China. E-mail: \url{JunyiLiu1994@gmail.com}.}
            \and Zhicheng Zhang \footremember{4}{Zhicheng Zhang is with Centre for Quantum Software and Information, University of Technology Sydney, Australia. E-mail: \url{iszczhang@gmail.com}.}
            \and Mingsheng Ying \footremember{5}
            {Mingsheng Ying is with the Centre for Quantum Software and Information, University of Technology Sydney, Australia. E-mail: \url{Mingsheng.Ying@uts.edu.au}.}
        }
        \date{}
        \maketitle

    \begin{abstract}
    We propose a series of quantum algorithms for computing a wide range of quantum entropies and distances, including the von Neumann entropy, quantum R\'{e}nyi entropy, trace distance, and fidelity. The proposed algorithms significantly outperform the prior best (and even quantum) ones in the low-rank case, some of which achieve exponential speedups. 
    In particular, for $N$-dimensional quantum states of rank $r$, our proposed quantum algorithms for computing the von Neumann entropy, trace distance and fidelity within additive error $\varepsilon$ have time complexity of $\tilde O(r/\varepsilon^2)$, $\tilde O(r^5/\varepsilon^6)$ and $\tilde O(r^{6.5}/\varepsilon^{7.5})$, respectively. 
    By contrast, prior quantum algorithms for the von Neumann entropy and trace distance usually have time complexity $\Omega(N)$,
    and the prior best one for fidelity has time complexity $\tilde O(r^{12.5}/\varepsilon^{13.5})$.
    
    The key idea of our quantum algorithms is to extend block-encoding from unitary operators in previous work to  quantum states (i.e., density operators). 
    It is realized by developing several convenient techniques to manipulate quantum states and extract information from them. 
    The advantage of our techniques over the existing methods is that no restrictions on density operators are required; in sharp contrast, the previous methods usually require a lower bound on the minimal non-zero eigenvalue of density operators. 
    \end{abstract}

    \textbf{Keywords}: Quantum Computing, Quantum Algorithms, Quantum Entropy, Trace Distance, Quantum Fidelity.

    \newpage

    \tableofcontents
    \newpage

    \section{Introduction}
    
    Quantum entropies and distances are basic concepts \cite{NC10} in quantum physics and quantum information. Quantum entropies  characterize the randomness of a quantum system, while quantum distances  measure the closeness of quantum systems. 
    It is essential to compute their values in many important applications, from the estimation of the capacity of quantum communication channels and verification of the outcomes of quantum computation to the characterization of quantum physical  systems  (see, e.g., \cite{OP04,ZLW+21,Kuz21}). 
    Several kinds of quantum algorithms for computing quantum entropies and distances have been proposed under different computational resources, e.g., quantum algorithms with access to copies of quantum states \cite{BOW19,AISW19,WZW22}, quantum algorithms with purified quantum query access \cite{GL20,GHS21,SH21,WZC+22}, and variational quantum algorithms \cite{CPCC20, CSZW22, TV21}. 
    
    A main consideration of those quantum algorithms with copy access for computing quantum entropies and distances is the number of copies of quantum states used in the algorithms. 
    This type of input model is known as the ``quantum sample access'' model, where identical copies of quantum states are directly given.
    For example, a method of testing the closeness of $N$-dimensional mixed quantum states was provided in \cite{BOW19} with respect to trace distance and fidelity using $O(N/\varepsilon^2)$ and $O(N/\varepsilon)$ copies, respectively, based on quantum spectrum testing \cite{OW15} and efficient quantum tomography \cite{OW16,OW17,HHJ+17}.
    A method of computing the von Neumann and quantum R\'{e}nyi entropies of an $N$-dimensional quantum state was introduced in \cite{AISW19} using $O(N^2/\varepsilon^2)$ and $O(N^{2/\alpha}/\varepsilon^{2/\alpha})$ copies, respectively. 
    Recently, a new method of computing entropies was proposed in \cite{WZW22}, especially computing the von Neumann entropy uses $\tilde O(\kappa^2/\varepsilon^5)$ copies,\footnote{$\tilde O(\cdot)$ suppresses polylogarithmic factors.} where $\kappa > 0$ is given such that $\Pi / \kappa \leq \rho \leq I$ for some projector $\Pi$. A distributed quantum algorithm for computing $\tr(\rho\sigma)$, i.e., the fidelity of pure quantum states, was proposed in \cite{ALL22} using $O(\max\{\sqrt{N}/\varepsilon,1/\varepsilon^2\})$ copies.

    Another class of quantum algorithms for computing quantum entropies and distances utilizes the conventional ``purified quantum query access'' model, where mixed quantum states are given by quantum oracles that prepare their purifications.
    Quantum algorithms for computing the von Neumann entropy and closeness testing with respect to trace distance were developed in \cite{GL20} with query complexity $\tilde O(N/\varepsilon^{1.5})$ and $\tilde O(N/\varepsilon)$, respectively, both of which have complexity exponential in the number of qubits. Recently, a quantum algorithm for computing the von Neumann entropy within a multiplicative factor was proposed in \cite{GHS21}, which reproduces the result within additive error of \cite{GL20}.
    A method of computing the quantum $\alpha$-R\'{e}nyi entropy was proposed in \cite{SH21} using $O\left(\frac{\kappa}{(x\varepsilon)^2} \log\left(\frac{N}{\varepsilon}\right)\right)$ queries to the oracle, where $\kappa > 0$ is given such that $I/\kappa \leq \rho \leq I$ and $x = \tr(\rho^\alpha)/N$.

    Compared to the ``quantum sample access'' model where only identical copies of quantum states are directly given, the ``purified quantum query access'' model allows more potential operations from which one can learn properties of quantum states.
    This is because any operation allowed in the former model can be trivially simulated in the latter model.
    Consequently, the latter model (in query complexity) usually uses fewer computational resources than the former (in sample complexity). 
    For example, the best known query complexity for computing the von Neumann entropy is $\tilde O(N/\varepsilon^{1.5})$ \cite{GL20}, while the best known sample complexity for the same task is $O(N^2/\varepsilon^2)$ \cite{AISW19}. 
    In addition, the latter model also plays an important role in computational complexity theory when comparing classical and quantum computing. 
    For example, testing the closeness between two quantum states (in trace distance) is known to be $\mathsf{QSZK}$-complete (in certain parameter regime) \cite{Wat02,Wat13}, where the quantum states are given in the ``purified quantum query access'' model. 

    The recent quantum algorithms with purified quantum query access mentioned above are usually developed in the general framework of quantum singular value transformation (QSVT) \cite{GSLW19}. 
    The powerful technique of QSVT on unitary operators developed by \cite{GSLW19} has been successfully applied as a unified framework in a wide range of quantum algorithms, including Grover's search algorithm \cite{Gro96}, the quantum walk algorithms \cite{Amb04, Sze04}, the HHL algorithm for solving systems of linear equations \cite{HHL09}, and Hamiltonian simulation \cite{LC17}.
In the framework of QSVT, a unitary operator $U$ can be regarded as a block-encoding that stores a matrix $A$ in (the upper-left corner of) its matrix representation (see Definition \ref{def:block-encoding}). 
QSVT can be understood as an algorithmic technique that transforms the matrix $A$ to $f(A)$ for some function $f(\cdot)$ of interest, and it also provides a quantum circuit implementation of $\tilde U$ that block-encodes $f(A)$ using queries to $U$. 
The original QSVT deals with unitary operators, while a Hamiltonian variant of the QSVT was proposed in \cite{LKA+21}, which was then used in quantum polar decomposition \cite{LBDP+20, QR22}.
    
    Except for unitary operators and Hamiltonians, density operators (mixed quantum states) are another important class of objects we can manipulate in quantum computation.
    A technique was developed in \cite{LC19} to implement a unitary operator that block-encodes a density operator, using queries to its purified quantum query oracle. 
    Equipped with QSVT, this technique enables us to implement unitary operators that block-encode certain matrix functions of quantum states, and thus strengthens the power of the ``purified quantum query access'' model.
    This technique has been employed in quantum algorithms for semidefinite programming \cite{vAG19} and quantum fidelity estimation \cite{WZC+22}. Conversely, however, it seems difficult to prepare a quantum state from a unitary operator that block-encodes its density operator.
    
    A natural idea of extracting information from a density operator is to directly manipulate the quantum state itself rather than a unitary operator that encodes it. This leads us to extend the definition of block-encoding  proposed originally  for unitary operators \cite{LC19, GSLW19, CGJ19} to that for general operators (see Definition \ref{def:block-encoding}), especially for quantum states (i.e., density operators). Regarding quantum states as block-encodings, we are able to design new quantum algorithms for computing a wide range of quantum entropies and distances, such as the von Neumann entropy, quantum R\'{e}nyi entropy, quantum Tsallis entropy, trace distance, and fidelity.
    These quantum algorithms  significantly outperform the best known ones in the low-rank case, and some of them can even achieve exponential speedups. Here, the low-rank case means that the rank of $N$-dimensional quantum states is much smaller than $N$, e.g., $r = \polylog(N)$, which is of great interest in both theoretical (e.g., \cite{GLF10,EHC22}) and experimental (e.g., \cite{BGK15}) physics. 
    
    In the remainder of this Introduction,  we will first present  our main results in Section \ref{subsec:main-results}. The new techniques that enable us to achieve our results will be outlined in Section \ref{subsec:techniques}. Then related works will be reviewed  in Section \ref{subsec:related-works}, and a discussion will be given in Section \ref{subsec:discussion}.

    \subsection{Main Results} \label{subsec:main-results}
    
    Let us first set the stage for presenting our main results. In order to manipulate quantum states, we extend the definition of block-encoding for unitary operators to that for general operators (see Definition \ref{def:block-encoding}), and use this extended definition of block-encoding to describe our quantum algorithms. In our quantum algorithms, a mixed quantum state is given by a quantum unitary operator (oracle) which prepares a purification of the state (see Definition \ref{def:subnormalized-density-operator}). This conventional model is known as the ``purified quantum query access'' model and has been widely used in developing quantum algorithms \cite{BKL+19,vAG19,GL20,GLM+22,GHS21,SH21}.
    
    Throughout this paper, the quantum query complexity of a quantum query algorithm means the number of queries to the given quantum oracles. The time complexity of a quantum query algorithm is the sum of its quantum query complexity and the number of elementary quantum gates used in it. When quantum algorithms are compared with classical algorithms, quantum oracles are given as classical descriptions of quantum circuits. The actual number of elementary quantum gates performed in the quantum algorithm only has a polynomial overhead compared to its ``time complexity'' defined here. 
    Then our main results can be summarized in the following: 
    
    \begin{theorem} [Informal] \label{thm:main-results}
        In the ``purified quantum query access'' model, given quantum oracles that prepare $N$-dimensional mixed quantum states of rank $r$, there are quantum query algorithms that compute
        \begin{itemize}
          \item von Neumann entropy,
          \item quantum $\alpha$-R\'{e}nyi entropies for $\alpha \in (0, 1) \cup (1, +\infty)$, 
          \item quantum $\alpha$-Tsallis entropies for $\alpha \in (0, 1) \cup (1, +\infty)$, 
          \item $\alpha$-trace distance for $\alpha > 0$ (defined by Eq. (\ref{eq:alpha-trace-distance}), including the trace distance), and
          \item  $\alpha$-fidelity for $0 < \alpha < 1$ (defined by Eq. (\ref{eq:alpha-fidelity}), including the fidelity)
        \end{itemize}
        within additive error $\varepsilon$ with time complexity $\poly(\log(N), r, 1/\varepsilon)$, where the time complexity hides a constant which depends only on $\alpha$.\footnote{A few days after this paper was submitted to arXiv, the concurrent work of Gily\'{e}n and Poremba \cite{GP22} appeared. They proposed a quantum algorithm for fidelity estimation using identical copies of quantum states based on density matrix exponentiation \cite{LMR14,KLL+17}. 
        We note that their techniques of converting identical copies to  unitary block-encodings (Corollary 21 in \cite{GP22}) can be applied to our quantum algorithms in Theorem \ref{thm:main-results}.
        As a result, we can obtain quantum algorithms for computing these quantum entropies and distances using $\poly(r, 1/\varepsilon)$ copies of quantum states, which only has a polynomial overhead compared to the query complexity of our quantum query algorithms.}
    \end{theorem}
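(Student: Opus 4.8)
The plan is to reduce every quantity in the list to the estimation of a trace of the form $\tr(f(\rho))$ or $\tr(f(\rho)\,g(\sigma))$ for a suitable bounded real function $f$ (and $g$), and to implement each such estimation in three stages. First, from the purified-query oracle preparing a purification of $\rho$ one obtains, by tracing out the purifying register, a (subnormalized) block-encoding of the density operator $\rho$ itself in the extended sense of Definition~\ref{def:block-encoding}; this is the step where we depart from previous work, which only yields a \emph{unitary} block-encoding of $\rho$. Second, we apply the quantum singular value transformation of \cite{GSLW19} to this block-encoding to realize a block-encoding of $f(\rho)$ for a polynomial $f$ approximating the target function (e.g.\ $x\mapsto -x\log x$, $x\mapsto x^\alpha$, $x\mapsto\sqrt{x}$, $x\mapsto\abs{x}$), while linear combinations and products of block-encodings let us form composite operators such as $\rho-\sigma$ and $\sqrt{\rho}\,\sigma\sqrt{\rho}$. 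Third, we read off the trace by a Hadamard-test / amplitude-estimation procedure applied to the final block-encoding, which costs $\tilde O(1/(\delta_{\mathrm{amp}}\varepsilon))$ queries, where $\delta_{\mathrm{amp}}$ is the subnormalization factor accumulated through the construction.

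The crux is the second stage when $f$ is not smooth at the origin — this covers $\log$, $\sqrt{\cdot}$, $\abs{\cdot}$ and $x^\alpha$ for $\alpha\in(0,1)$ — since a polynomial approximating such an $f$ uniformly on all of $[-1,1]$ either has prohibitive degree or does not exist. The key idea, and the technique we claim is new, is to avoid any assumption on the smallest non-zero eigenvalue of $\rho$ by introducing an \emph{eigenvalue threshold projector}: using QSVT with a smoothed sign function we build a block-encoding of the projector $\Pi_{\geq\delta}$ onto eigenvectors of $\rho$ of eigenvalue at least $\delta$, write $\rho = \Pi_{\geq\delta}\rho\Pi_{\geq\delta} + \Pi_{<\delta}\rho\Pi_{<\delta}$, and argue that the low-eigenvalue part contributes at most $O(\varepsilon)$ to the final quantity. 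This last bound is exactly where the rank enters: there are at most $r$ non-zero eigenvalues, so the total mass below $\delta$ is at most $r\delta$, and for the functions in question its contribution is controlled by something like $r\delta\log(1/\delta)$; choosing $\delta=\tilde\Theta(\varepsilon/r)$ (or a small power thereof) kills it. On the complementary subspace $f$ is analytic and bounded away from its singularity, so a polynomial of degree $\tilde O(\poly(1/\delta,1/\varepsilon))$ suffices, and this degree together with $\delta_{\mathrm{amp}}$ determines the stated complexities.

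With this machinery each item is bookkeeping. For the von Neumann entropy, $S(\rho)=\tr\bigl(\rho\cdot(-\log\rho)\bigr)$, so we transform $\rho\mapsto-\rho\log\rho$ off the threshold and estimate the trace. For the $\alpha$-R\'enyi and $\alpha$-Tsallis entropies we only need $\tr(\rho^\alpha)$, i.e.\ the transformation $x\mapsto x^\alpha$, smooth when $\alpha>1$ and threshold-truncated when $\alpha\in(0,1)$. For the $\alpha$-trace distance we take a linear combination of the block-encodings of $\rho$ and $\sigma$ to get $\rho-\sigma$, apply $x\mapsto\abs{x}^\alpha$ via a smoothed sign function off the threshold, and estimate $\tr\bigl(\abs{\rho-\sigma}^\alpha\bigr)$. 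For the $\alpha$-fidelity we chain constructions: a $\sqrt{\cdot}$ transformation of $\rho$, a product with a block-encoding of $\sigma$, a further $\sqrt{\cdot}$ or $x\mapsto x^\alpha$ transformation, and a final trace estimation — the deepest tower, accounting for the largest exponents.

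The main obstacle I expect is not any single gadget but the \emph{propagation of errors and subnormalizations through several nested QSVT layers}: each block-encoding carries a subnormalization constant and an approximation error, products multiply the former and roughly add the latter weighted by operator norms, and the threshold parameter must be chosen consistently across all layers, feeding back into the polynomial degrees. Achieving the final additive-$\varepsilon$ guarantee therefore requires tracking, layer by layer, how a target accuracy $\varepsilon$ forces an internal accuracy $\varepsilon'$ and an internal threshold $\delta'$, and then optimizing these to obtain the claimed polynomial complexities; the fidelity case, with two square roots and a matrix product between them, is where this analysis is most delicate.
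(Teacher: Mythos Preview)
Your high-level plan is sound and largely matches the paper for the single-state quantities (von Neumann, R\'enyi, Tsallis): prepare $\rho$ as a subnormalized density, use QSVT to realize a polynomial transform, and estimate the trace by amplitude estimation. The paper's specific trick for the von Neumann entropy is to approximate $\sqrt{-\ln x}$ rather than $-\ln x$, because the density-operator eigenvalue transformation always produces operators of the form $\rho\,(P(\rho))^2$, not $P(\rho)$; you gloss over this, but your description ``transform $\rho\mapsto -\rho\log\rho$'' is compatible with it.

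There is, however, a genuine gap in your treatment of the trace distance. You propose to form a block-encoding of $\rho-\sigma$, apply $x\mapsto|x|^\alpha$, and estimate the trace. The problem is that $\rho-\sigma$ is not positive, so it cannot be block-encoded in a \emph{density operator} (your Lemma for linear combinations only handles convex combinations); you must use the unitary LCU, which yields a \emph{unitary} block-encoding of $\nu=(\rho-\sigma)/2$ and then of $|\nu|^{\alpha/2}$. But the trace-estimation gadget you invoke (amplitude estimation with cost $\tilde O(1/(\delta_{\mathrm{amp}}\varepsilon))$) only applies to subnormalized \emph{density} block-encodings; from a unitary block-encoding of an $n$-qubit operator the Hadamard test yields $\tr(\cdot)/2^n$, which costs $\Theta(N/\varepsilon)$ to resolve and destroys the $\poly(r)$ dependence.

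The paper closes this gap with an idea you do not mention: it writes
\[
T_\alpha(\rho,\sigma)=\tr\bigl(|\nu|^{\alpha/2}\,\Pi_{\supp(\mu)}\,|\nu|^{\alpha/2}\bigr),\qquad \mu=(\rho+\sigma)/2,
\]
and uses the threshold-projector technique on $\mu$ (not on $\nu$) to prepare a scaled copy of $\Pi_{\supp(\mu)}$ \emph{as a subnormalized density operator}. Then the ``evolution'' primitive with the unitary block-encoding of $|\nu|^{\alpha/2}$ produces $|\nu|^{\alpha/2}\,\Pi_{\supp(\mu)}\,|\nu|^{\alpha/2}$ as a subnormalized density, to which amplitude estimation applies. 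In other words, the threshold projector plays \emph{two} roles in the paper: controlling the polynomial approximation near zero (which you identified) and, crucially, serving as the rank-$\leq 2r$ base density that makes trace estimation cheap. Your proposal only uses it in the first role, and without the second the trace-distance algorithm does not achieve $\poly(\log N, r, 1/\varepsilon)$.
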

    
    Our quantum algorithms are compared with the existing algorithms
    in Table \ref{tab:comparison}. 
    In particular, our algorithms 
    outperform the best known ones in the low-rank case, e.g., $r = \polylog(N)$. To see this more clearly, let us recast the existing results in the low-rank case, and then compare them with ours. 
    Table \ref{tab:comparison} compares the query complexity (in the ``purified quantum query access'' model) and the sample complexity (in the ``quantum sample access'' model). 
    It is important to note the difference between the two models in the comparison. 
    While it is trivial that a sample in the ``quantum sample access'' model can be simulated by a query in the ``purified quantum query access'' model, the vice versa remains unknown. 
    In other words, the ``quantum sample access'' model can be seen as a restricted version of the ``purified quantum query access'' model where the oracle is only used to prepare the mixed quantum states.
    Therefore, any sample complexity in the ``quantum sample access'' model implies the same amount of query complexity in the ``purified quantum query access'' model.
    Regarding these, we still compare the query/sample complexities defined in the two models together.
    For further discussion regarding the two different quantum input models, please refer to \cite{GL20}.

    \begin{table*}[h]
    \begin{threeparttable}
    \centering
    \caption{Our quantum algorithms vs. prior works\tnote{*}}
    \label{tab:comparison}
    \begin{tabular}{@{}p{\textwidth}@{}}
    \centering
    \begin{tabular}{cccc}
    \toprule
    \begin{tabular}[c]{@{}c@{}}Quantum Information\\ Quantity\end{tabular}         & \begin{tabular}[c]{@{}c@{}} Prior Best\\ Sample Complexity\end{tabular} & \begin{tabular}[c]{@{}c@{}} Prior Best\\ Query Complexity\end{tabular} & \begin{tabular}[c]{@{}c@{}}Our \\ Query Complexity\end{tabular} \\
    \midrule
    Von Neumann Entropy      & \begin{tabular}[c]{@{}c@{}}$O(N^2/\varepsilon^2)$, $\tilde O\left(\kappa^2/\varepsilon^5\right)$\\ 
    \cite{AISW19}, \cite{WZW22}\end{tabular}                                                    & \begin{tabular}[c]{@{}c@{}}$\tilde O(\sqrt{Nr}/\varepsilon^{1.5})$, $\tilde O(\kappa^2/\varepsilon)$\\ \cite{GL20}, \cite{CLW20}\end{tabular}                                                    & $\tilde O(r/\varepsilon^2)$                                                  \\ 
    \midrule
     \begin{tabular}[c]{@{}c@{}}Quantum $\alpha$-R\'{e}nyi Entropy\\ (for non-integer $\alpha > 0$)\end{tabular}    & \begin{tabular}[c]{@{}c@{}}$O\left((N/\varepsilon)^{\max\{2/\alpha, 2\}}\right)$\\ \cite{AISW19}\end{tabular}                                                     & \begin{tabular}[c]{@{}c@{}}$\tilde O \left( \kappa N r^{\max\{\alpha - 1, 0\}} / \varepsilon^2 \right)$ \\ \cite{SH21}\end{tabular}                                                & $\tilde O\left(\frac{r^{\alpha-1+\alpha/\{\frac{\alpha-1}{2}\}}}{\varepsilon^{1+1/\{\frac{\alpha-1}{2}\}}}\right)$                                                \\ 
    \midrule
    Trace Distance    & \begin{tabular}[c]{@{}c@{}}$O(r/\varepsilon^2)$\tnote{\textdagger}\\ \cite{BOW19}\end{tabular}                                                 & \begin{tabular}[c]{@{}c@{}}$\tilde O(\min\{ \sqrt{Nr}/\varepsilon, r/\varepsilon^2 \})$\tnote{\textdagger} \\ \cite{GL20}\end{tabular}                                                    & $\tilde O(r^5/\varepsilon^6)$                                                  \\ 
    \midrule
    Fidelity & \begin{tabular}[c]{@{}c@{}}$O(r/\varepsilon)$\tnote{\textdagger} \\ \cite{BOW19}\end{tabular}                                                   & \begin{tabular}[c]{@{}c@{}}$\tilde O\left(r^{12.5}/\varepsilon^{13.5}\right)$\tnote{\textdaggerdbl} \\ \cite{WZC+22}\end{tabular}                                                   & $\tilde O(r^{6.5}/\varepsilon^{7.5})$                                                  \\ \bottomrule
    \end{tabular}
    \end{tabular}
    \begin{tablenotes}
    \footnotesize
    \item[*] $N$ is the dimension of quantum states. $\varepsilon$ is the desired additive error. $r$ is (an upper bound for) the rank of quantum states. $\{x\} = x - \floor{x}$ denotes the decimal part of $x$. $\kappa$ is the parameter associated with mixed quantum state $\rho$ such that $I/\kappa \leq \rho \leq I$, which only appears in the best prior query complexity of quantum $\alpha$-R\'{e}nyi entropy. 
    \item[\textdagger] These are the sample/query complexities for closeness testing with respect to the trace distance (resp. fidelity). It is worth mentioning that closeness testing can be solved by computing the closeness, but the converse seems difficult. 
    \item[\textdaggerdbl] In the concurrent work of Gily\'{e}n and Poremba \cite{GP22}, they presented a different quantum algorithm for fidelity estimation with a better query complexity $\tilde O\left(r^{2.5}/\varepsilon^5\right)$.
    \end{tablenotes}
    \end{threeparttable}
    \end{table*}
    
    \begin{itemize}
    \item For the von Neumann entropy and quantum $\alpha$-R\'{e}nyi entropy, it was shown in \cite{AISW19} that their sample complexities are $O\left(N^2/\varepsilon^2\right)$ and $O\left((N/\varepsilon)^{\max\{2/\alpha, 2\}}\right)$, respectively. Unitarily invariant properties of entropies considered, their method is based on weak Schur sampling (see \cite{MdW16}), and does not imply a straightforward method for low-rank quantum states. The query complexity for the von Neumann entropy was shown in \cite{GL20} to be $\tilde O(N/\varepsilon^{1.5})$, which can be improved to $\tilde O(\sqrt{Nr}/\varepsilon^{1.5})$ for the low-rank case after a careful analysis. 
    It was shown in \cite{CLW20} that $\tilde O(\kappa^2/\varepsilon)$ queries are sufficient to compute the von Neumann entropy of a quantum state $\rho$ if some $\kappa > 0$ is known in advance such that $\rho \geq I/\kappa$.
    Similarly, the method in \cite{SH21} for the quantum $\alpha$-R\'{e}nyi entropy can be improved to $\tilde O\left(\kappa N r^{\max\{\alpha - 1, 0\}} / \varepsilon^2 \right)$ for the low-rank case.
    \item For the trace distance and fidelity, most algorithms are proposed for closeness testing with respect to them. The sample complexities $O\left(N/\varepsilon^2\right)$ and $O\left(N/\varepsilon\right)$ given in \cite{BOW19} can be improved (by their Corollary 1.6) to $O\left(r/\varepsilon^2\right)$ and $O\left(r/\varepsilon\right)$ for the low-rank case, respectively. 
    The query complexity $\tilde O(N/\varepsilon)$ given in \cite{GL20} can be improved to $\tilde O(\min\{ \sqrt{Nr}/\varepsilon, r/\varepsilon^2 \})$ for the low-rank case. The above results do not cover our results, because closeness testing can be solved by computing the closeness but the converse seems difficult. 
    \item For the quantum algorithms in \cite{CLW20}, \cite{SH21}, and \cite{WZW22} that attempt to reduce the dependence on $N$, they introduce an extra dependence on $\kappa$, where $\kappa$ is the reciprocal of the minimal non-zero eigenvalue of quantum states. Our quantum algorithms can be easily adapted to their settings by taking $r = O(\kappa)$, thus with time complexity $\poly(\log(N), \kappa, 1/\varepsilon)$, while the converse seems not applicable.\footnote{This is because $\kappa$ implies an upper bound $r \leq \kappa$ of rank, but $r$ does not imply any upper bound for $\kappa$.} 
    \end{itemize}

    Although our quantum algorithms focus on low-rank quantum states, they are also comparable to those for the general case where one could only assume that the quantum states are full-rank, i.e., $r = N$. 
    \begin{itemize}
        \item For the von Neumann entropy, our quantum algorithm has query complexity $\tilde O(N/\varepsilon^2)$ when the quantum state is full-rank, which is slightly worse than the query complexity $\tilde O(N/\varepsilon^{1.5})$ in \cite{GL20}. 
        \item For the trace distance, our quantum algorithm has query complexity $\tilde O(N^{5}/\varepsilon^6)$ when the quantum state is full-rank. 
        To the best of our knowledge, this is the first quantum algorithm for computing the trace distance between quantum states with time complexity $\poly(N)$ in the general case.
    \end{itemize}
    
    We will further discuss the above results for quantum entropies in Section \ref{subsec:quantum-entropy}  and those for closeness (i.e. trace distance and fidelity) of quantum states in Section \ref{subsec:quantum-distance}.
    
    \subsubsection{Computing quantum entropies} \label{subsec:quantum-entropy}
    
    In quantum information theory, the entropy of a (mixed) quantum state is a measure of its uncertainty, and computing its value is crucial when characterizing and verifying an unknown quantum system.
    After von Neumann \cite{vN32} introduced the famous von Neumann entropy
    \[
        S(\rho) = -\tr\left(\rho \ln\left(\rho\right)\right),
    \]
    which is a natural generalization of the classical Shannon entropy \cite{Sha48}, several other  entropies have been proposed, e.g., R\'{e}nyi entropy \cite{Ren61, LMW13, MDS13, WWY14}, Tsallis entropy \cite{Tsa88, Aud07, PV15}, Min and Max (Hartley) entropies \cite{vDH02, Dat09, KRS09}, and the unified entropy \cite{HY06, Ras11}.
    The quantum $\alpha$-R\'{e}nyi entropy and the quantum $\alpha$-Tsallis entropy are defined by
    \begin{align*}
        S^{R}_{\alpha}(\rho) & = \frac{1}{1-\alpha} \ln \left( \tr \left( \rho^\alpha \right) \right), \\
        S^{T}_{\alpha}(\rho) & = \frac{1}{1-\alpha} \left( \tr\left(\rho^\alpha\right) - 1 \right)
    \end{align*}
    for $\alpha \in (0, 1) \cup (1, +\infty)$, respectively. It is easy to see that the von Neumann entropy is a limiting case of the R\'{e}nyi entropy \cite{MDS13} and the Tsallis entropy \cite{Tsa88}: 
    \[
        S(\rho) = \lim_{\alpha\to 1} S^{R}_{\alpha}(\rho) = \lim_{\alpha\to 1} S^{T}_{\alpha}(\rho).
    \]
    For $\alpha = 0$, the quantum Tsallis entropy degenerates to the rank of quantum states: $$S_0^{T}(\rho) = \rank(\rho) - 1$$ and the quantum R\'{e}nyi entropy becomes the logarithm of the rank, i.e., the quantum Max (Hartley) entropy: $$S^{\max}(\rho) = S_0^{R}(\rho) = \ln(\rank(\rho)).$$
    
    \paragraph{Overview} Given a quantum unitary oracle that prepares a mixed quantum state (see Definition \ref{def:subnormalized-density-operator}), we develop quantum algorithms for computing several quantum entropies. Their quantum query complexities are collected in Table \ref{tab:complexity-entropy}, which are also their quantum time complexities up to polylogarithmic factors. 
    Most of our algorithms do not require any restrictions on the lower bound for the eigenvalues of quantum states except those for computing the quantum Max entropy and the rank of quantum states, where $\Pi/\kappa \leq \rho$ is required for some projector $\Pi$ and $\kappa > 0$.
    
    The prior best quantum algorithms for computing von Neumann entropy \cite{AISW19,GL20,GHS21} and quantum R\'{e}nyi entropy  \cite{AISW19,SH21} have time complexity $\Omega(N)$ even for rank $r = 2$. Compared to them, our quantum algorithms are exponentially faster in the low-rank case. 
    In particular, our quantum algorithm for computing the von Neumann entropy with query complexity $\tilde O(r/\varepsilon^2)$ is comparable to the quantum algorithm given in \cite{CLW20} with query complexity $\tilde O(\kappa^2/\varepsilon)$, where we note that $r \leq \kappa$ always holds.

    It is worth mentioning that for odd integer $\alpha > 1$, the query complexity of computing the quantum Tsallis entropy $S_{\alpha}^T(\rho)$ does not depend on rank $r$. In this case, there is a simple SWAP test-like quantum algorithm that computes $\tr(\rho^\alpha)$ using $O(1/\varepsilon^2)$ copies \cite{EAO+02,KLL+17}. Compared to it, our algorithm 
    (Theorem \ref{thm:trace-positive-powers}) yields a quadratic speedup (see Section \ref{sec:renyi-tsallis} for more discussions). For non-integer $\alpha$, we are not aware of any prior approaches for computing the quantum Tsallis entropy with complexity better than quantum state tomography.

    \begin{table*}[!htp]
    \begin{center}
    \caption{Quantum query complexity for computing quantum entropies\textsuperscript{*}}
    \vspace{4pt}
    \label{tab:complexity-entropy}
    \begin{tblr}{|c|c|c|c|}
    \toprule
    \SetCell[c=2]{}{Parameter $\alpha$} & & Quantum R\'{e}nyi Entropy $S_\alpha^R(\rho)$ & Quantum Tsallis Entropy $S_\alpha^T(\rho)$ \\ 
    \midrule
    \SetCell[c=2]{}{$\alpha = 0$} & & {\begin{tabular}[c]{@{}c@{}}$\tilde O\left(\kappa^2 / \varepsilon\right)$ (Theorem \ref{thm:max-entropy})\\ (Max Entropy)\end{tabular}} & \begin{tabular}[c]{@{}c@{}} $\tilde O\left(\kappa^2\right)$ (Corollary \ref{corollary:exact-rank})\\ (Rank)\end{tabular} \\
    \midrule
    \SetCell[c=2]{}{$0 < \alpha < 1$} & & \SetCell[c=2]{} $\tilde O\left( r^{\frac{3-\alpha^2}{2\alpha}} / \varepsilon^{\frac{3+\alpha}{2\alpha}} \right)$ (Theorem \ref{thm:Renyi-entropy} and \ref{thm:tsallis-entropy}) \\
    \midrule
    \SetCell[c=2]{} \begin{tabular}[c]{@{}c@{}}$\alpha = 1$\\ (Von Neumann Entropy)\end{tabular} & & \SetCell[c=2]{} $\tilde O\left( r / \varepsilon^{2} \right)$ (Theorem \ref{thm: von Neumann entropy})      \\
    \midrule
    \SetCell[r=2]{} $\alpha > 1$ & $\alpha \equiv 1 \pmod 2$ & $O\left(r^{\alpha-1}/\varepsilon\right)$ (Theorem \ref{thm:Renyi-entropy}) & $O\left(1/\varepsilon\right)$ (Theorem \ref{thm:tsallis-entropy}) \\
    \midrule
    & $\alpha \not\equiv 1 \pmod 2$ & \SetCell[c=2]{} $\tilde O\left( r^{\alpha - 1 + \alpha / \{\frac{\alpha - 1}{2}\}} / \varepsilon^{1 + 1 / \{\frac{\alpha - 1}{2}\}} \right)$ (Theorem \ref{thm:Renyi-entropy} and \ref{thm:tsallis-entropy}) \\
    \bottomrule
    \end{tblr}
    \end{center}
    \begin{tablenotes}
    \footnotesize
        \item \textsuperscript{*} $r$ is (an upper bound for) the rank of quantum states. $\varepsilon$ is the desired additive error. $\{x\} = x - \floor{x}$ denotes the decimal part of $x$. $\kappa$ is the parameter associated with mixed quantum state $\rho$ such that $\Pi/\kappa \leq \rho$ for some projector $\Pi$, which is only used in the case $\alpha = 0$.
    \end{tablenotes}
    \end{table*}
    
    \paragraph{Lower bounds} We are able to give a query lower bound $\tilde \Omega(r^{c})$ for computing the quantum R\'{e}nyi entropy $S_\alpha^R(\rho)$ including the von Neumann entropy $S(\rho)$ in terms of rank $r$, where $c \geq 1/3$ is a constant depending only on $\alpha$ (see Theorem \ref{thm:lower-bounds-entropy}). This lower bound is simply derived from the quantum query complexity for computing the R\'{e}nyi (and Shannon) entropy of classical probability distributions \cite{LW19,BKT20}. 

    \subsubsection{Computing quantum distances} \label{subsec:quantum-distance}
    
    Distance measures of quantum states are basic quantities in quantum computation and quantum information. Two of the most important distance measures are the trace distance and fidelity. For each of them, we propose  quantum algorithms  that compute it and its extensions. Here, we assume that there are two quantum oracles $U_\rho$ and $U_\sigma$ that prepare the density operators $\rho$ and $\sigma$, respectively. The query complexity of a quantum algorithm means the total number of queries to both $U_\rho$ and $U_\sigma$. 
    
    \paragraph{Trace distance}
    
    The $\alpha$-trace distance of two quantum states $\rho$ and $\sigma$ is defined by
    \begin{equation} \label{eq:alpha-trace-distance}
        T_\alpha(\rho, \sigma) = \tr\left(\abs{\frac{\rho - \sigma}{2}}^{\alpha}\right) = \Abs{\frac{\rho - \sigma}{2}}_{S, \alpha}^{\alpha},
    \end{equation}
    where $\Abs{A}_{S, \alpha} = \left(\tr\left(\abs{A}^\alpha\right)\right)^{1/\alpha}$ is the Schatten $\alpha$-norm. Here, the $1$-trace distance is the well-known trace distance $T(\rho, \sigma) = T_1(\rho, \sigma)$. 

    We develop quantum algorithms for computing $\alpha$-trace distance for $\alpha > 0$, with their query complexities shown in Table \ref{tab:trace-distance}.
    As a special case, our quantum algorithm (Theorem \ref{thm:trace-distance}) for computing the trace distance (i.e., the $1$-trace distance) has query complexity $\tilde O\left(r^{5}/\varepsilon^{6}\right)$.
    Note that the closeness testing of the $\alpha$-trace distances of quantum states for integer $\alpha$, e.g., the $1$-, $2$- and $3$-trace distances, was studied in \cite{GL20}. 
    For other cases of $\alpha$, we are not aware of any prior approaches to compute the $\alpha$-trace distance with complexity better than quantum state tomography.

    \begin{table*}[!htp]
    \begin{threeparttable}
    \caption{Quantum query complexity for computing $\alpha$-trace distance\tnote{*}}
    \label{tab:trace-distance}
    \begin{tabular}{@{}p{\textwidth}@{}}
    \centering
    \begin{tabular}{cc}
    \toprule
    Parameter $\alpha$                                                       &  $T_\alpha(\rho, \sigma)$ (Theorem \ref{thm:trace-distance})  \\ 
    \midrule
    $0 < \alpha < 1$                                                      & $\tilde O\left( r^{5/\alpha+(1-\alpha)/2}/\varepsilon^{5/\alpha+1} \right)$ \\ 
    \midrule
    $\alpha \equiv 0 \pmod 2$                                                      & $\tilde O\left(r^3/\varepsilon^4\right)$ \\ 
    \midrule
    $\alpha \geq 1$ and $\alpha \not \equiv 0 \pmod 2$                                                     & $\tilde O\left( r^{3+1/\{\alpha/2\}}/\varepsilon^{4+1/\{\alpha/2\}} \right)$ \\ 
    \midrule
    $\alpha = 1$ (Trace Distance) &
    $\tilde O\left(r^5/\varepsilon^6\right)$ \\ 
    \bottomrule
    \end{tabular}
    \end{tabular}
    \begin{tablenotes}
    \footnotesize
    \item[*]
    $r$ is (an upper bound for) the higher rank of the two quantum states. $\varepsilon$ is the desired additive error. $\{x\} = x - \floor{x}$ denotes the decimal part of $x$. 
    \end{tablenotes}
    \end{threeparttable}
    \end{table*}
    
    \paragraph{Fidelity}
    
    The $\alpha$-fidelity of two quantum states $\rho$ and $\sigma$ is defined by
    \begin{equation} \label{eq:alpha-fidelity}
        F_{\alpha}(\rho, \sigma) = \exp\left((\alpha - 1) D_{\alpha}(\rho \| \sigma)\right) = \tr\left( \left( \sigma^{\frac{1-\alpha}{2\alpha}} \rho \sigma^{\frac{1-\alpha}{2\alpha}} \right)^{\alpha} \right),
    \end{equation}
    where $D_{\alpha}(\rho \| \sigma)$ is the sandwiched quantum R\'{e}nyi relative entropy \cite{WWY14, MDS13}. Here, the $1/2$-fidelity is the well-known fidelity $F(\rho, \sigma) = F_{1/2}(\rho, \sigma)$ \cite{Joz94}. 
    
    We develop quantum algorithms for computing the $\alpha$-fidelity for $0 < \alpha < 1$, with their query complexities shown in Table \ref{tab:fidelity}).
    As a special case, our quantum algorithm (Theorem \ref{thm:fidelity}) for computing the fidelity (i.e., the $1/2$-fidelity) has query complexity $\tilde O\left(r^{6.5}/\varepsilon^{7.5}\right)$, which is a polynomial speedup over the best known $\tilde O\left(r^{12.5}/\varepsilon^{13.5}\right)$ in \cite{WZC+22}. For other cases of $\alpha$, we do not know any prior approaches to compute the $\alpha$-fidelity with complexity better than quantum state tomography.

    \begin{table*}[!htp]
    \begin{threeparttable}
    \caption{Quantum query complexity for computing $\alpha$-fidelity\tnote{*}}
    \label{tab:fidelity}
    \begin{tabular}{@{}p{\textwidth}@{}}
    \centering
    \begin{tabular}{cc}
    \toprule
    Parameter $\beta = (1-\alpha)/2\alpha$                                                       &  $F_{\alpha}(\rho, \sigma)$ (Theorem \ref{thm:fidelity})  \\ 
    \midrule
    $\beta \in \mathbb{N}$                                                      & $\tilde O\left(r^{\frac{3-\alpha}{2\alpha}}/\varepsilon^{\frac{3+\alpha}{2\alpha}}\right)$ \\ 
    \midrule
    $\beta \notin \mathbb{N}$                                                      & $\tilde O\left(r^{\frac{3-\alpha}{2\alpha}+\frac{1}{\alpha\{\beta\}}}/\varepsilon^{\frac{3+\alpha}{2\alpha}+\frac{1}{\alpha\{\beta\}}}\right)$ \\ 
    \midrule
    $\alpha = \beta = 1/2$ (Fidelity)
    & $\tilde O\left(r^{6.5}/\varepsilon^{7.5}\right)$ \\ \bottomrule
    \end{tabular}
    \end{tabular}
    \begin{tablenotes}
    \footnotesize
    \item[*]
    $r$ is (an upper bound for) the lower rank of the two quantum states. $\varepsilon$ is the desired additive error. $\{x\} = x - \floor{x}$ denotes the decimal part of $x$. 
    \end{tablenotes}
    \end{threeparttable}
    \end{table*}
    
    \paragraph{Lower bounds and hardness}
    Our quantum algorithms for computing the fidelity and trace distance have a time complexity polynomial in the rank $r$. We show that there is no quantum algorithm that computes the fidelity or trace distance with time complexity $\poly(\log(r), 1/\varepsilon)$ unless $\mathsf{BQP} = \mathsf{QSZK}$ (see Theorem \ref{thm:lower-bounds}), based on the result of \cite{Wat02} that $(\alpha, \beta)$-Quantum State Distinguishability is $\QSZK$-complete for $0 \leq \alpha < \beta^2 \leq 1$.\footnote{The available regime of $\alpha$ and $\beta$ for the $\mathsf{QSZK}$-completeness of $(\alpha, \beta)$-Quantum State Distinguishability was recently improved to $0 \leq \sqrt{2\ln2} \alpha < \beta^2 \leq 1$ in \cite{Liu23}.}
    
    Our quantum algorithms for computing the fidelity and trace distance achieve a significant speedup under the low-rank assumption. 
    We argue that these problems are unlikely to be efficiently solved by classical computers because computing the fidelity and trace distance are $\mathsf{DQC1}$-hard (see Theorem \ref{thm:dqc1-hard}); and it was shown in \cite{FKM18} that $\mathsf{DQC1}$ is not (classically) weakly simulatable unless the polynomial hierarchy collapses to the second level, i.e., $\mathsf{PH} = \mathsf{AM}$.
    
    \subsection{Techniques} \label{subsec:techniques}
    
    In this subsection, we give an overview of the techniques that enable us to achieve the results presented in the above subsection.
    
    \subsubsection{Quantum states as block-encodings}
    
    The key idea of our quantum algorithms is to regard quantum states as block-encodings. 
    To this end, we extend the definition of block-encoding proposed for unitary operators to that for general ones (see Definition \ref{def:block-encoding}). Suppose that a unitary operator $U_A$ prepares a subnormalized density operator $A$ (see Definition \ref{def:subnormalized-density-operator}). In this framework, we provide a convenient way to manipulate the subnormalized density operator $A$ and extract information from it as follows.
    
    \begin{itemize}
      \item \textbf{Evolution}: If $U$ is a unitary operator, which is a block-encoding of an operator $B$, we can prepare a subnormalized density operator $B A B^\dag$ (see Lemma \ref{lemma:density-basic}). This evolution of the subnormalized density operator can be seen as a generalization of quantum unitary operation $\rho \mapsto U \rho U^\dag$ for (normalized) density operator $\rho$.
      
      \item \textbf{Trace Estimation}: We provide an efficient method to estimate the trace of $A$ based on quantum amplitude estimation \cite{BHMT02} (see Lemma \ref{lemma:trace-estimation}). As will be seen, trace estimation is an important subroutine in our quantum algorithms (see Section \ref{subsec:quantum-entropy} and Section \ref{subsec:quantum-distance}).
      
      \item \textbf{Linear Combinations}: As an analog of Linear-Combination-of-Unitaries (LCU) algorithm through a series of work \cite{SOGKL02,CW12,Kothari14,BCCKS15,BCK15,CKS17,GSLW19}, we also provide a technique to prepare a linear (convex) combination of subnormalized density operators (see Lemma \ref{lemma:linear-combination}). This technique will be used in computing the trace distance (see Section \ref{subsec:quantum-distance} and Theorem \ref{thm:trace-distance}).
    \end{itemize}
    
    The technique of ``trace estimation'' is the cornerstone in developing our quantum algorithms. To compute the values of quantum entropies and distances, the key part has the form $\tr(\varrho)$, where $\varrho$ is a (subnormalized) density operator. Our strategy is to prepare a quantum state, which is a block-encoding of $\varrho$, through the technique of ``evolution''. Roughly speaking, we prepare the subnormalized density operator $\varrho$ up to a scaling factor; we will use the phrase ``prepare $\varrho$'' regardless of the scaling factor in the following discussion of this section. For example, we prepare $- \rho \ln(\rho)$ for the von Neumann entropy, and prepare $\rho^{\alpha}$ for the quantum $\alpha$-R\'{e}nyi and Tsallis entropies. To achieve this, we develop techniques for eigenvalue transformation of density operators based on QSVT as follows. 
    
    \begin{itemize}
      \item \textbf{Eigenvalue Transformation}: Based on the evolution, if we can construct a unitary operator $U$, which is a block-encoding of $P(A)$ for some polynomial $P(\cdot)$ as in QSVT \cite{GSLW19}, we can transform $A$ to another subnormalized density operator $A(P(A))^2$ (see Theorem \ref{lemma:technique}).
      
      \item \textbf{Positive Powers}: We develop a technique to prepare the subnormalized density operator $A^c$ for $0 < c < 1$ without any restrictions on $A$ (see Lemma \ref{lemma:positive-power-density}). Inspired by this, we can also obtain a unitary operator, which is a block-encoding of $\abs{A}^c$, using queries to a unitary operator $U$, which is a block-encoding of Hermitian operator $A$ (see Lemma \ref{lemma:positive-power-unitary}). In order to obtain block-encodings of powers of $A$, previously known methods \cite{CGJ19,GSLW19,GLM+22} usually require a lower bound for the minimal non-zero eigenvalues of density operators; for example, $I/\kappa \leq A \leq I$ for some $\kappa > 0$ in \cite{CGJ19}. This technique for positive powers of subnormalized density operators will be frequently used in our quantum algorithms for computing quantum entropies, fidelity and trace distance (see Section \ref{subsec:quantum-entropy} and Section \ref{subsec:quantum-distance}), in order to avoid restrictions on density operators.
    \end{itemize}
    
    We also provide a method to block-encode the eigenvalue threshold projector $\Pi_{\supp(A)}$ of $A$ in a quantum state, where $\supp(A)$ is the support of $A$, and $\Pi_S$ is the projector onto subspace $S$.
    
    \begin{itemize}
      \item \textbf{Eigenvalue threshold projector}: We propose a method to (approximately) block-encode the eigenvalue threshold projector $\Pi_{\supp(A)}$ of $A$ in a subnormalized density operator (see Lemma \ref{lemma:eigenvalue-threshold-projector}). We note that a technique for block-encoding eigenvalue threshold projectors was also provided in \cite{vAG19}, but they required that $A \geq q\Pi$ for some projector $\Pi$ and the value of $q > 0$ is known in advance. In contrast, our method does not impose any restriction on $A$. This method will be used in computing the trace distance (see Section \ref{subsec:quantum-distance} and Theorem \ref{thm:trace-distance}).
    \end{itemize}
    
    A comparison between density operators and unitary operators as block-encodings is given in Table \ref{tab:cmp-qsvt}.

    \begin{table*}[!htp]
    \centering
    \begin{threeparttable}
    \caption{Comparison between density operators and unitary operators as block-encodings\tnote{*}}
    \label{tab:cmp-qsvt}
    \begin{tabular}{ccc}
    \toprule
    Operation Type                      & \begin{tabular}[c]{@{}c@{}}Density Operators\\ $\rho \approx \begin{bmatrix}
        A & \cdot \\
        \cdot & \cdot
    \end{bmatrix}$ \end{tabular} & \begin{tabular}[c]{@{}c@{}}Unitary Operators\\ $U \approx \begin{bmatrix}
        A & \cdot \\
        \cdot & \cdot
    \end{bmatrix}$ \end{tabular} \\
    \midrule
    Evolution                 & $A \to BAB^\dag$       & $A \to AB \text{ or } BA$       \\ 
    Trace Estimation          & $\tr(A)$       & $\tr(A)/2^a$ \tnote{\textdagger}     \\ 
    Linear Combination        & $\alpha_1A_1 + \dots + \alpha_kA_k \  (\alpha_i \in \mathbb{R}^+)$       & $\alpha_1A_1 + \dots + \alpha_kA_k \  (\alpha_i \in \mathbb{C})$       \\ 
    Eigenvalue Transformation & $A \to A(P(A))^2$       & $A \to P(A)$      \\ 
    Positive Powers & $A \to A^c (0 < c < 1)$       & $A \to \abs{A}^c (0 < c < 1)$      \\ 
    Eigenvalue Threshold Projector & $A \to \text{ (scaled) } \Pi_{\supp(A)}$       & $A \to \Pi_{\supp(A)}$      \\ 
    \bottomrule
    \end{tabular}
    \begin{tablenotes}
      \footnotesize
      \item[*] $A$ is an Hermitian operator block-encoded in a density $\rho$ or a unitary operator $U$. $B$ is block-encoded in a unitary operator. $P(\cdot)$ is a polynomial.
      \item[\textdagger] Suppose $A$ is an $a$-qubit Hermitian operator block-encoded in unitary operator $U$. Then $\tr(A)/2^a = \tr\left( (\ket{0}\bra{0} \otimes \frac{I_a}{2^a}) U \right)$ can be computed through the Hadamard test \cite{EAO+02}.
    \end{tablenotes}
    \end{threeparttable}
    \end{table*}
    
    \subsubsection{Example --- computing trace distance}
    
    To give the readers a flavor, we take the quantum algorithm for computing the trace distance (see Theorem \ref{thm:trace-distance} for details) as an illustrative example. The key observation to compute the trace distance is that
    \[
        T(\rho, \sigma) = \tr\left( \abs{\nu}^{1/2} \Pi_{\supp(\mu)} \abs{\nu}^{1/2} \right),
    \]
    where $\nu = (\rho-\sigma)/2$, $\mu = (\rho+\sigma)/2$. The idea is to prepare $\eta = \abs{\nu}^{1/2} \Pi_{\supp(\mu)} \abs{\nu}^{1/2}$ (up to a scaling factor), and then estimate $\tr(\eta)$ through the technique of ``trace estimation'' (Lemma \ref{lemma:trace-estimation}).
    The computation process is shown in Figure \ref{fig:process}.

    \begin{figure*}[!htp]
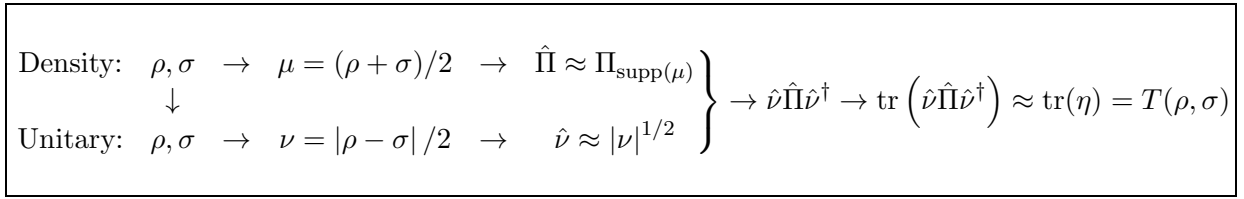

    \centering
    \noindent\fbox{%
    \parbox{16cm}{%
        \[
        \left. \begin{matrix}
            \text{Density:} & \rho, \sigma & \to & \mu = (\rho+\sigma)/2 & \to & \hat\Pi \approx \Pi_{\supp(\mu)} \\
            & \downarrow \\
            \text{Unitary:} & \rho, \sigma & \to & \nu = \abs{\rho-\sigma}/2 & \to & \hat\nu \approx \abs{\nu}^{1/2}
        \end{matrix} \right\} \to \hat\nu \hat\Pi \hat\nu^\dag \to \tr\left(\hat\nu \hat\Pi \hat\nu^\dag\right) \approx \tr(\eta) = T(\rho, \sigma)
        \]
    }
    }
        \caption{The computation process of computing trace distance.}
        \label{fig:process}
    \end{figure*}
    
    To (approximately) prepare $\eta$, we first prepare $\hat \Pi \approx \Pi_{\supp(\mu)}$ through the technique of ``eigenvalue threshold projector'' (Lemma \ref{lemma:eigenvalue-threshold-projector}). Here, $\mu = (\rho + \sigma) / 2$ can be prepared through the technique of ``linear combinations'' (Lemma \ref{lemma:linear-combination}). Then we only need to construct a unitary operator, which is a block-encoding of $\hat \nu \approx \abs{\nu}^{1/2}$. After that, by the technique of ``evolution'' (Lemma \ref{lemma:density-basic}), we can prepare $\hat \nu \hat \Pi \hat \nu^\dag \approx \abs{\nu}^{1/2} \hat \Pi \abs{\nu}^{1/2} \approx \eta$.
    
    In order to construct a unitary operator as a block-encoding of $\abs{\nu}^{1/2}$, we first use the LCU technique (Theorem \ref{thm:lcu}) to block-encode $\nu = (\rho-\sigma)/2$ in a unitary operator $U_{\nu}$. Then applying the technique of ``positive powers'' (Lemma \ref{lemma:positive-power-unitary}) on $U_{\nu}$, we can construct a unitary operator which is a block-encoding of $\hat\nu \approx \abs{\nu}^{1/2}$.
    
    To get an estimation of the trace distance between $\rho$ and $\sigma$, we should just note that $T(\rho, \sigma) = \tr(\eta) \approx \tr\left(\hat \nu \hat \Pi \hat \nu^\dag\right)$, where we have already prepared $\hat \nu \hat \Pi \hat \nu^\dag$ through the above process. Strictly speaking, we have prepared a mixed quantum state, whose density operator is a block-encoding of $\hat \nu \hat \Pi \hat \nu^\dag$ up to a scaling factor. After carefully selecting appropriate parameters that determine the errors in the above process, we obtain a quantum algorithm for computing the trace distance with query complexity $\tilde O(r^5/\varepsilon^6)$, where $r$ is (an upper bound for) the rank of quantum states $\rho$ and $\sigma$, and $\varepsilon$ is the desired additive error. Here, $\tilde O(\cdot)$ suppresses the polylogarithmic factor of $N$, where $N$ is the dimension of the Hilbert space of $\rho$ and $\sigma$.
    
    \subsection{Related Works} \label{subsec:related-works}
    
    \paragraph{Classical property testing}
    
The problems considered in this paper can be thought of as a quantum analog of testing properties of probability distributions. Classical algorithms for testing properties of probability distributions have been widely studied since the beginning of this century.
    The first algorithm was proposed in \cite{BFR+00} for the closeness testing of probability distributions in $\ell^1$ distance using $\tilde O(N^{2/3}/\varepsilon^4)$ samples, which was then improved to use $\tilde O(N^{2/3}/\varepsilon^{8/3})$ samples \cite{BFR+13}. 
    Later, it was shown in \cite{CDVV14} that the optimal sample complexity for this problem is $\Theta\left(\max\{N^{2/3}/\varepsilon^{4/3}, N^{1/2}/\varepsilon^{2}\}\right)$, and they also proved that the optimal sample complexity $\Theta(1/\varepsilon^2)$ for closeness testing in $\ell^2$ distance. 
    The identity testing is a special case of the closeness testing given that one of the distributions is known. 
    It was shown in \cite{BFF+01} that $\tilde O(N^{1/2}/\varepsilon^4)$ samples are sufficient for the identity testing in $\ell^1$ distance, which was improved to optimal $\Theta(N^{1/2}/\varepsilon^2)$ in \cite{Pan08}.
    The independence testing, i.e., whether a distribution on $[N] \times [M]$ $(N \geq M)$ is equal to or $\varepsilon$-far from a product distribution in $\ell^1$ distance, was shown to have sample complexity $\tilde O(N^{2/3}M^{1/3}) \cdot \poly(1/\varepsilon)$ \cite{BFF+01}.
    Recently, a modular reduction-based approach was proposed in \cite{DK16}, which covers the closeness, identity and independence testing. They also gave a tight sample complexity $\Theta\left(\max\{N^{2/3}M^{1/3}/\varepsilon^{4/3}, (NM)^{1/2}/\varepsilon^2\}\right)$ for the independence testing. In addition, the monotonicity testing was also shown to have sample complexity $\tilde O(N^{1/2}/\varepsilon^4)$ \cite{BKR04}.
    
    Apart from property testing between distributions, properties of a single distribution are well studied in the literature, e.g., \cite{BDKR05,Pan03,Pan04}. An algorithm that computes the Shannon entropy using $O\left(\frac{N}{\varepsilon\log(N)}\right)$ samples for $\varepsilon = \Omega(N^{0.03})$ was proposed in \cite{VV11a,VV11b}. After that, the optimal estimator of Shannon entropy using $\Theta\left(\frac{N}{\varepsilon \log(N)}+\frac{(\log(N))^2}{\varepsilon^2}\right)$ samples was given in \cite{JVHW15} and \cite{WY16}. Also, an estimator for $\exp\left( (1-\alpha) S_\alpha^R(p) \right)$ was provided in \cite{JVHW15}, where $S_\alpha^R(p)$ is the $\alpha$-R\'{e}nyi entropy of distribution $p$.
    
    \paragraph{Quantum property testing}
    
    The emerging topic of quantum property testing (see \cite{MdW16}) studies the quantum advantage in testing classical statistical and quantum information properties.
    
    Quantum advantages in testing classical statistical properties have been extensively studied. 
    Quantum algorithms for testing properties of classical distributions was first studied in \cite{BHH11}, which gave quantum query complexity $O(N^{1/2}/\varepsilon^6)$ for the closeness testing, and $O(N^{1/3})$ for identity testing (to the uniform distribution) in $\ell^1$ distance (for constant precision $\varepsilon$). 
    Later, the quantum query complexity of the identity testing (to a known distribution) was improved to $\tilde O(N^{1/3}/\varepsilon^5)$ in \cite{CFMdW10}.
    The quantum query complexity for the closeness testing in $\ell^1$ distance was further improved to $\tilde O(N^{1/2}/\varepsilon^{2.5})$ in \cite{Mon15}, to $\tilde O(N^{1/2}/\varepsilon)$ in \cite{GL20}, and to $O(N^{1/2}/\varepsilon)$ in \cite{LWL24}.
    Recently, the quantum query complexity for computing the Shannon entropy and the R\'{e}nyi entropy was studied in \cite{LW19}; especially, an $\tilde O(N^{1/2}/\varepsilon^2)$ quantum query complexity was shown for the Shannon entropy. 
    
    There are also some quantum algorithms for testing quantum information properties not mentioned above.
    It was shown in \cite{KLL+17} that testing the orthogonality of pure quantum states requires $\Theta(1/\varepsilon)$ copies, promised that either they are orthogonal or have fidelity $\geq \varepsilon$. Recently, it was shown in \cite{Yu21} that quantum identity testing only uses $O(N^{3/2}/\varepsilon^2)$ copies with the help of random choice of independent measurements. 
    
    \subsection{Discussion} \label{subsec:discussion}
    
    In this paper, we suggest a generalized definition of block-encoding, with which we can directly manipulate subnormalized density operators and extract information from them. Based on this, we develop new quantum algorithms that compute a large class of quantum entropies and distances, which achieve a significant speedup over the best known ones in the low-rank case. 
    Several interesting problems remain open: 

    \begin{itemize}
      \item Our upper and lower bounds are far from being tight, a similar issue for computing the von Neumann entropy arose in \cite{GL20}. In the error analysis of our algorithms, the rank $r$ appears in the upper bound for the error as a multiplicative factor. This makes our algorithms unlikely to have complexity sub-polynomial in $r$. Can we find more efficient algorithms (for example, with query complexity sub-polynomial in $r$) or improve the lower bounds (to, for example, $\Omega(r)$)?
      
      \item It would be interesting to study other distance measures of quantum states, e.g., the relative von Neumann entropy \cite{NC10} (the quantum generalization of the Kullback-Leibler divergence \cite{KL51})
          \[
            S(\rho\|\sigma) = \tr\left(\rho \left(\ln(\rho) - \ln(\sigma)\right)\right).
          \]
      
      \item Can we apply the idea of manipulating quantum states to problems other than computing quantum entropies and distances?
    \end{itemize}

    \subsection{Recent Developments}

    After the work described in this paper, a series of quantum algorithms for computing quantum entropies and distances have been developed and applied in practical tasks. 
    \begin{itemize}
        \item \textbf{Von Neumann entropy}. In the ``purified quantum query access'' model, the query complexity for computing the von Neumann entropy was further analyzed in detail and shown to be $O(r \log(r) / \varepsilon^2)$ in \cite{LL23}. 
        Computing the von Neumann entropy in space-bounded quantum computation was investigated in \cite{LGLW23}, and they showed that the space-bounded version of von Neumann entropy difference is $\mathsf{BQL}$-complete.
        In the ``quantum sample access'' model, the time complexity for computing the von Neumann entropy was improved to $\tilde O(N^2)$ in \cite{WZ24}, compared to the $\tilde O(N^6)$ in \cite{AISW19}, while retaining the same (up to polylogarithmic factors) sample complexity $\tilde O(N^2)$. 
        \item \textbf{R\'enyi entropy}. In the ``purified quantum query access'' model, the query complexity for computing the $\alpha$-R\'enyi entropy of a quantum state was improved to $\tilde O(r^{\frac{1}{\alpha}}/\varepsilon^{\frac{1}{\alpha}+1})$ for $0 < \alpha < 1$ and $\tilde O(r/\varepsilon^{1+\frac{1}{\alpha}})$ for $\alpha > 1$ in \cite{LWZ22}.
        In the ``quantum sample access'' model, the time complexity for computing the $\alpha$-R\'enyi entropy was improved to $\tilde O(N^{\frac{4}{\alpha}-2})$ for $0 < \alpha < 1$ and $\tilde O(N^{4-\frac{2}{\alpha}})$ for $\alpha > 1$ in \cite{WZ24}, compared to the $\tilde O(N^{\frac{6}{\alpha}})$ for $0 < \alpha < 1$ and $\tilde O(N^{6})$ for $\alpha > 1$ in \cite{AISW19}, at the cost of larger sample complexity; they also showed sample lower bounds $\Omega(\max\{N, N^{\frac{1}{\alpha}-1}\})$ for computing the $\alpha$-R\'enyi entropy. 
        In addition, variational quantum algorithms for computing the von Neumann and R\'enyi entropies were proposed in \cite{GPSW23}.
        \item \textbf{Trace distance}. In the ``purified quantum query access'' model, the query complexity for computing the trace distance was improved to $\tilde O(r/\varepsilon^{2})$ in \cite{WZ23}, and they showed that low-rank trace distance estimation is $\mathsf{BQP}$-complete based on the result of \cite{RASW23}, improving the $\mathsf{DQC1}$-hardness given in this paper. 
        The space-bounded version of trace distance estimation was shown to be $\mathsf{BQL}$-complete in \cite{LGLW23}, and its certification was shown to be $\mathsf{coRQ_{\text{U}}L}$-complete.
        In the ``quantum sample access'' model, the sample complexity for computing the fidelity was shown to be $\tilde O(r^{2}/\varepsilon^{5})$ in \cite{WZ23}, which was later employed in a hypothesis testing based auditing pipeline for quantum differential privacy with domain knowledge \cite{NGW23}. 
        \item \textbf{Fidelity}. In the ``purified quantum query access'' model, the query complexity for computing the fidelity was improved to $\tilde O(r^{2.5}/\varepsilon^{5})$ in \cite{GP22}. 
        When quantum states are well-conditioned (i.e., $\rho, \sigma \geq I/\kappa$ for some known $\kappa > 0$), the query complexity was shown to be $\tilde O(\kappa^4/\varepsilon)$ in \cite{LWWZ24}, with the dependence on $\varepsilon$ optimal (up to polylogarithmic factors).
        It was shown in \cite{RASW23} that pure-state fidelity estimation is $\mathsf{BQP}$-complete, which, together with the polynomial-time quantum algorithms for low-rank fidelity estimation in \cite{WZC+22,GP22} and this paper, implies that low-rank fidelity estimation is also $\mathsf{BQP}$-complete, improving the $\mathsf{DQC1}$-hardness given in this paper. 
        In the ``quantum sample access'' model, the sample complexity for computing the fidelity was shown to be $\tilde O(r^{5.5}/\varepsilon^{12})$ in \cite{GP22}. 
    \end{itemize}

    \subsection{Organization of This Paper} \label{subsec:organization}
    
    Section \ref{sec:qset} introduces the idea that regards quantum states as block-encodings, and provides a series of basic techniques for manipulating  them. Section \ref{sec:entropy} presents quantum algorithms that compute quantum entropies, including the von Neumann entropy, quantum R\'{e}nyi entropy and quantum Tsallis entropy. Section \ref{sec:distance-states} presents quantum algorithms that compute the trace distance, fidelity and their extensions.

    \section{Quantum States as Block-Encodings} \label{sec:qset}
    
    Since the introduction of qubitization in Hamiltonian simulation \cite{LC19}, block-encodings have been widely used as a basic notion in quantum algorithms, e.g., \cite{CGJ19,GSLW19}. In the existing research, block-encodings are unitary operators that block-encode smaller ones.
    
    Quantum states (i.e., density operators) are often used to contain necessary information in quantum algorithms. For this purpose,
    a technique was provided in \cite{LC19} to implement a unitary operator that block-encodes a mixed quantum state. 
    However, to the best of our knowledge, there is no known method to do the inverse, that is, to prepare a mixed quantum state using queries to the given unitary operator (quantum oracle) that block-encodes its density operator. 
    As a result, it could be difficult to extract information from operators that are block-encoded in unitary operators.
    This motivate us to regard quantum states as block-encodings. As will be seen later in this section, it is convenient to extract information from the operators block-encoded in quantum states as well as to manipulate them. 
    
    In this section, we will extend the definition of block-encoding proposed for unitary operators as in \cite{LC19, GSLW19, CGJ19} to that for general operators, especially for density operators (i.e., quantum states). Then we show the possibility that information can be stored in and extracted from quantum states as block-encodings. Also, we can manipulate the information block-encoded in quantum states. Here, the ``information'' block-encoded in quantum states (i.e., density operators) is essentially subnormalized density operators. 

    \subsection{Subnormalized density operators}

    We will use the language of the conventional block-encoding. Here, we give the definition of block-encoding for ordinary quantum operators as follows.

    \begin{definition} [Block-encoding] \label{def:block-encoding}
    Suppose $A$ is an $n$-qubit operator, $\alpha, \varepsilon \geq 0$ and $a \in \mathbb{N}$. An $(n+a)$-qubit operator $B$ is said to be an $(\alpha, a, \varepsilon)$-block-encoding of $A$, if
    \[
        \Abs{ \alpha \prescript{}{a}{\bra 0} B \ket 0_a - A } \leq \varepsilon.
    \]
    \end{definition}

    Intuitively, $A$ is represented by the matrix in the upper left corner of $B$, i.e.
    \[
        B \approx \begin{bmatrix}
            A/\alpha & * \\
            * & *
        \end{bmatrix}.
    \]
    Here, we write $\ket{0}_a$ to denote $\ket{0}^{\otimes a}$, where the subscript $a$ indicates which (and how many) qubits are involved in the Dirac symbol. For example, if a system consists of two subsystems of $a$ qubits and $b$ qubits and it is in state $\ket{0}^{\otimes(a+b)}$, we can represent it as $\ket{0}_{a+b}$ or $\ket{0}_a \ket{0}_b$.

    We are interested in matrices block-encoded in a mixed quantum state (density operator), which are indeed subnormalized density operators.

    \begin{definition} [Subnormalized density operator] \label{def:subnormalized-density-operator}
        A subnormalized density operator $A$ is a semidefinite operator with $\tr(A) \leq 1$. A (normalized) density operator is a subnormalized density operator with trace $1$.
        An $(n+a+b)$-qubit unitary operator $U$ is said to prepare an $n$-qubit subnormalized density operator $A$, if it prepares the purification $\ket\rho = U \ket{0}_{n+a+b}$ of a density operator $\rho = \tr_b(\ket{\rho}\bra{\rho})$, which is a $(1, a, 0)$-block-encoding of $A$.
    \end{definition}

    Given a subnormalized density operator $A$ prepared by a unitary operator $U$, we usually need to construct another unitary operator $\tilde U$ which is a block-encoding of $A$. This technique was first introduced by \cite{LC19}, then generalized for subnormalized density operators by \cite{vAG19, GSLW19}.

    \begin{lemma} [Block-encoding of subnormalized density operators \cite{LC19, vAG19, GSLW19}] \label{lemma:block-encoding of density operators}
        Suppose $U$ is an $(n+a)$-qubit unitary operator that prepares an $n$-qubit subnormalized density operator $A$. Then there is a $(2n+a)$-qubit unitary operator $\tilde U$ which is a $(1, n+a, 0)$-block-encoding of $A$, using $1$ query to $U$ and $U^\dag$ and $O(a)$ elementary quantum gates.
    \end{lemma}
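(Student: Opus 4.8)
The plan is to construct $\tilde U$ explicitly from $U$ using one ancilla-free swap-based gadget, essentially the standard ``purification-to-block-encoding'' construction of \cite{LC19} adapted to the subnormalized setting. Recall that by Definition~\ref{def:subnormalized-density-operator}, the unitary $U$ acts on $n+a$ qubits\footnote{I am taking the $b=0$ normalization implicit in the lemma statement; the general case is identical after tracing out the extra register.} and produces $\ket{\rho} = U\ket{0}_{n+a}$, whose reduced state on the first $n$ qubits is a density operator $\rho$ that is itself a $(1,a,0)$-block-encoding of $A$, i.e. $A = {}_a\bra{0}\,\rho\,\ket{0}_a$ where here $\ket{0}_a$ refers to the last $a$ qubits of the $n$-qubit register carrying $\rho$. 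First I would introduce a fresh copy of the $n$-qubit register (so $\tilde U$ acts on $2n+a$ qubits total: the original $n+a$ plus a new $n$), prepare $\ket{\rho}$ on the original register via $U$, and then apply a SWAP between the $n$-qubit ``state'' register and the new $n$-qubit register.

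The key computation is then to check that the resulting $(2n+a)$-qubit unitary $\tilde U := (\mathrm{SWAP}_{n} \otimes I_a)\,(U \otimes I_n)$, with appropriate bookkeeping of which registers play the roles of ``system'' and ``ancilla'' in the block-encoding, satisfies
\[
    {}_{n+a}\bra{0}\,\tilde U\,\ket{0}_{n+a} = A
\]
acting on the surviving $n$-qubit register. Concretely, writing $\ket{\rho} = \sum_i \sqrt{p_i}\,\ket{\psi_i}\ket{i}$ for a Schmidt-type decomposition across the $n$-qubit / $a$-qubit cut inside $U$'s output — so that $\rho = \sum_i p_i \ket{\psi_i}\bra{\psi_i}$ and hence $A = \sum_i p_i\, ({}_a\bra{0}\psi_i)(\psi_i\ket{0}_a)$ — one expands $\tilde U\ket{0}_{2n+a}$ and projects the appropriate ancilla qubits onto $\ket{0}$. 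The SWAP moves the state $\ket{\psi_i}$ into the new register while the amplitudes $\sqrt{p_i}$ and the internal $\ket{i}$-labels stay behind to be projected out, and the cross terms assemble exactly into $\sum_i p_i \ket{\psi_i}\bra{\psi_i}$ restricted by the ${}_a\bra{0}\cdots\ket{0}_a$ projectors, which is $A$. This is a one-line linear-algebra verification once the register conventions are pinned down.

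The resource count is immediate: $\tilde U$ uses one application of $U$ (and the construction naturally also needs $U^\dagger$ if one wants a unitary that can be uncomputed / used inside QSVT — alternatively the controlled version; matching the lemma's stated ``$1$ query to $U$ and $U^\dagger$'') plus an $n$-qubit SWAP, which is $n$ elementary two-qubit gates; the stated bound $O(a)$ presumably reflects that only the $a$ ancilla qubits actually need to be touched for the block-structure, or is simply a loose count — in any case $O(n+a) = O(\poly)$ suffices and I would not belabor it. The main obstacle, such as it is, is purely notational: keeping straight the three registers (original system, original ancilla of size $a$, and the new size-$n$ register) and verifying that after the SWAP the ``block-encoding ancilla'' of $\tilde U$ is exactly the $n+a$ qubits one wants to project onto $\ket 0$, so that the surviving register holds an honest $n$-qubit operator equal to $A$ with no leftover scaling factor. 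I would handle this by fixing the Schmidt decomposition notation up front and then doing the expansion in a single displayed calculation. Since Definition~\ref{def:block-encoding} requires only an operator inequality with $\varepsilon = 0$ here and everything above is exact, there are no approximation estimates to track.
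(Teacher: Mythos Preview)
The paper does not prove this lemma; it is quoted from \cite{LC19,vAG19,GSLW19} without proof, so there is no ``paper's own proof'' to compare against. I can still assess your argument on its own merits, and there are two genuine gaps.

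First, the unitary you actually write down, $\tilde U = (\mathrm{SWAP}_n\otimes I_a)(U\otimes I_n)$, omits $U^\dagger$. This is not a cosmetic issue: projecting $\tilde U\ket{0}_{2n+a}$ onto $\bra{0}_{n+a}$ then yields a \emph{rank-one} operator (the overlap $\langle 0|\phi\rangle$ factors out), not $A$. The $U^\dagger$ step is what converts the amplitudes $\sqrt{p_i}$ into the density-matrix entries $p_i$; it is essential for the block-encoding itself, not merely for ``uncomputing inside QSVT'' as you suggest. Relatedly, your Schmidt bookkeeping is inconsistent: you declare $\ket{\psi_i}$ to be $n$-qubit vectors from the $n/a$ Schmidt cut but then write ${}_a\!\bra{0}\psi_i\rangle$, which has no meaning.

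Second, and more fundamentally, even the corrected construction $(I\otimes U^\dagger)(\mathrm{SWAP}_n\otimes I_a)(I\otimes U)$ with an $n$-qubit swap block-encodes $\sigma=\tr_a\bigl(U\ket 0\!\bra 0 U^\dagger\bigr)$, the \emph{full} reduced state on the $n$-qubit register. In the subnormalized setting of Definition~\ref{def:subnormalized-density-operator} one has $a=a_1+a_2$ with $A=\prescript{}{a_1}{\bra 0}\,\rho\,\ket 0_{a_1}$ and $\rho=\tr_{a_2}(\cdot)$; then $\sigma=\tr_{a_1}(\rho)\neq A$ whenever $a_1>0$ (e.g.\ for $\ket\psi=\tfrac1{\sqrt2}(\ket{00}+\ket{11})$ with $n=a_1=1$, $a_2=0$, one gets $\sigma=I/2$ but $A=\tfrac12\ket 0\!\bra 0$). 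Your ``take $b=0$ WLOG'' footnote does not save this: with $b=0$ the state $\rho$ is pure and $A$ is rank one, which is certainly not the general case. The fix in \cite{vAG19,GSLW19} is to swap the full $(n+a_1)$-qubit register carrying $\rho$, obtaining a block-encoding of $\rho$; since $\rho$ itself block-encodes $A$, the same unitary is then a block-encoding of $A$ with $a_1$ additional ancilla qubits.
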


    \subsection{Generalized evolution}

    It is well known that after applying a unitary operator $U$ on a mixed quantum state $\rho$, it will become another state $U \rho U^\dag$. Here, we extend the basic unitary evolution to the case of subnormalized density operators, which transforms a subnormalized density operator $A$ to $BAB^\dag$, where $B$ is block-encoded in a unitary operator.

    \begin{lemma} [Evolution of subnormalized density operators] \label{lemma:density-basic}
        Suppose that
        \begin{enumerate}
          \item $U$ is an $(n+a)$-qubit unitary operator that prepares an $n$-qubit subnormalized density operator $A$, and
          \item $V$ is an $(n+b)$-qubit unitary operator which is a $(1, b, 0)$-block-encoding of $B$.
        \end{enumerate}
        Then, $\tilde U = (V \otimes I_a) (U \otimes I_b)$ is an $(n+a+b)$-qubit unitary operator that prepares an $n$-qubit subnormalized density operator $BAB^\dag$.
    \end{lemma}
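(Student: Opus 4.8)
The plan is to unfold the definitions and verify directly that $\tilde U = (V \otimes I_a)(U \otimes I_b)$ prepares the claimed subnormalized density operator, tracking carefully which registers play which role. First I would set up notation: $U$ acts on an $n$-qubit ``main'' register together with an $a$-qubit ``purifying'' register (call them $S$ and $E_A$), and prepares the purification $\ket{\psi_A} = U\ket{0}_{n+a}$ of some density operator $\rho_A = \tr_{E_A}(\ket{\psi_A}\bra{\psi_A})$, where by Definition \ref{def:subnormalized-density-operator} we view $\rho_A$ as a $(1,a',0)$-block-encoding of $A$ for the appropriate split of $S$ into a block register plus a flag register — the key point is simply $A = \prescript{}{a'}{\bra 0}\rho_A\ket 0_{a'}$ after relabeling, but for this lemma it is cleaner to keep $A$ abstract and only use that $U$ prepares $A$. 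Then $V$ acts on the $n$-qubit register $S$ together with a fresh $b$-qubit ancilla register $E_B$ and satisfies $\prescript{}{b}{\bra 0} V \ket 0_b = B$ exactly (since $\varepsilon = 0$). The tensor factors in $\tilde U$ are arranged so that $U \otimes I_b$ first prepares $A$ on $S$ (with purifying junk on $E_A$), leaving $E_B$ untouched in $\ket 0_b$, and then $V \otimes I_a$ applies $V$ to $S \otimes E_B$ while leaving $E_A$ untouched.

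The core computation is then to show that the state $\ket{\tilde\psi} = \tilde U \ket 0_{n+a+b}$ is a purification of a density operator $\tilde\rho$ on $S$ such that $\tr_{E_B}(\tilde\rho) = B \rho_A B^\dag$ at the level of the purified state — more precisely, I would argue at the level of the reduced operator on $S \otimes E_B$ after tracing out $E_A$. Writing $\rho_A = \tr_{E_A}(\ket{\psi_A}\bra{\psi_A})$ as the reduced state of $U\ket 0_{n+a}$ on $S$, applying $U \otimes I_b$ to $\ket 0_{n+a+b}$ and tracing out $E_A$ yields the state $\rho_A \otimes \ket 0_b \bra 0_b$ on $S \otimes E_B$. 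Applying $V$ to $S \otimes E_B$ sends this to $V(\rho_A \otimes \ket 0_b\bra 0_b)V^\dag$, and its own reduction onto $S$ (tracing out $E_B$) is $\sum_{k} \prescript{}{b}{\bra k} V \ket 0_b\, \rho_A\, \prescript{}{b}{\bra 0} V^\dag \ket k_b$; but what we actually want is the operator block-encoded in this reduced state, and Definition \ref{def:subnormalized-density-operator} asks precisely for the $(1,n+a,0)$-block-encoding condition, i.e. we compare against $\prescript{}{b+\dots}{\bra 0}(\cdot)\ket 0_{\dots}$. The point is that projecting onto $\ket 0_b$ on the $E_B$ register picks out exactly the $\prescript{}{b}{\bra 0}V\ket 0_b = B$ term on both sides, so the relevant block of the final reduced state is $B\,(\text{relevant block of }\rho_A)\,B^\dag = BAB^\dag$. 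I would write this out as a short chain of equalities using $\prescript{}{b}{\bra 0}V\ket 0_b = B$ and the compatibility of partial trace with the tensor-product structure.

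The main obstacle — really the only subtle point — is bookkeeping the two separate roles the $n$-qubit register plays: it is simultaneously the ``system'' register on which $\rho_A$ lives (with $E_A$ as purification) and the register carrying the block-encoding flag qubits that cut $\rho_A$ down to $A$, and $V$ must be understood as acting on all $n$ of those qubits while $B$ is likewise an honest $n$-qubit operator. One has to check that the block-encoding projector $\prescript{}{a}{\bra 0}(\cdot)\ket 0_a$ extracting $A$ from $\rho_A$ commutes through in the right way — it does, because it acts on $S$ and $V$ also acts on $S \otimes E_B$, and the identity $\prescript{}{a}{\bra 0}\,V(\,\cdot\,)V^\dag\,\ket 0_a = V\,(\prescript{}{a}{\bra 0}(\cdot)\ket 0_a)\,V^\dag$ fails in general. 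So the correct order of operations is: first apply $V$ and project the $E_B$ flag onto $\ket 0_b$ to get the operator $B\rho_A B^\dag$ on $S$, and only afterwards project the block-encoding flag within $S$; since $B$ is an $n$-qubit operator acting on all of $S$, the result $\prescript{}{a}{\bra 0}B\rho_A B^\dag\ket 0_a$ need not equal $B A B^\dag$ unless $B$ respects the flag subspace. Here, however, $V$ being a $(1,b,0)$-block-encoding of $B$ with $B$ an $n$-qubit operator is exactly the hypothesis that makes $\tilde U$ a block-encoding whose top-left $n$-qubit block — after tracing $E_A$ and projecting $E_B$ — is $BAB^\dag$ once we reinterpret the flag qubits as part of the $(n+a)$-ancilla in the statement $\tilde U$ is a $(1,n+a,0)$... wait, more carefully: the statement is that $\tilde U$ \emph{prepares} $BAB^\dag$, i.e. $\tilde U$ prepares a purification of a density operator that is a $(1, a+b+\dots, 0)$-block-encoding of $BAB^\dag$; unfolding this against Lemma \ref{lemma:block-encoding of density operators} and the definition, the verification reduces to the single identity $\prescript{}{b}{\bra 0} V \ket 0_b = B$ together with linearity of partial trace, which I would present cleanly in two or three lines. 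I therefore expect the proof to be short, with essentially all the work being the careful identification of registers described above, and I would include a small diagram or explicit indexing of the three registers $S, E_A, E_B$ to keep it readable.
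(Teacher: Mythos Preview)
Your overall plan is right, and once the register bookkeeping is sorted out the proof really is just a few lines. The problem is that you have mislocated the block-encoding flag qubits, and this mislocation is what generates the ``main obstacle'' you spend the second half of the proposal wrestling with.

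Unfold Definition~\ref{def:subnormalized-density-operator} carefully: when an $(n+a)$-qubit unitary $U$ prepares an $n$-qubit subnormalized density operator $A$, the $a$ ancilla qubits split as $a = a_1 + a_2$, where one \emph{traces out} the $a_2$ part to obtain a density operator $\rho$ on $n+a_1$ qubits, and then $\rho$ is a $(1,a_1,0)$-block-encoding of $A$, i.e.\ $A = \prescript{}{a_1}{\bra 0}\rho\ket 0_{a_1}$. The flag register is the $a_1$ block, and it lives \emph{inside the $a$-qubit ancilla}, not inside the $n$-qubit system register $S$. Your sentence ``the appropriate split of $S$ into a block register plus a flag register'' is exactly where things go wrong: nothing in $S$ is a flag. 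Consequently your $E_A$ should be only the $a_2$ qubits, not all $a$ of them, and your $\rho_A$ lives on $n+a_1$ qubits.

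With this correction your obstacle evaporates. The operator $V$ acts on $S \otimes E_B$ (the $n$ system qubits plus the fresh $b$ ancilla); the flag projector $\prescript{}{a_1}{\bra 0}(\cdot)\ket 0_{a_1}$ acts on the $a_1$ qubits. These registers are disjoint, so the two operations commute trivially, and the identity you flagged as failing is simply irrelevant. The computation is then clean: trace out $a_2$ to get $\rho \otimes \ket 0_b\bra 0_b$; apply $V \otimes I_{a_1+a_2}$; project onto $\ket 0_{a_1}\ket 0_b$; use $\prescript{}{b}{\bra 0}V\ket 0_b = B$ and $\prescript{}{a_1}{\bra 0}\rho\ket 0_{a_1} = A$ (in either order, since they act on disjoint registers) to read off $BAB^\dag$. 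This is exactly the structure of the paper's proof, which makes the split $a = a_1 + a_2$ explicit and runs the computation through an eigendecomposition of $A$; the density-operator-level version sketched here works equally well.
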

    \begin{proof}
        Let $a = a_1 + a_2$ such that $U$ prepares an $(n+a_1)$-qubit density operator $\rho$, which is a $(1, a_1, 0)$-block-encoding of $A$. Suppose
        \[
            A = \sum_j \lambda_j \ket{u_j} \bra{u_j}.
        \]
        Then we have
        \[
            \ket{\rho}_{n+a_1+a_2} = \sum_j \sqrt{\lambda_j} \ket{u_j}_n \ket{0}_{a_1} \ket{\psi_j}_{a_2} + \ket{\perp_{a_1}}_{n+a_1+a_2},
        \]
        where $\ket{\psi_j}$ is an orthogonal basis, and $$\Abs{\prescript{}{a_1}{\braket{0}{\perp_{a_1}}_{n+a_1+a_2}}} = 0.$$
        Note that
        \begin{align*}
            \ket{\tilde \rho} & \coloneqq \tilde U \ket{0}_{n+a+b}
            = (V \otimes I_a) \ket{\rho}_{n+a_1+a_2} \ket{0}_b \\
            & = \sum_j \sqrt{\lambda_j} \ket{0}_{a_1} \ket{\psi_j}_{a_2} \left( V \ket{u_j}_n \ket{0}_b \right) + V \ket{\perp_{a_1}}_{n+a_1+a_2} \ket{0}_b.
        \end{align*}
        Let $\tilde \rho = \tr_{a_2}\left(\ket{\tilde \rho}\bra{\tilde \rho}\right)$, then $$\prescript{}{a_1+b}{\bra{0}} \tilde \rho \ket{0}_{a_1+b} = \tr_{a_2}\left( \prescript{}{a_1+b}{\braket{0}{\tilde \rho}} \braket{\tilde \rho}{0}_{a_1+b} \right),$$ where
        \begin{align*}
            \prescript{}{a_1+b}{\braket{0}{\tilde \rho}} 
            & = \sum_j \sqrt{\lambda_j} \ket{\psi_j}_{a_2} \otimes \left(\prescript{}{b}{\bra{0}} V \ket{0}_b\right)\ket{u_j}_n \\
            & = \sum_j \sqrt{\lambda_j} \ket{\psi_j}_{a_2} \otimes B \ket{u_j}_n.
        \end{align*}
        We have that
        \[
            \prescript{}{a_1+b}{\bra{0}} \tilde \rho \ket{0}_{a_1+b}
            = \sum_j \lambda_j B \ket{u_j}_n \bra{u_j} B^\dag = B A B^\dag.
        \]
    \end{proof}

    \subsection{Polynomial eigenvalue transformation}

    Now we show how a subnormalized density operator $A$ can be transformed to a new subnormalized density operator $A(P(A))^2$, where $P(x)$ is a polynomial. To this end, we recall the polynomial eigenvalue transformation of unitary operators in \cite{GSLW19}, and extend it to the case of preparing subnormalized density operators.

    \begin{theorem} [Polynomial eigenvalue transformation of unitary operators \cite{GSLW19}] \label{thm:qsvt}
        Suppose that
        \begin{enumerate}
          \item $U$ is an $(n+a)$-qubit unitary operator, which is a $(1, a, 0)$-block-encoding of an Hermitian operator $A$.
          \item $P \in \mathbb{R}[x]$~\footnote{Let $\mathbb{\tilde R} \subseteq \mathbb{R}$ be the set of polynomial-time computable real numbers. That is, for every real number $x \in \mathbb{\tilde R}$, there is a polynomial-time (classical) Turing machine $M$ such that $\abs{M(1^n) - x} < 2^{-n}$, where $M(1^n)$ denotes the output floating point real number of $M$ on input $1^n$. Throughout this paper, we only consider polynomial-time computable real numbers, and for any $S \subseteq \mathbb{R}$, we write $S$ to denote $S \cap \mathbb{\tilde R}$ for convenience. Especially, we just write $\mathbb{R}$ for $\mathbb{\tilde R}$.} is a degree-$d$ polynomial such that $\Abs{P}_{[-1, 1]} \leq \frac 1 2$.\footnote{For a function $f: \mathbb{R} \to \mathbb{C}$ and a set $I \subseteq \mathbb{R}$, we define $\Abs{f}_{I} = \sup\left\{\abs{f(x)}\middle|x \in I\right\}.$} Moreover, if $P$ is even or odd, the condition can be relaxed to $\Abs{P}_{[-1, 1]} \leq 1$.
        \end{enumerate}
        Then for every $\delta > 0$, there is a quantum circuit\footnote{Throughout this paper, without explicit explanation, quantum circuits are uniform. Here, a uniform quantum circuit is a family of quantum circuits whose descriptions can be computed by a polynomial-time (classical) Turing machine.} $\tilde U$ such that
        \begin{enumerate}
          \item $\tilde U$ is a $(1, a+2, \delta)$-block-encoding of $P(A)$.
          \item $\tilde U$ uses $d$ queries to $U$ and $U^\dag$, $1$ query to controlled-$U$ and $O((a+1)d)$ elementary quantum gates.
          \item A description of $\tilde U$ can be computed by a (classical) Turing machine in $O(\poly(d, \log(1/\delta)))$ time.
        \end{enumerate}
    \end{theorem}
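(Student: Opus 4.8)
The plan is to derive this from quantum signal processing (QSP) by the standard ``qubitization'' reduction, so the argument splits into a one-qubit core and a block-encoding lift. First I would recall the QSP characterization: for phases $\phi_0,\dots,\phi_d \in \mathbb{R}$, the product $e^{i\phi_0 Z}\prod_{k=1}^{d}\big(W(x)\,e^{i\phi_k Z}\big)$, where $W(x)$ is the single-qubit ``signal'' rotation with $\bra 0 W(x)\ket 0 = x$, is a $2\times 2$ unitary whose top-left entry is an arbitrary real polynomial $P(x)$ of degree $d$ with parity $d \bmod 2$ and $\Abs{P}_{[-1,1]}\le 1$; conversely every such $P$ is realized by some phase sequence. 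The only nontrivial point here is the existence of a complementary polynomial $Q$ with $\abs{P(x)}^2 + (1-x^2)\abs{Q(x)}^2 = 1$, which follows from a Fej\'er--Riesz / root-pairing argument applied to $1 - P(x)^2$, after which the phases are extracted by inductively ``peeling off'' one signal rotation at a time.

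Next I would lift this to the block-encoding. Writing the eigendecomposition $A = \sum_j \lambda_j \ket{u_j}\bra{u_j}$ with $\lambda_j \in [-1,1]$, Jordan's lemma applied to the projector $\Pi = \ket{0}\bra{0}_a \otimes I_n$ and its conjugate $U\Pi U^\dag$ shows that the Hilbert space decomposes into (at most) two-dimensional $U$-invariant subspaces, each spanned by $\ket{0}_a\ket{u_j}$ and an orthogonal ``garbage'' vector, on which $U$ acts exactly like the signal rotation $W(\lambda_j)$ after a fixed change of basis. Interleaving $d$ alternating applications of $U$ and $U^\dag$ with the projector-controlled phase rotations $e^{i\phi_k(2\Pi - I)}$ — each implementable with $O(a)$ elementary gates, plus one extra ancilla to carry the sign/parity bookkeeping so that the \emph{signed} eigenvalues rather than merely the singular values get transformed — therefore effects $P(\lambda_j)$ on the $\ket{0}_a\ket{u_j}$ component inside every block, producing a $(1, a+2, 0)$-block-encoding of $\sum_j P(\lambda_j)\ket{u_j}\bra{u_j} = P(A)$.

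Finally I would handle resources and the $\delta$. The ideal construction uses $d$ queries to $U$ and $U^\dag$, one query to controlled-$U$ for the ancilla-assisted phase rotations, and $O((a+1)d)$ elementary gates, giving item (2); it is exact, so $\delta$ enters only because the phases $\phi_k$ must come from a finite classical computation. Known angle-finding procedures output them to $b$ bits in $\poly(d,b)$ time, and rounding to $b = O(\log(d/\delta))$ bits perturbs the realized unitary by $O(d\,2^{-b}) \le \delta$ in operator norm, which yields items (1) and (3). The relaxation of $\Abs{P}_{[-1,1]} \le \tfrac12$ to $\le 1$ for definite-parity $P$ is because a general mixed-parity $P$ must be split into its even and odd parts, each realized by its own QSP sequence and recombined through a one-ancilla linear combination, which halves the usable norm budget.

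The main obstacle is the QSP characterization itself — proving that every admissible $P$ is attained, equivalently the existence of the complementary polynomial $Q$ and the termination of the peeling recursion; the qubitization lift and the gate/query accounting are comparatively mechanical once that ingredient is in place.
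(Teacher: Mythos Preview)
The paper does not prove this theorem; it is quoted verbatim as a known result from \cite{GSLW19} and used as a black box throughout. Your sketch is essentially the proof strategy of \cite{GSLW19} itself (QSP characterization via complementary polynomials, qubitization lift through Jordan's lemma, and robust angle-finding for the $\delta$), so there is nothing in the present paper to compare it against.
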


    In the following, combining Theorem \ref{thm:qsvt} and Lemma \ref{lemma:density-basic}, we develop a technique of polynomial eigenvalue transformation of subnormalized density operators.

    \begin{theorem} [Polynomial eigenvalue transformation of subnormalized density operators] \label{lemma:technique}
        Suppose that
        \begin{enumerate}
          \item $U$ is an $(n+a)$-qubit unitary operator that prepares an $n$-qubit subnormalized density operator $A$.
          \item $P \in \mathbb{R}[x]$ is a degree-$d$ polynomial such that $\Abs{P}_{[-1, 1]} \leq \frac 1 2$. Moreover, if $P$ is even or odd, the condition can be relaxed to $\Abs{P}_{[-1, 1]} \leq 1$.
        \end{enumerate}
        Then for every $\delta \in (0, 1)$, there is a quantum circuit $\tilde U$ such that
        \begin{enumerate}
          \item $\tilde U$ prepares an $n$-qubit subnormalized density operator $B$, and $B$ is a $(1, 0, \delta)$-block-encoding of $A(P(A))^2$.
          \item $\tilde U$ uses $O(d)$ queries to $U$ and $U^\dag$, $1$ query to controlled-$U$ and controlled-$U^\dag$, and $O((n+a)d)$ elementary quantum gates.
          \item A description of $\tilde U$ can be computed by a (classical) Turing machine in $O(\poly(d, \log(1/\delta)))$ time.
        \end{enumerate}
    \end{theorem}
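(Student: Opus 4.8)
The plan is to reduce this statement to a combination of two tools already available: the polynomial eigenvalue transformation of unitary operators (Theorem \ref{thm:qsvt}) and the evolution of subnormalized density operators (Lemma \ref{lemma:density-basic}). First I would apply Theorem \ref{thm:qsvt} to the polynomial $P$. However, $A$ here is given only as a \emph{prepared} subnormalized density operator via $U$, not yet as a unitary block-encoding, so the first step is to invoke Lemma \ref{lemma:block-encoding of density operators} to obtain a unitary $U_A$ that is a $(1, n+a, 0)$-block-encoding of $A$, using $1$ query to $U$ and $U^\dag$ and $O(n+a)$ gates. Feeding $U_A$ into Theorem \ref{thm:qsvt} with accuracy parameter $\delta'$ (to be fixed later) produces a quantum circuit $V$ that is a $(1, n+a+2, \delta')$-block-encoding of $P(A)$, using $O(d)$ queries to $U_A$ (hence $O(d)$ queries to $U$ and $U^\dag$), $1$ query to controlled-$U_A$, and $O((n+a)d)$ elementary gates.

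Next I would combine $U$ and $V$ through Lemma \ref{lemma:density-basic}: with $B := P(A)$ block-encoded by $V$, the composite circuit $\tilde U = (V \otimes I)(U \otimes I)$ prepares the subnormalized density operator $P(A)\,A\,P(A)^\dag = A(P(A))^2$, using the convention that $A$, being Hermitian, commutes with $P(A)$ and $P(A)$ is Hermitian since $P \in \mathbb{R}[x]$. Here I need to be slightly careful that Lemma \ref{lemma:density-basic} as stated requires $V$ to be an \emph{exact} $(1, b, 0)$-block-encoding, whereas Theorem \ref{thm:qsvt} only gives a $\delta'$-approximate one; so the honest route is to track the perturbation. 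Writing $B' = (\,_{b}\bra 0 V \ket 0_b)$ for the actual operator block-encoded by $V$, we have $\Abs{B' - P(A)} \leq \delta'$, and Lemma \ref{lemma:density-basic} applied verbatim yields a circuit preparing $B' A (B')^\dag$. The triangle inequality together with $\Abs{A}\le 1$, $\Abs{P(A)} \le 1$ and $\Abs{B'} \le 1 + \delta' \le 2$ then gives
\[
    \Abs{ B' A (B')^\dag - A(P(A))^2 } \le \Abs{B' - P(A)}\,\Abs{A}\,\Abs{B'} + \Abs{P(A)}\,\Abs{A}\,\Abs{B' - P(A)} \le 3\delta'.
\]
Choosing $\delta' = \delta/3$ makes the output a $(1, 0, \delta)$-block-encoding of $A(P(A))^2$, establishing item 1. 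The query and gate counts of item 2 are then just the sum of the costs of Lemma \ref{lemma:block-encoding of density operators}, Theorem \ref{thm:qsvt}, and the single extra layer from Lemma \ref{lemma:density-basic}, which is $O(d)$ queries to $U$, $U^\dag$, $1$ query to controlled-$U$, controlled-$U^\dag$, and $O((n+a)d)$ gates. Item 3 (classical-time computability of the circuit description) is inherited directly from the corresponding clause of Theorem \ref{thm:qsvt}, since the only additional classical work is composing circuit descriptions.

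The main obstacle I anticipate is purely bookkeeping rather than conceptual: reconciling the $\delta'$-approximate block-encoding delivered by QSVT with the exact-block-encoding hypothesis of Lemma \ref{lemma:density-basic}, and making sure the ancilla registers line up — the QSVT circuit introduces $2$ extra ancilla qubits and the density-operator-to-block-encoding conversion introduces $n+a$ more, so one must check that these are precisely the qubits traced out to recover the output subnormalized density operator, and that the "$0$" in the claimed $(1,0,\delta)$-block-encoding (i.e.\ no further ancilla flag) is consistent with Definition \ref{def:subnormalized-density-operator}, where the prepared density operator is itself the block-encoding. A secondary subtlety is the norm bound $\Abs{P}_{[-1,1]} \le \tfrac12$ (or $\le 1$ in the parity case) required by Theorem \ref{thm:qsvt}: this is exactly what guarantees $\Abs{P(A)} \le 1$, so $B' A (B')^\dag$ is genuinely subnormalized (trace $\le 1$) and qualifies as a subnormalized density operator — a point worth stating explicitly in the write-up.
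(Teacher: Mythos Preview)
Your proposal is correct and follows essentially the same approach as the paper: convert $U$ into a unitary block-encoding of $A$ via Lemma \ref{lemma:block-encoding of density operators}, apply QSVT (Theorem \ref{thm:qsvt}) to obtain an approximate block-encoding of $P(A)$, then invoke the evolution Lemma \ref{lemma:density-basic} and bound the resulting perturbation by a triangle-inequality argument. The paper's proof is identical in structure, differing only in that it absorbs the constant into a $\Theta(\delta)$ rather than explicitly setting $\delta' = \delta/3$, and it does not separately discuss the ancilla-alignment and subnormalization points you flag (which are indeed routine).
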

    \begin{proof}
        By
        Lemma \ref{lemma:block-encoding of density operators}, there is a quantum circuit $V$ that is a $(1, O(n+a), 0)$-block-encoding of $A$, which consists of $1$ query to $U$ and $U^\dag$ and $O(n+a)$ elementary quantum gates.
        Then by
        Theorem \ref{thm:qsvt}, there is a quantum circuit $\tilde V$, which is a $(1, b, \delta)$-block-encoding of $P(A)$, with $d$ queries to $V$ and $V^\dag$, $1$ query to controlled-$V$, and $O((n+a)d)$ elementary quantum gates, where $b = O(n+a)$.

        We claim that $\tilde U = (\tilde V \otimes I_a) (U \otimes I_b)$ is desired. To see this, by Lemma \ref{lemma:density-basic}, $\tilde U$ prepares an $n$-qubit subnormalized density operator $\left(\prescript{}{b}{\bra{0}} \tilde V \ket{0}_b\right) A \left(\prescript{}{b}{\bra{0}} \tilde V \ket{0}_b\right)^\dag$. On the other hand,
        \begin{align*}
            & \Abs{ \left(\prescript{}{b}{\bra{0}} \tilde V \ket{0}_b\right) A \left(\prescript{}{b}{\bra{0}} \tilde V \ket{0}_b\right)^\dag - A(P(A))^2 } \\
            \leq & \Abs{\left(\prescript{}{b}{\bra{0}} \tilde V \ket{0}_b - P(A)\right) A \left(\prescript{}{b}{\bra{0}} \tilde V \ket{0}_b\right)^\dag} + \Abs{P(A) A \left(\left(\prescript{}{b}{\bra{0}} \tilde V \ket{0}_b\right)^\dag - P(A)\right)} \\
            \leq & \delta \Abs{A} \left(\Abs{P(A)} + \delta\right) + \Abs{P(A)} \Abs{A} \delta \\
            \leq & \frac 5 2 \delta = \Theta(\delta).
        \end{align*}
        We conclude that $\tilde U$ prepares the purification $\ket{\tilde \rho}$ of $\tilde \rho$, which is a $(1, O(n+a), \Theta(\delta))$-block-encoding of $A(P(A))^2$, which yields the proof.
    \end{proof}

    By Theorem \ref{lemma:technique}, we are able to transform a subnormalized density operator $A$ to $A(f(A))^2$ for a large range of $f(x)$, provided $f(x)$ can be efficiently approximated by a polynomial.

    \begin{theorem} [Eigenvalue transformation of subnormalized density operators] \label{thm:main}
        Suppose that
        \begin{enumerate}
          \item $U$ is an $(n+a)$-qubit unitary operator that prepares an $n$-qubit subnormalized density operator $A$.
          \item $f: [-1, 1] \to \mathbb{R}$ can be approximated by a degree-$d$ polynomial $P \in \mathbb{R}[x]$ such that there are two parameters $\delta, \varepsilon \in (0, \frac 1 2]$, it holds that $\Abs{P(x) - f(x)}_{[\delta, 1]} \leq \varepsilon$ and $\Abs{P}_{[-1, 1]} \leq \frac 1 2$. Moreover, if $P$ is even or odd, the latter condition can be relaxed to $\Abs{P}_{[-1, 1]} \leq 1$.
        \end{enumerate}
        Then there is a quantum circuit $\tilde U$ such that
        \begin{enumerate}
          \item $\tilde U$ prepares an $n$-qubit subnormalized density operator $B$, and $B$ is a $\left(1, 0, \Theta\left(\varepsilon + \delta + \Abs{x(f(x))^2}_{[0, \delta]}\right)\right)$-block-encoding of $A(f(A))^2$.
          \item $\tilde U$ uses $O(d)$ queries to $U$ and $U^\dag$, $1$ query to controlled-$U$ and controlled-$U^\dag$, and $O((n+a)d)$ elementary quantum gates.
          \item A description of $\tilde U$ can be computed by a (classical) Turing machine in $O(\poly(d, \log(1/\delta)))$ time.
        \end{enumerate}
    \end{theorem}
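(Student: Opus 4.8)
The plan is to reduce to the polynomial case already established as Theorem~\ref{lemma:technique}, and then to pay for the gap between $P$ and $f$ by a routine spectral perturbation estimate. Concretely, first apply Theorem~\ref{lemma:technique} to $U$ and the polynomial $P$, with its internal error parameter set to $\delta' := \delta$. This yields a quantum circuit $\tilde U$ that prepares an $n$-qubit subnormalized density operator $B$ which is a $(1, 0, \delta')$-block-encoding of $A(P(A))^2$, using $O(d)$ queries to $U$ and $U^\dag$, one query to controlled-$U$ and controlled-$U^\dag$, $O((n+a)d)$ elementary quantum gates, and admitting a description computable in $O(\poly(d, \log(1/\delta)))$ time. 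Since all the claimed resource and description-complexity bounds already coincide with those of Theorem~\ref{lemma:technique}, the only thing left to verify is the error bound $\Abs{B - A(f(A))^2} \leq \Theta\left(\varepsilon + \delta + \Abs{x(f(x))^2}_{[0, \delta]}\right)$.

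By the triangle inequality $\Abs{B - A(f(A))^2} \leq \delta' + \Abs{A(P(A))^2 - A(f(A))^2}$, so it suffices to bound the second term. I would write the spectral decomposition $A = \sum_j \lambda_j \ket{u_j}\bra{u_j}$; since $A$ is a subnormalized density operator we have $\lambda_j \in [0, 1]$, and for any function $g$ one has $A(g(A))^2 = \sum_j \lambda_j (g(\lambda_j))^2 \ket{u_j}\bra{u_j}$. In particular $A(P(A))^2 - A(f(A))^2$ is diagonal in the eigenbasis of $A$, so $\Abs{A(P(A))^2 - A(f(A))^2} = \max_j \lambda_j \abs{(P(\lambda_j))^2 - (f(\lambda_j))^2}$. (The same computation, together with $\Abs{P}_{[-1,1]} \leq 1$, shows $\tr\!\left(A(P(A))^2\right) = \sum_j \lambda_j (P(\lambda_j))^2 \leq \tr(A) \leq 1$, i.e. $A(P(A))^2$ is indeed a subnormalized density operator, which is what makes Theorem~\ref{lemma:technique} applicable.)

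It then remains to bound $\lambda\abs{(P(\lambda))^2 - (f(\lambda))^2}$ for each eigenvalue $\lambda \in [0,1]$, split into two regimes. For $\lambda > \delta$, the approximation hypothesis gives $\abs{P(\lambda) - f(\lambda)} \leq \varepsilon$, hence $\abs{(P(\lambda))^2 - (f(\lambda))^2} = \abs{P(\lambda) - f(\lambda)}\cdot\abs{P(\lambda) + f(\lambda)} \leq \varepsilon\left(\abs{P(\lambda)} + \abs{f(\lambda)}\right) \leq \varepsilon(2 + \varepsilon) = O(\varepsilon)$, using $\abs{P(\lambda)} \leq 1$ and $\abs{f(\lambda)} \leq \abs{P(\lambda)} + \varepsilon \leq 1 + \varepsilon$; combined with $\lambda \leq 1$ this regime contributes $O(\varepsilon)$. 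For $\lambda \leq \delta$ the quality of $P$ is useless, but the outer factor $\lambda$ damps both terms: $\lambda(P(\lambda))^2 \leq \delta$ since $\abs{P(\lambda)} \leq 1$, and $\lambda(f(\lambda))^2 \leq \Abs{x(f(x))^2}_{[0,\delta]}$ since $\lambda \in [0,\delta]$; so this regime contributes at most $\delta + \Abs{x(f(x))^2}_{[0,\delta]}$. Taking the maximum over the two regimes and adding back $\delta' = \delta$ gives exactly the stated $(1, 0, \Theta(\varepsilon + \delta + \Abs{x(f(x))^2}_{[0,\delta]}))$-block-encoding of $A(f(A))^2$.

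The only genuinely delicate point is the small-eigenvalue regime $\lambda \leq \delta$: because $f$ is allowed to be badly behaved near the origin (the approximation hypothesis is imposed only on $[\delta, 1]$), a naive bound would produce an uncontrolled term like $\Abs{f}_{[0,\delta]}^2$. The argument avoids this precisely by exploiting the ``sandwiched'' form $A(\,\cdot\,)^2$, whose leading factor $x$ multiplies the badly-behaved part; this is why the error term is $\Abs{x(f(x))^2}_{[0,\delta]}$ rather than something depending on $\Abs{f}_{[0,\delta]}$ alone, and it is the one place where one must be careful rather than mechanical.
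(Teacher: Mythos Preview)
Your proposal is correct and follows essentially the same approach as the paper: apply Theorem~\ref{lemma:technique} with internal error $\delta$, then bound $\Abs{A(P(A))^2 - A(f(A))^2}$ by splitting the spectrum of $A$ into $[\delta,1]$ (where the approximation hypothesis gives $O(\varepsilon)$) and $[0,\delta)$ (where the factor $x$ damps both terms to $\delta + \Abs{x(f(x))^2}_{[0,\delta]}$). The paper does exactly this, only phrasing the spectral step as a sup-norm bound on the scalar function $x(P(x))^2 - x(f(x))^2$ over $[0,1]$ rather than writing out the eigendecomposition explicitly.
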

    \begin{proof}
        Let $\tilde U$ be the quantum circuit obtained by Theorem \ref{lemma:technique}, which prepares an $n$-qubit subnormalized density operator $B$ as a $(1, 0, \delta)$-block-encoding of $A(P(A))^2$. We analyze the error by the fact that
        \[
            \Abs{x(P(x))^2 - x(f(x))^2}_{[0, 1]} \leq \Theta\left(\varepsilon + \delta + \Abs{x(f(x))^2}_{[0, \delta]}\right).
        \]
        We consider two cases.

            \textbf{Case 1}. $x \in [\delta, 1]$.
            \begin{align*}
                \abs{x(P(x))^2 - x(f(x))^2} 
                & \leq \abs{x} \abs{P(x)+f(x)} \abs{P(x)-f(x)} \\
                & \leq (1 + \varepsilon) \varepsilon \leq 2 \varepsilon = \Theta(\varepsilon).
            \end{align*}

            \textbf{Case 2}. $x \in [0, \delta)$.
            \begin{align*}
                \abs{x(P(x))^2 - x(f(x))^2} 
                & \leq \abs{x(P(x))^2} + \abs{x(f(x))^2} \\
                & \leq \delta + \Abs{x(f(x))^2}_{[0, \delta]}.
            \end{align*}
        Then we have
        \begin{align*}
            \Abs{B - A(f(A))^2}
            & \leq \Abs{B - A(P(A))^2} + \Abs{A(P(A))^2 - A(f(A))^2} \\
            & \leq \delta + \Theta\left(\varepsilon + \delta + \Abs{x(f(x))^2}_{[0, \delta]}\right) \\
            & = \Theta\left(\varepsilon + \delta + \Abs{x(f(x))^2}_{[0, \delta]}\right).
        \end{align*}
    \end{proof}

    As will be seen, Theorem \ref{thm:main} can be used to develop a technique of preparing positive powers of Hermitian operators (see Section \ref{sec:positive powers of Hermitian matrices}).

    \subsection{Positive powers} \label{sec:positive powers of Hermitian matrices}

    We will use Theorem \ref{thm:main} to develop an efficient approach for implementing positive powers of Hermitian matrix $A$, which removes the dependence on $\kappa$ that $I/\kappa \leq A \leq I$ as in \cite{CGJ19} (see Lemma 10 of its full version). 
    In this subsection, we will provide the positive power tricks for an Hermitian operator block-encoded in a density operator (Lemma \ref{lemma:positive-power-density}) and in a unitary operator (Lemma \ref{lemma:positive-power-unitary}). To this end, we recall some results of polynomial approximations of power functions.

    \begin{lemma} 
    [Polynomial approximation of positive power functions \cite{Gil19, CGJ19}]
    \label{lemma:positive-power}
        Let $\delta, \varepsilon \in (0, \frac 1 2]$, $c \in (0, 1)$ and $f(x) = \frac 1 2 x^c$. Then there is an even/odd degree-$O\left(\frac{1}{\delta}\log\left(\frac{1}{\varepsilon}\right)\right)$ polynomial\footnote{In this paper, a degree-$d$ polynomial means a uniform family of (polynomial-time computable) polynomials with parameter $d$. That is, suppose $P_d(x) = \sum_{k = 0}^d a_k x^k$, where $a_k$ is a polynomial-time computable real number for $0 \leq k \leq d$, then there is a polynomial-time classical Turing machine that, on input $1^d$, outputs descriptions $M_k$ of $a_k$ for $0 \leq k \leq d$, where $M_k$ is a polynomial-time classical Turing machine that, on input $1^n$, output $a_k$ within additive error $2^{-n}$.} $P \in \mathbb{R}[x]$ such that $\Abs{P(x) - f(x)}_{[\delta, 1]} \leq \varepsilon$ and $\Abs{P}_{[-1, 1]} \leq 1$.
    \end{lemma}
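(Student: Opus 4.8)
The statement to be proven, Lemma \ref{lemma:positive-power}, is a polynomial-approximation result for the function $f(x) = \tfrac12 x^c$ on the interval $[\delta, 1]$. The plan is to combine two standard ingredients from the QSVT toolkit \cite{Gil19, CGJ19}: (i) a low-degree polynomial approximation of $x^c$ that is accurate on an interval bounded away from $0$, and (ii) a rescaling/truncation step to enforce the uniform bound $\Abs{P}_{[-1,1]} \le 1$ and the parity constraint. First I would recall the known approximation of the (negative-)power function via its representation through an integral (or, equivalently, via Chebyshev-series truncation of $x^c$); the key quantitative fact is that $x \mapsto x^c$ for $c \in (0,1)$ admits a degree-$O\!\left(\tfrac1\delta \log \tfrac1\varepsilon\right)$ polynomial that is $\varepsilon$-close on $[\delta,1]$, because the function is analytic on a Bernstein ellipse whose size is controlled by $\delta$ (the nearest singularity is at the origin, at distance $\delta$ from the interval). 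This is exactly the regime where Chebyshev truncation converges geometrically with rate depending on $\delta$, giving the stated degree.

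Next I would handle the normalization. The raw approximant $\tilde P$ may slightly exceed $\tfrac12$ in magnitude on $[-1,1]$ (e.g.\ near $x=1$, or it may oscillate outside $[\delta,1]$). I would first note $\Abs{f}_{[\delta,1]} \le \tfrac12$, so on $[\delta,1]$ we have $\Abs{\tilde P}_{[\delta,1]} \le \tfrac12 + \varepsilon \le 1$; the remaining task is to bound $\tilde P$ on $[-1,\delta)$ and on the negative axis. Here I would either (a) multiply by a smooth threshold polynomial (a polynomial approximation of the rectangle/window function, as used repeatedly in \cite{GSLW19,CGJ19}) so that the product is forced to be small in magnitude outside $[\delta,1]$ while remaining $\varepsilon$-close to $f$ on $[\delta,1]$, at the cost of an extra $O(\tfrac1\delta \log\tfrac1\varepsilon)$ factor in the degree (which is absorbed into the stated bound), or (b) simply rescale by a constant factor $\le 1$ — but (b) alone is insufficient because $x^c$ grows without a useful bound only mildly; the real issue is oscillation of the truncated series outside the approximation interval, so the window-function multiplication in (a) is the robust route. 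Finally I would enforce definite parity (even or odd) by symmetrizing: replace $P(x)$ by $\tfrac12(P(x) + P(-x))$ or $\tfrac12(P(x) - P(-x))$; since we only require accuracy on $[\delta,1]$ and the window function can itself be chosen even, this preserves the approximation quality on $[\delta,1]$ and does not increase the degree.

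The step I expect to be the main obstacle is getting the degree bound to come out as exactly $O\!\left(\tfrac1\delta \log\tfrac1\varepsilon\right)$ rather than something worse like $O\!\left(\tfrac1{\delta^2}\log\tfrac1\varepsilon\right)$ or with an extra $\log(1/\delta)$ factor. This requires being careful that both the power-function approximation and the window-function approximation have degree scaling \emph{linearly} in $1/\delta$; for the window function this is the optimal scaling established in \cite{GSLW19}, and for $x^c$ with fixed $c \in (0,1)$ the Chebyshev-truncation analysis (or the integral-representation method of \cite{Gil19}) also gives linear dependence on $1/\delta$. I would therefore cite Corollary/Lemma statements from \cite{Gil19} and \cite{CGJ19} for these two pieces verbatim and only verify that their composition — multiply, then symmetrize, then rescale to land in $[-1,1]$ (or $[-\tfrac12,\tfrac12]$ if parity is not used, though here we keep parity so the bound $1$ applies) — preserves both the degree order and the error $\varepsilon$ on $[\delta,1]$. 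The rest is routine bookkeeping on constants.
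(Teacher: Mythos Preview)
The paper does not prove this lemma at all; it is stated with the citation \cite{Gil19, CGJ19} and used as a black box. So there is no ``paper's own proof'' to compare against.

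Your sketch is a reasonable reconstruction of how the cited references obtain this result, and the degree bound you argue for is the correct one. Two small remarks on the details. First, the arguments in \cite{Gil19, CGJ19, GSLW19} do not proceed via Chebyshev truncation on a Bernstein ellipse but via the ``polynomial approximation from a local Taylor series'' lemma (Corollary~66 of \cite{GSLW19}, which is Lemma~\ref{lemma:approximate based on Taylor} in this paper) applied to the binomial series $(1-y)^c=\sum_k\binom{c}{k}(-y)^k$; that lemma already delivers the bound $\Abs{P}_{[-1,1]}\le 1$ together with the parity, so the separate window-multiplication step is not needed here (the window trick is used in this paper, but in the proof of Lemma~\ref{lemma:positive-power-unitary}, not in establishing Lemma~\ref{lemma:positive-power} itself). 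Second, your parity-by-symmetrization step is slightly off as written: replacing $P$ by $\tfrac12(P(x)+P(-x))$ halves the value on $[\delta,1]$ unless $P(-x)$ also approximates $f(x)$ there, which it does not a priori. One fixes this either by first windowing so that $P$ is $\approx 0$ on $[-1,-\delta]$ and then taking $P(x)+P(-x)$ (checking the sup bound is still $\le 1$), or---more in the spirit of the cited references---by arranging the construction to be even from the start (e.g., approximating $\tfrac12 y^{c}$ and substituting appropriately, or using the parity option built into Corollary~66). These are routine adjustments and do not affect the degree order.
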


    \begin{lemma} 
    [Polynomial approximation of negative power functions \cite{Gil19, CGJ19, GSLW19}]
    \label{lemma:negative-power}
        Let $\delta, \varepsilon \in (0, \frac 1 2]$, $c > 0$ and $f(x) = \frac {\delta^c} 2 x^{-c}$. Then there is an even/odd degree-$O\left(\frac{c+1}{\delta}\log\left(\frac{1}{\varepsilon}\right)\right)$ polynomial $P \in \mathbb{R}[x]$ such that $\Abs{P(x) - f(x)}_{[\delta, 1]} \leq \varepsilon$ and $\Abs{P}_{[-1, 1]} \leq 1$.
    \end{lemma}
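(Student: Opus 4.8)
The plan is to follow the standard two–stage recipe for power functions used in \cite{Gil19,CGJ19,GSLW19}: first build a polynomial of the stated degree that approximates $f(x)=\tfrac{\delta^c}{2}x^{-c}$ uniformly on $[\delta,1]$, ignoring its behaviour elsewhere, and then multiply by a rectangle–type polynomial to make the product bounded by $1$ on all of $[-1,1]$. Note that the normalisation $\tfrac{\delta^c}{2}$ is exactly what makes $0\le f(x)\le\tfrac12$ on $[\delta,1]$, leaving room for both the approximation error and the $\le 1$ constraint.

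For the first stage I would use the Gamma–function representation $x^{-c}=\tfrac1{\Gamma(c)}\int_0^\infty t^{c-1}e^{-xt}\,dt$, valid for $x>0$. For $x\in[\delta,1]$ the integrand is at most $t^{c-1}e^{-\delta t}$, so truncating the integral at $T=O\!\bigl(\tfrac{c+\log(1/\varepsilon)}{\delta}\bigr)=O\!\bigl(\tfrac{c+1}{\delta}\log\tfrac1\varepsilon\bigr)$ perturbs $f$ by at most $\varepsilon/4$ on $[\delta,1]$, the tail being controlled via the incomplete Gamma function. On the range $t\in[0,T]$ one replaces each $e^{-xt}$ by its degree-$O(t+\log(1/\varepsilon))$ Chebyshev/Taylor truncation on $[0,1]$ (which stays in $[0,1+\varepsilon']$), and replaces the outer integral by a polynomial whose coefficients are the efficiently computable moments $\int_0^T t^{c-1}(\cdot)\,dt$; this produces a polynomial $Q$ of degree $O\!\bigl(\tfrac{c+1}{\delta}\log\tfrac1\varepsilon\bigr)$ with $\Abs{Q-f}_{[\delta,1]}\le\varepsilon/2$, hence $\Abs{Q}_{[\delta,1]}\le\tfrac12+\tfrac\varepsilon2$. (A conceptually simpler alternative is to bootstrap from the textbook degree-$O(\tfrac1\delta\log\tfrac1\varepsilon)$ approximation of $1/x$ on $[\delta,1]$ and Lemma~\ref{lemma:positive-power}, via $x^{-c}=x^{\ceil c-c}(x^{-1})^{\ceil c}$, but the error amplification by powers of $2/\delta$ costs an extra factor of $c$ in the degree.)

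The second stage is where I expect the real difficulty. The polynomial $Q$ is controlled only on $[\delta,1]$, and, tracking the singularity of $x^{-c}$ at the origin, it can be of size $\poly(1/\delta)$ just below $\delta$ and larger still on the negative side, so $\Abs{Q}_{[-1,1]}\le 1$ fails. I would repair this by taking $P:=QR$ with $R$ an even polynomial approximating the window (rectangle) function: $\abs{R-1}\le\varepsilon'$ on $[\delta,1]\cup[-1,-\delta]$, $\Abs{R}_{[-1,1]}\le 1$, and $\abs R$ exponentially small outside a narrow transition just below $\abs x=\delta$, obtained at degree $O(\tfrac1\delta\log\tfrac1{\varepsilon'})$ from the error-/sign-function approximations of \cite{GSLW19}. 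Then on $[\delta,1]$ we get $\Abs{P-f}\le\Abs{Q-f}+\Abs{Q(R-1)}\le\varepsilon/2+(\tfrac12+\tfrac\varepsilon2)\varepsilon'\le\varepsilon$ (rescaling $\varepsilon'$), while $\abs P\le\abs Q\abs R\le\tfrac12+O(\varepsilon)<1$; the crux is to check $\abs Q\abs R\le 1$ on the complementary region despite the growth of $Q$, which is what fixes the exact location and width of $R$'s transition (a suitable multiple of $\delta$, costing only a mild factor in the degree) and the required accuracy $\varepsilon'=1/\poly$. Finally, parity is handled by carrying out the construction in the variable $x^2$ (and multiplying by one extra sign approximant if an odd polynomial is wanted), and polynomial-time computability of the coefficients follows from that of the Chebyshev coefficients of the exponential, of $1/x$ and Lemma~\ref{lemma:positive-power}, and of the rectangle approximant.
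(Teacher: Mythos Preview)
The paper does not give a proof of this lemma; it is quoted as a known result from \cite{Gil19,CGJ19,GSLW19} and used as a black box. So there is no ``paper's own proof'' to compare against.

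Your sketch is broadly in the spirit of those references, but the route taken there is somewhat different from your Gamma--integral construction. In \cite{GSLW19} (Corollary~67 of the full version) one writes $x^{-c}=\delta^{-c}\bigl(1-(1-x/\delta)\bigr)^{-c}$ and expands the binomial series in $y=1-x/\delta$; the coefficients $\binom{-c}{k}$ grow only polynomially in $k$, so truncation plus the general ``smooth function from Taylor series'' machinery (Lemma~\ref{lemma:approximate based on Taylor} here, Corollary~66 in \cite{GSLW19}) directly yields a polynomial that is simultaneously $\varepsilon$-close on $[\delta,1]$ and bounded on all of $[-1,1]$. The key point is that the rectangle/windowing step is \emph{baked into} that lemma rather than applied post hoc, so one never has to confront an intermediate polynomial $Q$ whose size on $[-1,\delta)$ is uncontrolled.

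This is exactly the concern you correctly flag in your stage~2: your $Q$, built from degree-$O(T)$ approximants to $e^{-xt}$ that are only controlled on $[0,1]$, can in principle be exponentially large in its degree on $[-1,0)$ (not merely $\poly(1/\delta)$ as you write), and then taking $\varepsilon'=1/\poly$ in the window $R$ would not suffice to make $\abs{QR}\le 1$ there. The fix is either to control the exponential approximants on all of $[-1,1]$ from the outset, or to follow the \cite{GSLW19} construction and apply the windowing inside the series truncation. With that adjustment your approach goes through and gives the same degree bound; as written, the ``$\varepsilon'=1/\poly$'' claim is the gap.
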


    First, we develop a method of implementing positive powers of Hermitian matrix $A$, which is given as block-encoded in a density operator $\rho$. Here, the purification of $\rho$ can be prepared by a unitary operator $U$.

    \begin{lemma} [Positive powers block-encoded in density operators] \label{lemma:positive-power-density}
        Suppose that
        \begin{enumerate}
          \item $U$ is an $(n+a)$-qubit unitary operator that prepares an $n$-qubit subnormalized density operator $A$.
          \item $\delta, \varepsilon \in \left(0, \frac 1 2\right]$ and $c \in (0, 1)$.
        \end{enumerate}
        Then there is a quantum circuit $\tilde U$ such that
        \begin{enumerate}
          \item $\tilde U$ prepares an $n$-qubit subnormalized density operator $B$, and $B$ is a $(4\delta^{c-1}, 0, \Theta(\delta^{c} + \varepsilon \delta^{c-1}))$-block-encoding of $A^c$.
          \item $\tilde U$ uses $O(d)$ queries to $U$ and $U^\dag$, $1$ query to controlled-$U$ and controlled-$U^\dag$, and $O((n+a)d)$ elementary quantum gates, where $d = O\left(\frac{1}{\delta} \log\left(\frac{1}{\varepsilon}\right)\right)$.
          \item A description of $\tilde U$ can be computed by a (classical) Turing machine in $O(\poly(d))$ time.
        \end{enumerate}
    \end{lemma}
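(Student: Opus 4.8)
The plan is to reduce to the eigenvalue transformation of Theorem \ref{thm:main} by choosing $f$ so that $x(f(x))^2$ is proportional to $x^c$. Since we need $x\cdot(f(x))^2 \propto x^c$, we should take $f(x)\propto x^{(c-1)/2}$, i.e.\ a \emph{negative} power with exponent $c' := \frac{1-c}{2}\in\left(0,\frac 1 2\right)$. Concretely, I would invoke Lemma \ref{lemma:negative-power} with this exponent $c'$ and the given $\delta,\varepsilon$ to obtain an even degree-$d$ polynomial $P\in\mathbb{R}[x]$ with $d = O\!\left(\frac{c'+1}{\delta}\log\!\left(\frac 1 \varepsilon\right)\right) = O\!\left(\frac 1 \delta\log\!\left(\frac 1 \varepsilon\right)\right)$ (using $c' < \frac 1 2$), such that $\Abs{P(x) - \frac{\delta^{c'}}{2}\abs{x}^{-c'}}_{[\delta,1]}\le\varepsilon$ and $\Abs{P}_{[-1,1]}\le 1$. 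Since $P$ is even, it satisfies the relaxed norm hypothesis of Theorem \ref{thm:main}.

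Next I would feed $U$ and $P$ (with target function $f(x) = \frac{\delta^{c'}}{2}\abs{x}^{-c'}$) into Theorem \ref{thm:main}. A direct computation gives $x(f(x))^2 = \frac{\delta^{2c'}}{4}x^{1-2c'} = \frac{\delta^{1-c}}{4}x^{c}$ on $[0,1]$, so the resulting circuit $\tilde U$ prepares a subnormalized density operator $B$ which is a $(1,0,\eta)$-block-encoding of $A(f(A))^2 = \frac{\delta^{1-c}}{4}A^{c}$, where by Theorem \ref{thm:main} the error is $\eta = \Theta\!\left(\varepsilon + \delta + \Abs{x(f(x))^2}_{[0,\delta]}\right)$. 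Since $\Abs{\frac{\delta^{1-c}}{4}x^{c}}_{[0,\delta]} = \frac{\delta}{4}$, we get $\eta = \Theta(\varepsilon + \delta)$. Rescaling by $4\delta^{c-1}$, the operator $B$ is then a $\left(4\delta^{c-1},\,0,\,4\delta^{c-1}\eta\right)$-block-encoding of $A^{c}$, and $4\delta^{c-1}\eta = \Theta\!\left(\delta^{c} + \varepsilon\,\delta^{c-1}\right)$, which is exactly the claimed error. The query and gate counts are inherited verbatim from Theorem \ref{thm:main} ($O(d)$ queries to $U$ and $U^\dag$, one query each to controlled-$U$ and controlled-$U^\dag$, and $O((n+a)d)$ elementary gates), and the classical preprocessing time $O(\poly(d,\log(1/\delta)))$ from Theorem \ref{lemma:technique} simplifies to $O(\poly(d))$ because $\log(1/\delta)\le 1/\delta\le d$.

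The argument is mostly bookkeeping; the one place to take care is the propagation of the scaling factor $\frac{\delta^{c'}}{2}$ that is deliberately baked into $P$ to keep $\Abs{P}_{[-1,1]}\le 1$: after the map $x\mapsto x(P(x))^2$ this factor becomes $\frac{\delta^{1-c}}{4}$, which simultaneously fixes the subnormalization $4\delta^{c-1}$ in the final block-encoding and the $\delta^{c-1}$ amplification of the additive error. One should also confirm that the small-eigenvalue regime is harmless: for eigenvalues of $A$ below $\delta$ the polynomial $P$ is only guaranteed bounded (not accurate), but there both $x(P(x))^2\le x\le\delta$ and $\frac{\delta^{1-c}}{4}x^{c}\le\frac{\delta}{4}$ are small, and this contribution is precisely the term $\Abs{x(f(x))^2}_{[0,\delta]}$ already accounted for in Theorem \ref{thm:main}. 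Notably, no lower bound on the nonzero spectrum of $A$ (no parameter $\kappa$) enters anywhere, which is the whole point of the lemma.
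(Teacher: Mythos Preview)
Your proposal is correct and follows essentially the same argument as the paper: both choose the negative-power function $f(x)=\frac{\delta^{c'}}{2}x^{-c'}$ with $c'=\frac{1-c}{2}$, invoke Lemma~\ref{lemma:negative-power} for the polynomial approximation, apply Theorem~\ref{thm:main}, and then rescale by $4\delta^{c-1}$. The only cosmetic difference is that the paper works with the negative exponent as the primary variable and substitutes $c'=1-2c$ at the end, whereas you parameterize directly in terms of the target exponent $c$.
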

    \begin{proof}
        Let $f(x) = \frac{\delta^c}{2}x^{-c}$, where $c \in \left(0, \frac 1 2\right)$. Then $x(f(x))^2 = \frac{\delta^{2c}}{4} x^{1-2c}$, and $\Abs{x(f(x))^2}_{[0, \delta]} \leq \frac{\delta}{4} = \Theta(\delta)$.
        By
        Lemma \ref{lemma:negative-power}, we can obtain an even/odd polynomial $P(x)$ of degree $O\left(\frac{1}{\delta} \log\left(\frac{1}{\varepsilon}\right)\right)$ such that $\Abs{P-f}_{[\delta, 1]} \leq \varepsilon$ and $\Abs{P}_{[-1, 1]} \leq 1$. By Theorem \ref{thm:main}, there is a quantum circuit $\tilde U$, which prepares an $n$-qubit subnormalized density operator $B$, which is a $\left(1, 0, \Theta(\delta+\varepsilon) \right)$-block-encoding of $A\left(f(A)\right)^2 = \frac{\delta^{2c}}{4} A^{1-2c}$. In other words, $B$ is a $(4\delta^{-2c}, 0, \Theta(\delta^{1-2c} + \varepsilon \delta^{-2c}))$-block-encoding of $A^{1-2c}$. These yield the proof by letting $c' = 1-2c$.
    \end{proof}
    
    \begin{remark}
        In Lemma \ref{lemma:positive-power-density}, we provide a positive power trick for density operators, whose error depends on two parameters $\delta$ and $\varepsilon$. This allows us to prepare positive powers of density operators flexibly, without dealing with their ``condition numbers'' $\kappa$ such that $\Pi/\kappa \leq \rho$ for some projector $\Pi$. The method used in Lemma \ref{lemma:positive-power-density} is a straightforward application of Theorem \ref{thm:main} with the observation that $x(f(x))^2 \to 0$ when $x \to 0$. In fact, any function $f(x)$ that satisfies this condition (and, of course, other conditions required in Theorem \ref{thm:main}) is applicable in this trick, without dealing with  $\kappa$. As will be seen, this technique for positive powers of subnormalized density operators will be frequently used in our quantum algorithms
        (see Section \ref{sec:entropy} and Section \ref{sec:distance-states}), in order to avoid the $\kappa$ restrictions on density operators.
    \end{remark}

    Next, inspired by the above observation, we extend the result to the case that $A$ is given as block-encoded in a unitary operator. Here, we need to implement a threshold projector. The following lemma is Corollary 16 of \cite{GSLW19}.

    \begin{lemma} [Polynomial approximation of threshold projectors \cite{GSLW19}] \label{lemma:threshold-function-main}
        Let $\delta, \varepsilon \in (0, \frac 1 2)$ and $t \in [0, 1]$ such that $0 < t - \delta < t + \delta < 1$. There is an even polynomial $P \in \mathbb{R}[x]$ of degree $O\left(\frac{1}{\delta}\log\left(\frac{1}{\varepsilon}\right)\right)$ such that
        \begin{enumerate}
          \item $\Abs{P}_{[-1, 1]} \leq 1$,
          \item $P(x) \in [1-\varepsilon, 1]$ for $x \in [-t+\delta,t-\delta]$, and
          \item $P(x) \in [0, \varepsilon]$ for $x \in [-1, -t-\delta] \cup [t+\delta, 1]$.
        \end{enumerate}
    \end{lemma}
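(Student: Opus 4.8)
The plan is to build $P$ as a product of two linearly shifted one-sided threshold polynomials, thereby reducing everything to the classical polynomial approximation of the sign function. First I would invoke the standard fact (due to \cite{GSLW19}, building on earlier work on the sign function) that for every $\Delta, \varepsilon' \in (0, \tfrac12)$ there is an \emph{odd} polynomial $Q \in \mathbb{R}[x]$ of degree $O\!\left(\tfrac1\Delta \log\tfrac1{\varepsilon'}\right)$ with $\Abs{Q}_{[-1,1]} \le 1$ and $\bigl|Q(x) - \operatorname{sgn}(x)\bigr| \le \varepsilon'$ whenever $|x| \in [\Delta, 1]$. Applying this with $\Delta = \delta/2$ and $\varepsilon' = \varepsilon/3$ produces such a $Q$ of degree $O\!\left(\tfrac1\delta\log\tfrac1\varepsilon\right)$.

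Next I would define the one-sided ``step-down'' polynomial
\[
  T(x) \;=\; \tfrac12\!\left(1 - Q\!\left(\tfrac{x-t}{2}\right)\right),
\]
and record three facts: for $x \in [-1,1]$ the argument $\tfrac{x-t}{2}$ lies in $[-1, \tfrac12] \subseteq [-1,1]$, so $\Abs{Q}_{[-1,1]}\le 1$ forces $T(x) \in [0,1]$; for $x \le t-\delta$ we have $\tfrac{x-t}{2} \le -\delta/2$, hence $T(x) \in [1-\varepsilon', 1]$; and for $x \ge t+\delta$ we have $\tfrac{x-t}{2} \ge \delta/2$, hence $T(x) \in [0, \varepsilon']$. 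I would then set
\[
  P(x) \;=\; T(x)\, T(-x),
\]
which has real coefficients, is automatically even since $P(-x) = T(-x)T(x) = P(x)$, and has degree $2\deg T = O\!\left(\tfrac1\delta \log\tfrac1\varepsilon\right)$.

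Verifying the three required estimates is then a short case analysis. On all of $[-1,1]$ both factors lie in $[0,1]$, giving property (1). For $x \in [-t+\delta, t-\delta]$ we have $x \le t-\delta$ and $-x \le t-\delta$, so $T(x), T(-x) \in [1-\varepsilon',1]$ and $P(x) \in [(1-\varepsilon')^2, 1] \subseteq [1-\varepsilon,1]$, which is property (2). For $x \ge t+\delta$ (resp. $x \le -t-\delta$) the factor $T(x)$ (resp. $T(-x)$) lies in $[0,\varepsilon']$ while the other lies in $[0,1]$, so $P(x) \in [0,\varepsilon']\subseteq[0,\varepsilon]$, which is property (3). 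I would add a one-line remark that in the right transition region $x \approx t$ the factor $T(-x)$ stays $\approx 1$ (because $\tfrac{-x-t}{2}\approx -t < -\delta/2$), so the two shifted thresholds do not interfere.

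The main point to get right is the degree bound rather than the estimates. The ``obvious'' even construction $P(x) = S(x^2 - t^2)$ for a step polynomial $S$ forces the transition in the $x^2$-variable to have half-width on the order of $t\delta$, which is only $\gtrsim \delta^2$ (since $t > \delta$), and would yield degree $O\!\left(\tfrac1{\delta^2}\log\tfrac1\varepsilon\right)$. Replacing the quadratic substitution by a product of two \emph{linearly} shifted one-sided thresholds keeps the transition width proportional to $\delta$ while still producing an even polynomial, and this is the crux of the argument. The only remaining care is bookkeeping the error constants — squaring $1-\varepsilon'$, the $\tfrac12$ factors in $T$ — and the rescaling $\varepsilon' = \varepsilon/3$, so that every value lands inside the stated intervals.
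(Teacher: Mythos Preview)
Your construction is correct. The paper does not actually prove this lemma; it simply cites it as Corollary~16 of \cite{GSLW19} and uses it as a black box. Your argument---building an even threshold polynomial as the product $T(x)T(-x)$ of two linearly shifted one-sided steps, each obtained from a sign-function approximant---is precisely the standard route by which \cite{GSLW19} derives such rectangle/threshold approximations, so there is nothing to compare: you have supplied the proof the paper omits, and your bookkeeping (the $\varepsilon' = \varepsilon/3$ slack, the range check $\tfrac{x-t}{2}\in[-1,\tfrac12]$, and the $(1-\varepsilon')^2 \ge 1-\varepsilon$ bound) is all in order.
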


    We take some special cases of Lemma \ref{lemma:threshold-function-main} as follows, which will be often used to design our quantum algorithms.

    \begin{corollary} \label{lemma:threshold-function}
        Let $\delta, \varepsilon \in (0, \frac 1 4]$. Then there is an even degree-$O\left(\frac{1}{\delta}\log\left(\frac{1}{\varepsilon}\right)\right)$ polynomial $R \in \mathbb{R}[x]$ such that $\Abs{R}_{[-1, 1]} \leq 1$ and
        \[
            R(x) \in \begin{cases}
                [1-\varepsilon, 1] & x \in [-1, -2\delta] \cup [2\delta, 1] \\
                [0, \varepsilon] & x \in [-\delta, \delta]
            \end{cases}.
        \]
    \end{corollary}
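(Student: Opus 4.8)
The plan is to realize $R$ as $1$ minus the \emph{square} of a rectangle polynomial supplied by Lemma~\ref{lemma:threshold-function-main}. Concretely, I would invoke Lemma~\ref{lemma:threshold-function-main} with the parameter substitution $(\delta,\varepsilon,t)\mapsto\big(\tfrac{\delta}{2},\tfrac{\varepsilon}{2},\tfrac{3}{2}\delta\big)$. These choices meet the hypotheses of that lemma: $\tfrac{\delta}{2},\tfrac{\varepsilon}{2}\in(0,\tfrac12)$, and since $\delta\le\tfrac14$ we have $t-\tfrac{\delta}{2}=\delta>0$ and $t+\tfrac{\delta}{2}=2\delta\le\tfrac12<1$, so $0<t-\tfrac{\delta}{2}<t+\tfrac{\delta}{2}<1$. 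This produces an even polynomial $P\in\mathbb{R}[x]$ of degree $O\!\big(\tfrac{1}{\delta}\log\tfrac{1}{\varepsilon}\big)$ (using $\log(2/\varepsilon)=O(\log(1/\varepsilon))$ since $\varepsilon\le\tfrac14$) with $\Abs{P}_{[-1,1]}\le 1$, $P(x)\in[1-\tfrac{\varepsilon}{2},1]$ for $x\in[-t+\tfrac{\delta}{2},t-\tfrac{\delta}{2}]=[-\delta,\delta]$, and $P(x)\in[0,\tfrac{\varepsilon}{2}]$ for $x\in[-1,-t-\tfrac{\delta}{2}]\cup[t+\tfrac{\delta}{2},1]=[-1,-2\delta]\cup[2\delta,1]$.

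Then I would set $R(x):=1-P(x)^2$. The square is the key move: because $\Abs{P}_{[-1,1]}\le 1$ we get $0\le P(x)^2\le 1$ on $[-1,1]$, hence $R(x)\in[0,1]$ there, which gives $\Abs{R}_{[-1,1]}\le 1$ with room to spare. The polynomial $R$ is even (since $P^2$ is even) and has degree $2\deg P=O\!\big(\tfrac{1}{\delta}\log\tfrac{1}{\varepsilon}\big)$, matching the claim. It then remains only to read off the two interval conditions: on $[-\delta,\delta]$ we have $P(x)^2\ge(1-\tfrac{\varepsilon}{2})^2=1-\varepsilon+\tfrac{\varepsilon^2}{4}\ge 1-\varepsilon$, so $R(x)\in[0,\varepsilon]$; on $[-1,-2\delta]\cup[2\delta,1]$ we have $P(x)^2\le\tfrac{\varepsilon^2}{4}\le\varepsilon$, so $R(x)\in[1-\varepsilon,1]$. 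This is exactly the asserted behaviour.

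There is essentially no hard step; the only thing to watch is that the naive choice $R=1-P$ would only give $R(x)\in[0,2]$ in the transition bands $\delta<\abs{x}<2\delta$, where Lemma~\ref{lemma:threshold-function-main} guarantees nothing beyond $\abs{P(x)}\le 1$. Squaring $P$ before complementing is what pins the range down to $[0,1]$ everywhere without disturbing the behaviour on the two committed intervals, at the harmless cost of starting from accuracy $\varepsilon/2$ (which degrades to $\varepsilon$ and $\varepsilon^2/4\le\varepsilon$ after squaring and complementing). Everything else is routine bookkeeping of the parameter substitution and of the degree, which is preserved up to constant factors under halving $\delta$ and squaring.
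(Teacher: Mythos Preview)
Your proof is correct. The paper states this corollary without proof, treating it as an immediate consequence of Lemma~\ref{lemma:threshold-function-main}; your argument supplies exactly the missing details, and your observation that $R=1-P^2$ rather than $R=1-P$ is needed to keep $\Abs{R}_{[-1,1]}\le 1$ through the transition band is a genuine point that the paper glosses over.
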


    \begin{corollary} \label{lemma:threshold-function-2}
        Let $\delta, \varepsilon \in (0, \frac 1 4]$. Then there is an even degree-$O\left(\frac{1}{\delta}\log\left(\frac{1}{\varepsilon}\right)\right)$ polynomial $R \in \mathbb{R}[x]$ such that $\Abs{R}_{[-1, 1]} \leq 1$ and
        \[
            R(x) \in \begin{cases}
                [1-\varepsilon, 1] & x \in [-1+2\delta, 1-2\delta] \\
                [0, \varepsilon] & x \in [-1, -1+\delta] \cup [1-\delta, 1]
            \end{cases}.
        \]
    \end{corollary}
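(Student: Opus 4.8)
The plan is to derive Corollary~\ref{lemma:threshold-function-2} as a direct instantiation of Lemma~\ref{lemma:threshold-function-main}, with no further approximation-theoretic work needed. The only task is to choose the center $t$ and the transition half-width (the parameter playing the role of ``$\delta$'' in Lemma~\ref{lemma:threshold-function-main}) so that the ``$\geq 1-\varepsilon$'' plateau and the ``$\leq \varepsilon$'' region of Lemma~\ref{lemma:threshold-function-main} land exactly on the intervals demanded here, namely $[-1+2\delta,1-2\delta]$ and $[-1,-1+\delta]\cup[1-\delta,1]$.

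Concretely, I would invoke Lemma~\ref{lemma:threshold-function-main} with center $t := 1 - \tfrac{3\delta}{2}$, transition half-width $\delta' := \tfrac{\delta}{2}$, and the same $\varepsilon$. Then $t - \delta' = 1-2\delta$ and $t+\delta' = 1-\delta$, so the plateau $[-t+\delta',\,t-\delta'] = [-1+2\delta,\,1-2\delta]$ and the small-value region $[-1,-t-\delta'] \cup [t+\delta',1] = [-1,-1+\delta] \cup [1-\delta,1]$ are precisely the intervals in the statement. Setting $R := P$, where $P$ is the polynomial produced by Lemma~\ref{lemma:threshold-function-main}, yields an even polynomial with $\Abs{R}_{[-1,1]} \leq 1$ satisfying the two required sandwich bounds, of degree $O\!\left(\tfrac{1}{\delta'}\log\tfrac1\varepsilon\right) = O\!\left(\tfrac1\delta\log\tfrac1\varepsilon\right)$ as claimed.

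The remaining step is a routine hypothesis check: Lemma~\ref{lemma:threshold-function-main} requires $\delta',\varepsilon \in (0,\tfrac12)$, $t \in [0,1]$, and $0 < t-\delta' < t+\delta' < 1$. Using $\delta,\varepsilon \in (0,\tfrac14]$ we get $\delta' = \delta/2 \le \tfrac18$, $t = 1 - \tfrac{3\delta}{2} \in [\tfrac58,1]$, $t-\delta' = 1-2\delta \ge \tfrac12 > 0$, and $t+\delta' = 1-\delta < 1$, so every condition holds. There is essentially no obstacle here, since all the content is in Lemma~\ref{lemma:threshold-function-main}; the one point to be slightly careful about is that we use $P$ itself rather than $1-P$ (which would be the natural move for a ``bump'' near $\pm 1$ instead of a ``window'') — only $\Abs{P}_{[-1,1]}\le 1$ is guaranteed, and passing to $1-P$ could push the sup-norm above $1$ on the transition band, whereas the ``window'' shape we need here is exactly what Lemma~\ref{lemma:threshold-function-main} produces on the nose.
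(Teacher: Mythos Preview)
Your proposal is correct and is exactly the intended derivation: the paper presents this corollary as a direct special case of Lemma~\ref{lemma:threshold-function-main} without an explicit proof, and your choice $t = 1 - \tfrac{3\delta}{2}$, $\delta' = \tfrac{\delta}{2}$ is precisely the parameter instantiation that makes the intervals line up. Your hypothesis check is complete and the degree bound follows immediately, so there is nothing to add.
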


    We also need the following lemma to multiply block-encoded matrices.
    \begin{lemma}
    [Product of block-encoded matrices \cite{GSLW19}]
    \label{lemma:product-block}
    Suppose that
    \begin{enumerate}
      \item $U$ is an $(n+a)$-qubit unitary operator that is a $(\alpha, a, \delta)$-block-encoding of an $n$-qubit operator $A$.
      \item $V$ is an $(n+b)$-qubit unitary operator that is a $(\beta, b, \varepsilon)$-block-encoding of an $n$-qubit operator $B$.
    \end{enumerate}
    Then there is a quantum circuit $\tilde U$ such that
    \begin{enumerate}
      \item $\tilde U$ is an $(\alpha\beta, a+b, \alpha\varepsilon + \beta\delta)$-block-encoding of $AB$.
      \item $\tilde U$ uses $1$ query to each of $U$ and $V$.
    \end{enumerate}
    \end{lemma}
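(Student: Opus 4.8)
The plan is to take $\tilde U$ to be the composition of $U$ and $V$ acting on a common $n$-qubit ``system'' register but with \emph{disjoint} ancilla registers of $a$ and $b$ qubits respectively, and then to observe that the compression of this composition onto the all-zeros ancilla subspace is exactly the product of the two individual compressions. Concretely, I would work on $\mathcal{H}_a \otimes \mathcal{H}_n \otimes \mathcal{H}_b$ and set $\tilde U = (U \otimes I_b)(I_a \otimes V)$, where $U$ acts on $\mathcal{H}_a \otimes \mathcal{H}_n$ and $V$ acts on $\mathcal{H}_n \otimes \mathcal{H}_b$. This manifestly uses one query to each of $U$ and $V$, which is all that the query-count part of the claim asks for.

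Writing $\tilde A = \prescript{}{a}{\bra 0} U \ket 0_a$ and $\tilde B = \prescript{}{b}{\bra 0} V \ket 0_b$ for the two compressed blocks, the key identity I would establish is
\[
    \prescript{}{a+b}{\bra 0} \tilde U \ket 0_{a+b} = \tilde A \tilde B .
\]
This holds because $U \otimes I_b$ commutes with the projector onto $\ket 0_b$ on the last register, so one may first apply $I_a \otimes V$ and project the $b$-register (producing $\tilde B$ on $\mathcal{H}_n$), then apply $U$ and project the $a$-register (producing $\tilde A$); the disjointness of the two ancilla registers is precisely what makes this factorization legitimate. Consequently $\alpha\beta \cdot \prescript{}{a+b}{\bra 0} \tilde U \ket 0_{a+b} = (\alpha\tilde A)(\beta\tilde B)$, and it remains only to control the error $\Abs{(\alpha\tilde A)(\beta\tilde B) - AB}$. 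By hypothesis $\Abs{\alpha\tilde A - A} \leq \delta$ and $\Abs{\beta\tilde B - B} \leq \varepsilon$, and since $\tilde A, \tilde B$ are sub-blocks of unitaries we have $\Abs{\tilde A}, \Abs{\tilde B} \leq 1$, hence $\Abs{\alpha\tilde A} \leq \alpha$ and $\Abs{\beta\tilde B} \leq \beta$; a one-step telescoping then gives
\[
    \Abs{(\alpha\tilde A)(\beta\tilde B) - AB} \leq \Abs{\alpha\tilde A - A}\,\Abs{\beta\tilde B} + \Abs{A}\,\Abs{\beta\tilde B - B} \leq \beta\delta + \alpha\varepsilon ,
\]
using $\Abs{A} \leq \alpha$ from the normalization of the block-encoding (any lower-order $\delta\varepsilon$ correction from a weaker norm bound being harmless). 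This shows $\tilde U$ is an $(\alpha\beta, a+b, \alpha\varepsilon + \beta\delta)$-block-encoding of $AB$.

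There is no genuine difficulty in this proof — it is the standard product lemma for block-encodings from \cite{GSLW19}. The only step that needs care is the register bookkeeping behind the identity for $\prescript{}{a+b}{\bra 0} \tilde U \ket 0_{a+b}$: one must keep the $a$-qubit and $b$-qubit ancilla registers separate, tensor identities onto the correct factors, and apply $V$ before $U$ so that the product comes out as $\tilde A\tilde B$ rather than $\tilde B\tilde A$; once that is set up, the error estimate is just the triangle inequality combined with the observation that sub-blocks of unitaries are contractions.
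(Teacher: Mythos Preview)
Your proof is correct and is exactly the standard argument for this lemma from \cite{GSLW19}; the paper itself does not give a proof but simply cites that reference, so there is nothing further to compare. The one minor point worth noting is that your telescoping as written uses $\Abs{A} \leq \alpha$, which strictly speaking only follows up to an additive $\delta$ from the block-encoding hypothesis; you already flag this, and the resulting $\delta\varepsilon$ cross-term is indeed harmless (and can be avoided entirely by telescoping through $\alpha\tilde A$ on the left instead, using $\Abs{\alpha\tilde A} \leq \alpha$ directly).
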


    With the approximation of threshold functions, we are able to implement positive powers of Hermitian matrix $A$, which is given as block-encoded in a unitary operator.

    \begin{lemma} [Positive powers block-encoded in unitary operators] \label{lemma:positive-power-unitary}
        Suppose that
        \begin{enumerate}
          \item $U$ is an $(n+a)$-qubit unitary operator which is a $(1, a, 0)$-block-encoding of an $n$-qubit Hermitian operator $A$.
          \item $\delta, \varepsilon \in \left(0, \frac 1 4\right]$ and $c \in (0, 1)$.
        \end{enumerate}
        Then there is a quantum circuit $W$ such that
        \begin{enumerate}
          \item $W$ is a $(2, O(a+1), \Theta(\varepsilon + \delta^c))$-block-encoding of $\abs{A}^c$.
          \item $W$ uses $O(Q)$ queries to $U$ and $U^\dag$, $O(1)$ query to controlled-$U$ and $O((a+1)Q)$ elementary quantum gates, where $Q = O\left(\frac{1}{\delta}\log\left(\frac{1}{\varepsilon}\right)\right)$.
          \item A description of $W$ can be computed by a (classical) Turing machine in $\poly(Q)$ time.
        \end{enumerate}
    \end{lemma}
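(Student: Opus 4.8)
The obstruction is that $x\mapsto|x|^c$ has an unbounded derivative at the origin, so no polynomial of degree $O(\tfrac1\delta\log\tfrac1\varepsilon)$ can approximate it uniformly on $[-1,1]$; away from a window $[-\delta,\delta]$, however, it is tame. So the plan is to stay entirely in the world of unitary block-encodings and QSVT (never passing to a density operator), and to factor the target as a (rescaled) product of a \emph{power} polynomial that is accurate on $\{\delta\le|x|\le1\}$ and a \emph{threshold} polynomial that annihilates the small eigenvalues. Concretely: (i) apply Lemma \ref{lemma:positive-power} to $f(x)=\tfrac12x^c$ (which maps $[0,1]$ into $[0,\tfrac12]$) to get the \emph{even} polynomial $P$ of degree $d=O\!\left(\tfrac1\delta\log\tfrac1\varepsilon\right)$ with $\Abs{P-f}_{[\delta,1]}\le\varepsilon$ and $\Abs{P}_{[-1,1]}\le1$; by evenness, $P(x)\approx\tfrac12|x|^c$ on all of $\{\delta\le|x|\le1\}$. (ii) Apply Corollary \ref{lemma:threshold-function} to get the even polynomial $R$ of degree $O\!\left(\tfrac1\delta\log\tfrac1\varepsilon\right)$ with $\Abs{R}_{[-1,1]}\le1$, $R(x)\in[1-\varepsilon,1]$ for $2\delta\le|x|\le1$, and $R(x)\in[0,\varepsilon]$ for $|x|\le\delta$.

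For the construction, apply Theorem \ref{thm:qsvt} to $U$ with the polynomial $P$ (resp. $R$) and precision $\varepsilon$: this yields a circuit $\tilde U_1$ that is a $(1,a+2,\varepsilon)$-block-encoding of $P(A)$, and a circuit $\tilde U_2$ that is a $(1,a+2,\varepsilon)$-block-encoding of $R(A)$, each using $O(d)$ queries to $U,U^\dagger$, one query to controlled-$U$, and $O((a+1)d)$ elementary gates, with descriptions computable classically in $\poly(d,\log(1/\varepsilon))$ time. Then apply Lemma \ref{lemma:product-block} to $\tilde U_1$ and $\tilde U_2$ to obtain the circuit $W$, which is a $(1,2a+4,2\varepsilon)$-block-encoding of $P(A)R(A)$; reading the \emph{same} $W$ with normalization $2$, it is a $(2,2a+4,4\varepsilon)$-block-encoding of $2P(A)R(A)$. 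Since $2a+4=O(a+1)$ and the total query cost is $O(Q)$ with $Q=O\!\left(\tfrac1\delta\log\tfrac1\varepsilon\right)$, the gate cost $O((a+1)Q)$, and the classical description cost $\poly(Q)$ (composing the bounds from the cited lemmas), it only remains to show that $W$ encodes $\abs{A}^c$ within error $\Theta(\varepsilon+\delta^c)$, i.e. that $\Abs{2P(A)R(A)-\abs{A}^c}=\Theta(\varepsilon+\delta^c)$.

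This is an eigenvalue-wise estimate: writing $A=\sum_j\lambda_j\ket{u_j}\bra{u_j}$ with $\lambda_j\in[-1,1]$, I would bound $\abs{P(\lambda_j)R(\lambda_j)-\tfrac12\abs{\lambda_j}^c}$ in three regimes. If $\abs{\lambda_j}\ge2\delta$, then $\abs{R(\lambda_j)-1}\le\varepsilon$ and $\abs{P(\lambda_j)-\tfrac12\abs{\lambda_j}^c}\le\varepsilon$ with $\abs{P(\lambda_j)}\le1$, giving a bound of $2\varepsilon$. If $\abs{\lambda_j}<\delta$, then $R(\lambda_j)\le\varepsilon$ and $\abs{P(\lambda_j)}\le1$, so $\abs{P(\lambda_j)R(\lambda_j)}\le\varepsilon$, while $\tfrac12\abs{\lambda_j}^c\le\tfrac12\delta^c$, giving $\varepsilon+\tfrac12\delta^c$. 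The delicate case is the transition band $\delta\le\abs{\lambda_j}<2\delta$, where the threshold property of $R$ is vacuous; here one must instead use that the power approximation still holds, so $\abs{P(\lambda_j)}\le\tfrac12(2\delta)^c+\varepsilon=\Theta(\delta^c+\varepsilon)$, whence $\abs{P(\lambda_j)R(\lambda_j)}\le\abs{P(\lambda_j)}=\Theta(\delta^c+\varepsilon)$ and also $\tfrac12\abs{\lambda_j}^c\le\tfrac12(2\delta)^c=\Theta(\delta^c)$, so the bound is $\Theta(\varepsilon+\delta^c)$. Taking the supremum over $j$, doubling, and adding the $O(\varepsilon)$ block-encoding error from the two QSVT steps and the product yields $\Abs{2\,\prescript{}{}{\bra0}W\ket0-\abs{A}^c}=\Theta(\varepsilon+\delta^c)$, i.e. $W$ is a $(2,O(a+1),\Theta(\varepsilon+\delta^c))$-block-encoding of $\abs{A}^c$, as desired. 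The main obstacle is precisely this transition-band bound — one has to control the \emph{product} $P R$ there using only the accuracy of $P$, since $R$ gives no information — together with the routine bookkeeping of parity (using the even branch of Lemma \ref{lemma:positive-power} so that $P$ reproduces $\abs{x}^c$ on negative eigenvalues) and of the ancilla count through Lemma \ref{lemma:product-block}.
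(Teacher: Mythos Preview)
Your proposal is correct and follows essentially the same approach as the paper: build an even polynomial $P$ approximating $\tfrac12|x|^c$ away from the origin via Lemma~\ref{lemma:positive-power}, multiply by the threshold polynomial $R$ from Corollary~\ref{lemma:threshold-function}, apply QSVT (Theorem~\ref{thm:qsvt}) to each, and combine via Lemma~\ref{lemma:product-block}. The only cosmetic difference is in the transition band $\delta\le|x|<2\delta$: the paper decomposes $PR-f=(P-f)R+f(R-1)$ and bounds each summand (using $|R|\le1$ and $|R-1|\le2$, $|f|\le\tfrac12(2\delta)^c$), whereas you bound $|PR|$ and $|f|$ separately; both give $\Theta(\varepsilon+\delta^c)$.
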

    \begin{proof}
        Let $f(x) = \frac 1 2 \abs{x}^c$ be an even function and $\delta, \varepsilon \in (0, \frac 1 4]$. By
        Lemma \ref{lemma:positive-power}, there is an even degree-$d_P$ polynomial $P \in \mathbb{R}[x]$, where $d_P = O\left(\frac{1}{\delta}\log\left(\frac{1}{\varepsilon}\right)\right)$, such that $\Abs{P(x) - f(x)}_{[\delta, 1]} \leq \varepsilon$, $\Abs{P(x) - f(x)}_{[-1, -\delta]} \leq \varepsilon$ and $\Abs{P}_{[-1, 1]} \leq 1$. By Theorem \ref{thm:qsvt}, for $\delta_U > 0$, there is a quantum circuit $\tilde U$ such that
        \begin{enumerate}
          \item $\tilde U$ is a $(1, a+2, 0)$-block-encoding of $B$, and $B$ is a $(1, 0, \delta_U)$-block-encoding of $P(A)$.
          \item $\tilde U$ uses $d_P$ queries to $U$ and $U^\dag$, $1$ query to controlled-$U$ and $O((a+1)d_P)$ elementary quantum gates.
          \item A description of $\tilde U$ can be computed by a (classical) Turing machine in $\poly(d_P, \log(1/\delta_U))$ time.
        \end{enumerate}

        Let $R \in \mathbb{R}[x]$ be the even degree-$d_R$ polynomial in Corollary \ref{lemma:threshold-function}, where $d_R = O\left(\frac{1}{\delta}\log\left(\frac{1}{\varepsilon}\right)\right)$, such that $\Abs{R}_{[-1, 1]} \leq 1$ and
        \[
            R(x) \in \begin{cases}
                [1-\varepsilon, 1] & x \in [-1, -2\delta] \cup [2\delta, 1] \\
                [0, \varepsilon] & x \in [-\delta, \delta]
            \end{cases}.
        \]
        By Theorem \ref{thm:qsvt}, for $\delta_V > 0$, there is a quantum circuit $\tilde V$ such that
        \begin{enumerate}
          \item $\tilde V$ is a $(1, a+2, 0)$-block-encoding of $C$, and $C$ is a $(1, 0, \delta_V)$-block-encoding of $R(A)$.
          \item $\tilde V$ uses $d_R$ queries to $U$ and $U^\dag$, $1$ query to controlled-$U$ and $O((a+1)d_R)$ elementary quantum gates.
          \item A description of $\tilde V$ can be computed by a (classical) Turing machine in $\poly(d_R, \log(1/\delta_V))$ time.
        \end{enumerate}

        By Lemma \ref{lemma:product-block}, using one query to each of $\tilde U$ and $\tilde V$, we can obtain a quantum circuit $W$, which is a $(1, 2a+4, 0)$-block-encoding of $B C$. In the following, we will show that $\Abs{BC - f(A)} \leq \Theta(\varepsilon + \delta^c)$.
        We note that
        \[
            \Abs{P(x)R(x) - f(x)}_{[0, 1]} \leq \Theta(\varepsilon + \delta^c).
        \]
        This is seen by the following three cases:
        \begin{enumerate}
          \item $\abs{x} > 2\delta$. We have $\abs{P(x)R(x) - f(x)} \leq \abs{(P(x) - f(x))R(x)} + \abs{f(x)(R(x) - 1)} \leq \Theta(\varepsilon)$.
          \item $\abs{x} < \delta$. We have $\abs{P(x)R(x) - f(x)} \leq \abs{P(x)}\abs{R(x)} + \abs{f(x)} \leq \Theta(\varepsilon + \delta^c)$.
          \item $\delta \leq \abs{x} \leq 2\delta$. We have $\abs{P(x)R(x) - f(x)} \leq \abs{(P(x)-f(x))R(x)} + \abs{f(x)} \abs{R(x)-1} \leq \Theta(\varepsilon + \delta^c)$.
        \end{enumerate}
        Note that $A$ is Hermitian, we have $\Abs{P(A)R(A) - f(A)} \leq \Theta(\varepsilon + \delta^c)$. Finally,
        \begin{align*}
            \Abs{BC - f(A)} 
            & \leq \Abs{BC - P(A)R(A)} + \Abs{P(A)R(A) - f(A)} \\
            & \leq \Theta(\delta_U + \delta_V + \varepsilon + \delta^c).
        \end{align*}
        These conclude that $W$ is a $(1, 2a+4, \Theta(\varepsilon + \delta^c))$-block-encoding of $f(A)$ by setting $\delta_U = \delta_V = \varepsilon$, and therefore a $(2, O(n+a), \Theta(\varepsilon + \delta^c))$-block-encoding of $\abs{A}^c$.
    \end{proof}

    By setting $\kappa = 1/ \delta$, Lemma \ref{lemma:positive-power-unitary} reproduces the result of \cite{CGJ19} (see Lemma 10 of its full version). The strength of Lemma \ref{lemma:positive-power-unitary} is to allow implement positive powers of $A$, regardless of whether the condition $I/\kappa \leq A \leq I$ holds. 
    The technique used in Lemma \ref{lemma:positive-power-unitary} by multiplying the polynomial approximation with a threshold function is similar to Corollary 42 of the full version of \cite{GSLW19} for implementing the threshold pseudoinverse.

    \subsection{Trace estimation}

    In this subsection, we will provide a method of estimating the trace of an Hermitian matrix which is block-encoded in a density operator. Before that, we recall the quantum amplitude estimation \cite{BHMT02}.

    \begin{theorem} [Quantum amplitude estimation, \cite{BHMT02}] \label{thm:amplitude-estimation}
        Suppose $U$ is an $(a+b)$-qubit unitary operator such that
        \[
            U \ket{0}_{a+b} = \sqrt{p} \ket{0}_a \ket{\phi_0}_b + \sqrt{1 - p} \ket{1}_a \ket{\phi_1}_b,
        \]
        where $\ket{\phi_0}$ and $\ket{\phi_1}$ are normalized (pure) quantum states and $p \in [0, 1]$. There is a quantum algorithm that outputs $\tilde p \in [0, 1]$ such that
        \[
            \abs{\tilde p - p} \leq \frac{2\pi\sqrt{p(1-p)}}{M} + \frac{\pi^2}{M^2}
        \]
        with probability $\geq \frac{8}{\pi^2}$, using $O(M)$ queries to $U$ and $U^\dag$.
    \end{theorem}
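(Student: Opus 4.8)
This is the quantum amplitude estimation theorem of \cite{BHMT02}, and the plan is to obtain it by reducing to quantum phase estimation applied to a Grover-type iterate built from $U$. First I would split the output into its ``good'' and ``bad'' parts: setting $P = \ket{0}\bra{0}_a \otimes I_b$ we have $\Abs{P\,U\ket{0}_{a+b}}^2 = p$, so $U\ket{0}_{a+b} = \sin\theta\,\ket{0}_a\ket{\phi_0}_b + \cos\theta\,\ket{1}_a\ket{\phi_1}_b$ for a unique $\theta \in [0, \pi/2]$ with $p = \sin^2\theta$. I would then introduce the iterate
\[
    Q = -\,U\left(I - 2\ket{0}_{a+b}\bra{0}_{a+b}\right)U^\dag\left(I - 2P\right),
\]
the (signed) product of the reflection about the bad subspace and the reflection about $U\ket{0}_{a+b}$. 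A Jordan's-lemma / qubitization argument shows that the two-dimensional plane $\spanspace\{\ket{0}_a\ket{\phi_0}_b,\ \ket{1}_a\ket{\phi_1}_b\}$ is $Q$-invariant, that $Q$ restricted to it is a rotation through $2\theta$ with eigenvalues $e^{\pm 2i\theta}$, and that $U\ket{0}_{a+b}$ lies in this plane as an equal-weight combination of the two eigenvectors. (The degenerate cases $p \in \{0,1\}$, where the plane collapses to a line, are handled directly.)

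Next I would run quantum phase estimation on $Q$ with input state $U\ket{0}_{a+b}$ and a phase register implementing the quantum Fourier transform modulo $M$; the measurement returns an integer $y$, and I output $\tilde p = \sin^2(\pi y/M) \in [0,1]$. Because the input is an equal superposition of the two eigenvectors, phase estimation concentrates $\pi y/M$ near $\theta$ or near $\pi-\theta$; since $\sin^2$ is symmetric about $\pi/2$, it suffices in either case to control the distance from $\pi y/M$ to the corresponding angle, which I abbreviate $\tilde\theta$. The standard phase-estimation analysis, sharpened to count the probability of landing on \emph{either} of the two grid points flanking the true phase rather than only the single nearest one, gives $\abs{\tilde\theta - \theta} \le \pi/M$ with probability at least $8/\pi^2$. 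The cost is $O(M)$ applications of controlled-$Q$ plus one preparation of $U\ket{0}_{a+b}$ and one inverse Fourier transform; since each $Q$ uses one $U$, one $U^\dag$, and $O(a+b)$ extra gates for the two reflections, this is $O(M)$ queries to $U$ and $U^\dag$ in total.

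It remains to propagate the angle error into an error on $p$. Using $\sin^2 x - \sin^2 y = \sin(x+y)\sin(x-y)$, writing $\tilde\theta+\theta = 2\theta + (\tilde\theta-\theta)$, and applying $\abs{\sin t} \le \abs{t}$ with $\abs{\tilde\theta-\theta}\le \pi/M$, I get
\[
    \abs{\tilde p - p} = \abs{\sin(\tilde\theta+\theta)}\,\abs{\sin(\tilde\theta-\theta)} \le \left(\abs{\sin 2\theta} + \frac{\pi}{M}\right)\frac{\pi}{M} = \frac{2\pi\sqrt{p(1-p)}}{M} + \frac{\pi^2}{M^2},
\]
using $\abs{\sin 2\theta} = 2\sin\theta\cos\theta = 2\sqrt{p(1-p)}$ at the last step; conditioning on the good phase-estimation event of probability $\ge 8/\pi^2$ then gives the claim.

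I expect the main obstacle to be the structural part of the first step: rigorously establishing that $Q$ acts on the relevant plane exactly as a rotation through $2\theta$ with eigenvalues $e^{\pm 2i\theta}$, so that phase estimation genuinely recovers $\theta$, and --- closely related --- pinning down the constant $8/\pi^2$, which requires the ``two adjacent grid points'' refinement of the usual phase-estimation success guarantee rather than the coarser $4/\pi^2$ bound for the nearest point alone. Once that structure is in place, the trigonometric error propagation and the query accounting are routine.
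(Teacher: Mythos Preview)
The paper does not prove this theorem; it is stated without proof as a cited result from \cite{BHMT02}. Your sketch is the standard argument from that reference---Grover iterate as a $2\theta$-rotation in the two-dimensional good/bad plane, phase estimation with $M$ grid points giving $\abs{\tilde\theta-\theta}\le\pi/M$ with probability $\ge 8/\pi^2$, and the trigonometric identity $\sin^2 x-\sin^2 y=\sin(x+y)\sin(x-y)$ for the error propagation---and it is correct.
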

    If we know an upper bound $B$ of $p$, then we can take $M = \ceil{2\pi \left( \frac{2\sqrt{B}}{\varepsilon} + \frac{1}{\sqrt{\varepsilon}} \right)} = \Theta\left( \frac{\sqrt{B}}{\varepsilon} + \frac{1}{\sqrt{\varepsilon}} \right)$ to guarantee that $\abs{\tilde p - p} \leq \varepsilon$.

    Based on quantum amplitude estimation, we develop the trace estimation of subnormalized density operators as shown below, which will be useful to design quantum algorithms, see Section \ref{sec:entropy} and Section \ref{sec:distance-states}.

    \begin{lemma} [Trace estimation of subnormalized density operators] \label{lemma:trace-estimation}
        Suppose $U$ is an $(n+a)$-qubit unitary operator that prepares an $n$-qubit subnormalized density operator $A$, and $B > 0$ is a known constant that $\tr(A) \leq B$. For every $\varepsilon > 0$, there is a quantum algorithm that estimates $\tr(A)$ within additive error $\varepsilon$ with $O\left( \frac{\sqrt{B}}{\varepsilon} + \frac{1}{\sqrt{\varepsilon}} \right)$ queries to $U$ and $U^\dag$.
    \end{lemma}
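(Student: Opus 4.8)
The plan is to reduce the estimation of $\tr(A)$ to a single run of quantum amplitude estimation. First I would unpack the definition of ``$U$ prepares $A$'' (Definition~\ref{def:subnormalized-density-operator}): writing $a = a_1 + a_2$, the unitary $U$ produces $\ket{\rho} = U\ket{0}_{n+a}$, and the density operator $\rho = \tr_{a_2}(\ket{\rho}\bra{\rho})$ on the first $n+a_1$ qubits is a $(1,a_1,0)$-block-encoding of $A$, i.e. $A = \prescript{}{a_1}{\bra{0}}\rho\ket{0}_{a_1}$. Taking traces,
\[
    \tr(A) = \tr\!\left( \left(I_n \otimes \ket{0}\bra{0}_{a_1}\right)\rho \right) = \bra{\rho}\left(I_n \otimes \ket{0}\bra{0}_{a_1} \otimes I_{a_2}\right)\ket{\rho} =: p,
\]
so $\tr(A)$ is exactly the probability of observing the all-zeros outcome when the $a_1$-qubit block-encoding register of $\ket{\rho}$ is measured in the computational basis; in particular $p \le B$.

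Next I would cast this probability into the form required by Theorem~\ref{thm:amplitude-estimation}. Adjoin one fresh flag qubit in state $\ket{0}$, apply $U$, and then apply a multiply-controlled NOT (costing $O(a)$ elementary gates and no additional queries) that leaves the flag in $\ket{0}$ precisely when the $a_1$-register holds $0^{a_1}$ and flips it to $\ket{1}$ otherwise. The resulting circuit $W$ satisfies
\[
    W\ket{0} = \sqrt{p}\,\ket{0}_{\mathrm{flag}}\ket{\phi_0} + \sqrt{1-p}\,\ket{1}_{\mathrm{flag}}\ket{\phi_1}
\]
for some normalized states $\ket{\phi_0},\ket{\phi_1}$, and one use of $W$ (resp.\ $W^\dag$) costs one query to $U$ (resp.\ $U^\dag$).

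Finally I would invoke Theorem~\ref{thm:amplitude-estimation} on $W$ with $M = \ceil{2\pi\big(2\sqrt{B}/\varepsilon + 1/\sqrt{\varepsilon}\big)} = \Theta\big(\sqrt{B}/\varepsilon + 1/\sqrt{\varepsilon}\big)$. Since $p \le B$, the guaranteed error $\frac{2\pi\sqrt{p(1-p)}}{M} + \frac{\pi^2}{M^2} \le \frac{2\pi\sqrt{B}}{M} + \frac{\pi^2}{M^2}$ is at most $\varepsilon$ by the choice of $M$ (this is exactly the remark following Theorem~\ref{thm:amplitude-estimation}), so the output $\tilde p$ satisfies $\abs{\tilde p - \tr(A)} \le \varepsilon$ with probability at least $8/\pi^2$; running $O(1)$ independent copies and taking the median boosts the success probability to any desired constant. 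The total number of queries to $U$ and $U^\dag$ is $O(M) = O\big(\sqrt{B}/\varepsilon + 1/\sqrt{\varepsilon}\big)$, as claimed.

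There is essentially no real obstacle here beyond bookkeeping the ancilla registers: the only point worth emphasizing is that the identity $\tr(A) = p$ is exact and amplitude estimation is insensitive to the spectrum of $A$, so — unlike the positive-power constructions — no lower bound on the nonzero eigenvalues of $A$ (and no knowledge of its rank) is needed.
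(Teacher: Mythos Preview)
Your proposal is correct and takes essentially the same approach as the paper: both identify $\tr(A)$ with the probability $p$ of seeing $\ket{0}_{a_1}$ in the block-encoding register of $U\ket{0}$ and then apply Theorem~\ref{thm:amplitude-estimation} with the same choice of $M$. The only cosmetic difference is that the paper writes $U\ket{0}$ directly in the two-branch form $\sqrt{p}\,\ket{0}_{a_1}\ket{\phi_0}+\sqrt{1-p}\,\ket{1}_{a_1}\ket{\phi_1}$ (treating ``$\ket{1}_{a_1}$'' as shorthand for the orthogonal complement), whereas you make this explicit by appending a flag qubit and a multiply-controlled NOT; your version is arguably cleaner bookkeeping but otherwise identical.
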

    \begin{proof}
        Let $a = a_1 + a_2$ such that $U$ prepares an $(n+a_1)$-qubit density operator $\rho$, which is a $(1, a_1, 0)$-block-encoding of $A$. Suppose
        \[
            A = \sum_j \lambda_j \ket{u_j} \bra{u_j}.
        \]
        Then we have
        \begin{align*}
            U \ket{0}_{n+a_1+a_2} & = \ket{\rho}_{n+a_1+a_2} \\
            &
            = \sum_j \sqrt{\lambda_j} \ket{u_j}_n \ket{0}_{a_1} \ket{\psi_j}_{a_2} + \ket{\perp_{a_1}}_{n+a_1+a_2},
        \end{align*}
        where $\ket{\psi_j}$ is an orthogonal basis, and $$\Abs{\prescript{}{a_1}{\braket{0}{\perp_{a_1}}_{n+a_1+a_2}}} = 0.$$ Moreover, we have
        \[
            U \ket{0}_{n+a_1+a_2} = \sqrt{p} \ket{0}_{a_1} \ket{\phi_0}_{n+a_2} + \sqrt{1-p} \ket{1}_{a_1} \ket{\phi_1}_{n+a_2}
        \]
        for some (pure) quantum states $\ket{\phi_0}$ and $\ket{\phi_1}$, where $p = \tr(A)$.

        If we know an upper bound $B$ of $\tr(A)$, then let $M = \ceil{2\pi \left( \frac{2\sqrt{B}}{\varepsilon} + \frac{1}{\sqrt{\varepsilon}} \right)} = \Theta\left( \frac{\sqrt{B}}{\varepsilon} + \frac{1}{\sqrt{\varepsilon}} \right)$, and by Theorem \ref{thm:amplitude-estimation}, we can computes $\tilde p$ such that
        \[
            \abs{\tilde p - p} \leq \frac{2\pi\sqrt{p(1-p)}}{M} + \frac{\pi^2}{M^2} \leq \varepsilon
        \]
        with success probability $\geq \frac{8}{\pi^2}$, using $M$ queries to $U$ and $U^\dag$.
    \end{proof}

    \subsection{Linear combinations}

    We will provide a technique (Lemma \ref{lemma:linear-combination}) that prepares a linear combination of subnormalized density operators, which is a natural analog of Linear-Combination-of-Unitaries (LCU) algorithm through a series of work \cite{SOGKL02,CW12,Kothari14,BCCKS15,BCK15,CKS17,GSLW19}.

    Before stating our linear combination result of subnormalized density operators, we introduce a technique that embeds a density operator in a larger space.

    \begin{lemma} \label{lemma:prepare-density-operator-with-extra-qubits}
        Suppose $U$ is an $(n+a)$-qubit unitary operator that prepares an $n$-qubit density operator $\rho$. For every $b \geq 0$, there is an $(n+a+b)$-qubit unitary operator $U^{(b)} = U \otimes I_b$ such that
        \begin{enumerate}
          \item $U^{(b)}$ prepares an $(n+b)$-qubit density operator $\rho^{(b)}$, and $\rho^{(b)}$ is a $(1, b, 0)$-block-encoding of $\rho$.
          \item $U^{(b)}$ uses $1$ query to $U$.
        \end{enumerate}
    \end{lemma}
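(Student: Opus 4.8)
The plan is to verify directly that $U^{(b)} = U \otimes I_b$ meets both requirements; the whole argument is a short tensor-product bookkeeping with all errors exactly $0$. First I would fix notation from the hypothesis that $U$ prepares the $n$-qubit density operator $\rho$: write $\ket{\rho} = U\ket{0}_{n+a}$ for the purification, with the $a$ extra qubits being the register that is traced out, so that $\rho = \tr_a(\ket{\rho}\bra{\rho})$. Part~2 of the lemma is then immediate, since $U^{(b)} = U \otimes I_b$ invokes $U$ exactly once.

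For part~1, I would compute the state that $U^{(b)}$ produces. Because $I_b$ leaves the freshly appended $b$-qubit register (initialized to $\ket{0}_b$) untouched,
\[
    U^{(b)}\ket{0}_{n+a+b} = \bigl(U\ket{0}_{n+a}\bigr) \otimes \ket{0}_b = \ket{\rho}_{n+a}\,\ket{0}_b =: \ket{\rho^{(b)}}.
\]
Declaring the purifying register of $U^{(b)}$ to be the original $a$ qubits (the new $b$ qubits are counted as part of the prepared system), and using that the $b$-qubit factor is in a product state, the density operator prepared is
\[
    \rho^{(b)} = \tr_a\bigl(\ket{\rho^{(b)}}\bra{\rho^{(b)}}\bigr) = \tr_a\bigl(\ket{\rho}\bra{\rho}\bigr) \otimes \ket{0}_b\bra{0}_b = \rho \otimes \ket{0}_b\bra{0}_b,
\]
which is positive semidefinite on $n+b$ qubits with trace $\tr(\rho) = 1$, hence a legitimate $(n+b)$-qubit density operator.

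It then remains only to verify the block-encoding claim, taking the new $b$ qubits of $\rho^{(b)}$ as the block-encoding ancilla: a one-line computation gives $\prescript{}{b}{\bra{0}}\,\rho^{(b)}\,\ket{0}_b = \rho$, so that $\Abs{\prescript{}{b}{\bra{0}}\,\rho^{(b)}\,\ket{0}_b - \rho} = 0$ and $\rho^{(b)}$ is a $(1, b, 0)$-block-encoding of $\rho$. The only point requiring care — the closest thing to an obstacle — is the register accounting: one must keep the purifying register of $U^{(b)}$ equal to the original $a$ qubits, so that the newly added $b$ qubits remain in state $\ket{0}$ and uncorrelated with everything, and are thus available to serve as the block-encoding ancilla of the prepared state; no analytic estimates enter anywhere.
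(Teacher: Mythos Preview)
Your proof is correct and follows essentially the same approach as the paper: both compute $U^{(b)}\ket{0}_{n+a+b} = \ket{\rho}\ket{0}_b$, trace out the original $a$ ancilla qubits to obtain $\rho^{(b)} = \rho \otimes \ket{0}_b\bra{0}$, and then observe $\prescript{}{b}{\bra{0}}\rho^{(b)}\ket{0}_b = \rho$. Your version is slightly more explicit about the register accounting, but the argument is the same.
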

    \begin{proof}
        Let $\ket{\psi}_{n+a} = U\ket{0}_{n+a}$ and then $\rho_{n} = \tr_a(\ket\psi_{n+a}\bra\psi)$. Let $U^{(b)} = U \otimes I_b$ and $\ket{\psi^{(b)}}_{n+a+b} = U^{(b)} \ket{0}_{n+a+b} = \ket{\psi}_{n+a} \ket{0}_b$. We have
        \begin{align*}
            \rho^{(b)} & = \tr_{a}\left(\ket{\psi^{(b)}}_{n+a+b}\bra{\psi^{(b)}}\right) \\
            & = \tr_{a}\left(\ket{\psi}_{n+a}\bra{\psi} \otimes \ket0_b\bra0\right) \\
            & = \rho_n \otimes \ket{0}_b\bra{0}.
        \end{align*}
        The proof is completed by noting that $\prescript{}{b}{\bra 0} \rho^{(b)} \ket 0_b = \rho_n$.
    \end{proof}

    Now we are ready to show the technique to prepare a linear combination of subnormalized density operators. The basic idea is to prepare a linear combination of (normalized) density operators, but a careful qubit alignment is needed with the help of Lemma \ref{lemma:prepare-density-operator-with-extra-qubits}.

    \begin{lemma} [Linear combination of subnormalized density operators] \label{lemma:linear-combination}
        Suppose
        \begin{enumerate}
          \item $V$ is an $m$-qubit unitary operator such that $V \ket{0} = \sum_{k \in [2^m]} \sqrt{\alpha_k} \ket{k}$.\footnote{Here, we use the notation $[n] = \{0, 1, 2, \dots, n - 1\}$.}
          \item For every $k \in [2^m]$, $U_k$ is an $(n+a_k+b_k)$-qubit unitary operator that prepares an $(n+a_k)$-qubit density operator $\rho_k$, and $\rho_k$ is a $(1, a_k, 0)$-block-encoding of an $n$-qubit subnormalized density operator $A_k$.
        \end{enumerate}
        Let $a = \max_{k \in [2^m]}\{a_k\}$ and $b = \max_{k \in [2^m]}\{b_k\}$.
        Then there is an $(m+n+a+b)$-qubit unitary operator $U$ such that
        \begin{enumerate}
          \item $U\left(V \otimes I_{n+a+b}\right)$ prepares an $n$-qubit subnormalized density operator
            \[
                A = \sum_{k\in[2^m]} \alpha_k A_k.
            \]
          \item $U$ uses $1$ query to each $U_k$ for $k \in [2^m]$.
        \end{enumerate}
    \end{lemma}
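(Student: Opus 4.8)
The plan is to construct the unitary $U$ by a controlled-preparation trick, exactly as in the standard LCU construction, but being careful about how the various ancilla registers ($a_k$, $b_k$) line up across the different $k$. First I would fix the register layout: the global system has an $m$-qubit ``selection'' register (call it register $S$), an $n$-qubit ``system'' register, an $a$-qubit ``block-encoding flag'' register, and a $b$-qubit ``purification garbage'' register, where $a = \max_k a_k$ and $b = \max_k b_k$. The unitary $V$ acts on $S$ to produce $\sum_k \sqrt{\alpha_k}\ket{k}$, and I would define $U$ to be the controlled operation $U = \sum_{k\in[2^m]} \ket{k}\bra{k}_S \otimes \widetilde{U_k}$, where $\widetilde{U_k}$ is the unitary obtained from $U_k$ by appending $\max_k a_k - a_k$ idle qubits to its $a_k$-register and $\max_k b_k - b_k$ idle qubits to its $b_k$-register (this padding is exactly Lemma \ref{lemma:prepare-density-operator-with-extra-qubits} applied with $b' = (a-a_k)+(b-b_k)$, up to a reshuffle of which padded qubits we call ``$a$'' and which ``$b$''). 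Each $\widetilde{U_k}$ uses one query to $U_k$, so $U$ uses one query to each $U_k$, giving part 2 immediately.

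Next I would compute the state produced by $U(V\otimes I_{n+a+b})$ on $\ket{0}_{m+n+a+b}$. Applying $V$ first yields $\sum_k \sqrt{\alpha_k}\,\ket{k}_S \ket{0}_{n+a+b}$, and then $U$ maps this to $\sum_k \sqrt{\alpha_k}\,\ket{k}_S\, (\widetilde{U_k}\ket{0}_{n+a+b})$. Write $\widetilde{U_k}\ket{0}_{n+a+b} = \ket{\psi_k}_{n+a}\ket{0}_b$ where $\ket{\psi_k}$ purifies the padded density operator $\rho_k' = \rho_k \otimes \ket{0}\bra{0}_{a-a_k}$ (using the lemma); this $\rho_k'$ is a $(1, a, 0)$-block-encoding of $A_k$, since the padded flag qubits just add zero rows/columns. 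The overall pure state is $\ket{\Psi} = \sum_k \sqrt{\alpha_k}\ket{k}_S \ket{\psi_k}_{n+a}\ket{0}_b$. I would then identify the purification of the target subnormalized density operator: trace out the $b$ register (all in $\ket{0}$) and also the $S$ register together with the $a$-flag register. Concretely, the ``block-encoding'' ancilla for the combined object should be taken to be $S$ together with all $a$ flag qubits — i.e. we project onto $\ket{0}_S\ket{0}_a$ to read off the encoded matrix.

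The key computation is then the block-read-off: letting $\rho$ be the reduced state of $\ket{\Psi}$ on registers $S, n, a$ (after discarding $b$), I want to show $\prescript{}{m+a}{\bra{0}}\rho\ket{0}_{m+a} = \sum_k \alpha_k A_k$. Since $\ket{\Psi} = \sum_k \sqrt{\alpha_k}\ket{k}_S \ket{\psi_k}$ with the $\ket{k}$ mutually orthogonal, $\rho = \sum_k \alpha_k \ket{k}\bra{k}_S \otimes \sigma_k$ where $\sigma_k$ is the reduced state of $\ket{\psi_k}$ on registers $n, a$. Projecting $\ket{0}_S$ picks out only the $k=0$ term — which is wrong — so I realize the selection register must NOT be traced out naively; instead the correct move is that for each $k$ the amplitude on $\ket{0}_S$ inside $\ket{k}_S$ is $\delta_{k,0}$, so I must instead keep $S$ as part of the \emph{system} in the sense that the final density operator lives on $n$ qubits only and the linear combination arises from the fact that $\braket{0}{\psi_k} $-style projections on the $a$-register give $A_k$ \emph{and} the coherence across $\ket{k}$ is destroyed by the block-encoding ancilla including $S$. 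So the right statement: the subnormalized density operator prepared is $A$ with $\tr_{S,a,b}$-purification, and $\prescript{}{a'}{\bra 0} (\text{reduced on }n,a') \ket{0}_{a'}$ with $a' = m + a$, which equals $\sum_k \alpha_k \langle k| 0\rangle\langle 0 | k\rangle \sigma_k^{(0)}$... this is where the main obstacle lies: getting the bookkeeping of which register counts as the ``$a$'' of the output block-encoding right so that the cross terms vanish and the diagonal terms reassemble into $\sum_k \alpha_k A_k$. I would resolve it by noting $\prescript{}{m}{\bra 0}\ket{k}_S$ kills all but $k=0$ unless we instead observe that the purification of $A = \sum_k\alpha_k A_k$ is genuinely $\ket{\Psi}$ with $S$ playing the role of extra purifying qubits that get traced out: then $\tr_S\left(\sum_k \alpha_k\ket{k}\bra{k}\otimes\sigma_k\right) = \sum_k\alpha_k\sigma_k$, and projecting the $a$-flag of this mixture onto $\ket{0}_a$ gives $\sum_k \alpha_k \prescript{}{a}{\bra 0}\sigma_k\ket 0_a = \sum_k \alpha_k A_k$. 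Thus $U(V\otimes I)$ prepares $A$ with ancilla count $a$ (the flag) and purifying qubits $m + b$ — matching Definition \ref{def:subnormalized-density-operator}. The remaining work is routine verification that each padding step and each trace is valid, which I would carry out by direct computation as in the proof of Lemma \ref{lemma:density-basic}.
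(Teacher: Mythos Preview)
Your approach --- defining $U$ as a select-controlled application of the padded $U_k$, then treating the selection register $S$ and the $b$ register as purifying qubits (to be traced out) while the $a$ register serves as the block-encoding flag --- is exactly the paper's proof, which traces out the $m+b$ qubits and then verifies $\prescript{}{a}{\bra 0}\rho\ket 0_a = A$. One correction to your write-up: the line $\widetilde{U_k}\ket{0}_{n+a+b} = \ket{\psi_k}_{n+a}\ket{0}_b$ is wrong, since only the $b-b_k$ \emph{padded} qubits stay at $\ket{0}$ while the original $b_k$ purifying qubits are entangled with the rest; but this does not affect your final computation once you trace out the full $b$ register, and the paper handles it the same way (writing the state as $\ket{\psi_k}_{n+a_k+b_k}\ket{0}_{a_k'}\ket{0}_{b_k'}$ and tracing out $m+b$).
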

    \begin{proof}
        Let $a_k' = a-a_k$, $b_k' = b - b_k$, and
        \[
            U = \sum_{k \in [2^m]} \ket{k}\bra{k} \otimes U_k^{\left(a_k'+b_k'\right)}
        \]
        be an $(m+n+a+b)$-qubit unitary operator, where $U^{(b)} = U \otimes I_b$ is defined as in Lemma \ref{lemma:prepare-density-operator-with-extra-qubits}. Here, $U_k^{\left(a_k'+b_k'\right)}$ acts on $n+a+b$ qubits. To be precise, if we split the $n+a+b$ qubits into three parts: (i) $n$ qubits, (ii) $a$ qubits and (iii) $b$ qubits, then $U_k^{\left(a_k'+b_k'\right)}$ uses $1$ query to $U_k$ which acts on: (i) the whole $n$ qubits, (ii) the first $a_k$ qubits and (iii) the first $b_k$ qubits.

        Let $\ket{\psi_k}_{n+a_k+b_k} = U_k \ket{0}_{n+a_k+b_k}$. Then we note that
        \begin{align*}
            \ket{\psi}
            & = U\left(V \otimes I_{n+a+b}\right) \ket{0}_m \ket{0}_n \ket{0}_a \ket{0}_b \\
            & = U \sum_{k \in [2^m]} \sqrt{\alpha_k} \ket{k}_m \ket{0}_n \ket{0}_a \ket{0}_b \\
            & = \sum_{k \in [2^m]} \sqrt{\alpha_k} \ket{k}_m \left(U_k^{\left(a_k'+b_k'\right)} \ket{0}_n \ket{0}_a \ket{0}_b\right) \\
            & = \sum_{k \in [2^m]} \sqrt{\alpha_k} \ket{k}_m \ket{\psi_k}_{n+a_k+b_k} \ket{0}_{a_k'} \ket{0}_{b_k'},
        \end{align*}
        and
        \begin{align*}
            \rho 
            & = \tr_{m+b}\left(\ket\psi\bra\psi\right) \\
            & = \sum_{k \in [2^m]} \alpha_k \tr_{b}\left( \ket{\psi_k}_{n+a_k+b_k}\bra{\psi_k} \otimes \ket{0}_{a_k'+b_k'}\bra{0} \right) \\
            & = \sum_{k \in [2^m]} \alpha_k \tr_{b_k}\left( \ket{\psi_k}_{n+a_k+b_k}\bra{\psi_k} \right) \otimes \ket{0}_{a_k'}\bra{0} \\
            & = \sum_{k \in [2^m]} \alpha_k (\rho_k)_{n+a_k} \otimes \ket{0}_{a_k'}\bra{0} \\
            & = \sum_{k \in [2^m]} \alpha_k \left(\rho_k^{(a_k')}\right)_{n+a},
        \end{align*}
        where $\rho^{(b)} = \rho \otimes \ket{0}_b\bra{0}$ is defined as in Lemma \ref{lemma:prepare-density-operator-with-extra-qubits}. To see that $\rho$ is a $(1, a, 0)$-block-encoding of $A = \sum_{k \in [2^m]} A_k$, we note that
        \begin{align*}
            \prescript{}{a}{\bra 0} \rho \ket 0_a & = \sum_{k \in [2^m]} \alpha_k \prescript{}{a_k}{\bra 0} (\rho_k)_{n+a_k} \ket 0_{a_k} \\
            & = \sum_{k \in [2^m]} \alpha_k A_k = A.
        \end{align*}
        Therefore, $U\left(V \otimes I_{a} \otimes I_b\right)$ prepares a subnormalized density operator $A$.
    \end{proof}

    As it will be also used to design quantum algorithms in this paper, we provide the LCU algorithm for comparison. Here, we use the notion as in the full version of \cite{GSLW19}.

    \begin{definition} [State preparation pair] \label{def:state-preparation-pair}
        Let $y \in \mathbb{C}^m$ with $\Abs{y}_1 \leq \beta$, and $\varepsilon \geq 0$. A pair of unitary operator $(P_L, P_R)$ is called a $(\beta, b, \varepsilon)$-state-preparation-pair if $P_L \ket{0}_{b} = \sum_{j \in [2^b]} c_j \ket{j}$ and $P_R \ket{0}_b = \sum_{j \in [2^b]} d_j \ket{j}$ such that $\sum_{j \in [m]} \abs{ \beta c_j^* d_j - y_j } \leq \varepsilon$ and $c_j^*d_j = 0$ for all $m \leq j < 2^b$.
    \end{definition}

    \begin{theorem} [Linear combination of unitary operators \cite{GSLW19}] \label{thm:lcu}
        Suppose
        \begin{enumerate}
          \item $y \in \mathbb{C}^m$ with $\Abs{y}_1 \leq \beta$, and $(P_L, P_R)$ is a $(\beta, b, \varepsilon_1)$-state-preparation-pair for $y$.
          \item For every $k \in [m]$, $U_k$ is an $(n+a)$-qubit unitary operator that is an $(\alpha, a, \varepsilon_2)$-block-encoding of an $n$-qubit operator $A_k$.
        \end{enumerate}
        Then there is an $(n+a+b)$-qubit quantum circuit $W$ such that
        \begin{enumerate}
          \item $W$ is a $(\alpha\beta, a+b, \alpha\varepsilon_1+\alpha\beta\varepsilon_2)$-block-encoding of $A = \sum_{k \in [m]} y_k A_k$.
          \item $W$ uses $1$ query to each of $P_L^\dag$, $P_R$ and (controlled-)$U_k$ for $k \in [m]$, and $O(b^2)$ elementary quantum gates.
        \end{enumerate}
    \end{theorem}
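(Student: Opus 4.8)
The plan is to run the standard Linear-Combination-of-Unitaries construction and then do the error bookkeeping. First I would introduce the \emph{select} operator $\mathrm{SEL} := \sum_{k \in [2^b]} \ket{k}\bra{k} \otimes U_k$ acting on the $(b+n+a)$-qubit register, where for the irrelevant indices $m \le k < 2^b$ I simply put $U_k := I_{n+a}$, and then define
\[
  W := \left(P_L^\dag \otimes I_{n+a}\right)\,\mathrm{SEL}\,\left(P_R \otimes I_{n+a}\right);
\]
this $W$ is the circuit $\tilde U$ of the statement. Item 2 is then immediate: $W$ makes one call to $P_R$, one to $P_L^\dag$, and one controlled-$U_k$ for each $k \in [m]$ (the $k \ge m$ blocks being identities cost nothing), and the address logic that assembles $\mathrm{SEL}$ from the controlled-$U_k$'s contributes $O(b^2)$ further elementary gates.

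For item 1 I would compute the top-left block of $W$ directly. With $P_L\ket{0}_b = \sum_j c_j \ket{j}$ and $P_R\ket{0}_b = \sum_j d_j \ket{j}$, sandwiching $\mathrm{SEL}$ and projecting the $a$-register onto $\ket{0}_a$ gives $\prescript{}{a+b}{\bra{0}} W \ket{0}_{a+b} = \sum_{j \in [2^b]} c_j^* d_j \left(\prescript{}{a}{\bra{0}} U_j \ket{0}_a\right)$. Writing $\varepsilon_2$ for the block-encoding error of the $U_k$'s, the block-encoding hypothesis gives $\prescript{}{a}{\bra{0}} U_j \ket{0}_a = A_j/\alpha + E_j$ with $\Abs{E_j} \le \varepsilon_2/\alpha$ for $j \in [m]$, while the state-preparation-pair hypothesis gives $c_j^* d_j = 0$ for $m \le j < 2^b$ and $\sum_{j \in [m]} \abs{\beta c_j^* d_j - y_j} \le \varepsilon_1$. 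Multiplying by $\alpha\beta$,
\[
  \alpha\beta\,\prescript{}{a+b}{\bra{0}} W \ket{0}_{a+b} = \sum_{j \in [m]} (\beta c_j^* d_j)\,A_j \;+\; \alpha\beta \sum_{j \in [m]} c_j^* d_j\,E_j .
\]

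It remains to compare this with $A = \sum_{j \in [m]} y_j A_j$. For the first sum I would bound $\Abs{\sum_{j} (\beta c_j^* d_j - y_j) A_j} \le \sum_j \abs{\beta c_j^* d_j - y_j}\,\Abs{A_j} \le \alpha\varepsilon_1$, using the state-preparation-pair bound and the standard convention $\Abs{A_j}\le\alpha$ for a matrix block-encoded with scaling $\alpha$. For the error sum, $\Abs{\sum_j c_j^* d_j E_j} \le (\varepsilon_2/\alpha)\sum_j\abs{c_j^* d_j} \le \varepsilon_2/\alpha$, since $\sum_j\abs{c_j^* d_j} \le \Abs{c}_2\Abs{d}_2 = 1$ by Cauchy--Schwarz and unitarity of $P_L, P_R$; hence this sum contributes at most $\beta\varepsilon_2 \le \alpha\beta\varepsilon_2$ (using $\alpha\ge 1$). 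The triangle inequality then yields $\Abs{\alpha\beta\,\prescript{}{a+b}{\bra{0}} W \ket{0}_{a+b} - A} \le \alpha\varepsilon_1 + \alpha\beta\varepsilon_2$, i.e.\ $W$ is an $(\alpha\beta, a+b, \alpha\varepsilon_1 + \alpha\beta\varepsilon_2)$-block-encoding of $A$, as claimed. The only place demanding real care is this error accounting — pairing each of $\varepsilon_1, \varepsilon_2$ with the correct power of $\alpha, \beta$, and legitimising $\Abs{A_j}\le\alpha$ and $\sum_j\abs{c_j^* d_j}\le 1$ — while the algebraic computation of the block itself is verbatim the LCU lemma of \cite{GSLW19}.
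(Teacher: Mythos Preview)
The paper does not give its own proof of this theorem: it is quoted verbatim as a known result from \cite{GSLW19}, so there is nothing in the paper to compare your argument against. Your construction $W=(P_L^\dag\otimes I)\,\mathrm{SEL}\,(P_R\otimes I)$ and the block computation $\prescript{}{a+b}{\bra 0}W\ket 0_{a+b}=\sum_j c_j^*d_j\,\prescript{}{a}{\bra 0}U_j\ket 0_a$ are exactly the standard LCU proof, and your error accounting is correct up to two small points worth flagging: (i) from an $(\alpha,a,\varepsilon_2)$-block-encoding one only gets $\Abs{A_j}\le \alpha+\varepsilon_2$, not $\Abs{A_j}\le\alpha$, so the first error term is really $(\alpha+\varepsilon_2)\varepsilon_1$; and (ii) the step $\beta\varepsilon_2\le\alpha\beta\varepsilon_2$ needs $\alpha\ge 1$, which is not among the stated hypotheses. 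Both are harmless in practice (the extra $\varepsilon_1\varepsilon_2$ is lower order, and $\alpha\ge1$ is the usual convention), and the cleaner bound your computation actually yields is $\alpha\varepsilon_1+\beta\varepsilon_2+\varepsilon_1\varepsilon_2$; the form $\alpha\varepsilon_1+\alpha\beta\varepsilon_2$ in the statement is just the (slightly looser) version recorded in \cite{GSLW19}.
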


    \subsection{Eigenvalue threshold projector}

    In this subsection, we show how to block-encode an eigenvalue threshold projector $\Pi_{\supp(A)}$ of a subnormalized density operator $A$ in another.
    We note that a technique for block-encoding eigenvalue threshold projectors in subnormalized density operators was also provided in \cite{vAG19}.
    The major difference is that our approach does not have any further requirements on the subnormalized density operator $A$, while the method of \cite{vAG19} requires that $A \geq q\Pi$ for some projector $\Pi$ and the value of $q > 0$ is known in advance.

    First, we introduce the notion of truncated support. Let $\delta > 0$ and $A$ be an Hermitian operator with spectral decomposition $A = \sum_{j} \lambda_j \ket{\psi_j} \bra{\psi_j}$.
    The $\delta$-support of $A$ is
        \[
            \supp_\delta(A) = \spanspace \{ \ket{\psi_j}: \abs{\lambda_j} > \delta \}.
        \]
    Note that $\supp_\delta(A) \subseteq \supp_0(A) = \supp(A)$. Here, we write $\Pi_{S}$ to denote the projector onto a subspace $S$.

    \begin{lemma} [Eigenvalue threshold projector] \label{lemma:eigenvalue-threshold-projector}
        Suppose
        \begin{enumerate}
          \item $U$ is an $(n+a)$-qubit unitary operator that prepares an $n$-qubit subnormalized density operator $A$.
          \item $\delta, \varepsilon \in (0, \frac 1 {10}]$ and $32\varepsilon^2 \leq \delta$.
        \end{enumerate}
        For every $\delta' > 0$, there is a quantum circuit $\tilde U$ such that
        \begin{enumerate}
          \item $\tilde U$ prepares an $n$-qubit subnormalized density operator, which is a $(1, 0, \delta')$-block-encoding of $B$ such that
          \begin{align*}
              \left( \frac {\delta} 4 (1 - 2\varepsilon) - \delta^{1/2}\varepsilon \right) \Pi_{\supp_{2\delta}(A)} \leq B   \leq \left( \frac {\delta} 4 + \varepsilon^2 + \delta^{1/2}\varepsilon \right) \Pi_{\supp(A)}.
          \end{align*}
          \item $\tilde U$ uses $O(d)$ queries to $U$ and $U^\dag$, $1$ query to controlled-$U$ and $O((n+a)d)$ elementary quantum gates, where $d = O\left(\frac{1}{\delta}\log\left(\frac{1}{\varepsilon}\right)\right)$.
          \item A description of $\tilde U$ can be computed by a (classical) Turing machine in $\poly(d, \log(1/\delta'))$ time.
        \end{enumerate}
    \end{lemma}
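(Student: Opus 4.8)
The plan is to realize the approximate threshold projector as $A(P(A))^2$ for a well-chosen \emph{even} polynomial $P$, and then invoke the polynomial eigenvalue transformation of subnormalized density operators (Theorem~\ref{lemma:technique}). The design of $P$ is dictated by the elementary identity $x\bigl(\tfrac{\sqrt{\delta}}{2\sqrt{x}}\bigr)^2=\tfrac{\delta}{4}$: the factor $\tfrac{\sqrt{\delta}}{2}x^{-1/2}$ flattens $x(P(x))^2$ to the constant $\tfrac{\delta}{4}$ on the bulk of the spectrum, while an auxiliary soft threshold multiplied on top kills the tiny eigenvalues, so that $A(P(A))^2$ genuinely sits between scaled copies of $\Pi_{\supp_{2\delta}(A)}$ and $\Pi_{\supp(A)}$.

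First I would invoke Lemma~\ref{lemma:negative-power} with $c=\tfrac12$ to get an even degree-$O\!\left(\tfrac1\delta\log\tfrac1\varepsilon\right)$ polynomial $P_1\in\mathbb{R}[x]$ with $\Abs{P_1(x)-\tfrac{\sqrt{\delta}}{2}x^{-1/2}}_{[\delta,1]}\le\varepsilon$ and $\Abs{P_1}_{[-1,1]}\le1$, and Corollary~\ref{lemma:threshold-function} (applicable since $\delta,\varepsilon\le\tfrac1{10}\le\tfrac14$) to get an even degree-$O\!\left(\tfrac1\delta\log\tfrac1\varepsilon\right)$ polynomial $R\in\mathbb{R}[x]$ with $\Abs{R}_{[-1,1]}\le1$, $R(x)\in[1-\varepsilon,1]$ for $|x|\in[2\delta,1]$ and $R(x)\in[0,\varepsilon]$ for $|x|\le\delta$. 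Set $P:=P_1R$: it is even, of degree $d=O\!\left(\tfrac1\delta\log\tfrac1\varepsilon\right)$, and $\Abs{P}_{[-1,1]}\le1$, so it meets the hypotheses of Theorem~\ref{lemma:technique} (whose norm condition is relaxed to $\le1$ precisely because $P$ is even).

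The core of the proof is a pointwise estimate of $g(x):=x(P(x))^2=x\,(P_1(x))^2(R(x))^2$ on $[0,1]$, where the eigenvalues of the subnormalized density operator $A$ live. For the \emph{upper bound}: on $[\delta,1]$ I use $|P_1(x)|\le\tfrac{\sqrt{\delta}}{2\sqrt{x}}+\varepsilon$ to get $x(P_1(x))^2\le\tfrac{\delta}{4}+\sqrt{\delta x}\,\varepsilon+x\varepsilon^2\le\tfrac{\delta}{4}+\sqrt{\delta}\,\varepsilon+\varepsilon^2$, whence $g(x)\le x(P_1(x))^2$ since $(R(x))^2\le1$; on $[0,\delta]$ I use $|P_1(x)|\le1$ and $|R(x)|\le\varepsilon$ to get $g(x)\le\delta\varepsilon^2\le\varepsilon^2$. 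Thus $g(x)\le\tfrac{\delta}{4}+\varepsilon^2+\sqrt{\delta}\,\varepsilon$ for all $x\in[0,1]$. For the \emph{lower bound}, on $(2\delta,1]$: here $R(x)\ge1-\varepsilon$ and $P_1(x)\ge\tfrac{\sqrt{\delta}}{2\sqrt{x}}-\varepsilon>0$, the positivity coming from $\tfrac{\sqrt{\delta}}{2\sqrt{x}}\ge\tfrac{\sqrt{\delta}}{2}>\varepsilon$ (using $32\varepsilon^2\le\delta$), so $x(P_1(x))^2\ge x\bigl(\tfrac{\sqrt{\delta}}{2\sqrt{x}}-\varepsilon\bigr)^2\ge\tfrac{\delta}{4}-\sqrt{\delta x}\,\varepsilon\ge\tfrac{\delta}{4}-\sqrt{\delta}\,\varepsilon$, and therefore $g(x)\ge(1-\varepsilon)^2\bigl(\tfrac{\delta}{4}-\sqrt{\delta}\,\varepsilon\bigr)\ge(1-2\varepsilon)\bigl(\tfrac{\delta}{4}-\sqrt{\delta}\,\varepsilon\bigr)\ge\tfrac{\delta}{4}(1-2\varepsilon)-\sqrt{\delta}\,\varepsilon$.

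Finally, I apply Theorem~\ref{lemma:technique} to $U$ with the polynomial $P$ and error parameter $\delta'$ (if $\delta'\ge1$, use $\tfrac12$ in its place), obtaining a quantum circuit $\tilde U$ that prepares an $n$-qubit subnormalized density operator which is a $(1,0,\delta')$-block-encoding of $B:=A(P(A))^2$, using $O(d)$ queries to $U$ and $U^\dag$, one query to the controlled versions, $O((n+a)d)$ elementary gates, with a description computable in $\poly(d,\log(1/\delta'))$ time. Diagonalizing $A=\sum_j\lambda_j\ket{\psi_j}\bra{\psi_j}$ with $\lambda_j\in[0,1]$ (as $A\ge0$ and $\tr(A)\le1$) gives $B=\sum_j g(\lambda_j)\ket{\psi_j}\bra{\psi_j}$, where $g(0)=0$, $g\ge0$, $g(\lambda_j)\le\tfrac{\delta}{4}+\varepsilon^2+\sqrt{\delta}\,\varepsilon$ for all $j$, and $g(\lambda_j)\ge\tfrac{\delta}{4}(1-2\varepsilon)-\sqrt{\delta}\,\varepsilon$ whenever $\lambda_j>2\delta$; dropping the terms with $0<\lambda_j\le2\delta$ in the lower estimate and keeping all of $\supp(A)$ in the upper one yields exactly the asserted sandwich. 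I expect the only real friction to be the bookkeeping on the transition band $\delta<x\le2\delta$, where $R$ is controlled merely by $|R|\le1$; the point is that the crude estimate $g(x)\le x(P_1(x))^2$ already does the job there, so nothing sharper about $R$ on that band is required, and a secondary care point is keeping $P$ even so that the relaxed norm bound in Theorem~\ref{lemma:technique} applies.
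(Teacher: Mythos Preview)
Your proposal is correct and follows essentially the same approach as the paper: both construct the polynomial as the product of a negative-power approximation $P_1\approx\tfrac{\sqrt{\delta}}{2}x^{-1/2}$ (Lemma~\ref{lemma:negative-power}) and a soft threshold $R$ (Corollary~\ref{lemma:threshold-function}), then realize $B=A(P(A))^2$ and verify the sandwich by the same four-region case analysis. The only cosmetic difference is that you invoke Theorem~\ref{lemma:technique} directly, whereas the paper unpacks that step into an explicit application of Theorem~\ref{thm:qsvt} followed by Lemma~\ref{lemma:density-basic}; the resulting circuit, complexity bounds, and error analysis are identical.
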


    \begin{proof}
        Let $f(x) = \frac{\delta^{1/2}}{2} x^{-1/2}$, and by Lemma \ref{lemma:negative-power}, there is a degree-$O\left(\frac 1 {\delta} \log\left(\frac 1 {\varepsilon}\right)\right)$ even polynomial $P \in \mathbb{R}[x]$ such that $\Abs{P-f}_{[\delta, 1]} \leq \varepsilon$ and $\Abs{P}_{[-1, 1]} \leq 1$. By Corollary \ref{lemma:threshold-function}, there is a degree-$O\left(\frac 1 {\delta} \log\left(\frac 1 {\varepsilon}\right)\right)$ even polynomial $R \in \mathbb{R}[x]$ such that $\Abs{R}_{[-1, 1]} \leq 1$ and
        \[
            R(x) \in \begin{cases}
                [1-\varepsilon, 1] & x \in [-1, -2\delta] \cup [2\delta, 1] \\
                [0, \varepsilon] & x \in [-\delta, \delta]
            \end{cases}.
        \]
        Note that $Q = PR \in \mathbb{R}[x]$ is a degree-$d$ even polynomial, where $d =O\left(\frac 1 {\delta} \log\left(\frac 1 {\varepsilon}\right)\right)$. By Theorem \ref{thm:qsvt}, for $\delta_Q > 0$, there is a quantum circuit $U_Q$ such that
        \begin{enumerate}
          \item $U_Q$ is a $(1, O(n+a), 0)$-block-encoding of $A_1$, and $A_1$ is a $(1, 0, \delta_Q)$-block-encoding of $Q(A)$.
          \item $U_Q$ uses $d$ queries to $U$ and $U^\dag$, $1$ query to controlled-$U$ and $O((n+a)d)$ elementary quantum gates.
          \item A description of $U_Q$ can be computed by a (classical) Turing machine in $\poly(d, \log(1/\delta_Q))$ time.
        \end{enumerate}
        By Lemma \ref{lemma:density-basic}, using $1$ query to each of $U_Q$ and $U$, we obtain a unitary operator $\tilde U$ that prepares $A_1 A A_1^\dag$. We note that
        \begin{align*}
            \Abs{A_1A A_1^\dag - A(Q(A))^2} & \leq \Abs{A_1A A_1^\dag - Q(A)A A_1^\dag} + \Abs{Q(A)A A_1^\dag - Q(A)A Q(A)} \\
            & \leq 2\delta_Q.
        \end{align*}
        Moreover, $A_1 A A_1^\dag$ can be regarded as a (scaled) projector.
        To see this, let $\mathbbm{1}_{S}$ be the indicator function that
        \[
            \mathbbm{1}_{S}(x) = \begin{cases}
                1, & x \in S, \\
                0, & \text{otherwise}.
            \end{cases}
        \]
        Then $\Pi_{\supp_{\delta}(A)} = \mathbbm{1}_{[\delta, 1]}(A)$. We note that if $32\varepsilon^2 \leq \delta$, then
        \begin{equation} \label{eq:ineq-supp-proj}
        \begin{aligned}
            & \left( \frac {\delta} 4 (1 - 2\varepsilon) - \delta^{1/2}\varepsilon \right) \Pi_{\supp_{2\delta}(A)} \leq A(Q(A))^2 \leq \left( \frac {\delta} 4 + \varepsilon^2 + \delta^{1/2}\varepsilon \right) \Pi_{\supp(A)}.
        \end{aligned}
        \end{equation}
        We need to show that
        \begin{align*}
            & \left( \frac {\delta} 4 (1 - 2\varepsilon) - \delta^{1/2}\varepsilon \right) \mathbbm{1}_{[2\delta, 1]}(x) \leq x(Q(x))^2 \leq \left( \frac {\delta} 4 + \varepsilon^2 + \delta^{1/2}\varepsilon \right) \mathbbm{1}_{(0, 1]}(x)
        \end{align*}
        for every $x \in [0, 1]$. This is seen by the following four cases.
        \begin{enumerate}
          \item $x = 0$. This case is trivial as each hand side is equal to $0$.
          \item $x \in (0, \delta]$. We have $0 \leq x(Q(x))^2 \leq x(R(x))^2 \leq x \varepsilon^2 \leq \delta \varepsilon^2$.
          \item $x \in (\delta, 2\delta]$. We have
          \begin{align*}
              0 & \leq x(Q(x))^2 \leq x(P(x))^2 \leq x(f(x)+\varepsilon)^2 \\
              & = x(f(x))^2 + x \varepsilon^2 + 2xf(x)\varepsilon \leq \frac {\delta} 4 + \varepsilon^2 + \delta^{1/2} \varepsilon.
          \end{align*}
          \item $x \in (2\delta, 1]$. The right hand side is $x(Q(x))^2 \leq \frac {\delta} 4 + \varepsilon^2 + \delta^{1/2} \varepsilon$ is similar to Case 3. The left hand side is as follows.
          \begin{align*}
            x(Q(x))^2
            & \geq x(P(x))^2(1-\varepsilon)^2 \\
            & \geq x(f(x)-\varepsilon)^2(1-\varepsilon)^2 \\
            & \geq x((f(x))^2-2f(x)\varepsilon)(1-2\varepsilon) \\
            & = \left( \frac {\delta} 4 - \delta^{1/2} x^{1/2} \varepsilon \right) (1 - 2\varepsilon) \\
            & \geq \frac {\delta} 4 (1 - 2\varepsilon) - \delta^{1/2} \varepsilon.
          \end{align*}
        \end{enumerate}
        Therefore, we conclude that Eq. (\ref{eq:ineq-supp-proj}) holds. We claim the lemma by setting $B = A (Q(A))^2$ and $\delta' = 2 \delta_Q$.
    \end{proof}

    \section{Quantum Entropies} \label{sec:entropy}
    
    In the Introduction, we have already introduced several quantum entropies such as the von Neumann entropy, quantum R\'{e}nyi entropy and quantum Tsallis entropy. In addition to this, the quantum Min entropy $S^{\min}(\rho)$ and the quantum Max (Hartley) entropy $S^{\max}(\rho)$ are defined as limits of R\'{e}nyi entropies by
    \begin{align*}
        S^{\min}(\rho) & = S^{R}_\infty(\rho) = \lim_{\alpha \to \infty} S^{R}_\alpha(\rho) = - \ln \left( \Abs{\rho} \right), \\
        S^{\max}(\rho) & = S^{R}_0(\rho) = \lim_{\alpha \to 0} S^{R}_\alpha(\rho) = \ln\left(\rank(\rho)\right).
    \end{align*}
    The unified entropy \cite{HY06} is defined by
    \[
        S_\alpha^s(\rho) = \frac{1}{(1-\alpha)s} \left( \left(\tr\left(\rho^\alpha\right)\right)^{s} - 1 \right)
    \]
    for $\alpha \in (0, 1) \cup (1, +\infty)$ and $s \neq 0$, which includes the von Neumann entropy $S(\rho) = \lim_{\alpha \to 1} S_\alpha^s(\rho)$, the R\'{e}nyi entropy $S^{R}_{\alpha}(\rho) = \lim_{s \to 0} S_\alpha^s(\rho)$ and the Tsallis entropy $S^{T}_{\alpha}(\rho) = S_{\alpha}^{1}(\rho)$.
    
    In this section, we will propose a series of quantum algorithms for computing several quantum entropies. Section \ref{subsec:von-Neumann} provides a quantum algorithm for computing the von Neumann entropy. Section \ref{subsec:rank} is for the Max entropy. Section \ref{sec:renyi-tsallis} is for the quantum R\'{e}nyi and Tsallis entropies.

    \subsection{Von Neumann Entropy} \label{subsec:von-Neumann}

    The von Neumann entropy is one of the most important quantum information quantities. As mentioned above, both quantum algorithms for estimating von Neumann entropy provided by \cite{AISW19} and \cite{GL20} have time complexity exponential in the number $n = \log_2 (N)$ of qubits of the quantum state. Here, we provide a different approach that exponentially improves the dependence on $n$ given that the density operator of the mixed quantum state is low-rank. Our key technique used here is different from that of \cite{GL20}, where they approximated a function $\propto -\ln(x)$ and constructed a unitary operator that is a block-encoding of $S(\rho)$, while we approximate a function $\propto \sqrt{-\ln(x)}$ and prepare a density operator that is a block-encoding of $S(\rho)$.

    \begin{theorem} \label{thm: von Neumann entropy}
        Suppose that
        \begin{enumerate}
          \item $U_\rho$ is an $(n+n_\rho)$-qubit unitary operator that prepares an $n$-qubit density operator $\rho$ with $\rank(\rho) \leq r$.
          \item $n_\rho$ is a polynomial in $n$.\footnote{Theoretically, any $n$-qubit mixed quantum state has a purification with at most $n$ ancilla qubits, so it is sufficient to assume that $n_\rho \leq n$. Here, we just assume that $n_\rho = \poly(n)$ for convenience.}
        \end{enumerate}
        There is a quantum algorithm that computes the von Neumann entropy $S(\rho)$ within additive error $\varepsilon$ using $\tilde O(\frac{r}{\varepsilon^2})$ queries to $U_\rho$ and $\tilde O(\frac{r}{\varepsilon^2} \poly(n))$ elementary quantum gates.\footnote{Since the quantum algorithm is complicated, we do not distinguish between the queries to a unitary operator and those to its controlled versions, and we ignore poly-logarithmic factors.}
    \end{theorem}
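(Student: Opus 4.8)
The plan is to use the identity $S(\rho)=\tr\!\left(-\rho\ln\rho\right)$: I will construct a quantum circuit that prepares a subnormalized density operator equal to $-\rho\ln\rho$ up to a known positive scaling factor $C$, and then estimate its trace by the trace-estimation subroutine of Lemma~\ref{lemma:trace-estimation}. The preparation is carried out by the eigenvalue transformation of subnormalized density operators (Theorem~\ref{thm:main}), which, from a degree-$d$ polynomial $P$ approximating a target function $f$ on $[\delta,1]$ with $\Abs{P}_{[-1,1]}\le 1$, yields---using $O(d)$ queries to $U_\rho$ and $\tilde O((n+n_\rho)d)$ gates---a circuit preparing a subnormalized density operator that is a block-encoding of $\rho\,(f(\rho))^2$ with error $\Theta\!\big(\varepsilon_P+\delta+\Abs{x(f(x))^2}_{[0,\delta]}\big)$. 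The natural choice is $f(x)\propto\sqrt{-\ln x}$, so that $x\,(f(x))^2\propto -x\ln x$; concretely I would set $f(x)=\tfrac12\sqrt{-\ln x}\big/\sqrt{\ln(1/\delta)}$ on $[\delta,1]$, so that $\Abs{f}_{[\delta,1]}\le\tfrac12$, the scaling factor is $C=\Theta(\ln(1/\delta))$ (only polylogarithmic), and $\Abs{x(f(x))^2}_{[0,\delta]}\le\tfrac14\delta\ln(1/\delta)=\tilde O(\delta)$.

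The heart of the argument is the polynomial approximation, because $\sqrt{-\ln x}$ is ill-behaved at both endpoints of $[0,1]$: it diverges as $x\to0$, and near $x=1$ it behaves like $\sqrt{1-x}$ (a singularity produced solely by the squaring intrinsic to the density-operator framework---$-x\ln x$ itself is perfectly smooth at $x=1$), so it cannot be handled directly by a low-degree polynomial. For the divergence at $0$, I would approximate $\sqrt{-\ln x}$ by a polynomial $P$ on $[\delta,1]$ only (e.g.\ by composing a square-root approximation with a polynomial approximation of $\ln(1/x)$ on $[\delta,1]$ in the style of Lemma~\ref{lemma:negative-power}), and then multiply it by a threshold polynomial $R$ obtained from Corollary~\ref{lemma:threshold-function} that is $\approx1$ on $[2\delta,1]$ and $\approx0$ on $[-\delta,\delta]$; the product $Q:=P\cdot R$ is then a bounded, truncated (biased but bounded-error) approximation of $\sqrt{-\ln x}$ on $[0,1]$ with $\Abs{Q}_{[-1,1]}\le1$, and the mild $\sqrt{1-x}$-type behaviour near $x=1$ can be absorbed into $P$ at a modest cost in degree. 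This truncation step is precisely where the hypothesis $\rank(\rho)\le r$ enters: the eigenvalues of $\rho$ below the cutoff contribute at most $\sum_{\lambda_j\le 2\delta}(-\lambda_j\ln\lambda_j)\le r\cdot 2\delta\ln(1/\delta)$ to $S(\rho)$, so after division by $C$ the resulting bias in the estimate is $\tilde O(r\delta)$; to keep it below $\varepsilon$ one must take $\delta=\tilde\Theta(\varepsilon/r)$, which makes the polynomial degree a fixed polynomial in $r$ and $1/\varepsilon$.

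Assembling: with $Q$ chosen, Theorem~\ref{thm:main} produces a circuit $U_B$ preparing a subnormalized density operator $B$ that is a block-encoding of $\tfrac1C(-\rho\ln\rho)$ up to a small operator-norm error; since the prepared $B$ has the form $M\rho M^\dag$ for an operator $M\approx Q(\rho)$, it has rank $\le r$, and hence the induced error on $\tr(B)$ is controlled by $\mathrm{rank}\cdot(\text{operator-norm error})$, which together with the threshold bias gives $\big|\,C\tr(B)-S(\rho)\,\big|\le\varepsilon/2$ once the polynomial and block-encoding accuracies are set to $\tilde\Theta(\varepsilon/r)$. I would then estimate $\tr(B)$ to additive error $\Theta(\varepsilon/C)=\tilde\Theta(\varepsilon)$ via Lemma~\ref{lemma:trace-estimation} with the crude upper bound $\tr(B)\le1$, and return $C$ times the estimate; this costs $\tilde O(1/\varepsilon)$ queries to $U_B$, each of which unfolds into $O(d)$ queries to $U_\rho$ (using Lemma~\ref{lemma:block-encoding of density operators}, with $n_\rho=\poly(n)$ by hypothesis) and $\tilde O((n+n_\rho)d)$ elementary gates. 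Choosing all the error parameters---the threshold cutoff $\delta$, the polynomial-approximation accuracy, the block-encoding precision, and the amplitude-estimation accuracy---so that the total additive error is at most $\varepsilon$ and optimizing the resulting trade-off yields the claimed $\tilde O(r^2/\varepsilon^2)$ queries to $U_\rho$ and $\tilde O\!\big(\tfrac{r^2}{\varepsilon^2}\poly(n)\big)$ elementary gates. The step I expect to be the main obstacle is the simultaneous design of $Q$---controlling its degree, its approximation error on $[\delta,1]$, and the uniform bound $\Abs{Q}_{[-1,1]}\le1$ demanded by QSVT---and the accompanying error bookkeeping that forces the cutoff to $\delta=\tilde\Theta(\varepsilon/r)$.
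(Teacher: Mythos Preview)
Your overall strategy matches the paper's: take $f(x)=\tfrac12\sqrt{-\ln x}/\sqrt{\ln(1/\delta)}$ so that $\rho\,(f(\rho))^2=\tfrac1C(-\rho\ln\rho)$ with $C=4\ln(1/\delta)$, approximate $f$ by a product of a polynomial $P$ and a threshold polynomial $R$, apply the density-operator eigenvalue transformation (Theorem~\ref{lemma:technique}), and finish with trace estimation (Lemma~\ref{lemma:trace-estimation}); your identification of the rank-$r$ truncation bias forcing $\delta=\tilde\Theta(\varepsilon/r)$ is also correct.

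The one substantive divergence is where you place the threshold, and this is where your plan has a gap. You put $R$ near $x=0$ (Corollary~\ref{lemma:threshold-function}) and propose to ``absorb'' the $\sqrt{1-x}$-type singularity at $x=1$ into $P$ ``at a modest cost in degree''. The paper does the opposite: it builds $P$ by a Taylor expansion about $x_0=\tfrac12$ (Lemma~\ref{lemma:approx-sqrt-ln}, via Lemma~\ref{lemma:approximate based on Taylor}), which yields $\Abs{P-f}_{[\delta,1-\delta]}\le\varepsilon_1$ \emph{together with} $\Abs{P}_{[-1,1]}\le1$, and then applies the threshold from Corollary~\ref{lemma:threshold-function-2} near $x=1$ to zero out $[1-\delta,1]$. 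This arrangement is the right one because once $\Abs{P}_{[-1,1]}\le1$, the $x=0$ side is already controlled for free---$\abs{x(P(x)R(x))^2}\le x\le\delta$ on $[0,\delta]$---so a threshold there is redundant; by contrast, on $[1-\delta,1]$ no degree-$\tilde O(1/\delta)$ polynomial can track the square-root cusp of $f$ to accuracy $\tilde O(\varepsilon/r)$, whereas zeroing it out via a threshold costs nothing (the residual bias $\abs{x(f(x))^2}\le\delta/(4\ln(1/\delta))$ on that interval is already acceptable). Your composition-based construction of $P$ also does not obviously deliver $\Abs{P}_{[-1,1]}\le1$, and your threshold near $0$ does not help on $[-1,-2\delta]$ where $R\approx1$. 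Moving the threshold to the $x=1$ endpoint and building $P$ via the Taylor-series route---exactly as the paper does---removes both issues and lets the remainder of your argument go through unchanged.
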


    Before the proof of Theorem \ref{thm: von Neumann entropy}, we need the following method of approximating functions by polynomials based on Taylor series.

    \begin{lemma} [Corollary 66 of the full version of \cite{GSLW19}] \label{lemma:approximate based on Taylor}
        Let $x_0 \in [-1, 1]$, $r \in (0, 2]$, $\delta \in (0, r]$ and $f: [x_0-r-\delta, x_0+r+\delta] \to \mathbb{C}$ such that $$f(x_0+x) = \sum_{k=0}^{\infty} a_k x^k$$ for all $x \in [-r-\delta, r+\delta]$. Suppose $B > 0$ and $$\sum_{k=0}^{\infty} \abs{a_k} (r+\delta)^k \leq B.$$ Let $\varepsilon \in (0, \frac {1} {2B}]$, then there is an efficient computable polynomial $P \in \mathbb{C}[x]$ of degree $O\left(\frac{1}{\delta} \log\left(\frac{B}{\varepsilon}\right)\right)$ such that
        \begin{align*}
            \Abs{f(x)-P(x)}_{[x_0-r,x_0+r]} \leq \varepsilon, \\
            \Abs{P}_{[-1,1]} \leq \varepsilon + \Abs{f}_{[x_0-r-\delta/2, x_0+r+\delta/2]} \leq \varepsilon + B, \\
            \Abs{P}_{[-1,1]\setminus[x_0-r-\delta/2, x_0+r+\delta/2]} \leq \varepsilon.
        \end{align*}
    \end{lemma}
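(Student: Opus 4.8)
The plan is to realize $P$ as a product $P=\tilde P\cdot R$ of a truncated Taylor series $\tilde P$ that matches $f$ on the target window and a smooth ``rectangle'' polynomial $R$ that damps $\tilde P$ outside it. Writing $g(y)=f(x_0+y)=\sum_{k\ge 0}a_ky^k$, I would set $\tilde P(x)=\sum_{k=0}^{d}a_k(x-x_0)^k$ for a degree $d=O\!\left(\frac1\delta\log(B/\varepsilon)\right)$ to be fixed. The tail is controlled geometrically: since $\abs{a_k}\le B(r+\delta)^{-k}$, for $\abs{x-x_0}\le r$ one gets $\abs{g(x-x_0)-\tilde P(x)}\le \sum_{k>d}\abs{a_k}r^k\le B\left(\tfrac{r}{r+\delta}\right)^{d+1}\!\big/\!\left(1-\tfrac{r}{r+\delta}\right)$, and $\ln\frac{r+\delta}{r}\ge\frac{\delta}{r+\delta}\ge\frac\delta 4$ (using $r\le 2$) forces this below $\varepsilon/4$ at the stated degree. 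The same estimate with radius $r+\delta/2<r+\delta$ stays inside the convergence window, so $\abs{\tilde P}\le B+\varepsilon/4$ throughout $[x_0-r-\delta/2,x_0+r+\delta/2]$; this already delivers the bulk of the second claimed inequality on that sub-interval, and shows $\abs{f}\le B$ there.

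Next I would multiply by a windowed rectangle $R$, obtained from a shifted and rescaled instance of the threshold construction behind Lemma \ref{lemma:threshold-function-main}: a polynomial of degree $O\!\left(\frac1\delta\log(1/\varepsilon')\right)$ with $\Abs{R}_{[-1,1]}\le 1$, $R\ge 1-\varepsilon'$ on $[x_0-r,x_0+r]$, and $R\le\varepsilon'$ on $[-1,1]\setminus[x_0-r-\delta/2,x_0+r+\delta/2]$, the two transitions living in the buffers of width $\delta/2$. With $P=\tilde P\cdot R$, on the window $[x_0-r,x_0+r]$ one has $\abs{f-P}\le\abs{f-\tilde P}+\abs{\tilde P}\,\abs{1-R}\le \varepsilon/4+(B+\varepsilon/4)\varepsilon'$, which is at most $\varepsilon$ once $\varepsilon'$ is a small fixed power of $\varepsilon/B$; this gives the first inequality. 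The second inequality then follows by combining the on-window bound $\abs{P}\le\abs{\tilde P}\le \Abs{f}_{[x_0-r-\delta/2,x_0+r+\delta/2]}+\varepsilon/4$ with the off-window bound established below.

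The delicate point, and the main obstacle, is the off-window estimate $\Abs{P}_{[-1,1]\setminus[x_0-r-\delta/2,x_0+r+\delta/2]}\le\varepsilon$. On the part with $\abs{x-x_0}\le r+\delta$ the truncation is still inside the convergence window, so $\abs{\tilde P}\le 2B$ and $\abs{P}\le 2B\varepsilon'\le\varepsilon$ immediately. The trouble is the far part $\abs{x-x_0}>r+\delta$, where $\tilde P$ can grow like $B\left(\tfrac{2}{r+\delta}\right)^{d}$, i.e.\ exponentially in $d$; here a uniform stop-band bound $R\le\varepsilon'$ is far too weak, and forcing $\varepsilon'\le\varepsilon/(B(2/(r+\delta))^d)$ would inflate $\deg R$ to $\Theta(\delta^{-2}\log(B/\varepsilon))$, breaking the degree budget. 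The resolution is that the erf-based rectangle does not merely sit at $\varepsilon'$ across the stop band but decays like a Gaussian in the distance past the buffer edge: with frequency parameter $k=\Theta\!\left(\frac1\delta\sqrt{\log(1/\varepsilon')}\right)$, a point at $\abs{x-x_0}=r+\delta+s$ satisfies $R(x)\le\exp(-\Theta(k^2s^2))$, whereas $\ln\abs{\tilde P}\le \ln B+O\!\left(\tfrac{ds}{r+\delta}\right)$. Comparing the two shows the exponential damping wins for every $s>0$ as soon as $\varepsilon'$ is a fixed constant power of $\varepsilon/B$, so $\log(1/\varepsilon')=\Theta(\log(B/\varepsilon))$ and the total degree remains $O\!\left(\frac1\delta\log(B/\varepsilon)\right)$. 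I would therefore invoke the finer pointwise (not merely uniform) stop-band decay of the \cite{GSLW19} rectangle construction underlying Lemma \ref{lemma:threshold-function-main}, rather than its black-box statement. Efficient computability of $P$ is then routine: the $a_k$ are computable by hypothesis and the rectangle polynomial's coefficients are computable in time polynomial in its degree, so $P$'s coefficients are as well.
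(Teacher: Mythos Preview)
The paper does not prove this lemma; it is stated as Corollary~66 of the full version of \cite{GSLW19} and invoked as a black box. Your sketch follows the same construction that \cite{GSLW19} actually uses: truncate the Taylor series to degree $d=O(\delta^{-1}\log(B/\varepsilon))$ and multiply by an erf-based rectangle polynomial, with the key observation that the rectangle's Gaussian-type tail decay (not merely its uniform stop-band bound $\varepsilon'$) absorbs the exponential growth of the truncated series outside the convergence window. You have correctly identified both the main obstacle and its resolution, so your approach coincides with the intended one.
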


    By Lemma \ref{lemma:approximate based on Taylor}, we are able to give an approximation of scaled $\sqrt{-\ln(x)}$ as follows.

    \begin{lemma} \label{lemma:approx-sqrt-ln}
        For every $\delta', \varepsilon \in (0, \frac 1 4]$,
        there is an efficient computable polynomial $P \in \mathbb{R}[x]$ of degree $O\left(\frac{1}{\delta'}\log\left(\frac{1}{\delta'\varepsilon}\right)\right)$ such that
        \begin{align*}
            \Abs{P(x) - \frac{\sqrt{-\ln(x)}}{2\sqrt{-\ln(\delta')}}}_{[\delta', 1-\delta']} \leq \varepsilon, \\
            \Abs{P}_{[-1, 1]} \leq 1.
        \end{align*}
    \end{lemma}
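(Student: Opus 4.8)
Write $f(x) = \dfrac{\sqrt{-\ln x}}{2\sqrt{-\ln \delta'}}$, the function to be approximated on $[\delta', 1-\delta']$. The plan is to invoke Lemma~\ref{lemma:approximate based on Taylor} (Corollary~66 of \cite{GSLW19}) directly, with the symmetric choice $x_0 = \tfrac12$ and $r = \tfrac12 - \delta'$, so that $[x_0 - r, x_0 + r] = [\delta', 1-\delta']$. This is legitimate because $f$ is real-analytic on the open interval $(0,1)$: on a disk around $\tfrac12$ of radius $<\tfrac12$ one has $\operatorname{Re}(z) > 0$ and $|z| < 1$, hence $z \mapsto -\ln z$ is analytic with positive real part there and the principal square root is analytic on it, so $f$ is analytic; its only singularities are the $\log$-branch point at $x = 0$ and the square-root-type branch point at $x = 1$ (near which $f(x)$ behaves like a constant times $\sqrt{1-x}$). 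Thus the Taylor series of $f$ about $\tfrac12$ has radius of convergence exactly $\tfrac12$ and represents $f$ on all of $(0,1)$.

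The one subtlety is the margin parameter $\delta$ in Lemma~\ref{lemma:approximate based on Taylor}: because of the $\log$-singularity at $0$, the Taylor series of $f$ about $\tfrac12$ need not converge absolutely on the full circle of radius $\tfrac12$, so $r+\delta$ cannot be pushed all the way to $\tfrac12$. I will instead take $\delta = \tfrac12\delta'$, so that $r+\delta = \tfrac12 - \tfrac12\delta' < \tfrac12$ stays strictly inside the disk of convergence (and $\delta \le r$, $r \in (0,2]$, using $\delta' \le \tfrac14$). To bound $B := \sum_{k\ge 0}|a_k|(r+\delta)^k$, I will apply Cauchy's estimate on the circle $|z-\tfrac12| = \rho$ with $\rho = \tfrac12 - \tfrac14\delta' \in (r+\delta, \tfrac12)$, giving $|a_k| \le M_\rho\,\rho^{-k}$ where $M_\rho = \max_{|z-1/2|=\rho}|f(z)|$. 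On (and inside) this circle $\operatorname{Re}(z) \ge \tfrac14\delta'$ and $|z| < 1$, so $|{-\ln z}| \le \ln(4/\delta') + \pi = O(\log(1/\delta'))$; dividing by the normalizing factor $2\sqrt{-\ln\delta'}$ exactly cancels this growth, so $M_\rho = O(1)$. Hence $B \le M_\rho\sum_{k\ge 0}\big((r+\delta)/\rho\big)^k = M_\rho/\big(1-(r+\delta)/\rho\big) = O(1/\delta')$, i.e. $B = \poly(1/\delta')$.

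Now apply Lemma~\ref{lemma:approximate based on Taylor} with accuracy parameter $\varepsilon_0 := \min\{\varepsilon,\ \tfrac1{2B}\}$ (the lemma requires $\varepsilon_0 \le \tfrac1{2B}$). It yields an efficiently computable $P$ of degree $O\!\big(\tfrac1\delta\log\tfrac{B}{\varepsilon_0}\big) = O\!\big(\tfrac1{\delta'}\log\tfrac{1}{\delta'\varepsilon}\big)$ — using $B = \poly(1/\delta')$ and $\varepsilon \le \tfrac14 < 1$ to absorb $\log B$ and $\log(1/\varepsilon_0)$ into $\log(1/(\delta'\varepsilon))$ — with $\Abs{P-f}_{[\delta',1-\delta']} \le \varepsilon_0 \le \varepsilon$. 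For the sup-norm bound, the lemma gives $\Abs{P}_{[-1,1]} \le \varepsilon_0 + \Abs{f}_{[x_0-r-\delta/2,\,x_0+r+\delta/2]} = \varepsilon_0 + \Abs{f}_{[3\delta'/4,\,1-3\delta'/4]}$; since $f$ is positive and decreasing on $(0,1)$ this last quantity equals $f(3\delta'/4) = \tfrac12\sqrt{1 + \ln(4/3)/\ln(1/\delta')} \le \tfrac12\sqrt{1 + \ln(4/3)/\ln 4} < \tfrac34$ for $\delta' \le \tfrac14$, so with $\varepsilon_0 \le \tfrac14$ we conclude $\Abs{P}_{[-1,1]} \le 1$. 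Finally, $f$ is real on a real neighborhood of $x_0$, so its Taylor coefficients are real and the polynomial produced by Corollary~66 (a truncated Taylor series times a real cutoff polynomial) has real coefficients, i.e. $P \in \mathbb{R}[x]$; alternatively, replace $P$ by its real part, which changes neither the degree nor any sup-norm bound on real sets. The only genuine work is the middle step — recognizing that $r+\delta$ must stay strictly below the radius of convergence and carrying out the Cauchy estimate near the logarithmic branch point so that $B$ remains polynomial in $1/\delta'$.
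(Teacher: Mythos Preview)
Your proof is correct and follows essentially the same route as the paper: apply Lemma~\ref{lemma:approximate based on Taylor} with $x_0=\tfrac12$, $r=\tfrac12-\delta'$, $\delta=\tfrac12\delta'$, bound $B=O(1/\delta')$, and then check the sup-norm via $f(3\delta'/4)<\tfrac34$. The one substantive difference is how $B$ is bounded: the paper uses the root test ($\limsup\sqrt[k]{|a_k|}=2$) to get $|a_k|\le(2+\delta')^k$ eventually, whereas you use a Cauchy estimate on the circle of radius $\tfrac12-\tfrac14\delta'$; your route is a bit cleaner since it avoids the (minor) uniformity question of how the root-test threshold $k_0$ depends on $\delta'$, and you also explicitly handle the constraint $\varepsilon\le 1/(2B)$ by passing to $\varepsilon_0=\min\{\varepsilon,1/(2B)\}$.
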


    \begin{proof}
        Let $f(x) = \frac{\sqrt{-\ln(x)}}{2\sqrt{-\ln(\delta')}}$ whose Taylor series expanded around $x_0 = \frac 1 2$ is $f(x_0+x) = \sum_{k=0}^\infty a_k x^k$. We note that $f(x)$ is holomorphic in $\mathbb{C} \setminus (-\infty, 0] \setminus [1, \infty)$ if we choose the definitions of $\ln(\cdot)$ and $\sqrt{(\cdot)}$ to be their principle branches in complex analysis. Thus the radius of convergence of the Taylor series of $f(x)$ expanded around $x_0 = \frac 1 2$ is $R = \frac 1 2$. We have
        \[
            \limsup_{k\to\infty} \sqrt[k]{\abs{a_k}} = \frac 1 R = 2.
        \]
        Therefore, there exists $k_0 \in \mathbb{N}$ such that for every $k > k_0$, it holds that $\sqrt[k]{\abs{a_k}} < 2 + \delta'$. Now we set $r = \frac 1 2 - \delta'$ and $\delta = \delta'/2$, and we have
        \begin{align*}
            \sum_{k=0}^{\infty} \abs{a_k} (r+\delta)^k 
            & \leq \sum_{k=0}^{\infty} (2+\delta')^k \left(\frac 1 2 - \frac {\delta'} 2\right)^k + O(1) \\
            & \leq \sum_{k=0}^{\infty} \left( 1 - \frac {\delta'} {2} \right)^k + O(1) \\
            & = \frac 2 {\delta'} + O(1) \eqqcolon B.
        \end{align*}
        Now applying Lemma \ref{lemma:approximate based on Taylor}, there is an efficiently computable polynomial $P \in \mathbb{C}[x]$ of degree $O\left(\frac{1}{\delta}\log\left(\frac{B}{\varepsilon}\right)\right) = O\left(\frac{1}{\delta'}\log\left(\frac{1}{\delta'\varepsilon}\right)\right)$ such that
        \begin{align*}
            \Abs{f(x) - P(x)}_{[\delta', 1-\delta']} \leq \varepsilon, \\
            \Abs{P(x)}_{[-1,1]} \leq \varepsilon + \Abs{f}_{[3\delta'/4, 1-3\delta'/4]}, \\
            \Abs{P}_{[-1,1]\setminus[3\delta'/4, 1-3\delta'/4]} \leq \varepsilon.
        \end{align*}
        Here, we note that
        \[
            \Abs{f}_{[3\delta'/4, 1-3\delta'/4]} = f\left(\frac{3\delta'}{4}\right) \leq \frac 3 4
        \]
        and we obtain $\Abs{P}_{[-1, 1]} \leq 1$ combining the three cases above. Finally, since we are only interested in the real part of $P(x)$ and $x$ is always a real number, just selecting the real part of the coefficients of $P(x)$ will obtain a desired polynomial with real coefficients.
    \end{proof}

    Now we are ready to give the proof of Theorem \ref{thm: von Neumann entropy}.

    \begin{proof} [Proof of Theorem \ref{thm: von Neumann entropy}]
        Let $\delta, \varepsilon_1 \in (0, \frac 1 4)$ and
        \[
            f(x) = \frac{\sqrt{-\ln(x)}}{2\sqrt{-\ln(\delta)}}.
        \]
        By Lemma \ref{lemma:approx-sqrt-ln}, there is a polynomial $P(x)$ of degree $O\left(\frac{1}{\delta}\log\left(\frac{1}{\delta\varepsilon_1}\right)\right)$ such that $\Abs{P}_{[-1,1]}\leq 1$ and $\Abs{P(x)-f(x)}_{[\delta, 1-\delta]} \leq \varepsilon_1$. By Corollary \ref{lemma:threshold-function-2}, there is an even polynomial $R(x)$ of degree $O\left(\frac{1}{\delta}\log\left(\frac{1}{\varepsilon_1}\right)\right)$ such that $\Abs{R}_{[-1, 1]} \leq 1$ and
        \[
            R(x) \in \begin{cases}
                [1-\varepsilon_1, 1] & x \in [-1+2\delta, 1-2\delta] \\
                [0, \varepsilon_1] & x \in [-1, -1+\delta] \cup [1-\delta, 1]
            \end{cases}.
        \]
        We note that $\frac 1 2 P(x)R(x)$ is a polynomial of degree $d = O\left(\frac{1}{\delta}\log\left(\frac{1}{\delta\varepsilon_1}\right)\right)$ satisfying $\left\| \frac 1 2 P(x)R(x) \right\|_{[-1,1]} \leq \frac 1 2$. Let $\delta_1 \in (0, 1)$ and by Lemma \ref{lemma:technique}, there is a quantum circuit $\tilde U$ such that
        \begin{enumerate}
          \item $\tilde U$ prepares an $n$-qubit subnormalized density operator $A_1$, and $A_1$ is a $(1, 0, \delta_1)$-block-encoding of $\rho\left(\frac 1 2 P(\rho)R(\rho)\right)^2$.
          \item $\tilde U$ uses $O(d)$ queries to $U_\rho$ and $U_\rho^\dag$, $1$ query to controlled-$U$ and controlled-$U_\rho^\dag$, and $O((n+a)d)$ elementary quantum gates.
          \item A description of $\tilde U$ can be computed by a (classical) Turing machine in $O(\poly(d, \log(1/\delta_1)))$ time.
        \end{enumerate}

        Now we are going to show that
        \[
            \Abs{x(P(x)R(x))^2 - x(f(x))^2}_{[0,1]} \leq \Theta\left(\varepsilon_1 + \frac{\delta}{\ln\left(\frac{1}{\delta}\right)}\right),
        \]
        which immediately yields
        \[
            \Abs{\rho(P(\rho)R(\rho))^2 - \rho(f(\rho))^2} \leq \Theta\left(\varepsilon_1 + \frac{\delta}{\ln\left(\frac{1}{\delta}\right)}\right).
        \]
        We consider four cases.
        \begin{enumerate}
          \item $x \in [0, \delta]$. 
          In this case, 
          \begin{align*}
              \abs{x(P(x)R(x))^2 - x(f(x))^2} 
              & \leq \abs{x(P(x)R(x))^2} + \abs{x(f(x))^2} \\
              & \leq \delta + \delta (f(\delta))^2 \\
              & = \delta + \delta / 4 \\
              & = \Theta(\delta).
          \end{align*}
          \item $x \in [\delta, 1-2\delta]$. In this case,
            \begin{align*}
              \abs{x(P(x)R(x))^2 - x(f(x))^2}
              & \leq \abs{x} \big( \abs{(P(x))^2 - (f(x))^2} \abs{(R(x))^2} + \abs{(R(x))^2 - 1} \abs{(f(x))^2} \big) \\
              & \leq 2 \abs{P(x) - f(x)} + 2 \abs{R(x) - 1} \\
              & \leq 2\varepsilon_1 + 2\varepsilon_1 = \Theta(\varepsilon_1).
            \end{align*}
          \item $x \in [1-2\delta, 1-\delta]$. We note that $-x\ln(x) < 1 - x$ for every $x \in (0, 1)$. In this case,
            \begin{align*}
              \abs{x(P(x)R(x))^2 - x(f(x))^2} 
              & \leq \abs{x(P(x)R(x))^2} + \abs{x(f(x))^2} \\
              & \leq \abs{x(P(x))^2} + \abs{x(f(x))^2} \\
              & \leq 2 \abs{x(f(x))^2} + \abs{x(P(x))^2 - x(f(x))^2} \\
              & \leq 2 \frac{-x\ln(x)}{-4\ln(\delta)} + 2 \abs{P(x) - f(x)} \\
              & \leq \frac{1-x}{-2\ln(\delta)} + 2 \varepsilon_1 \\
              & \leq \frac{2\delta}{-2\ln(\delta)} + 2 \varepsilon_1 \\
              & = \Theta\left(\frac{\delta}{\ln\left(\frac{1}{\delta}\right)} + \varepsilon_1\right).
            \end{align*}
          \item $x \in [1-\delta, 1]$. In this case,
            \begin{align*}
              \abs{x(P(x)R(x))^2 - x(f(x))^2} 
              & \leq \abs{x(P(x)R(x))^2} + \abs{x(f(x))^2} \\
              & \leq \abs{(R(x))^2} + \abs{x(f(x))^2} \\
              & \leq \Theta\left(\varepsilon_1 + \frac{\delta}{\ln\left(\frac{1}{\delta}\right)}\right).
            \end{align*}
        \end{enumerate}
        We note that $\tr\left(\rho(f(\rho))^2\right) = \frac{S(\rho)}{4\ln\left(\frac{1}{\delta}\right)}$. Based on this, we have
        \begin{align*}
            \abs{16 \ln\left(\frac{1}{\delta}\right) \tr(A_1) - S(\rho)} \leq \Theta\left(r\left((\varepsilon_1+\delta_1)\ln\left(\frac{1}{\delta}\right) + \delta\right)\right).
        \end{align*}
        On the other hand, $\tr(A_1)$ has an upper bound that
        \[
            \tr(A_1) \leq \frac{S(\rho)}{16 \ln \left(\frac{1}{\delta}\right)} + O(1) \leq \frac{\ln(r)}{16 \ln \left(\frac{1}{\delta}\right)} + O(1) \eqqcolon B.
        \]

        Let $\varepsilon_2 \in (0, 1)$. By Lemma \ref{lemma:trace-estimation}, we can compute $\tilde p$ such that $\abs{\tr(A_1) - \tilde p} \leq \varepsilon_2$ with $O\left(\frac{\sqrt{B}}{\varepsilon_2} + \frac{1}{\sqrt{\varepsilon_2}}\right)$ queries to $\tilde U$ and $\tilde U^\dag$.

        Finally, we choose $16 \ln \left(\frac{1}{\delta}\right) \tilde p$ to be the approximation of $S(\rho)$. The error is bounded by
        \begin{equation} \label{eq:von-Neumann-entropy-error}
        \begin{aligned}
            \abs{ 16 \ln \left(\frac{1}{\delta}\right) \tilde p - S(\rho) } \leq \Theta\left(r\left((\varepsilon_1+\delta_1)\ln\left(\frac{1}{\delta}\right) + \delta\right) + \varepsilon_2 \ln\left(\frac{1}{\delta}\right) \right).
        \end{aligned}
        \end{equation}
        Let the right hand side of Eq. (\ref{eq:von-Neumann-entropy-error}) become $\leq \varepsilon$, then the number of queries to $U_\rho$ is 
        \[
        O\left( d \right) \cdot O\left( \frac{\sqrt{B}}{\varepsilon_2} + \frac{1}{\sqrt{\varepsilon_2}} \right) = \tilde O\left(\frac{d}{\varepsilon_2}\right) = \tilde O\left(\frac{r}{\varepsilon^2}\right).
        \]
    \end{proof}

    \subsection{Max Entropy and Rank of Quantum States} \label{subsec:rank}

    The Max (Hartley) entropy $S^{\max}(\rho) = \ln\left(\rank(\rho)\right)$ of a quantum state $\rho$ is the logarithm of its rank.
    Low-rank quantum states turn out to be useful in quantum algorithms, e.g. \cite{GLF10, LMR14, RSML18, BKL+19, WZC+22}.
    Estimating the rank of quantum states is important in checking whether a quantum state fits in a certain low-rank condition of a quantum algorithm. Recently, a variational quantum algorithm was proposed in \cite{TV21} to estimate the rank of quantum states. Here, we provide a quantum algorithm for rank estimation. For $\delta > 0$, the $\delta$-rank of a matrix $A$ is defined by
    \[
        \rank_{\delta}(A) = \min \{ \rank(B): \Abs{A-B} \leq \delta \}.
    \]
    In particular, $\rank_0(A) = \rank(A)$. We note that if $A$ is Hermitian and has the spectrum decomposition $A = \sum_j \lambda_j \ket{\psi_j} \bra{\psi_j}$, then $\rank_\delta(A) = \tr\left(\Pi_{\supp_\delta(A)}\right)$ is the number of eigenvalues $\lambda_j$ such that $\abs{\lambda_j} > \delta$.

    \begin{theorem} [Rank estimation] \label{thm:rank}
        Suppose that
        \begin{enumerate}
          \item $U$ is an $(n+a)$-qubit unitary operator that prepares an $n$-qubit density operator $\rho$.
          \item $\delta \in (0, \frac 1 {10}]$.
        \end{enumerate}
        For $\varepsilon, \varepsilon' \in (0, 1)$, there is a quantum algorithm that outputs $\tilde r$ such that
        \[
            (1-\varepsilon) \rank_{\delta}(\rho) - \varepsilon' \leq \tilde r \leq (1+\varepsilon) \rank(\rho) + \varepsilon',
        \]
        using $O\left( \frac{1}{\delta^2 \varepsilon'} \log\left(\frac{1}{\delta\varepsilon}\right) \right)$ queries to $U$ and $O\left( \frac{1}{\delta^2 \varepsilon'} \log\left(\frac{1}{\delta\varepsilon}\right) \poly(n) \right)$ elementary quantum gates.
    \end{theorem}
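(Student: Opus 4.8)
The plan is to reduce rank estimation to estimating the trace of a (scaled, approximate) eigenvalue threshold projector of $\rho$. First I would apply Lemma~\ref{lemma:eigenvalue-threshold-projector} to $U$ with threshold parameter $\delta_0 = \delta/2 \in (0,\tfrac{1}{20}]$, error parameter $\epsilon_0 = \Theta(\varepsilon\sqrt{\delta})$ chosen small enough that $32\epsilon_0^2 \le \delta_0$ and $\epsilon_0 \le \tfrac{1}{10}$ (we may assume $\varepsilon \le 1$, since a larger $\varepsilon$ only weakens the conclusion), and block-encoding precision $\delta'$ taken exponentially small, say $\delta' = 2^{-n}\varepsilon\delta$; recall that $\delta'$ enters only the classical description time and the $\poly(n)$ gate overhead, not the query count $d = O\bigl(\tfrac{1}{\delta}\log\tfrac{1}{\epsilon_0}\bigr) = \tilde O(1/\delta)$. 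This yields a circuit $\tilde U$ using $O(d)$ queries to $U$ that prepares a subnormalized density operator $\tilde B$ with $\Abs{\tilde B - B} \le \delta'$, where, since $\supp_{2\delta_0}(\rho) = \supp_\delta(\rho)$,
\[
    \left(\tfrac{\delta_0}{4}(1-2\epsilon_0) - \delta_0^{1/2}\epsilon_0\right)\Pi_{\supp_\delta(\rho)} \;\le\; B \;\le\; \left(\tfrac{\delta_0}{4} + \epsilon_0^2 + \delta_0^{1/2}\epsilon_0\right)\Pi_{\supp(\rho)} .
\]

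Taking traces and using $\tr\bigl(\Pi_{\supp_\delta(\rho)}\bigr) = \rank_\delta(\rho)$ and $\tr\bigl(\Pi_{\supp(\rho)}\bigr) = \rank(\rho)$ squeezes $\tr(B)$ between $\bigl(\tfrac{\delta_0}{4}-\Theta(\delta_0^{1/2}\epsilon_0)\bigr)\rank_\delta(\rho)$ and $\bigl(\tfrac{\delta_0}{4}+\Theta(\delta_0^{1/2}\epsilon_0)\bigr)\rank(\rho)$. Next I would estimate $\tr(\tilde B)$ within additive error $\varepsilon_1 = \Theta(\varepsilon\delta)$ via Lemma~\ref{lemma:trace-estimation}, using the trivial a priori bound $\tr(\tilde B) \le 1$ (valid because $\tilde B$ is a subnormalized density operator); this costs $O(1/\varepsilon_1) = \tilde O\bigl(1/(\varepsilon\delta)\bigr)$ queries to $\tilde U$, and since the amplitude estimation (Theorem~\ref{thm:amplitude-estimation}) underlying Lemma~\ref{lemma:trace-estimation} succeeds only with probability $\ge 8/\pi^2$, running $O(\log(1/\eta))$ independent copies and taking the median boosts this to $1-\eta$. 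Finally I output $\tilde r = \tfrac{4}{\delta_0}\tilde p = \tfrac{8}{\delta}\tilde p$, where $\tilde p$ is the estimate of $\tr(\tilde B)$.

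For correctness, $\tilde r$ differs from $\tfrac{4}{\delta_0}\tr(B)$ by at most $\tfrac{4}{\delta_0}\bigl(\varepsilon_1 + 2^n\delta'\bigr) = \Theta(\varepsilon)$, while $\tfrac{4}{\delta_0}\tr(B)$ lies between $\bigl(1-\Theta(\epsilon_0/\sqrt{\delta_0})\bigr)\rank_\delta(\rho)$ and $\bigl(1+\Theta(\epsilon_0/\sqrt{\delta_0})\bigr)\rank(\rho)$. Since $\epsilon_0/\sqrt{\delta_0} = \Theta(\varepsilon)$ and $\rank(\rho) \ge 1$, choosing the hidden constants suitably and folding the $\Theta(\varepsilon)$ additive terms into the $\varepsilon\cdot\rank$ terms and the $\varepsilon'$ slack gives $(1-\varepsilon)\rank_\delta(\rho) - \varepsilon' \le \tilde r \le (1+\varepsilon)\rank(\rho) + \varepsilon'$; if $\varepsilon' < \varepsilon$ one simply runs the whole algorithm with $\varepsilon'$ in place of $\varepsilon$. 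The query count to $U$ is $O(d)$ per query to $\tilde U$, i.e.\ $\tilde O(1/\delta)\cdot\tilde O(1/(\varepsilon\delta)) = \tilde O\bigl(1/(\delta^2\varepsilon)\bigr)$, with an extra $\poly(n)$ factor in the gate count coming from the block-encoding overhead (Lemmas~\ref{lemma:block-encoding of density operators} and~\ref{lemma:eigenvalue-threshold-projector}) and from $\delta'$ being exponentially small.

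The main obstacle is that Lemma~\ref{lemma:eigenvalue-threshold-projector} only supplies a two-sided ``fuzzy projector'' guarantee: $B$ is trapped between scaled copies of $\Pi_{\supp_\delta(\rho)}$ and $\Pi_{\supp(\rho)}$, which have different traces, so the estimator is intrinsically sandwiched between $\rank_\delta(\rho)$ and $\rank(\rho)$ rather than equal to a single quantity. The delicate point will be that the trace estimate must be rescaled by the large factor $4/\delta_0 = \Theta(1/\delta)$, which magnifies every error, so one has to verify that the three error contributions --- the spectral slack $\Theta(\epsilon_0/\sqrt{\delta_0})$, the trace-estimation error $\Theta(\varepsilon_1/\delta)$, and the block-encoding error $\Theta(2^n\delta'/\delta)$ --- simultaneously fit inside the permitted multiplicative $\varepsilon$ and additive $\varepsilon'$ windows, while the budget allocation $\epsilon_0 \asymp \varepsilon\sqrt{\delta}$, $\varepsilon_1 \asymp \varepsilon\delta$, $\delta' \asymp 2^{-n}\varepsilon\delta$ keeps both $d = \tilde O(1/\delta)$ and the number of amplitude-estimation rounds $\tilde O(1/(\varepsilon\delta))$ so that their product stays $\tilde O(1/(\delta^2\varepsilon))$. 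The remaining ingredients --- monotonicity $\rank_\delta(\rho) \le \rank(\rho)$, the trace identities for projectors, and median boosting --- are routine.
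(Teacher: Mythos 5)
Your proposal is correct and follows essentially the same route as the paper's own proof: apply the eigenvalue-threshold-projector lemma at threshold $\delta/2$ so that the sandwich involves $\Pi_{\supp_\delta(\rho)}$ and $\Pi_{\supp(\rho)}$, estimate the trace via amplitude estimation with the trivial bound $\tr \leq 1$, rescale by $8/\delta$, and split the error into the spectral slack, the trace-estimation error, and the exponentially small block-encoding error. Your parameter choices ($\epsilon_0 \asymp \varepsilon\sqrt{\delta}$, trace precision $\asymp \varepsilon\delta$, median boosting, and the explicit treatment of the $\varepsilon'$ slack) differ only cosmetically from the paper's and yield the same $\tilde O(1/(\delta^2\varepsilon))$ query count.
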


    \begin{proof}

        \textbf{Step 1}. By Lemma \ref{lemma:eigenvalue-threshold-projector}, introducing parameters $\delta_1, \varepsilon_1 \in (0, \frac 1 {10}]$ with $32 \varepsilon_1^2 \leq \delta$, there is a quantum circuit $\tilde U$ such that
        \begin{enumerate}
          \item $\tilde U$ prepares an $n$-qubit subnormalized density operator $A$, which is a $(1, 0, \delta_1)$-block-encoding of $\tilde \Pi$ such that
            \begin{equation} \label{eq:rank-supp-proj}
            \begin{aligned}
                \left( \frac {\delta} 8 (1 - 2\varepsilon_1) - \left(\frac \delta 2\right)^{1/2}\varepsilon_1 \right) \Pi_{\supp_{\delta}(\rho)} \leq \tilde \Pi \leq \left( \frac {\delta} 8 + \varepsilon_1^2 + \left(\frac\delta 2\right)^{1/2}\varepsilon_1 \right) \Pi_{\supp(\rho)}.
            \end{aligned}
            \end{equation}
          \item $\tilde U$ uses $O(d)$ queries to $U$ and $U^\dag$, $1$ query to controlled-$U$ and $O((n+a)d)$ elementary quantum gates, where $d = O\left(\frac{1}{\delta}\log\left(\frac{1}{\varepsilon_1}\right)\right)$.
          \item A description of $\tilde U$ can be computed by a (classical) Turing machine in $\poly(d, \log(1/\delta_1))$ time.
        \end{enumerate}
        We note that $\abs{\tr(A) - \tr(\tilde \Pi)} \leq 2^n \delta_1$ and by Eq. (\ref{eq:rank-supp-proj}) we have
        \begin{align*}
            \left( 1 - 2\varepsilon_1 - 4\sqrt{2} \delta^{-1/2} \varepsilon_1 \right) \rank_\delta(\rho) \leq 8\delta^{-1} \tr(\tilde \Pi) \leq \left( 1 + 8 \delta^{-1} \varepsilon_1^2 + 4 \sqrt{2} \delta^{-1/2} \varepsilon_1 \right) \rank(\rho).
        \end{align*}

        \textbf{Step 2}. Introducing a parameter $\varepsilon_2$, by Lemma \ref{lemma:trace-estimation}, we can compute $\tilde p$ that estimates $\tr(A)$ such that $\abs{\tilde p - \tr(A)} \leq \varepsilon_2$ with $O\left( \frac{\sqrt{B}}{\varepsilon_2} + \frac{1}{\sqrt{\varepsilon_2}} \right)$ queries to $\tilde U$ and $\tilde U^\dag$, where $B$ is an upper bound for $\tr(A)$. Here, we just choose $B = 1$.

        \textbf{Step 3}. Output $\tilde r = 8\delta^{-1}\tilde p$ as the estimation of the rank of $\rho$. Here, we see that
        \begin{align*}
            & \left( 1 - 2\varepsilon_1 - 4\sqrt{2} \delta^{-1/2} \varepsilon_1 \right) \rank_\delta(\rho) - 8\delta^{-1}\left(2^n\delta_1 + \varepsilon_2\right) \leq \tilde r \\
            & \leq \left( 1 + 8 \delta^{-1} \varepsilon_1^2 + 4 \sqrt{2} \delta^{-1/2} \varepsilon_1 \right) \rank(\rho) + 8\delta^{-1}\left(2^n\delta_1 + \varepsilon_2\right).
        \end{align*}
        By letting $\varepsilon_1 = \frac{1}{32} \delta \varepsilon < 1$ (note that it also holds that $32\varepsilon_1^2 < 32 \varepsilon_1 = \delta \varepsilon < \delta$), $\varepsilon_2 = \frac{1}{16}\delta\varepsilon'$, and $\delta_1 = \frac{1}{2^{n+4}}\delta\varepsilon'$, the above inequality becomes $(1-\varepsilon) \rank_{\delta}(\rho) - \varepsilon' \leq \tilde r \leq (1+\varepsilon) \rank(\rho) + \varepsilon'$. 
        To see this, we have to show the following three inequalities. 
        \begin{enumerate}
            \item $2\varepsilon_1 + 4\sqrt{2} \delta^{-1/2} \varepsilon_1 \leq \varepsilon$. Note that $\delta^{-1} \geq \delta^{-1/2} \geq 3$. We have $2\varepsilon_1 + 4\sqrt{2} \delta^{-1/2} \varepsilon_1 \leq \delta^{-1} \varepsilon_1 + 4\sqrt{2} \delta^{-1} \varepsilon_1 = (1+4\sqrt{2}) \delta^{-1} \varepsilon_1 \leq 32 \delta^{-1} \varepsilon_1 = \varepsilon$.
            \item $8 \delta^{-1} \varepsilon_1^2 + 4 \sqrt{2} \delta^{-1/2} \varepsilon_1 \leq \varepsilon$. 
            Note that $\varepsilon_1^2 < \varepsilon_1 < 1$ and $\delta^{-1} \geq \delta^{-1/2} \geq 3$.
            We have
            $8 \delta^{-1} \varepsilon_1^2 + 4 \sqrt{2} \delta^{-1/2} \varepsilon_1 \leq 8 \delta^{-1} \varepsilon_1 + 4 \sqrt{2} \delta^{-1} \varepsilon_1 = (8 + 4\sqrt{2}) \delta^{-1} \varepsilon_1 \leq 32 \delta^{-1} \varepsilon_1 = \varepsilon$.
            \item $8\delta^{-1}\left(2^n\delta_1 + \varepsilon_2\right) \leq \varepsilon'$. This is simple by directly taking the values of $\varepsilon_2$ and $\delta_1$. 
        \end{enumerate}
        Finally, the number of queries to $U$ is
        \[
            O\left( \frac{1}{\delta} \log \left(\frac{1}{\varepsilon_1}\right) \cdot \left( \frac{\sqrt{B}}{\varepsilon_2} + \frac{1}{\sqrt{\varepsilon_2}} \right) \right) = O\left( \frac{1}{\delta^2 \varepsilon'} \log\left(\frac{1}{\delta\varepsilon}\right) \right).
        \]
        And similarly, the number of elementary quantum gates is $O\left( \frac{1}{\delta^2 \varepsilon'} \log\left(\frac{1}{\delta\varepsilon}\right) \poly(n) \right)$.
    \end{proof}
    
    Based on Theorem \ref{thm:rank}, we can obtain the exact rank of $\rho$ if $\Pi/\kappa \leq \rho$ for some projector $\Pi$ and $\kappa > 0$.
    
    \begin{corollary} [Exact rank] \label{corollary:exact-rank}
        Suppose $U$ is an $(n+a)$-qubit unitary operator that prepares an $n$-qubit density operator $\rho$.
        If $\Pi/\kappa \leq \rho$ for some projector $\Pi$ and $\kappa > 0$, then
        there is a quantum algorithm that outputs $r = \rank(\rho)$
        using $\tilde O(\kappa^2)$ queries to $U$ and $\tilde O(\kappa^2 \poly(n))$ elementary quantum gates.
    \end{corollary}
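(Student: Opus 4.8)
The plan is to reduce the statement directly to the rank–estimation algorithm of Theorem~\ref{thm:rank}. First I would unpack the hypothesis. I read ``$\Pi/\kappa \le \rho$ for some projector $\Pi$'' as asserting $\Pi_{\supp(\rho)}/\kappa \le \rho$, i.e. that every nonzero eigenvalue of $\rho$ is at least $1/\kappa$ (this is the natural reading: with a projector whose range is strictly smaller than $\supp(\rho)$, the rank is not determined by the data at all, so the corollary could not hold). From this I extract two facts. First, since $\tr(\rho)\le 1$ and each of the $\rank(\rho)$ nonzero eigenvalues is $\ge 1/\kappa$, we get $\rank(\rho)\le\kappa$; in particular $\kappa\ge 1$. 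Second, for any threshold $\delta<1/\kappa$ no nonzero eigenvalue is truncated, so $\rank_\delta(\rho)=\rank(\rho)$.

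Next I would invoke Theorem~\ref{thm:rank} with $\delta=\varepsilon=\tfrac{1}{10\kappa}$ (legitimate, since $\kappa\ge1$ gives $\delta\in(0,\tfrac1{10}]$) and $\varepsilon'=\tfrac1{10}$. This produces, using $\tilde O\!\big(\tfrac{1}{\delta^2\varepsilon}\big)=\tilde O(\kappa^3)$ queries to $U$ and $\tilde O(\kappa^3\poly(n))$ elementary gates, an output $\tilde r$ with
\[
  (1-\varepsilon)\rank_\delta(\rho)-\varepsilon' \;\le\; \tilde r \;\le\; (1+\varepsilon)\rank(\rho)+\varepsilon'.
\]
By the second fact above, both flanking quantities equal $r:=\rank(\rho)$, so $|\tilde r - r|\le \varepsilon\,r+\varepsilon' \le \varepsilon\kappa+\varepsilon' = \tfrac1{10}+\tfrac1{10} < \tfrac12$ using the first fact. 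Since $r$ is an integer, the algorithm returns the integer nearest to $\tilde r$, which is exactly $\rank(\rho)$.

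The parameter arithmetic and the final rounding are routine; the one point that actually matters is the bound $\rank(\rho)\le\kappa$. Without it the error term $\varepsilon\,\rank(\rho)$ could only be controlled by $\varepsilon\cdot 2^n$, forcing $\varepsilon=2^{-\Omega(n)}$ and hence an exponential query cost, whereas with it a polynomially small $\varepsilon=\Theta(1/\kappa)$ already suffices to round correctly — this is where the low-eigenvalue assumption is used in an essential way (beyond fixing $\delta$). A minor technical caveat is that Theorem~\ref{thm:rank} only guarantees a constant success probability (inherited from quantum amplitude estimation); if a high-probability guarantee is wanted, one runs the estimator $O(\log(1/\eta))$ times and outputs the median, which only changes the complexity by a factor absorbed into $\tilde O(\cdot)$.
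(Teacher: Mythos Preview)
Your proposal is correct and follows essentially the same approach as the paper, which simply sets $\delta=\varepsilon=\Theta(1/\kappa)$ in Theorem~\ref{thm:rank}; you have filled in the details the paper omits (the reading of the hypothesis as $\Pi_{\supp(\rho)}/\kappa\le\rho$, the bound $\rank(\rho)\le\kappa$, the explicit choice of $\varepsilon'$, and the rounding step). Your observation that the bound $\rank(\rho)\le\kappa$ is what makes $\varepsilon=\Theta(1/\kappa)$ suffice for exact rounding is a genuinely useful clarification that the paper's one-line proof leaves implicit.
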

    
    \begin{proof}
        The claim holds immediately by letting $\delta = \varepsilon = \frac{1}{10\kappa}$ and $\varepsilon' = 1/10$ in Theorem \ref{thm:rank}. 
    \end{proof}

    Now we furthermore define the $\delta$-Max entropy by $S^{\max}_\delta(\rho) = \ln(\rank_{\delta}(\rho))$. Based on Theorem \ref{thm:rank}, we are able to give an estimation of the Max entropy.

    \begin{theorem} [Max entropy estimation] \label{thm:max-entropy}
        Suppose that
        \begin{enumerate}
          \item $U$ is an $(n+a)$-qubit unitary operator that prepares an $n$-qubit density operator $\rho$.
          \item $\delta \in (0, \frac 1 {10}]$.
        \end{enumerate}
        For $\varepsilon > 0$, there is a quantum algorithm that outputs $\tilde s$ such that
        \[
            S^{\max}_{\delta}(\rho) - \varepsilon \leq \tilde s \leq S^{\max}(\rho) + \varepsilon,
        \]
        using $\tilde O(\frac{1}{\delta^2 \varepsilon})$ queries to $U$ and $\tilde O(\frac{1}{\delta^2 \varepsilon} \poly(n))$ elementary quantum gates.

        Moreover, if $\Pi/\kappa \leq \rho$ for some projector $\Pi$ and $\kappa > 0$, there is a quantum algorithm that computes the Max entropy $S^{\max}(\rho)$ within additive error $\varepsilon$ using $\tilde O(\frac{\kappa^2}{\varepsilon})$ queries to $U$ and $\tilde O(\frac{\kappa^2}{\varepsilon} \poly(n))$ elementary quantum gates.
    \end{theorem}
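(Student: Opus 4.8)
The plan is to obtain this as a corollary of the rank estimation algorithm (Theorem~\ref{thm:rank}) by passing to logarithms, so the real content is a bookkeeping argument that turns the two-sided ``multiplicative-plus-additive'' error on the rank into a two-sided \emph{additive} error on its logarithm. For the first claim I would assume without loss of generality that $\varepsilon \le 1$ (for $\varepsilon > 1$ one simply runs the $\varepsilon = 1$ instance). Put $\varepsilon_0 := 1 - e^{-\varepsilon}$; then $\varepsilon_0 = \Theta(\varepsilon)$, $1 - \varepsilon_0 = e^{-\varepsilon}$, and $1 + \varepsilon_0 \le e^{\varepsilon}$, the last inequality because $e^{\varepsilon} - 1 - (1 - e^{-\varepsilon}) = (e^{\varepsilon/2} - e^{-\varepsilon/2})^2 \ge 0$. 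Invoke Theorem~\ref{thm:rank} with both of its error parameters set to $\varepsilon_0/2$ to get $\tilde r$ with
\[
 (1 - \varepsilon_0/2)\rank_\delta(\rho) - \varepsilon_0/2 \;\le\; \tilde r \;\le\; (1 + \varepsilon_0/2)\rank(\rho) + \varepsilon_0/2,
\]
using $\tilde O(1/(\delta^2\varepsilon_0)) = \tilde O(1/(\delta^2\varepsilon))$ queries to $U$ (and $\poly(n)$ gates per query), and output $\tilde s := \ln\max\{1, \tilde r\}$.

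For correctness, write $R := \rank(\rho)$ and $R_\delta := \rank_\delta(\rho) \le R$, and note $R \ge 1$ since $\rho \neq 0$. Using $R \ge 1$ one checks $(1+\varepsilon_0/2)R + \varepsilon_0/2 \le (1+\varepsilon_0)R \le e^{\varepsilon}R$, and also $1 \le R \le e^\varepsilon R$, so $\max\{1,\tilde r\} \le e^\varepsilon R$ and hence $\tilde s \le \ln R + \varepsilon = S^{\max}(\rho)+\varepsilon$. For the lower bound, if $R_\delta = 0$ then $S^{\max}_\delta(\rho) = -\infty$ and there is nothing to prove; if $R_\delta \ge 1$ then, again using $R_\delta \ge 1$, $(1-\varepsilon_0/2)R_\delta - \varepsilon_0/2 \ge (1-\varepsilon_0)R_\delta = e^{-\varepsilon}R_\delta > 0$, so $\tilde r \ge e^{-\varepsilon}R_\delta$, whence $\max\{1,\tilde r\} \ge e^{-\varepsilon}R_\delta$ and $\tilde s \ge \ln R_\delta - \varepsilon = S^{\max}_\delta(\rho) - \varepsilon$; in the boundary case $R_\delta = 1$ with $\tilde r < 1$ the clamp gives $\tilde s = 0 \ge -\varepsilon$, still consistent. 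This establishes the first claim.

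For the second claim, the hypothesis $\Pi/\kappa \le \rho$ (with $\Pi$ the projector onto $\supp(\rho)$) says that every nonzero eigenvalue of $\rho$ is at least $1/\kappa$, so for $\delta := \min\{1/10,\, 1/(2\kappa)\} = \Theta(1/\kappa)$ we have $\delta < 1/\kappa$ and therefore $\rank_\delta(\rho) = \rank(\rho)$, that is, $S^{\max}_\delta(\rho) = S^{\max}(\rho)$. Running the algorithm of the first claim with this $\delta$ and the given $\varepsilon$ then outputs $\tilde s$ with $S^{\max}(\rho) - \varepsilon \le \tilde s \le S^{\max}(\rho) + \varepsilon$, and since $1/\delta^2 = O(\kappa^2)$ the cost is $\tilde O(\kappa^2/\varepsilon)$ queries to $U$ and $\tilde O((\kappa^2/\varepsilon)\poly(n))$ elementary gates.

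I expect the only delicate point --- and it is a minor one --- to be arranging a \emph{single} choice of error parameter for Theorem~\ref{thm:rank} that makes the upper and lower logarithmic bounds hold at once, which is why I would use $\varepsilon_0 = 1 - e^{-\varepsilon}$ (it satisfies $1-\varepsilon_0 = e^{-\varepsilon}$ and $1+\varepsilon_0 \le e^{\varepsilon}$ simultaneously), together with the clamp $\tilde s = \ln\max\{1,\tilde r\}$ to handle the degenerate cases $\rank_\delta(\rho)\in\{0,1\}$ where $\ln\tilde r$ could be negative or undefined. Beyond that, the argument is a routine invocation of Theorem~\ref{thm:rank}.
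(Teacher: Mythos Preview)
Your proposal is correct and takes essentially the same approach as the paper: invoke Theorem~\ref{thm:rank} with error parameters $\Theta(\varepsilon)$, absorb the additive $\varepsilon'$ into the multiplicative factor using $\rank(\rho)\ge 1$ (and likewise for $\rank_\delta(\rho)$), then take logarithms and set $\delta=\Theta(1/\kappa)$ for the second claim. Your choice $\varepsilon_0=1-e^{-\varepsilon}$ and the clamp $\tilde s=\ln\max\{1,\tilde r\}$ make the boundary cases ($\rank_\delta(\rho)\in\{0,1\}$, or $\tilde r<1$) explicit where the paper simply writes $\tilde s=\ln(\tilde r)$, but this is only a cosmetic difference in rigor.
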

    \begin{proof}
        By Theorem \ref{thm:rank}, there is a quantum algorithm that outputs $\tilde r$ such that
        \begin{align*}
            \left( 1-\frac \varepsilon 2 \right) \rank_{\delta}(\rho) & \leq \left( 1-\frac \varepsilon 4 \right) \rank_{\delta}(\rho) - \frac \varepsilon 4 \leq \tilde r \\
            & \leq \left( 1+\frac \varepsilon 4 \right) \rank(\rho) + \frac \varepsilon 4 \\
            & \leq \left( 1+\frac \varepsilon 2 \right) \rank(\rho),
        \end{align*}
        using $\tilde O(\frac{1}{\delta^2 \varepsilon})$ queries to $U$ and $\tilde O(\frac{1}{\delta^2 \varepsilon} \poly(n))$ elementary quantum gates. After taking logarithm of both sides, we have
        \begin{align*}
            \ln \left( \rank_{\delta}(\rho) \right) + \ln \left( 1-\frac \varepsilon 2 \right) 
            \leq \ln(\tilde r) \leq \ln\left(\rank(\rho)\right) + \ln \left( 1+\frac \varepsilon 2 \right),
        \end{align*}
        which is
        \[
            S^{\max}_\delta(\rho) - \varepsilon \leq \ln(\tilde r) \leq S^{\max}(\rho) + \varepsilon.
        \]
        We only need to output $\tilde s = \ln(\tilde r)$ as an estimation of the Max entropy $S^{\max}(\rho)$.

        Moreover, if $\Pi/\kappa \leq \rho$ for some projector $\Pi$ and $\kappa > 0$, the claim is obtained by letting $\delta = \frac{1}{\kappa}$ (and then $S^{\max}_\delta(\rho) = S^{\max}(\rho)$).
    \end{proof}

    \subsection{Quantum R\'{e}nyi Entropy and Quantum Tsallis Entropy} \label{sec:renyi-tsallis}

    Now we are going to discuss how quantum algorithms can compute the quantum R\'{e}nyi entropy $S_\alpha^R(\rho)$ and the quantum Tsallis entropy $S_\alpha^T(\rho)$. The key is to compute $\tr(\rho^\alpha)$, which is given as follows.

    \begin{theorem} [Trace of positive powers] \label{thm:trace-positive-powers}
        Suppose that
        \begin{enumerate}
          \item $U_\rho$ is an $(n+n_\rho)$-qubit unitary operator that prepares an $n$-qubit density operator $\rho$ with $\rank(\rho) \leq r$.
          \item $n_\rho$ is a polynomial in $n$.
        \end{enumerate}
        For $a \in (0, 1) \cup (1, +\infty)$, there is a quantum algorithm that computes $\tr(\rho^\alpha)$ within additive error $\varepsilon$ using $Q$ queries to $U_\rho$ and $Q \cdot \poly(n)$ elementary quantum gates, where
        \[
            Q = \begin{cases}
                \tilde O\left( r^{\frac{3-\alpha^2}{2\alpha}}/\varepsilon^{\frac{3+\alpha}{2\alpha}} \right), & 0 < \alpha < 1, \\
                O\left( 1/\varepsilon \right), & \alpha > 1 \land \alpha \text{ is odd}, \\
                \tilde O \left( r^{1/\left\{\frac{\alpha - 1}{2}\right\}}/\varepsilon^{1+1/\left\{\frac{\alpha - 1}{2}\right\}} \right), & \text{otherwise}.
            \end{cases}
        \]
        and $\{ x \} = x - \floor{x}$ is the decimal part of real number $x$.
    \end{theorem}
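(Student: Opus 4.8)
The plan is to handle the three ranges of $\alpha$ separately, but in every case the skeleton is the same: build a unitary circuit that prepares a subnormalized density operator whose suitable rescaling approximates $\rho^\alpha$, and then read off its trace via Lemma~\ref{lemma:trace-estimation}. The crucial observation, which is also the main obstacle, is an error-accounting point: every operator we prepare has the form $M\rho M^\dagger$ (or the $\rho^c$-type output of Lemma~\ref{lemma:positive-power-density}), hence has rank $\le\rank(\rho)\le r$; since $\rho^\alpha$ also has rank $\le r$, if the rescaled prepared operator is within operator norm $\eta$ of $\rho^\alpha$ then their difference is Hermitian of rank $\le 2r$ with norm $\le\eta$, so the two traces differ by at most $2r\eta$. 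Thus an operator-norm error turns into a trace error with only a factor $r$ rather than a dimension factor; in particular the $\Theta(\delta^{c})$ \emph{truncation} error inherent in the positive-power lemmas (which, unlike the polynomial-approximation parameter, cannot be driven to $0$ for free) costs only $O(r\delta^{c})$ in $\tr(\rho^\alpha)$, and the displayed exponents of $r$ will emerge from balancing this against the amplitude-estimation precision and the degree $\tilde O(1/\delta)$ of the approximating polynomials. The elementary-gate counts follow with the same $\poly(n)$-per-query overhead carried by each lemma, giving $Q\cdot\poly(n)$.

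\textbf{$\alpha>1$ an odd integer.} No rank assumption is needed. Write $\alpha=2k+1$. Apply Lemma~\ref{lemma:block-encoding of density operators} to $U_\rho$ to get a $(1,\cdot,0)$-block-encoding $V$ of $\rho$, compose $k$ copies of $V$ through Lemma~\ref{lemma:product-block} to obtain a $(1,\cdot,0)$-block-encoding of $\rho^{k}$, and then apply Lemma~\ref{lemma:density-basic} with this block-encoding and the state-preparing unitary $U_\rho$: since $\rho$ is Hermitian, the resulting circuit prepares $\rho^{k}\rho(\rho^{k})^\dagger=\rho^{\alpha}$ \emph{exactly}, using $O(k)$ queries to $U_\rho$. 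Because $\tr(\rho^\alpha)\le\sum_j\lambda_j=1$, Lemma~\ref{lemma:trace-estimation} with bound $B=1$ estimates $\tr(\rho^\alpha)$ to additive error $\varepsilon$ with $O(1/\varepsilon)$ queries to that circuit, i.e.\ $O(1/\varepsilon)$ queries to $U_\rho$ with $\alpha$ a constant.

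\textbf{$0<\alpha<1$.} Introduce parameters $\delta,\varepsilon_{\mathrm p}\in(0,1/2]$ and apply Lemma~\ref{lemma:positive-power-density} with $c=\alpha$: this yields a circuit $\tilde U$ preparing a subnormalized density operator $B$ that is a $(4\delta^{\alpha-1},0,\Theta(\delta^\alpha+\varepsilon_{\mathrm p}\delta^{\alpha-1}))$-block-encoding of $\rho^\alpha$, using $O(d)$ queries to $U_\rho$ with $d=O(\frac1\delta\log\frac1{\varepsilon_{\mathrm p}})$. Take $\varepsilon_{\mathrm p}=\delta$, so the block-encoding error is $\eta=\Theta(\delta^\alpha)$; by the rank argument above, $\abs{4\delta^{\alpha-1}\tr(B)-\tr(\rho^\alpha)}=O(r\delta^\alpha)$. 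Since $x\mapsto x^\alpha$ is concave, $\tr(\rho^\alpha)\le r^{1-\alpha}$, hence $\tr(B)=O((\delta r)^{1-\alpha})$; run Lemma~\ref{lemma:trace-estimation} with this (known) bound to get $\tilde p$ with $\abs{\tilde p-\tr(B)}\le\varepsilon_{\mathrm{tr}}$ using $\tilde O\big(\sqrt{(\delta r)^{1-\alpha}}/\varepsilon_{\mathrm{tr}}\big)$ queries to $\tilde U$, and output $4\delta^{\alpha-1}\tilde p$. The total error is $O\big(\delta^{\alpha-1}\varepsilon_{\mathrm{tr}}+r\delta^\alpha\big)$; balancing $r\delta^\alpha\asymp\varepsilon$ (so $\delta\asymp(\varepsilon/r)^{1/\alpha}$) and $\varepsilon_{\mathrm{tr}}\asymp\delta^{1-\alpha}\varepsilon$, the total query count is $d\cdot\tilde O\big(\sqrt{(\delta r)^{1-\alpha}}/\varepsilon_{\mathrm{tr}}\big)=\tilde O\big(r^{(1-\alpha)/2}\delta^{-(3-\alpha)/2}/\varepsilon\big)$, which after substituting $\delta$ becomes $\tilde O\big(r^{(3-\alpha^2)/(2\alpha)}/\varepsilon^{(3+\alpha)/(2\alpha)}\big)$.

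\textbf{$\alpha>1$ and not an odd integer.} Set $s=(\alpha-1)/2>0$, so $\floor{s}\ge0$ and $\{s\}\in(0,1)$. Compose $\floor{s}$ copies of $V$ (the identity if $\floor{s}=0$) for a $(1,\cdot,0)$-block-encoding of $\rho^{\floor{s}}$, apply Lemma~\ref{lemma:positive-power-unitary} to $V$ for a $(2,\cdot,\Theta(\varepsilon_{\mathrm p}+\delta^{\{s\}}))$-block-encoding of $\rho^{\{s\}}$ (legitimate since $\rho\succeq0$, so $\abs{\rho}^{\{s\}}=\rho^{\{s\}}$), and multiply them via Lemma~\ref{lemma:product-block} to get a $(2,\cdot,\Theta(\varepsilon_{\mathrm p}+\delta^{\{s\}}))$-block-encoding of $\rho^{s}$ at cost $\tilde O(1/\delta)$ queries to $U_\rho$. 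Lemma~\ref{lemma:density-basic} then prepares a subnormalized density operator $\tilde A$ with $\Abs{4\tilde A-\rho^\alpha}=\Theta(\varepsilon_{\mathrm p}+\delta^{\{s\}})=:\eta$; by the rank argument $\abs{4\tr(\tilde A)-\tr(\rho^\alpha)}=O(r\eta)$, and $\tr(\tilde A)=O(1)$ because $\tr(\rho^\alpha)\le1$. Trace estimation with bound $B=O(1)$ produces $\tilde p$ within $\varepsilon_{\mathrm{tr}}$ using $O(1/\varepsilon_{\mathrm{tr}})$ queries to the preparing circuit; output $4\tilde p$ with total error $O(\varepsilon_{\mathrm{tr}}+r\eta)$. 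Choosing $\varepsilon_{\mathrm p}=\delta^{\{s\}}$, $\delta\asymp(\varepsilon/r)^{1/\{s\}}$ and $\varepsilon_{\mathrm{tr}}\asymp\varepsilon$ gives total query count $\tilde O\big(1/(\delta\varepsilon)\big)=\tilde O\big(r^{1/\{s\}}/\varepsilon^{1+1/\{s\}}\big)$, matching the stated bound with $\{s\}=\{(\alpha-1)/2\}$. The only truly delicate point throughout is verifying that the $\Theta(\delta^{\{s\}})$ truncation error of the fractional power $\rho^{\{s\}}$ propagates cleanly through $\rho\mapsto M\rho M^\dagger$ and through Weyl's inequality into the $O(r\eta)$ trace bound, since $\{s\}\to0$ is exactly the regime in which this term degrades and dictates the exponents.
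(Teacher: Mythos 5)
Your proposal is correct and follows essentially the same route as the paper's proof: the same three-case split, the same chain of Lemmas \ref{lemma:block-encoding of density operators}, \ref{lemma:product-block}, \ref{lemma:positive-power-density}/\ref{lemma:positive-power-unitary}, \ref{lemma:density-basic} and \ref{lemma:trace-estimation}, and the same parameter balancing yielding identical exponents. The only difference is cosmetic: you spell out the rank-times-operator-norm bound $\abs{\tr(X)}\leq\rank(X)\Abs{X}$ (with the prepared operator of the form $M\rho M^\dag$, hence rank $\leq r$) that the paper uses implicitly for the $\Theta(r\cdot\eta)$ trace error, and your closing worry about Weyl's inequality is unnecessary here since that tool is only needed for the non-linear $(\cdot)^\alpha$ perturbation in the fidelity proof, not for this linear trace comparison.
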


    \begin{proof}
        We will discuss in three cases as stated in the statement of this theorem.

        \textbf{Case 1}. $0 < \alpha < 1$. In this case, we have $\tr(\rho^\alpha) \leq r^{1-\alpha}$. By Lemma \ref{lemma:positive-power-density} and introducing $\delta_1$ and $\varepsilon_1$, there is a quantum circuit $\tilde U$ such that
        \begin{enumerate}
          \item $\tilde U$ prepares an $n$-qubit subnormalized density operator $A$, and $A$ is a $\left( 4\delta_1^{\alpha-1}, 0, \Theta\left(\delta_1^{\alpha} + \varepsilon_1 \delta_1^{\alpha - 1} \right) \right)$-block-encoding of $\rho^{\alpha}$.
          \item $\tilde U$ uses $O(d)$ queries to $U_\rho$ and $O((n+n_{\rho})d)$ elementary quantum gates, where $d = O\left( \frac{1}{\delta_1} \log \left( \frac {1}{\varepsilon_1} \right) \right)$.
          \item The description of $\tilde U$ can be computed by a classical Turing machine in $O(\poly(d))$ time.
        \end{enumerate}
        Note that $A$ is a $\left( 4\delta_1^{\alpha-1}, 0, \Theta\left(\delta_1^{\alpha} + \varepsilon_1 \delta_1^{\alpha - 1} \right) \right)$-block-encoding of $\rho^{\alpha}$, i.e., $\Abs{4\delta_1^{\alpha - 1} A - \rho^{\alpha}} \leq \Theta (\delta_1^{\alpha} + \varepsilon_1 \delta_1^{\alpha - 1})$, we have $\abs{4 \delta_1^{\alpha - 1} \tr(A) - \tr(\rho^\alpha) } \leq \Theta (r(\delta_1^{\alpha} + \varepsilon_1 \delta_1^{\alpha-1}))$. Therefore,
        \begin{align*}
            \tr(A)
            & \leq \frac{1}{4\delta_1^{\alpha - 1}} \left( \tr(\rho^\alpha) + \Theta(r(\delta_1^{\alpha} + \varepsilon_1 \delta_1^{\alpha-1})) \right) \\
            & \leq \Theta(r^{1-\alpha} \delta_1^{1-\alpha} + r\delta_1 + r\varepsilon_1) \eqqcolon B.
        \end{align*}
        By Lemma \ref{lemma:trace-estimation} and introducing $\varepsilon_2$, we can obtain $\tilde x$ as an approximation of $\tr(A)$ with $Q = O\left( \frac{\sqrt{B}}{\varepsilon_2} + \frac{1}{\sqrt{\varepsilon_2}} \right)$ queries to $\tilde U$ such that $\abs{\tr(A) - \tilde x} \leq \varepsilon_2$. Then we output $4\delta_1^{\alpha-1}\tilde x$ as an approximation of $\tr(\rho^\alpha)$ by noting that
        \[
            \abs{4\delta_1^{\alpha-1}\tilde x - \tr(\rho^{\alpha})} \leq \Theta\left( \delta_1^{\alpha-1}\varepsilon_2 + r(\delta_1^{\alpha} + \varepsilon_1 \delta_1^{\alpha-1}) \right).
        \]
        The number of queries to $U_\rho$ is
        \[
            Q d = \tilde O \left( \frac{1}{\delta_1} \left(\frac{\sqrt{B}}{\varepsilon_2} + \frac{1}{\sqrt{\varepsilon_2}}\right) \right) = \tilde O\left( r^{\frac{3-\alpha^2}{2\alpha}} / \varepsilon^{\frac{3+\alpha}{2\alpha}} \right)
        \]
        by taking $\delta_1 = \tilde \Theta\left( \left(\varepsilon/r\right)^{1/\alpha} \right)$, $\varepsilon_1 = \tilde \Theta\left( \left(\varepsilon/r\right)^{1/\alpha} \right)$ and $\varepsilon_2 = \tilde \Theta\left( \varepsilon^{1/\alpha}/r^{1/\alpha - 1} \right)$.

        \textbf{Case 2}. $\alpha > 1$ and $\alpha$ is an odd number. Let $\alpha = 2\beta + 1$. By Lemma \ref{lemma:block-encoding of density operators}, there is a unitary operator $U_1$, which is a $(1, n+n_\rho, 0)$-block-encoding of $\rho$ with $1$ query to $U_\rho$. By Lemma \ref{lemma:product-block}, there is a unitary operator $U_\beta$, which is a $(1, \Theta(\beta(n+n_\rho)), 0)$-block-encoding of $\rho^\beta$ with $\beta$ queries to $U_1$. By Lemma \ref{lemma:density-basic}, there is a quantum circuit $U$ that prepares $\rho^{2\beta + 1} = \rho^{\alpha}$ with $1$ query to $U_\rho$ and $1$ query to $U_\beta$. Note that $\tr(\rho^{\alpha}) \leq 1$ and by Lemma \ref{lemma:trace-estimation}, we can obtain $\tilde x$ as an approximation of $\tr(\rho^{\alpha})$ with $O(1/\varepsilon)$ queries to $U$ such that $\abs{\tilde x - \tr(\rho^{\alpha})} \leq \varepsilon$. In total, the number of queries to $U_\rho$ is $O(1/\varepsilon) \cdot \beta = O(1 / \varepsilon)$.

        \textbf{Case 3}. $\alpha > 1$ and $\alpha$ is not an odd number. Let $\beta = \floor{\frac{\alpha - 1}{2}}$ and $c = \left\{ \frac{\alpha - 1}{2} \right\}$. By Lemma \ref{lemma:positive-power-unitary} and introducing $\delta_1$ and $\varepsilon_1$, there is a unitary operator $U_c$ that is a $(2, O(n+n_\rho), \Theta(\varepsilon_1 + \delta_1^c))$-block-encoding of $\rho^c$ with $Q_1 = O\left(\frac{1}{\delta_1}\log\left(\frac{1}{\varepsilon_1}\right)\right)$ queries to $U_1$. By Lemma \ref{lemma:product-block}, there is a unitary operator $U_{\beta + c}$ that is a $(2, O(\beta(n+n_\rho)), \Theta(\varepsilon_1 + \delta_1^c))$-block-encoding of $\rho^{\beta + c}$ with $1$ query to $U_\beta$ and $1$ query to $U_c$. By Lemma \ref{lemma:density-basic}, there is a unitary operator $U$ that prepares a subnormalized density operator $A$ with $1$ query to $U_\rho$ and $1$ query to $U_{\beta+c}$, and $A$ is a $(4, 0, \Theta(\varepsilon_1 + \delta_1^c))$-block-encoding of $\rho^{2(\beta + c) + 1} = \rho^{\alpha}$, i.e., $\Abs{4A - \rho^{\alpha}} \leq \Theta(\varepsilon_1 + \delta_1^c)$. Note that \[
        \tr(A) \leq \frac{1}{4} \left( \tr(\rho^{\alpha}) + \Theta(r(\varepsilon_1 + \delta_1^c)) \right) \eqqcolon B.
        \]
        By Lemma \ref{lemma:trace-estimation} and introducing $\varepsilon_2$, we can obtain $\tilde x$ as an approximation of $\tr(A)$ with $Q_2 = O\left(\frac{\sqrt{B}}{\varepsilon_2}+\frac{1}{\sqrt{\varepsilon_2}}\right)$ queries to $U$ such that $\abs{\tilde x - \tr(A)} \leq \varepsilon_2$. Then we output $4\tilde x$ as an approximation of $\tr(\rho^{\alpha})$ by noting that
        \[
            \abs{4\tilde x - \tr(\rho^{\alpha})} \leq \Theta(\varepsilon_2 + r(\varepsilon_1 + \delta_1^c)).
        \]
        The number of queries to $U_\rho$ is
        \begin{align*}
            (\beta + Q_1) Q_2 
            & = \tilde O \left( \left(\beta + \frac{1}{\delta_1} \right) \left(\frac{\sqrt{B}}{\varepsilon_2}+\frac{1}{\sqrt{\varepsilon_2}}\right) \right) \\
            & = \tilde O \left( \frac{r^{1/c}}{\varepsilon^{1+1/c}} \right)
        \end{align*}
        by taking $\delta_1 = \tilde \Theta\left((\varepsilon/r)^{1/c}\right)$, $\varepsilon_1 = \tilde \Theta\left(\varepsilon/r\right)$ and $\varepsilon_2 = \tilde \Theta\left(\varepsilon\right)$.
    \end{proof}

    In the following, we are going to show how to estimate the quantum R\'{e}nyi and Tsallis entropies based on Theorem \ref{thm:trace-positive-powers}.

    \begin{theorem} [Quantum R\'{e}nyi entropy] \label{thm:Renyi-entropy}
        Suppose that
        \begin{enumerate}
          \item $U_\rho$ is an $(n+n_\rho)$-qubit unitary operator that prepares an $n$-qubit density operator $\rho$ with $\rank(\rho) \leq r$.
          \item $n_\rho$ is a polynomial in $n$.
        \end{enumerate}
        For $a \in (0, 1) \cup (1, +\infty)$, there is a quantum algorithm that computes $S^{R}_\alpha(\rho)$ within additive error $\varepsilon$ using $Q$ queries to $U_\rho$ and $Q \cdot \poly(n)$ elementary quantum gates, where
        \[
            Q = \begin{cases}
                \tilde O\left( r^{\frac{3-\alpha^2}{2\alpha}}/\varepsilon^{\frac{3+\alpha}{2\alpha}} \right), & 0 < \alpha < 1, \\
                O\left( r^{\alpha - 1} / \varepsilon \right), & \alpha > 1 \land \alpha \text{ is odd}, \\
                \tilde O \left( r^{\alpha - 1 + \alpha / \{ \frac{\alpha-1}{2} \}} / \varepsilon^{1+1/\{ \frac{\alpha-1}{2} \}} \right), &  \text{otherwise}.
            \end{cases}
        \]
        and $\{ x \} = x - \floor{x}$ is the decimal part of real number $x$.
    \end{theorem}
    \begin{proof}
        We will discuss in cases as stated in the statement of this theorem.

        \textbf{Case 1}. $0 < \alpha < 1$. In this case, $1 \leq \tr(\rho^{\alpha}) \leq r^{1-\alpha}$. By Theorem \ref{thm:trace-positive-powers}, there is a quantum algorithm that outputs $1 \leq x \leq r^{1-\alpha}$ such that $\abs{x - \tr(\rho^{\alpha})} \leq \varepsilon'$. Then
        \[
            \abs{\frac{1}{1-\alpha} \ln(x) - S^{R}_{\alpha}(\rho)} \leq \Theta\left( \frac{\varepsilon'}{1 - \alpha} \right).
        \]
        Let $\varepsilon' = \Theta((1-\alpha) \varepsilon)$. The number of queries to $U_\rho$ is
        \[
            \tilde O\left( r^{\frac{3-\alpha^2}{2\alpha}}/\varepsilon'^{\frac{3+\alpha}{2\alpha}} \right) = \tilde O\left( r^{\frac{3-\alpha^2}{2\alpha}}/\varepsilon^{\frac{3+\alpha}{2\alpha}} \right).
        \]

        \textbf{Case 2}. $\alpha > 1$. In this case, $r^{1-\alpha} \leq \tr(\rho^{\alpha}) \leq 1$. By Theorem \ref{thm:trace-positive-powers}, there is a quantum algorithm that outputs $r^{1-\alpha} \leq x \leq 1$ such that $\abs{x - \tr(\rho^{\alpha})} \leq \varepsilon'$. Then
        \[
            \abs{\frac{1}{1-\alpha} \ln(x) - S^{R}_{\alpha}(\rho)} \leq \Theta\left( \frac{r^{\alpha - 1}\varepsilon'}{\alpha-1} \right).
        \]
        Let $\varepsilon' = \Theta((\alpha-1)r^{1-\alpha} \varepsilon)$.

        \textbf{Subcase 2.1}. $\alpha > 1$ and $\alpha$ is an odd number. The number of queries to $U_\rho$ is
        \[
        O\left(\frac{\alpha}{\varepsilon'}\right) = O\left(\frac{r^{\alpha - 1}}{\varepsilon}\right).
        \]

        \textbf{Subcase 2.2}. $\alpha > 1$ and $\alpha$ is not an odd number. The number of queries to $U_\rho$ is
        \begin{align*}
            \tilde O \left( \left( \alpha + (r/\varepsilon')^{1/\left\{\frac{\alpha - 1}{2}\right\}} \right) / \varepsilon' \right) = \tilde O \left( r^{\alpha - 1 + \alpha / \{ \frac{\alpha-1}{2} \}} / \varepsilon^{1+1/\{ \frac{\alpha-1}{2} \}} \right).
        \end{align*}
    \end{proof}

    \begin{theorem} [Quantum Tsallis entropy] \label{thm:tsallis-entropy}
        Suppose that
        \begin{enumerate}
          \item $U_\rho$ is an $(n+n_\rho)$-qubit unitary operator that prepares an $n$-qubit density operator $\rho$ with $\rank(\rho) \leq r$.
          \item $n_\rho$ is a polynomial in $n$.
        \end{enumerate}
        For $a \in (0, 1) \cup (1, +\infty)$, there is a quantum algorithm that computes $S^{T}_\alpha(\rho)$ within additive error $\varepsilon$ using $Q$ queries to $U_\rho$ and $Q \cdot \poly(n)$ elementary quantum gates, where
        \[
            Q = \begin{cases}
                \tilde O\left( r^{\frac{3-\alpha^2}{2\alpha}}/\varepsilon^{\frac{3+\alpha}{2\alpha}} \right), & 0 < \alpha < 1, \\
                O\left( 1 / \varepsilon \right), & \alpha > 1 \land \alpha \text{ is odd}, \\
                \tilde O \left( r^{1/\{\frac{\alpha-1}{2}\}} / \varepsilon^{1+1/\{\frac{\alpha-1}{2}\}} \right), & \text{otherwise}.
            \end{cases}
        \]
        and $\{ x \} = x - \floor{x}$ is the decimal part of real number $x$.
    \end{theorem}
    \begin{proof}
        By Theorem \ref{thm:trace-positive-powers}, there is a quantum algorithm that outputs $1 \leq x \leq r^{1-\alpha}$ such that $\abs{x - \tr(\rho^{\alpha})} \leq \varepsilon'$. Then
        \[
            \abs{ \frac{x-1}{1-\alpha}} - S^{T}_{\alpha}(\rho) \leq \Theta\left( \frac{\varepsilon'}{\abs{1 - \alpha}} \right).
        \]
        Let $\varepsilon' = \Theta(\abs{1-\alpha} \varepsilon)$. We will discuss in three cases as stated in the statement of this theorem.

        \textbf{Case 1}. $0 < \alpha < 1$. The number of queries to $U_\rho$ is
        \[
            \tilde O\left( r^{\frac{3-\alpha^2}{2\alpha}}/\varepsilon'^{\frac{3+\alpha}{2\alpha}} \right) = \tilde O\left( r^{\frac{3-\alpha^2}{2\alpha}}/\varepsilon^{\frac{3+\alpha}{2\alpha}} \right).
        \]

        \textbf{Case 2}. $\alpha > 1$ and $\alpha$ is an odd number. The number of queries to $U_\rho$ is
        \[
        O\left(\frac{\alpha}{\varepsilon'}\right) = O\left(\frac{1}{\varepsilon}\right).
        \]

        \textbf{Case 3}. $\alpha > 1$ and $\alpha$ is not an odd number. The number of queries to $U_\rho$ is
        \begin{align*}
            \tilde O \left( \left( \alpha + (r/\varepsilon')^{1/\left\{\frac{\alpha - 1}{2}\right\}} \right) / \varepsilon' \right) = \tilde O \left( r^{1/\{\frac{\alpha-1}{2}\}} / \varepsilon^{1+1/\{\frac{\alpha-1}{2}\}} \right).
        \end{align*}
    \end{proof}

    We note that when $\alpha > 1$ is an odd number, the query complexities of our quantum algorithms for $\tr(\rho^\alpha)$ and $S_\alpha^T(\rho)$ do not depend on $r$. This result can be applied to computing the classical Tsallis entropy of a distribution $p: [N] \to [0, 1]$. Suppose a quantum oracle $U$ is given as
    \begin{equation} \label{eq:prob-oracle}
        U \ket{0} = \sum_{i \in [N]} \sqrt{p(i)} \ket{i}.
    \end{equation}
    For convenience, we assume that $N = 2^n$. This kind of quantum oracle is called ``classical distribution with pure state preparation access''. We start from the quantum state $\ket{0}_n \ket{0}_n$, and the algorithm for estimating the classical Tsallis entropy of $p$ is as follows.
    \begin{enumerate}
      \item Apply $U$ on the first part of the quantum state, then the state becomes
      \[
        \sum_{i\in[N]} \sqrt{p(i)} \ket{i}_n \ket{0}_n.
      \]
      \item Apply a CNOT gate with control qubit the $j$-th qubit of the first part and target qubit the $j$-th qubit of the second part for each $j \in [n]$, then the state becomes
      \[
        \sum_{i\in[N]} \sqrt{p(i)} \ket{i}_n \ket{i}_n.
      \]
      \item Note that if the second part of the quantum state after step 2 is traced out, it becomes a mixed quantum state (density operator)
      \[
        \rho = \sum_{i \in [N]} p(i) \ket{i} \bra{i}.
      \]
      Apply the algorithm of Theorem \ref{thm:tsallis-entropy}, we are able to obtain an estimation of the quantum Tsallis entropy $S^{T}_{\alpha}(\rho)$ (i.e., the classical $\alpha$-Tsallis entropy of $p$) within additive error $\varepsilon$ with $O(1/\varepsilon)$ queries to $U$ and $O(1/\varepsilon \cdot \poly(n))$ elementary quantum gates.
    \end{enumerate}
    
    \begin{corollary} \label{corollary:tsallis}
        For odd integer $\alpha > 1$, given the quantum oracle $U$ to a probability distribution $p:[N] \to [0, 1]$ as in Eq. (\ref{eq:prob-oracle}), there is a quantum algorithm that computes the Tsallis entropy $S_{\alpha}^{T}(p)$ of $p$ within additive error $\varepsilon$, using $O(1/\varepsilon)$ queries to $U$ and $O(1/\varepsilon \cdot \poly(n))$ elementary quantum gates.
    \end{corollary}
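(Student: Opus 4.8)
The plan is to reduce the computation of the classical $\alpha$-Tsallis entropy of $p$ to the quantum algorithm of Theorem~\ref{thm:tsallis-entropy} applied to the diagonal density operator $\rho = \sum_{i \in [N]} p(i) \ket{i}\bra{i}$, exploiting the fact that for odd integer $\alpha > 1$ the query complexity in Theorem~\ref{thm:tsallis-entropy} is independent of the rank $r$ (which here could be as large as $N$).

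First I would build, from one query to $U$, a unitary $U_\rho$ that prepares $\rho$ in the sense of Definition~\ref{def:subnormalized-density-operator}. Starting from $\ket{0}_n \ket{0}_n$, applying $U$ to the first register gives $\sum_{i \in [N]} \sqrt{p(i)} \ket{i}_n \ket{0}_n$; applying a transversal layer of $n$ CNOT gates (the $j$-th qubit of the first register controlling the $j$-th qubit of the second, for each $j \in [n]$) yields $\ket{\psi} = \sum_{i \in [N]} \sqrt{p(i)} \ket{i}_n \ket{i}_n$. This $\ket{\psi}$ is a purification of $\rho = \tr_2(\ket{\psi}\bra{\psi}) = \sum_{i \in [N]} p(i)\ket{i}\bra{i}$, so the circuit $U_\rho$ consisting of $U$ followed by the CNOT layer uses $1$ query to $U$ and $O(\poly(n))$ elementary quantum gates, and has $n_\rho = n$ ancilla qubits, which is trivially polynomial in $n$.

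Next I would observe that $\tr(\rho^\alpha) = \sum_{i \in [N]} p(i)^\alpha$, so $S^{T}_{\alpha}(\rho) = \frac{1}{1-\alpha}\left(\tr(\rho^\alpha) - 1\right)$ coincides exactly with the classical $\alpha$-Tsallis entropy $S^{T}_{\alpha}(p)$. Then I would invoke Theorem~\ref{thm:tsallis-entropy} with this $U_\rho$: since $\alpha$ is an odd integer greater than $1$, it outputs an estimate of $S^{T}_{\alpha}(\rho)$ within additive error $\varepsilon$ using $O(1/\varepsilon)$ queries to $U_\rho$ and $O(1/\varepsilon \cdot \poly(n))$ elementary quantum gates. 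Composing, each call to $U_\rho$ is $1$ call to $U$ plus $O(\poly(n))$ gates, so the total cost is $O(1/\varepsilon)$ queries to $U$ and $O(1/\varepsilon \cdot \poly(n))$ elementary quantum gates, as claimed.

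There is essentially no serious obstacle here: the only thing to check carefully is that the CNOT-copy construction genuinely produces a purification of $\rho$ with the right qubit labelling so that $U_\rho$ satisfies the hypotheses of Theorem~\ref{thm:tsallis-entropy} (in particular that $\rho$ is a bona fide $n$-qubit density operator and $n_\rho = \poly(n)$), and that $\rho$ being possibly full-rank is harmless precisely because for odd $\alpha$ the complexity bound in Theorem~\ref{thm:tsallis-entropy} carries no dependence on $r$. Everything else is bookkeeping.
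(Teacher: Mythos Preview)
Your proposal is correct and follows essentially the same approach as the paper: construct a purification of the diagonal state $\rho = \sum_i p(i)\ket{i}\bra{i}$ by applying $U$ followed by a transversal CNOT layer, then invoke Theorem~\ref{thm:tsallis-entropy} for odd $\alpha > 1$, whose complexity is independent of $r$. The paper presents exactly this construction in the paragraph immediately preceding the corollary.
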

    
    For integer $\alpha \geq 2$, according to \cite{EAO+02} (see also \cite{KLL+17}), there is a simple SWAP test like quantum circuit that outputs $0$ and $1$ with probability $(1 \pm \tr(\rho^{\alpha}))/2$, respectively, using $\alpha$ copies of $\rho$. With a straightforward statistical method, we can estimate $\tr(\rho^\alpha)$ within additive error $\varepsilon$ by $O(1/\varepsilon^2)$ repetitions of that quantum circuit. Directly applying this method also implies a quantum algorithm, which computes the Tsallis entropy (both quantum and classical) within additive error $\varepsilon$, using $O(1/\varepsilon^2)$ queries to quantum oracles. Compare to this simple algorithm, our quantum algorithm (of Theorem \ref{thm:tsallis-entropy} and Corollary \ref{corollary:tsallis}) yields a quadratic speedup for odd $\alpha > 1$. 
    
    \subsection{Lower Bounds}
    
    We establish lower bounds for estimating quantum entropies in the low-rank case, which are derived from known lower bounds for estimating entropies of classical distributions \cite{LW19,BKT20}.
    
    \begin{theorem} \label{thm:lower-bounds-entropy}
        Suppose $U_\rho$ is a quantum oracle that prepares the density operator $\rho$ (see Definition \ref{def:subnormalized-density-operator}), and $r = \rank(\rho)$. 
        For $\alpha \geq 0$, any quantum algorithm that computes $S_\alpha^R(\rho)$ within additive error $\varepsilon = \Theta(1)$ requires $\Omega(Q(r))$ queries to $U_\rho$, where
        \[
            Q(r) = \begin{cases}
                r^{1/3}, & \alpha = 0, \\
                \max\{r^{1/7\alpha-o(1)}, r^{1/3}\}, & 0 < \alpha < 1, \\
                \tilde \Omega(r^{1/2}),  & \alpha = 1, \\
                \max\{r^{\frac 1 2 - \frac 1 {2\alpha}}, r^{1/3}\}, & \alpha > 1.
            \end{cases}
        \]
    \end{theorem}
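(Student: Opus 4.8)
The plan is to obtain these bounds by reduction from the already-known quantum query lower bounds for estimating the $\alpha$-R\'enyi entropy of a \emph{classical} distribution accessed through the ``pure state preparation'' oracle of Eq.~(\ref{eq:prob-oracle}). First I would recall from \cite{LW19,BKT20} the relevant classical statement: for a distribution $p:[M]\to[0,1]$ given by a unitary $U$ with $U\ket{0}=\sum_{i\in[M]}\sqrt{p(i)}\ket{i}$, any quantum algorithm that estimates $S_\alpha^R(p)$ within additive error $\Omega(1)$ must make $\Omega(Q(M))$ queries to $U$, where $Q(\cdot)$ is exactly the function in the theorem statement --- $M^{1/3}$ at $\alpha=0$ (the collision / support-size bound), $\max\{M^{1/7\alpha-o(1)},M^{1/3}\}$ for $0<\alpha<1$, $\tilde\Omega(M^{1/2})$ for the Shannon entropy ($\alpha=1$), and $\max\{M^{1/2-1/2\alpha},M^{1/3}\}$ for $\alpha>1$.

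The reduction itself is the three-line construction already used in Section~\ref{sec:renyi-tsallis} just before Corollary~\ref{corollary:tsallis}: starting from $\ket{0}_n\ket{0}_n$ with $M=2^n$, apply $U$ on the first register and then a layer of bitwise CNOTs copying it into the second register, producing $\sum_{i\in[M]}\sqrt{p(i)}\ket{i}_n\ket{i}_n$, a purification of $\rho=\sum_{i\in[M]}p(i)\ket{i}\bra{i}$. Thus the unitary $U_\rho$ built this way prepares $\rho$ using one query to $U$ and $O(n)$ elementary gates. Three facts make the reduction go through: (i) $\rho$ is diagonal with eigenvalue multiset $\{p(i)\}_i$, so $S_\alpha^R(\rho)=S_\alpha^R(p)$ for every $\alpha$ (in particular $S_1^R(\rho)=S(\rho)$ recovers the von Neumann case); (ii) $\rank(\rho)=\abs{\supp(p)}$, which equals $M$ for the hard instances of \cite{LW19,BKT20} (they have full support, and otherwise one pads); (iii) a query to $U_\rho$ costs one query to $U$. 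Hence a hypothetical quantum algorithm computing $S_\alpha^R(\rho)$ within $\varepsilon=\Omega(1)$ using $o(Q(\rank(\rho)))$ queries to $U_\rho$ would, run on the $\rho$ coming from $p$ on $[M]$, compute $S_\alpha^R(p)$ within $\Omega(1)$ using $o(Q(M))$ queries to $U$, contradicting the classical bound; taking $r=M$ yields the claimed $\Omega(Q(r))$.

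What remains is to assemble the cases and justify the $\Omega(r^{1/3})$ fallback term appearing for $0<\alpha<1$ and $\alpha>1$. For this I would note that the support-size hard instances underlying the $\alpha=0$ bound --- two families whose supports differ by a constant factor --- also differ in $\alpha$-R\'enyi entropy by $\Omega(1)$ for every $\alpha\in[0,1)\cup(1,\infty)$ (e.g.\ nearly-uniform distributions on supports of size $k$ versus $2k$, suitably randomized so the naive state-overlap argument does not make them distinguishable in $O(1)$ queries), so the $\Omega(r^{1/3})$ bound transfers to $S_\alpha^R$ for all such $\alpha$; one then takes the maximum with the $\alpha$-specific bound. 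The hard part will be the bookkeeping at the interface with \cite{LW19,BKT20}: one must check that their lower bounds are stated for precisely the oracle model of Eq.~(\ref{eq:prob-oracle}) (rather than for quantum samples or a $\ket{i}\ket{0}\mapsto\ket{i}\ket{p(i)}$ oracle), that the reduction's $1{:}1$ query correspondence and the identity $r=M$ are preserved, and that the $o(1)$ loss in the exponent for $0<\alpha<1$ --- together with the $\alpha$-range in which each classical bound is meaningful, since it cannot exceed the trivial polynomial-in-$r$ query upper bound --- is inherited unchanged. Once these are verified, the quantum statement is immediate.
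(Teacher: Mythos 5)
There is a genuine gap, and it is precisely at the point you flag as ``bookkeeping'': the classical lower bounds of \cite{LW19,BKT20} are \emph{not} proven in the pure-state-preparation model of Eq.~(\ref{eq:prob-oracle}). They are proven in the standard function-oracle (frequency-vector) model, which the paper calls ``classical distribution with quantum query access'': the oracle acts as $U\ket{s,0}=\ket{s,f(s)}$ for $s\in[S]$, with $p(i)=\frac{1}{S}\abs{\{s: f(s)=i\}}$. A lower bound in that model does not transfer to the model of Eq.~(\ref{eq:prob-oracle}), because access to the garbage-free state $\sum_i\sqrt{p(i)}\ket{i}$ can be strictly more powerful than access to $f$ (for instance, uniformity testing costs $\Theta(N^{1/3})$ function-oracle queries but only $O(\poly(1/\varepsilon))$ queries, independent of $N$, given the state-preparation unitary, via a swap test against the uniform superposition). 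So the premise your CNOT-copying reduction rests on — that estimating $S_\alpha^R(p)$ given the Eq.~(\ref{eq:prob-oracle}) oracle requires $\Omega(Q(M))$ queries — is not supplied by the cited works, and your proposed patch for the $r^{1/3}$ term would likewise have to be re-established in that stronger-access model, where it is not known (and may fail).

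The paper's proof avoids this by reducing from the function-oracle model directly: given $U\ket{s,0}=\ket{s,f(s)}$, apply Hadamards to create the uniform superposition over $[S]$ and make one query, yielding $\frac{1}{\sqrt S}\sum_s\ket{s}\ket{f(s)}$, which is a purification of $\rho=\sum_i p(i)\ket{i}\bra{i}$ with the index register as garbage (an observation attributed to \cite{GL20}). Hence any purified-access algorithm computing $S_\alpha^R(\rho)$ with $Q(r)$ queries to $U_\rho$ computes $S_\alpha^R(p)$ with $Q(N)$ function-oracle queries, and the bounds of \cite{LW19,BKT20} apply verbatim, with $r=N$ on full-support hard instances. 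Your remaining observations (eigenvalues of $\rho$ are the $p(i)$, $\rank(\rho)=\abs{\supp(p)}$, one-to-one query correspondence) are correct and are exactly what is needed once the simulation is built from the right oracle; if you replace your Eq.~(\ref{eq:prob-oracle})-based construction by the Hadamard-plus-one-query purification, your argument becomes the paper's.
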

    \begin{proof}
        We obtain lower bounds from those for classical distributions. The quantum query model for classical distributions we adopt is called the ``classical distribution with quantum query access'', which is defined as follows. Suppose $p\colon [N] \to [0, 1]$ is a probability distribution on $[N]$. The quantum oracle $U$ is defined by a function $f \colon [S] \to [N]$ such that
        \[
            U\ket{s, 0} = \ket{s, f(s)}
        \]
        for $s \in [S]$ and $S \in \mathbb{N}$, satisfying
        \[
            p(i) = \frac{1}{S} \abs{\{ s \in [S]: f(s) = i \}}
        \]
        for $i \in [N]$. It is pointed out in \cite{GL20} that we can easily construct a ``purified quantum query access'' oracle by preparing a uniform superposition through the Hadamard gates, and then makes a query to $U$. Therefore, all lower bounds in the ``classical distribution with quantum query access'' model also hold in the ``purified quantum query access'' model.
        
        Let $B(N)$ be the lower bound for estimating the R\'{e}nyi entropy. It is straightforward to see that if there is a quantum algorithm that computes the quantum R\'{e}nyi entropy using $Q(r)$ queries to $U_\rho$, then with the same algorithm, we can compute the R\'{e}nyi entropy of a probability distribution $p:[N] \to [0, 1]$ using $Q(N)$ queries to the ``classical distribution with quantum query access'' oracle. Therefore, we have $Q(r) \geq B(r)$. Our claim immediately holds by taking the lower bounds of $B(N)$ given in \cite{LW19,BKT20}.
    \end{proof}

    \section{Quantum Distances} \label{sec:distance-states}
    
    Distance measures of quantum states are quantum information quantities that indicate their closeness. 
    Testing the closeness of quantum states is a basic problem in quantum property testing. Two of the most important distance measures of quantum states are the trace distance and fidelity. 
    
    In this section, we will provide quantum algorithms for computing them as well as their extensions.
    Section \ref{sec:trace-distance} presents quantum algorithms for computing the trace distance and its extension. 
    Section \ref{sec:fidelity} presents quantum algorithms for computing the fidelity and its extensions.

    \subsection{Trace Distance} \label{sec:trace-distance}
    
    The $\alpha$-trace distance of two quantum states $\rho$ and $\sigma$ is defined by
    \[
    T_{\alpha}(\rho, \sigma) = \tr\left(\abs{\frac{\rho-\sigma}{2}}^\alpha\right).
    \]
    Here, $T_1(\rho, \sigma) = T(\rho, \sigma)$ is the trace distance.
    The trace distance of two pure quantum states (i.e., quantum states of rank $1$) can be computed directly from their fidelity, which can be solved by the SWAP test \cite{BCW01}. The closeness testing of the $1$-, $2$-, and $3$-trace distances were studied in \cite{GL20}.
    
    Our quantum algorithms for computing the $\alpha$-trace distance are given as follows. 
    
    \begin{theorem} \label{thm:trace-distance}
        Suppose that
        \begin{enumerate}
          \item $U_\rho$ is an $(n+n_\rho)$-qubit unitary operator that prepares an $n$-qubit density operator $\rho$.
          \item $U_\sigma$ is an $(n+n_\sigma)$-qubit unitary operator that prepares an $n$-qubit density operator $\sigma$.
          \item $n_\rho$ and $n_\sigma$ are polynomials in $n$.
          \item $r = \max\{\rank(\rho), \rank(\sigma)\}$.
        \end{enumerate}
        For $\alpha > 0$, there is a quantum algorithm that computes $T_{\alpha}(\rho, \sigma)$ within additive error $\varepsilon$ using $Q$ queries to both $U_\rho$ and $U_\sigma$, and $Q \cdot \poly(n)$ elementary quantum gates, where
        \[
        Q = \begin{cases}
            \tilde O(r^3/\varepsilon^4), & \alpha \equiv 0 \pmod 2 \\
            \tilde O(r^{5/\alpha}/\varepsilon^{5/\alpha+1}), & 0 < \alpha < 1 \\
            \tilde O\left( {r^{3+1/\{\alpha/2\}}}/{\varepsilon^{4+1/\{\alpha/2\}}} \right), & \text{otherwise}
        \end{cases}
        \]
        and $\{\beta\} = \beta - \floor{\beta}$.
        
        Especially, taking $\alpha = 1$, we obtain a quantum algorithm for trace distance estimation using $\tilde O(r^5/\varepsilon^6)$ queries to $U_\rho$ and $U_\sigma$. 
    \end{theorem}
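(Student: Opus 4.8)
The plan is to reduce $T_\alpha(\rho,\sigma)$ to a single trace estimation, as sketched in Figure~\ref{fig:process}. Put $\nu=(\rho-\sigma)/2$ and $\mu=(\rho+\sigma)/2$. Since $\rho,\sigma\geq 0$, any $\ket{v}$ with $\mu\ket{v}=0$ has $\bra{v}\rho\ket{v}=\bra{v}\sigma\ket{v}=0$, hence $\rho\ket{v}=\sigma\ket{v}=0$ and so $\nu\ket{v}=0$; therefore $\supp(\nu)\subseteq\supp(\mu)$ and
\[
    T_\alpha(\rho,\sigma)=\tr\!\left(\abs{\nu}^\alpha\right)=\tr\!\left(\abs{\nu}^{\alpha/2}\,\Pi_{\supp(\mu)}\,\abs{\nu}^{\alpha/2}\right)=:\tr(\eta).
\]
The goal is to build a circuit that prepares a known scalar multiple of $\eta$ as a subnormalized density operator, and then invoke Lemma~\ref{lemma:trace-estimation}.

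To assemble $\eta$, I would first block-encode $\nu$ in a unitary $U_\nu$: apply Lemma~\ref{lemma:block-encoding of density operators} to $U_\rho$ and $U_\sigma$ and combine via the LCU routine (Theorem~\ref{thm:lcu}) with coefficients $\pm\tfrac12$, so $\Abs{y}_1=1$ and $U_\nu$ is an exact $(1,\cdot,0)$-block-encoding using one query to each of $U_\rho,U_\sigma$. Writing $\alpha/2=k+c$ with $k=\floor{\alpha/2}$ and $c=\{\alpha/2\}\in[0,1)$, I get a block-encoding $\hat\nu$ of $\abs{\nu}^{\alpha/2}$ by taking the $k$-fold product of $U_\nu$ (Lemma~\ref{lemma:product-block}; scaling $1$) and, when $c>0$, multiplying by a block-encoding of $\abs{\nu}^c$ from Lemma~\ref{lemma:positive-power-unitary} (scaling $2$, error $\Theta(\varepsilon_1+\delta_\nu^{c})$, degree $\tilde O(1/\delta_\nu)$); when $\alpha\equiv 0\pmod 2$ one has $c=0$ and this step is skipped, which is the cheapest case. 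In parallel, Lemma~\ref{lemma:linear-combination} (a Hadamard on one control qubit) prepares the density operator $\mu$, and feeding $\mu$ to the eigenvalue-threshold-projector routine (Lemma~\ref{lemma:eigenvalue-threshold-projector}) gives a circuit preparing a subnormalized density operator $\hat\Pi$ obeying $c_1\Pi_{\supp_{2\delta}(\mu)}\leq\hat\Pi\leq c_2\Pi_{\supp(\mu)}$ with $c_1,c_2\approx\tfrac{\delta}{4}$ (up to a negligible block-encoding error). Applying the evolution rule (Lemma~\ref{lemma:density-basic}) with state $\hat\Pi$ and block-encoding $\hat\nu$ produces a circuit preparing $\hat\nu\hat\Pi\hat\nu^\dagger$, whose trace equals $\tfrac{\delta}{4}\tr(\eta)$ up to the accumulated errors and the scaling of $\hat\nu$; estimating that trace by Lemma~\ref{lemma:trace-estimation} and rescaling by $4/\delta$ outputs the estimate of $T_\alpha(\rho,\sigma)$.

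The hard part is the error analysis together with the choice of all parameters — the threshold $\delta$, the positive-power precisions $\delta_\nu,\varepsilon_1$, the internal block-encoding errors, and the number $M$ of amplitude-estimation rounds — so that the total additive error stays $\leq\varepsilon$ with the claimed query count. Four contributions must be bounded. (i) The inherent truncation error: $\hat\Pi$ is proportional to $\Pi_{\supp_{2\delta}(\mu)}$ rather than $\Pi_{\supp(\mu)}$, so $\tr(\eta)$ is replaced by $\tr(\abs{\nu}^{\alpha/2}\Pi_{\supp_{2\delta}(\mu)}\abs{\nu}^{\alpha/2})$; I would bound the gap by $O(r\,\delta^{\min\{\alpha,1\}/2})$ from $-\mu\leq\nu\leq\mu$ (i.e.\ $\rho,\sigma\geq 0$), which forces $\bra{v}\nu^2\ket{v}=O(\delta)$ whenever $\mu\ket{v}=\lambda\ket{v}$ with $\lambda\in(0,2\delta]$ (expand $\nu^2=(\mu-\sigma)^2$ and use $\bra{v}\rho\ket{v},\bra{v}\sigma\ket{v}=O(\delta)$ there), combined with $\rank(\mu)\leq 2r$ and operator concavity of $t\mapsto t^{\alpha/2}$ for $\alpha\leq 2$. (ii) The polynomial-approximation errors inside $\hat\Pi$ and $\hat\nu$, which cost only a $\polylog$ factor in the degrees. (iii) The accumulation of block-encoding errors through the products and the evolution, each linear in the constituent errors via $XYX^\dagger-X'YX'^\dagger=(X-X')YX^\dagger+X'Y(X-X')^\dagger$ and the fact that the prepared operator has rank $\leq 2r$. (iv) The amplitude-estimation error, where rescaling by $4/\delta$ amplifies the raw error and where the upper bound $B$ on $\tr(\hat\nu\hat\Pi\hat\nu^\dagger)\lesssim\tfrac{\delta}{4}\tr(\abs{\nu}^\alpha)$ uses $\tr(\abs{\nu}^\alpha)\leq\max\{1,(2r)^{1-\alpha}\}$, producing the extra $r^{(1-\alpha)/2}$ factor for $\alpha<1$ through $\sqrt B$.

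Balancing these — driving $\delta$ down until the inherent error is $O(\varepsilon)$, choosing $\delta_\nu$ so the positive-power error is $O(\varepsilon/r)$, and taking $M$ large enough for amplitude estimation — and multiplying the per-evolution query count $\tilde O(1/\delta+1/\delta_\nu+\floor{\alpha/2})$ to each of $U_\rho,U_\sigma$ by $M$, yields $Q$ as stated; the extra $r^{1/\{\alpha/2\}}/\varepsilon^{1/\{\alpha/2\}}$ in the non-even case is precisely the cost of the fractional positive-power step, which vanishes when $\alpha$ is even. Specializing to $\alpha=1$ ($k=0$, $c=1/2$, $\hat\nu\approx\abs{\nu}^{1/2}$) gives the trace-distance algorithm with query complexity $\tilde O(r^5/\varepsilon^6)$, and the $\poly(n)$ gate overhead is routine. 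I expect parts (i) and (iv) — the perturbation bound and the way the $1/\delta$ rescaling propagates through amplitude estimation — to be where the exponents are decided.
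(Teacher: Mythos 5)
Your proposal follows essentially the same route as the paper's proof: the identity $T_\alpha(\rho,\sigma)=\tr\bigl(\abs{\nu}^{\alpha/2}\Pi_{\supp(\mu)}\abs{\nu}^{\alpha/2}\bigr)$, LCU block-encoding of $\nu$, positive powers via Lemma \ref{lemma:positive-power-unitary} plus Lemma \ref{lemma:product-block}, the eigenvalue threshold projector applied to $\mu$, evolution, trace estimation with rescaling by $4/\delta$, and the truncation bound $O\bigl(r\,\delta^{\min\{\alpha,1\}/2}\bigr)$ of Proposition \ref{prop:inherent-error-trace-distance}, with the same parameter balancing and resulting exponents. The only difference is cosmetic: you bound the truncated-support error via $\bra{v}\nu^2\ket{v}=O(\delta)$ and concavity of $t\mapsto t^{\alpha/2}$, whereas the paper proves the equivalent vector-norm bounds $\Abs{\abs{\nu}\ket{\psi_j}}\leq\sqrt{\lambda_j}$ and Lemma \ref{lemma:norm-operator-power} (H\"{o}lder); both give the same estimate.
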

    
    The key observation of our quantum algorithm is that
    \begin{align*}
        T_\alpha(\rho, \sigma) 
        & = \tr\left(\abs{\frac{\rho - \sigma}{2}}^\alpha\right) \\
        & = \tr\left(\abs{\frac{\rho - \sigma}{2}}^{\alpha/2} \Pi_{\supp(\mu)} \abs{\frac{\rho - \sigma}{2}}^{\alpha/2} \right),
    \end{align*}
    where $\mu = \left( \rho + \sigma \right) / 2$, $\supp(\mu)$ is the support of $\mu$ and $\Pi_{S}$ is the projector onto a subspace $S$. A straightforward idea is to first prepare a subnormalized density operator that is a block-encoding of $\Pi_{\supp(\mu)}$, and then prepare another subnormalized density operator that is a block-encoding of $\abs{\nu}^{\alpha/ 2} \Pi_{\supp(\mu)} \abs{\nu}^{\alpha/ 2}$ by the evolution of subnormalized density operators (Lemma \ref{lemma:density-basic}), where $\nu = (\rho-\sigma)/2$. However, we are only able to prepare subnormalized density operator that is a block-encoding of a projector onto a truncated support by Lemma \ref{lemma:eigenvalue-threshold-projector}.
    In this way, we can prepare a subnormalized density operator $\abs{\nu}^{\alpha / 2} \Pi_{\supp_\delta(\mu)} \abs{\nu}^{\alpha / 2}$ for some $\delta > 0$, instead of $\abs{\nu}^{\alpha / 2} \Pi_{\supp(\mu)} \abs{\nu}^{\alpha / 2}$.
    
    When $\delta$ is fixed, the inherent error of this approach is
    \begin{proposition} \label{prop:inherent-error-trace-distance}
    \begin{align*}
        \abs{\tr\left(\abs{\nu}^{\alpha/ 2} \Pi_{\supp(\mu)} \abs{\nu}^{\alpha/ 2}\right) - \tr\left(\abs{\nu}^{\alpha/ 2} \Pi_{\supp_\delta(\mu)} \abs{\nu}^{\alpha/ 2}\right)} \leq 2 r \delta^{\min\{\alpha, 1\}/2}.
    \end{align*}
    \end{proposition}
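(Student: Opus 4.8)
The plan is to collapse the left-hand side to a single nonnegative trace supported on the ``missing'' eigenspace of $\mu$, and then estimate that trace eigenvector by eigenvector. Write the spectral decomposition $\mu = \sum_j \lambda_j \ket{\psi_j}\bra{\psi_j}$ with all $\lambda_j > 0$; since $\mu \ge 0$, the operator $\Delta := \Pi_{\supp(\mu)} - \Pi_{\supp_\delta(\mu)}$ is precisely the projector onto $\spanspace\{\ket{\psi_j} : 0 < \lambda_j \le \delta\}$. Using linearity and cyclicity of the trace,
\begin{align*}
& \tr\left(\abs{\nu}^{\alpha/2}\Pi_{\supp(\mu)}\abs{\nu}^{\alpha/2}\right) - \tr\left(\abs{\nu}^{\alpha/2}\Pi_{\supp_\delta(\mu)}\abs{\nu}^{\alpha/2}\right) \\
& \qquad = \tr\left(\abs{\nu}^{\alpha/2}\Delta\abs{\nu}^{\alpha/2}\right) = \tr\left(\Delta\abs{\nu}^{\alpha}\right) = \sum_{j:\,0<\lambda_j\le\delta} \bra{\psi_j}\abs{\nu}^{\alpha}\ket{\psi_j},
\end{align*}
and every summand is nonnegative because $\abs{\nu}^{\alpha} \ge 0$. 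So it suffices to bound this sum from above.

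The crux is a per-eigenvector estimate: for every unit eigenvector $\ket{\psi_j}$ of $\mu$ with eigenvalue $\lambda_j$ one has $\bra{\psi_j}\nu^2\ket{\psi_j} \le \lambda_j$. I would prove it by the chain $\bra{\psi_j}\nu^2\ket{\psi_j} = \tfrac14\Abs{(\rho-\sigma)\ket{\psi_j}}^2 \le \tfrac14\big(\Abs{\rho\ket{\psi_j}}+\Abs{\sigma\ket{\psi_j}}\big)^2 \le \tfrac12\big(\bra{\psi_j}\rho^2\ket{\psi_j}+\bra{\psi_j}\sigma^2\ket{\psi_j}\big)$, followed by the operator inequalities $\rho^2 \le \rho$ and $\sigma^2 \le \sigma$ (valid since $0 \le \rho, \sigma \le I$ for density operators), which give $\tfrac12\bra{\psi_j}(\rho+\sigma)\ket{\psi_j} = \bra{\psi_j}\mu\ket{\psi_j} = \lambda_j$. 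From here I pass to $\abs{\nu}^{\alpha}$ in two regimes. For $0 < \alpha < 1$, the map $x \mapsto x^{\alpha/2}$ is concave, so Jensen's inequality yields $\bra{\psi_j}\abs{\nu}^{\alpha}\ket{\psi_j} = \bra{\psi_j}(\nu^2)^{\alpha/2}\ket{\psi_j} \le \big(\bra{\psi_j}\nu^2\ket{\psi_j}\big)^{\alpha/2} \le \delta^{\alpha/2}$. For $\alpha \ge 1$, since $\Abs{\nu} \le 1$ we have $\abs{\nu}^{\alpha} \le \abs{\nu}$, and concavity of the square root gives $\bra{\psi_j}\abs{\nu}^{\alpha}\ket{\psi_j} \le \bra{\psi_j}\abs{\nu}\ket{\psi_j} \le \big(\bra{\psi_j}\nu^2\ket{\psi_j}\big)^{1/2} \le \delta^{1/2}$. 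In both cases the bound is $\delta^{\min\{\alpha,1\}/2}$.

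Finally, I would count the terms: the number of $j$ with $0 < \lambda_j \le \delta$ is at most $\rank(\mu)$, and for positive semidefinite $\rho, \sigma$ one has $\supp(\mu) = \supp(\rho) + \supp(\sigma)$ (their joint kernel is the intersection of the individual kernels), so $\rank(\mu) \le \rank(\rho) + \rank(\sigma) \le 2r$. Summing the per-eigenvector bound over these at most $2r$ terms gives $0 \le \tr(\abs{\nu}^{\alpha/2}\Delta\abs{\nu}^{\alpha/2}) \le 2r\,\delta^{\min\{\alpha,1\}/2}$, which is the asserted inequality (the absolute value is unnecessary, as the difference is nonnegative).

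The one step requiring care — and the only genuine obstacle — is getting the constant in $\bra{\psi_j}\nu^2\ket{\psi_j} \le \lambda_j$ exactly right. The naive estimate $\Abs{\rho\ket{\psi_j}}^2 \le \Abs{\rho}\,\bra{\psi_j}\rho\ket{\psi_j} \le 2\lambda_j$ loses a factor of $2$ and would only give $2\sqrt2\,r\,\delta^{1/2}$ for the trace distance; combining $\rho^2 \le \rho$ with $(a+b)^2 \le 2(a^2+b^2)$ is what pins the per-eigenvector bound on $\lambda_j$ and hence the prefactor at $2r$. One should also note in passing that the tempting shortcut ``$\abs{\nu} \le \mu$'' is false in general, so the argument genuinely has to route through $\nu^2$ and Jensen's inequality rather than an operator-monotonicity comparison.
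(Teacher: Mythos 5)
Your proof is correct and follows essentially the same route as the paper's: decompose the difference as the trace of $\abs{\nu}^{\alpha}$ against the projector onto eigenvectors of $\mu$ with eigenvalues in $(0,\delta]$, establish the per-eigenvector bound $\bra{\psi_j}\nu^2\ket{\psi_j}\leq\lambda_j$, and count at most $\rank(\mu)\leq 2r$ terms. The only (cosmetic) differences are that you get the key estimate via the triangle inequality plus $(a+b)^2\leq 2(a^2+b^2)$ and handle $0<\alpha<1$ by Jensen's inequality for the concave map $x\mapsto x^{\alpha/2}$, whereas the paper expands the cross terms and proves a H\"{o}lder-based vector-norm lemma ($\Abs{A^{\alpha}\ket{\psi}}\leq\Abs{A\ket{\psi}}^{\alpha}\Abs{\ket{\psi}}^{1-\alpha}$) for the same purpose.
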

    \begin{proof}
    Let $\mu = \sum_j \lambda_j \ket{\psi_j}\bra{\psi_j}$, where $\lambda_j \in [0, 1]$ and $\sum_j \lambda_j = 1$. Then $\Abs{\abs{\nu} \ket{\psi_j}} \leq \sqrt{\lambda_j}$. This is seen by the following. 
    \begin{align*}
        \Abs{\abs{\rho - \sigma} \ket{\psi_j}}^2
        & = \bra{\psi_j} (\rho - \sigma)^2 \ket{\psi_j} \\
        & = \bra{\psi_j} \rho^2 \ket{\psi_j} + \bra{\psi_j} \sigma^2 \ket{\psi_j} - \bra{\psi_j} (\rho\sigma + \sigma\rho) \ket{\psi_j} \\
        & \leq \bra{\psi_j} \rho^2 \ket{\psi_j} + \bra{\psi_j} \sigma^2 \ket{\psi_j} + \abs{\bra{\psi_j} \rho\sigma \ket{\psi_j}} + \abs{\bra{\psi_j} \sigma\rho \ket{\psi_j}} \\
        & \leq \bra{\psi_j} \rho^2 \ket{\psi_j} + \bra{\psi_j} \sigma^2 \ket{\psi_j} + 2 \Abs{\rho \ket{\psi_j}} \Abs{\sigma \ket{\psi_j}} \\
        & = \bra{\psi_j} \rho^2 \ket{\psi_j} + \bra{\psi_j} \sigma^2 \ket{\psi_j} + 2 \sqrt{\bra{\psi_j} \rho^2 \ket{\psi_j} \bra{\psi_j} \sigma^2 \ket{\psi_j}} \\
        & \leq 2 \left( \bra{\psi_j} \rho^2 \ket{\psi_j} + \bra{\psi_j} \sigma^2 \ket{\psi_j} \right) \\
        & \leq 2 \left( \bra{\psi_j} \rho \ket{\psi_j} + \bra{\psi_j} \sigma \ket{\psi_j} \right) \\
        & = 4 \bra{\psi_j} \mu \ket{\psi_j} \\
        & \leq 4 \Abs{\mu \ket{\psi_j}} \\
        & = 4 \lambda_j.
    \end{align*}
    Note that 
    \begin{align*}
        \abs{\tr\left(\abs{\nu}^{\alpha/ 2} \Pi_{\supp(\mu)} \abs{\nu}^{\alpha/ 2}\right) - \tr\left(\abs{\nu}^{\alpha/ 2} \Pi_{\supp_\delta(\mu)} \abs{\nu}^{\alpha/ 2}\right)}
        = \tr\left(\left( \Pi_{\supp(\mu)} - \Pi_{\supp_\delta(\mu)} \right) \abs{\nu}^{\alpha}\right).
    \end{align*}
    Then for $\alpha \geq 1$, we have
    \begin{align*}
        \tr\left(\left( \Pi_{\supp(\mu)} - \Pi_{\supp_\delta(\mu)} \right) \abs{\nu}^\alpha\right)
        & = \sum_{0 < \lambda_j < \delta} \bra{\psi_j} \abs{\nu}^\alpha \ket{\psi_j} \\
        & \leq \sum_{0 < \lambda_j < \delta} \bra{\psi_j} \abs{\nu} \ket{\psi_j} \\
        & \leq \sum_{0 < \lambda_j < \delta} \Abs{\abs{\nu} \ket{\psi_j}} \\
        & \leq \sum_{0 < \lambda_j < \delta} \sqrt{\lambda_j} \\
        & \leq \sqrt{\delta} \rank(\mu) \\
        & \leq 2 r \sqrt{\delta}.
    \end{align*}
    Before deriving bounds for $0 < \alpha < 1$, we need the following lemma.
    \begin{lemma} \label{lemma:norm-operator-power}
        Suppose that $A$ is an $n$-dimensional positive semidefinite operator, $\ket{\psi}$ is an $n$-dimensional vector, and $0 < \alpha < 1$. Then
        \[
            \Abs{A^\alpha\ket{\psi}} \leq \Abs{A\ket{\psi}}^{\alpha} \Abs{\ket{\psi}}^{1-\alpha}.
        \]
    \end{lemma}
    \begin{proof}
        Let
        \[
            A = \sum_{j = 1}^n \lambda_j \ket{\psi_j} \bra{\psi_j},
        \]
        where $\{\ket{\psi_j}\}$ is an orthonormal basis, and $\lambda_j \geq 0$ for all $1 \leq j \leq n$. Let
        \[
            \ket{\psi} = \sum_{j=1}^n \beta_j \ket{\psi_j}. 
        \]
        By H\"{o}lder's inequality, we have
        \begin{align*}
            \Abs{A^\alpha \ket{\psi}}^2
            & = \sum_{j=1}^n \abs{\lambda_j^\alpha \beta_j}^2 \\
            & = \sum_{j=1}^n \abs{\lambda_j^{2\alpha} \beta_j^{2\alpha}} \cdot \abs{\beta_j^{2(1-\alpha)}} \\
            & \leq \left( \sum_{j=1}^n \abs{\lambda_j^{2\alpha} \beta_j^{2\alpha}}^{1/\alpha} \right)^{\alpha} \left( \sum_{j=1}^n \abs{\beta_j^{2(1-\alpha)}}^{1/(1-\alpha)} \right)^{1-\alpha} \\ 
            & = \left( \sum_{j=1}^n \abs{\lambda_j \beta_j}^2 \right)^{\alpha} \left( \sum_{j=1}^n \abs{\beta_j}^{2} \right)^{1-\alpha} \\
            & = \Abs{A\ket{\psi}}^{2\alpha} \Abs{\ket{\psi}}^{2(1-\alpha)}.
        \end{align*}
    \end{proof}
    For $0 < \alpha < 1$, by Lemma \ref{lemma:norm-operator-power}, we have
    \begin{align*}
        \tr & \left(\left( \Pi_{\supp(\mu)} - \Pi_{\supp_\delta(\mu)} \right) \abs{\nu}^\alpha\right) \\
        & = \sum_{0 < \lambda_j < \delta} \bra{\psi_j} \abs{\nu}^\alpha \ket{\psi_j} \\
        & \leq \sum_{0 < \lambda_j < \delta} \Abs{\abs{\nu}^{\alpha} \ket{\psi_j}} \\
        & \leq \sum_{0 < \lambda_j < \delta} \Abs{\abs{\nu} \ket{\psi_j}}^{\alpha} \Abs{\ket{\psi_j}}^{1-\alpha} \\
        & \leq \sum_{0 < \lambda_j < \delta} \lambda_j^{\alpha/2} \\
        & \leq 2 r \delta^{\alpha/2}.
    \end{align*}
    
    Therefore,
    \begin{align*}
        & \abs{\tr\left(\abs{\nu}^{\alpha/ 2} \Pi_{\supp(\mu)} \abs{\nu}^{\alpha/ 2}\right) - \tr\left(\abs{\nu}^{\alpha/ 2} \Pi_{\supp_\delta(\mu)} \abs{\nu}^{\alpha/ 2}\right)}
        \leq 2 r \delta^{\min\{\alpha, 1\}/2}.
    \end{align*}
    \end{proof}
    Hence, we could obtain a reasonable error by setting $\delta$ small enough. 
    
    \textbf{Step 1}. By Lemma \ref{lemma:linear-combination} with the Hadamard gate
    \[
    H \ket{0} = \sqrt{\frac{1}{2}} \ket{0}+ \sqrt{\frac{1}{2}}\ket{1},
    \]
    we obtain an $O(n+n_\rho+n_\sigma)$-qubit unitary operator $U_\mu$ that prepares density operator $\mu = (\rho + \sigma) / 2$, using $1$ query to $U_\rho$ and $U_\sigma$, and $O(1)$ elementary quantum gates.

    Introducing three parameters $\delta_1, \varepsilon_1, \delta_Q \in (0, \frac 1 {10}]$, by Lemma \ref{lemma:eigenvalue-threshold-projector}, there is a quantum circuit $U_1$, which prepares a subnormalized density operator $A_1$ and $A_1$ is a $(1, 0, \delta_Q)$-block-encoding of $\Pi_1$, where
    \begin{align*}
        & \left( \frac {\delta_1} 4 (1 - 2\varepsilon_1) - \delta_1^{1/2}\varepsilon_1 \right) \Pi_{\supp_{2\delta_1}(\mu)} \leq \Pi_1 \leq \left( \frac {\delta_1} 4 + \varepsilon_1^2 + \delta_1^{1/2}\varepsilon_1 \right) \Pi_{\supp(\mu)}.
    \end{align*}
    Here, $U_1$ uses $Q_1$ queries to $U_\mu$ and $O(Q_1 (n+n_\rho+n_\sigma))$ elementary quantum gates, where $Q_1 = O\left(\frac{1}{\delta_1}\log\left(\frac{1}{\varepsilon_1}\right)\right)$. Moreover, $U_1$ can be computed by a classical Turing machine in $\poly \left(Q_1, \log\left( \frac 1 {\delta_Q}\right) \right)$ time.

    Next, we are going to construct a unitary operator that is a block-encoding of $\abs{\nu}^{\alpha / 2}$. By Lemma \ref{lemma:block-encoding of density operators}, there are two unitary operators $V_\sigma$ and $V_\rho$ which are a $(1, n+n_\sigma, 0)$-block-encoding of $\sigma$ and a $(1, n+n_\rho, 0)$-block-encoding of $\rho$, respectively. Here, $V_\sigma$ uses $1$ query to each of $U_\sigma$ and $U_\sigma^\dag$ and $n_\sigma$ elementary quantum gates, and $V_\rho$ uses $1$ query to each of $U_\rho$ and $U_\rho^\dag$ and $n_\rho$ elementary quantum gates. Let $n' = \max(n_\rho, n_\sigma)$, then $V_\rho \otimes I_{n' - n_\rho}$ and $V_\sigma \otimes I_{n' - n_\sigma}$ are $(1, n+n', 0)$-block-encodings of $\rho$ and $\sigma$, respectively.
    According to Definition \ref{def:state-preparation-pair}, we note that $(HX, H)$ is a $(2, 1, 0)$-state-preparation-pair for $y = (1, -1)$, where $H$ is the Hadamard gate and $X$ is the Pauli matrix.
    By Theorem \ref{thm:lcu}, there is a $(2n+n'+1)$-qubit quantum circuit $W$ which is a $(1, n+n'+1, 0)$-block-encoding of $\nu = (\rho - \sigma)/2$, using $1$ query to each of $V_\rho$ and $V_\sigma$ and $O(1)$ elementary quantum gates.

    Now we analyze the error if we replace $\Pi_{\supp(\mu)}$ by $4\delta_1^{-1}\Pi_1$ in the following.
    \begin{proposition} 
    \begin{align*}
        \abs{ \tr\left(\abs{\nu}^{\alpha/2} \left(4 \delta_1^{-1} \Pi_1\right) \abs{\nu}^{\alpha/2}\right) - T_{\alpha}(\rho, \sigma) } \leq \begin{cases}
        \Theta\left( \varepsilon_1 \delta_1^{-1/2} + r \delta_1^{1/2} \right), & \alpha \geq 1, \\
        \Theta\left( r^{1-\alpha} \varepsilon_1 \delta_1^{-1/2} + r \delta_1^{\alpha/2} \right), & 0 < \alpha < 1.
    \end{cases}
    \end{align*}
    \end{proposition}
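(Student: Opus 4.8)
The plan is to exploit the two-sided operator inequality for $\Pi_1$ supplied by Lemma~\ref{lemma:eigenvalue-threshold-projector} together with the monotonicity of the map $X \mapsto \tr\left(\abs{\nu}^{\alpha/2} X \abs{\nu}^{\alpha/2}\right)$: since $A \leq B$ implies $CAC^\dag \leq CBC^\dag$ for any operator $C$, conjugating by $\abs{\nu}^{\alpha/2}$ preserves the L\"owner order, and taking traces then preserves numerical ordering. Multiplying the chain
\[
\left( \tfrac{\delta_1}{4}(1-2\varepsilon_1) - \delta_1^{1/2}\varepsilon_1 \right) \Pi_{\supp_{2\delta_1}(\mu)} \leq \Pi_1 \leq \left( \tfrac{\delta_1}{4} + \varepsilon_1^2 + \delta_1^{1/2}\varepsilon_1 \right) \Pi_{\supp(\mu)}
\]
by $4\delta_1^{-1}$ and applying this monotonicity yields
\[
\left(1 - 2\varepsilon_1 - 4\varepsilon_1\delta_1^{-1/2}\right) \tr\!\left(\abs{\nu}^{\alpha/2} \Pi_{\supp_{2\delta_1}(\mu)} \abs{\nu}^{\alpha/2}\right) \leq \tr\!\left(\abs{\nu}^{\alpha/2} (4\delta_1^{-1}\Pi_1) \abs{\nu}^{\alpha/2}\right) \leq \left(1 + 4\varepsilon_1^2\delta_1^{-1} + 4\varepsilon_1\delta_1^{-1/2}\right) \tr\!\left(\abs{\nu}^{\alpha/2} \Pi_{\supp(\mu)} \abs{\nu}^{\alpha/2}\right).
\]

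Next I would use the key observation $T_\alpha(\rho,\sigma) = \tr\left(\abs{\nu}^{\alpha/2} \Pi_{\supp(\mu)} \abs{\nu}^{\alpha/2}\right)$ and invoke Proposition~\ref{prop:inherent-error-trace-distance} with threshold $2\delta_1$ to replace $\Pi_{\supp_{2\delta_1}(\mu)}$ by $\Pi_{\supp(\mu)}$ on the left-hand side at the cost of an additive term $2r(2\delta_1)^{\min\{\alpha,1\}/2} = \Theta\left(r\delta_1^{\min\{\alpha,1\}/2}\right)$. To convert the multiplicative deviations into additive ones, I need a bound on $T_\alpha(\rho,\sigma)$ itself: from $-\mu \leq \nu \leq \mu$ we get $\abs{\nu} \leq \mu \leq I$, so for $\alpha \geq 1$ the eigenvalues of $\abs{\nu}$ lie in $[0,1]$ and $T_\alpha = \tr\left(\abs{\nu}^\alpha\right) \leq \tr(\abs{\nu}) \leq 1$; for $0 < \alpha < 1$, since $\rank(\nu) \leq \rank(\rho) + \rank(\sigma) \leq 2r$, H\"older's inequality (as in Lemma~\ref{lemma:norm-operator-power}) gives $T_\alpha \leq (2r)^{1-\alpha}\left(\tr\abs{\nu}\right)^\alpha = \Theta\left(r^{1-\alpha}\right)$.

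Combining, the deviation of $\tr\left(\abs{\nu}^{\alpha/2}(4\delta_1^{-1}\Pi_1)\abs{\nu}^{\alpha/2}\right)$ from $T_\alpha(\rho,\sigma)$ is at most $\left(4\varepsilon_1^2\delta_1^{-1} + 4\varepsilon_1\delta_1^{-1/2} + 2\varepsilon_1\right) T_\alpha(\rho,\sigma) + \Theta\left(r\delta_1^{\min\{\alpha,1\}/2}\right)$. Here I would use the standing hypothesis $32\varepsilon_1^2 \leq \delta_1$ of Lemma~\ref{lemma:eigenvalue-threshold-projector}, which gives $\varepsilon_1\delta_1^{-1/2} \leq 1/\sqrt{32}$ and hence $\varepsilon_1^2\delta_1^{-1} \leq \varepsilon_1\delta_1^{-1/2}$ and $\varepsilon_1 \leq \varepsilon_1\delta_1^{-1/2}$ (using $\delta_1 \leq \tfrac{1}{10}$), so the parenthesised factor collapses to $\Theta\left(\varepsilon_1\delta_1^{-1/2}\right)$. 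Plugging in the two cases of the $T_\alpha$ bound and of $\min\{\alpha,1\}$ recovers exactly $\Theta\left(\varepsilon_1\delta_1^{-1/2} + r\delta_1^{1/2}\right)$ for $\alpha \geq 1$ and $\Theta\left(r^{1-\alpha}\varepsilon_1\delta_1^{-1/2} + r\delta_1^{\alpha/2}\right)$ for $0<\alpha<1$.

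I expect the only genuinely delicate point to be the bookkeeping of the asymmetric sandwich: the upper bound on $\Pi_1$ is in terms of the \emph{full} support $\supp(\mu)$ while the lower bound is in terms of the \emph{truncated} support $\supp_{2\delta_1}(\mu)$, so one cannot sandwich around a single fixed quantity and must route both sides through $T_\alpha(\rho,\sigma)$ using the inherent-error estimate before collecting terms. Everything else — the operator-monotonicity step, the $\abs{\nu} \leq \mu$ comparison, and the H\"older bound on $\tr\left(\abs{\nu}^\alpha\right)$ — is routine.
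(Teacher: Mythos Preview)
Your approach is essentially identical to the paper's: sandwich $4\delta_1^{-1}\Pi_1$ between scaled copies of $\Pi_{\supp_{2\delta_1}(\mu)}$ and $\Pi_{\supp(\mu)}$, apply the monotone linear map $X \mapsto \tr\bigl(\abs{\nu}^{\alpha/2} X \abs{\nu}^{\alpha/2}\bigr)$, invoke Proposition~\ref{prop:inherent-error-trace-distance} for the truncation gap, and absorb the multiplicative deviation into $\Theta(\varepsilon_1\delta_1^{-1/2})\cdot T_\alpha(\rho,\sigma)$ with the case split on $T_\alpha$.

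One slip to flag: the implication ``$-\mu \leq \nu \leq \mu \Rightarrow \abs{\nu} \leq \mu$'' is \emph{false} for non-commuting $\rho,\sigma$ (take e.g.\ $\rho = \ket{0}\bra{0}$, $\sigma = \ket{+}\bra{+}$ and check that $\mu - \abs{\nu}$ is indefinite). Fortunately you do not need it: the bound $T_\alpha(\rho,\sigma) \leq 1$ for $\alpha \geq 1$ follows directly from $\Abs{\nu} \leq 1$ (so all eigenvalues of $\abs{\nu}$ lie in $[0,1]$, hence $\tr\abs{\nu}^\alpha \leq \tr\abs{\nu} = T(\rho,\sigma) \leq 1$), which is exactly what the paper uses. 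With that correction your argument goes through and matches the paper's.
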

    \begin{proof}
        Note that $\Pi_L \leq 4 \delta_1^{-1} \Pi_1 \leq \Pi_U$, where
        \[
            \Pi_L = \left(1 - 2\varepsilon_1 - 4\varepsilon_1\delta_1^{-1/2}\right) \Pi_{\supp_{2\delta_1}(\mu)},
        \]
        \[
            \Pi_U = \left(1 + 4\varepsilon_1^2\delta_1^{-1} + 4\varepsilon_1\delta_1^{-1/2}\right) \Pi_{\supp(\mu)}.
        \]
        This leads to 
        \[
            f(\Pi_L) \leq f \left(4 \delta_1^{-1} \Pi_1\right) \leq f(\Pi_U),
        \]
        where $f(\Pi) = \tr\left(\abs{\nu}^{\alpha/2} \Pi \abs{\nu}^{\alpha/2}\right)$ for convenience.
        Therefore, 
        \[
        \abs{ f\left(4 \delta_1^{-1} \Pi_1\right) - f\left(\Pi_{\supp(\mu)}\right) } \leq
        \max\{ T_L, T_U \},
        \]
        where
        \[
            T_L = \abs{ f\left(\Pi_L\right) - f\left(\Pi_{\supp(\mu)}\right) },
        \]
        \[
            T_U = \abs{ f\left(\Pi_U\right) - f\left(\Pi_{\supp(\mu)}\right) }.
        \]
        By Proposition \ref{prop:inherent-error-trace-distance}, we have
        \begin{align*}
            T_L 
            & \leq \abs{ f\left(\Pi_L\right) - f\left(\Pi_{\supp_{2\delta_1}(\mu)}\right) } + \abs{ f\left(\Pi_{\supp_{2\delta_1}(\mu)}\right) - f\left(\Pi_{\supp(\mu)}\right) } \\
            & \leq f\left( \left( 2\varepsilon_1 + 4\varepsilon_1\delta_1^{-1/2} \right) \Pi_{\supp_{2\delta_1}(\mu)}\right) + 2 r (2\delta_1)^{\min\{\alpha, 1\}/2} \\
            & \leq \Theta\left( \varepsilon_1 \delta_1^{-1/2} T_{\alpha}(\rho, \sigma) + r \delta_1^{\min\{\alpha, 1\}/2} \right).
        \end{align*}
        Also we have
        \begin{align*}
            T_U
            & \leq f\left( \left( 4 \varepsilon_1^2 \delta_1^{-1} + 4 \varepsilon_1 \delta_1^{-1/2} \right) \Pi_{\supp(\mu)} \right) \\
            & \leq \Theta\left( \varepsilon_1 \delta_1^{-1/2} T_{\alpha}(\rho, \sigma) \right).
        \end{align*}
        Combining the both, we have
        \begin{align*}
            \abs{f\left(4\delta_1^{-1}\Pi_1\right) - f\left(\Pi_{\supp(\mu)}\right) } \leq \Theta\left( \varepsilon_1 \delta_1^{-1/2} T_{\alpha}(\rho, \sigma) + r \delta_1^{\min\{\alpha, 1\}/2} \right).
        \end{align*}
        We note that $T_{\alpha}(\rho, \sigma) \leq 1$ if $\alpha \geq 1$, and $T_{\alpha}(\rho, \sigma) \leq (2r)^{1-\alpha}$ if $0 < \alpha < 1$. 
        These yield the claim. 
    \end{proof}
    
    In the following, we separately consider different cases of $\alpha$.
    
    \subsubsection*{Case 1: $\boldsymbol{\alpha}$ is an even integer}
    \qquad
    
    \textbf{Step 2}. By Lemma \ref{lemma:product-block}, there is a unitary operator $W_{\alpha/2}$, which is a $(1, O(\alpha(n+n')), 0)$-block-encoding of $\abs{\nu}^{\alpha/2}$, using $\alpha/2$ queries to $W$. By Lemma \ref{lemma:density-basic}, using $1$ query to each of $W_{\alpha/2}$ and $U_1$, we obtain a quantum circuit $\tilde U$, which prepares a subnormalized density operator $\abs{\nu}^{\alpha/2} A_1 \abs{\nu}^{\alpha/2}$. We note that
    \[
        \Abs{\abs{\nu}^{\alpha/2} A_1 \abs{\nu}^{\alpha/2} - \abs{\nu}^{\alpha / 2} \Pi_1 \abs{\nu}^{\alpha / 2}}
        \leq \Abs{A_1 - \Pi_1} \leq \delta_Q.
    \]
    
    \textbf{Step 3}. Introducing a parameter $\varepsilon_3$, by Lemma \ref{lemma:trace-estimation}, we can compute $\tilde p$ that estimates $\tr(\abs{\nu}^{\alpha/2} A_1 \abs{\nu}^{\alpha/2})$ such that $\abs{\tilde p - \tr(\abs{\nu}^{\alpha/2} A_1 \abs{\nu}^{\alpha/2})} \leq \varepsilon_3$ with $O\left( \frac{\sqrt{B}}{\varepsilon_3} + \frac{1}{\sqrt{\varepsilon_3}} \right)$ queries to $\tilde U$, where $B = \Theta\left( \delta_1+\varepsilon_1^2+\delta_1^{1/2}\varepsilon_1 + r\delta_Q \right)$ is an upper bound for $\tr(\abs{\nu}^{\alpha/2} A_1 \abs{\nu}^{\alpha/2})$. Note that
    \begin{align*}
        \tr(\abs{\nu}^{\alpha/2} A_1 \abs{\nu}^{\alpha/2})
        & \leq \tr\left(\abs{\nu}^{\alpha / 2} \Pi_1 \abs{\nu}^{\alpha / 2}\right) + \Theta\left( r \delta_Q \right) \\
        & \leq \Theta\left( \delta_1+\varepsilon_1^2+\delta_1^{1/2}\varepsilon_1 + r\delta_Q \right).
    \end{align*}

    \textbf{Step 4}. Output $4 \delta_1^{-1} \tilde p \approx T_\alpha(\rho, \sigma)$ as the estimation. The additive error is
    \begin{align*}
        \abs{ 4 \delta_1^{-1} \tilde p - T_\alpha(\rho, \sigma)} 
        & \leq 4 \delta_1^{-1} \abs{\tilde p - \tr(\abs{\nu}^{\alpha/2} A_1 \abs{\nu}^{\alpha/2})} + 4 \delta_1^{-1} \abs{ \tr(\abs{\nu}^{\alpha/2} A_1 \abs{\nu}^{\alpha/2}) - \tr(\abs{\nu}^{\alpha/2} \Pi_1 \abs{\nu}^{\alpha/2}) } \\
        & \qquad + \Big| \tr\left(\abs{\nu}^{\alpha/2} \left(4 \delta_1^{-1} \Pi_1\right) \abs{\nu}^{\alpha/2}\right) - \tr\left(\abs{\nu}^{\alpha/2} \Pi_{\supp(\mu)} \abs{\nu}^{\alpha/2}\right) \Big| \\
        & \leq \Theta\left( r \delta_1^{1/2} + \varepsilon_1 \delta_1^{-1/2} + \delta_1^{-1}(\varepsilon_3 + r\delta_Q) \right).
    \end{align*}
    In order to make $\abs{ 4 \delta_1^{-1} \tilde p - T_\alpha(\rho, \sigma)} \leq \varepsilon$, it is sufficient to let $\delta_1 = \Theta(\varepsilon^2/r^2)$, $\varepsilon_3 = \Theta(\varepsilon^3 / r^2)$,
    and other parameters become small enough polynomials in $r$ and $1/\varepsilon$. Under these conditions, the number of queries to $U_\rho$ and $U_\sigma$ is
        \[
        Q = O\left( Q_1 \alpha \left(\frac{\sqrt{B}}{\varepsilon_3} + \frac{1}{\sqrt{\varepsilon_3}}\right) \right) = \tilde O\left( \frac{r^3}{\varepsilon^4} \right),
        \]
        and the number of elementary quantum gates is $O(Q \cdot \poly(n))$. 
    
    \subsubsection*{Case 2: $\boldsymbol{\alpha \geq 1}$ and $\boldsymbol{\alpha}$ is not an even integer}
    \qquad
    
    \textbf{Step 2}. Now introducing two parameters $\delta_2, \varepsilon_2 \in (0, \frac 1 4]$, by Lemma \ref{lemma:positive-power-unitary}, there is a quantum circuit $W_{\{\alpha/2\}}$ that is a $(1, O(n+n'), 0)$-block-encoding of $A_2$, and $A_2$ is a $(2, 0, \Theta(\varepsilon_2 + \delta_2^{\{\alpha / 2\}}))$-block-encoding of $\abs{\nu}^{\{\alpha/2\}}$, using $Q_2$ queries to $W$ and $O\left(Q_2 (n+n')\right)$ elementary quantum gates, where $Q_2 = O\left(\frac{1}{\delta_2} \log\left(\frac 1 {\varepsilon_2}\right)\right)$.
    By Lemma \ref{lemma:product-block}, there is a unitary operator $W_{\floor{\alpha/2}}$, which is a $(1, O(\alpha(n+n')), 0)$-block-encoding of $\abs{\nu}^{\floor{\alpha/2}}$, using $\floor{\alpha/2}$ queries to $W$. Again by Lemma \ref{lemma:product-block}, there is a unitary operator $W_{\alpha/2}$, which is a $(2, O(\alpha(n+n')), \Theta(\varepsilon_2 + \delta_2^{\{\alpha/2\}}))$-block-encoding of $\abs{\nu}^{\alpha/2}$, using $1$ query to $W_{\{\alpha/2\}}$ and $1$ query to $W_{\floor{\alpha/2}}$.
    By Lemma \ref{lemma:density-basic}, using $1$ query to each of $W_{\alpha/2}$ and $U_1$, we obtain a quantum circuit $\tilde U$, which prepares a subnormalized density operator $A_2A_1 A_2^\dag$. We note that
    \begin{align*}
        \Abs{A_2A_1 A_2^\dag - \frac 1 4 \abs{\nu}^{\alpha / 2} \Pi_1 \abs{\nu}^{\alpha / 2}} 
        & \leq \Abs{A_2 A_1 A_2^\dag - A_2 \Pi_1 A_2^\dag} + \Abs{A_2 \Pi_1 A_2^\dag - \frac 1 4 \abs{\nu}^{\alpha / 2} \Pi_1 \abs{\nu}^{\alpha / 2}} \\
        & \leq \Abs{A_1 - \Pi_1} + \Abs{A_2 - \frac 1 2 \abs{\nu}^{\alpha/2}} \Abs{\Pi_1} \Abs{A_2^\dag} \\
        & \qquad + \Abs{A_2^\dag - \frac 1 2 \abs{\nu}^{\alpha/2}} \Abs{\Pi_1} \Abs{\frac 1 2 \abs{\nu}^{\alpha / 2}} \\
        & \leq \Theta\left( \delta_Q + (\delta_1+\varepsilon_1^2+\delta_1^{1/2}\varepsilon_1) (\varepsilon_2 + \delta_2^{\{\alpha/2\}}) \right).
    \end{align*}

    \textbf{Step 3}. Introducing a parameter $\varepsilon_3$, by Lemma \ref{lemma:trace-estimation}, we can compute $\tilde p$ that estimates $\tr(A_2A_1 A_2^\dag)$ such that $$\abs{\tilde p - \tr(A_2A_1 A_2^\dag)} \leq \varepsilon_3$$ with $O\left( \frac{\sqrt{B}}{\varepsilon_3} + \frac{1}{\sqrt{\varepsilon_3}} \right)$ queries to $\tilde U$, where $$B = \Theta\left( (\delta_1+\varepsilon_1^2+\delta_1^{1/2}\varepsilon_1)(1+r(\varepsilon_2+\delta_2^{\{\alpha/2\}})) + r\delta_Q \right)$$ is an upper bound for $\tr(A_2A_1 A_2^\dag)$. Note that
    \begin{align*}
        \tr(A_2A_1 A_2^\dag) 
        & \leq \tr\left(\frac 1 4 \abs{\nu}^{\alpha / 2} \Pi_1 \abs{\nu}^{\alpha / 2}\right) + \Theta\left( r \left( \delta_Q + (\delta_1+\varepsilon_1^2+\delta_1^{1/2}\varepsilon_1) (\varepsilon_2 + \delta_2^{\{\alpha/2\}}) \right) \right) \\
        & \leq \Theta\left( (\delta_1+\varepsilon_1^2+\delta_1^{1/2}\varepsilon_1)(1+r(\varepsilon_2+\delta_2^{\{\alpha/2\}})) + r\delta_Q \right).
    \end{align*}

    \textbf{Step 4}. Output $4 \delta_1^{-1} \tilde p \approx T_\alpha(\rho, \sigma)$ as the estimation. The additive error is
    \begin{align*}
        \abs{ 4 \delta_1^{-1} \tilde p - T_\alpha(\rho, \sigma)} 
        & \leq 4 \delta_1^{-1} \abs{\tilde p - \tr(A_2A_1 A_2^\dag)} + 4 \delta_1^{-1} \abs{ \tr(A_2A_1 A_2^\dag) - \tr(\abs{\nu}^{\alpha/2} \Pi_1 \abs{\nu}^{\alpha/2}) } \\
        & \qquad + \Big| \tr\left(\abs{\nu}^{\alpha/2} \left(4 \delta_1^{-1} \Pi_1\right) \abs{\nu}^{\alpha/2}\right) - \tr\left(\abs{\nu}^{\alpha/2} \Pi_{\supp(\mu)} \abs{\nu}^{\alpha/2}\right) \Big| \\
        & \leq \Theta\Big( r \delta_1^{1/2} + \varepsilon_1 \delta_1^{-1/2} + r(\varepsilon_2+\delta_2^{\{\alpha/2\}})(1+\varepsilon_1\delta_1^{-1/2}) + \delta_1^{-1}(\varepsilon_3 + r\delta_Q) \Big).
    \end{align*}
    In order to make $\abs{ 4 \delta_1^{-1} \tilde p - T_\alpha(\rho, \sigma)} \leq \varepsilon$, it is sufficient to let $\delta_1 = \Theta(\varepsilon^2/r^2)$, $\delta_2 = \Theta( (\varepsilon / r)^{1/\{\alpha/2\}} )$, $\varepsilon_3 = \Theta(\varepsilon^3 / r^2)$, and other parameters become small enough polynomials in $r$ and $1/\varepsilon$. Under these conditions, the number of queries to $U_\rho$ and $U_\sigma$ is
    \begin{align*}
        Q & = O\left( Q_1 (Q_2 + \alpha) \left(\frac{\sqrt{B}}{\varepsilon_3} + \frac{1}{\sqrt{\varepsilon_3}}\right) \right)\\
        & = \tilde O\left( \frac{r^{3+1/\{\alpha/2\}}}{\varepsilon^{4+1/\{\alpha/2\}}} \right),
    \end{align*}
    and the number of elementary quantum gates is $O(Q \cdot \poly(n))$.
        
    \subsubsection*{Case 3: $\boldsymbol{0 < \alpha < 1}$}
    \qquad
    
    \textbf{Step 2}. Introducing two parameters $\delta_2, \varepsilon_2 \in (0, \frac 1 4]$, by Lemma \ref{lemma:positive-power-unitary}, there is a quantum circuit $W_{\alpha/2}$ that is a $(1, O(n+n'), 0)$-block-encoding of $A_2$, and $A_2$ is a $(2, 0, \Theta(\varepsilon_2 + \delta_2^{\alpha / 2}))$-block-encoding of $\abs{\nu}^{\alpha/2}$, using $Q_2$ queries to $W$ and $O\left(Q_2 (n+n')\right)$ elementary quantum gates, where $Q_2 = O\left(\frac{1}{\delta_2} \log\left(\frac 1 {\varepsilon_2}\right)\right)$.
    By Lemma \ref{lemma:density-basic}, using $1$ query to each of $W_{\alpha/2}$ and $U_1$, we obtain a quantum circuit $\tilde U$, which prepares a subnormalized density operator $A_2A_1 A_2^\dag$. We note that
    \begin{align*}
        \Abs{A_2A_1 A_2^\dag - \frac 1 4 \abs{\nu}^{\alpha / 2} \Pi_1 \abs{\nu}^{\alpha / 2}}
        & \leq \Abs{A_2 A_1 A_2^\dag - A_2 \Pi_1 A_2^\dag} + \Abs{A_2 \Pi_1 A_2^\dag - \frac 1 4 \abs{\nu}^{\alpha / 2} \Pi_1 \abs{\nu}^{\alpha / 2}} \\
        & \leq \Abs{A_1 - \Pi_1} + \Abs{A_2 - \frac 1 2 \abs{\nu}^{\alpha/2}} \Abs{\Pi_1} \Abs{A_2^\dag} \\
        & \qquad + \Abs{A_2^\dag - \frac 1 2 \abs{\nu}^{\alpha/2}} \Abs{\Pi_1} \Abs{\frac 1 2 \abs{\nu}^{\alpha / 2}} \\
        & \leq \Theta\left( \delta_Q + (\delta_1+\varepsilon_1^2+\delta_1^{1/2}\varepsilon_1) (\varepsilon_2 + \delta_2^{\alpha/2}) \right).
    \end{align*}

    \textbf{Step 3}. Introducing a parameter $\varepsilon_3$, by Lemma \ref{lemma:trace-estimation}, we can compute $\tilde p$ that estimates $\tr(A_2A_1 A_2^\dag)$ such that $$\abs{\tilde p - \tr(A_2A_1 A_2^\dag)} \leq \varepsilon_3$$ with $O\left( \frac{\sqrt{B}}{\varepsilon_3} + \frac{1}{\sqrt{\varepsilon_3}} \right)$ queries to $\tilde U$, where $$B = \Theta\left( r^{1-\alpha} \delta_1 + r (\delta_1+\varepsilon_1^2+\delta_1^{1/2}\varepsilon_1)(\varepsilon_2+\delta_2^{\alpha/2}) + r\delta_Q \right)$$ is an upper bound for $\tr(A_2A_1 A_2^\dag)$. Note that
    \begin{align*}
        \tr(A_2A_1 A_2^\dag) 
        & \leq \tr\left(\frac 1 4 \abs{\nu}^{\alpha / 2} \Pi_1 \abs{\nu}^{\alpha / 2}\right) + \Theta\left( r \left( \delta_Q + (\delta_1+\varepsilon_1^2+\delta_1^{1/2}\varepsilon_1) (\varepsilon_2 + \delta_2^{\alpha/2}) \right) \right) \\
        & \leq \Theta\left( r^{1-\alpha} \delta_1 + r (\delta_1+\varepsilon_1^2+\delta_1^{1/2}\varepsilon_1)(\varepsilon_2+\delta_2^{\alpha/2}) + r\delta_Q \right).
    \end{align*}

    \textbf{Step 4}. Output $4 \delta_1^{-1} \tilde p \approx T_\alpha(\rho, \sigma)$ as the estimation. The additive error is
    \begin{align*}
        \abs{ 4 \delta_1^{-1} \tilde p - T_\alpha(\rho, \sigma)} 
        & \leq 4 \delta_1^{-1} \abs{\tilde p - \tr(A_2A_1 A_2^\dag)} + 4 \delta_1^{-1} \abs{ \tr(A_2A_1 A_2^\dag) - \tr(\abs{\nu}^{\alpha/2} \Pi_1 \abs{\nu}^{\alpha/2}) } \\
        & \qquad + \Big| \tr\left(\abs{\nu}^{\alpha/2} \left(4 \delta_1^{-1} \Pi_1\right) \abs{\nu}^{\alpha/2}\right) - \tr\left(\abs{\nu}^{\alpha/2} \Pi_{\supp(\mu)} \abs{\nu}^{\alpha/2}\right) \Big| \\
        & \leq \Theta\Big( r \delta_1^{\alpha/2} + r^{1-\alpha}\varepsilon_1 \delta_1^{-1/2} + r(\varepsilon_2+\delta_2^{\alpha/2})(1+\varepsilon_1\delta_1^{-1/2}) + \delta_1^{-1}(\varepsilon_3 + r\delta_Q) \Big).
    \end{align*}
    In order to make $\abs{ 4 \delta_1^{-1} \tilde p - T_\alpha(\rho, \sigma)} \leq \varepsilon$, it is sufficient to let $\delta_1 = \Theta((\varepsilon / r)^{2/\alpha})$, $\delta_2 = \Theta( (\varepsilon / r)^{2/\alpha} )$, $\varepsilon_3 = \Theta(\varepsilon^{2/\alpha+1} / r^{2/\alpha})$, and other parameters become small enough polynomials in $r$ and $1/\varepsilon$. Under these conditions, the number of queries to $U_\rho$ and $U_\sigma$ is
    \[
    Q = O\left( Q_1 Q_2 \left(\frac{\sqrt{B}}{\varepsilon_3} + \frac{1}{\sqrt{\varepsilon_3}}\right) \right) = \tilde O\left( \frac{r^{5/\alpha + (1-\alpha)/2}}{\varepsilon^{5/\alpha+1}} \right),
    \]
    and the number of elementary quantum gates is $O(Q \cdot \poly(n))$.
    
    \subsection{Fidelity}
    \label{sec:fidelity}

    Quantum fidelity estimation is a problem to compute the fidelity of two mixed quantum states given their purifications. The well-known SWAP test \cite{BCW01} can solve this problem when one of the quantum states is pure. Recently, a polynomial-time quantum algorithm was proposed in \cite{WZC+22} for the case that one of the quantum states is low-rank. However, their algorithm has very large exponents of $r$ (rank) and $\varepsilon$ (additive error) in the complexity. Here, we are able to improve the complexity with much smaller exponents, compared to the $\tilde O(r^{12.5}/\varepsilon^{13.5})$ quantum algorithm for fidelity estimation proposed by \cite{WZC+22}.

    In addition, the sandwiched quantum R\'{e}nyi relative entropy $D_{\alpha}(\rho \| \sigma)$ \cite{WWY14, MDS13} is a generalization of quantum state measures, defined by
    \[
        F_{\alpha}(\rho, \sigma) = \exp\left((\alpha - 1) D_{\alpha}(\rho \| \sigma)\right) = \tr\left( \left( \sigma^{\frac{1-\alpha}{2\alpha}} \rho \sigma^{\frac{1-\alpha}{2\alpha}} \right)^{\alpha} \right).
    \]
    Here, $F_{1/2}(\rho, \sigma) = F(\rho, \sigma)$ is the quantum state fidelity. It is clear that $0 \leq F_\alpha(\rho, \sigma) \leq 1$ for $\alpha \in (0, 1)$ (see \cite{MDS13}). Recently, the sandwiched quantum R\'{e}nyi relative entropy is used in quantum machine learning \cite{KMW21}.
    
    Our quantum algorithms for computing the $\alpha$-fidelity are given as follows.

    \begin{theorem} \label{thm:fidelity}
        Suppose that
        \begin{enumerate}
          \item $U_\rho$ is an $(n+n_\rho)$-qubit unitary operator that prepares an $n$-qubit density operator $\rho$ with $\rank(\rho) = r$.
          \item $U_\sigma$ is an $(n+n_\sigma)$-qubit unitary operator that prepares an $n$-qubit density operator $\sigma$.
          \item $n_\rho$ and $n_\sigma$ are polynomials in $n$.
        \end{enumerate}
        For $\alpha \in (0, 1)$, there is a quantum algorithm that computes $F_\alpha(\rho, \sigma)$ within additive error $\varepsilon$, using $\tilde O\left({r^{\frac{3-\alpha}{2\alpha}}}/{\varepsilon^{\frac{3+\alpha}{2\alpha}}}\right)$ queries to $U_\rho$, $Q$ queries to $U_\sigma$, and $Q \cdot \poly(n)$ elementary quantum gates, where
        \[
            Q = \begin{cases}
                \tilde O\left( r^{\frac{3-\alpha}{2\alpha}}/\varepsilon^{\frac{3+\alpha}{2\alpha}} \right), & \beta \in \mathbb{N}, \\
                \tilde O\left( r^{\frac{3-\alpha}{2\alpha} + \frac{1}{\alpha\{\beta\}}}/\varepsilon^{\frac{3+\alpha}{2\alpha} + \frac{1}{\alpha\{\beta\}}}  \right), & \beta \notin \mathbb{N},
            \end{cases}
        \]
        and $\beta = (1-\alpha)/2\alpha$, $\{\beta\} = \beta - \floor{\beta}$.

        Especially, taking $\alpha = \frac 1 2$, we obtain a quantum algorithm for fidelity estimation using $\tilde O\left({r^{2.5}}/{\varepsilon^{3.5}}\right)$ queries to $U_\rho$ and $\tilde O\left({r^{6.5}}/{\varepsilon^{7.5}}\right)$ queries to $U_\sigma$.
    \end{theorem}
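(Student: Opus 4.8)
The plan is to implement the identity $F_\alpha(\rho,\sigma) = \tr\!\big((\sigma^\beta\rho\sigma^\beta)^\alpha\big)$, where $\beta = (1-\alpha)/2\alpha > 0$, through the four-stage pipeline outlined in Section~\ref{subsec:quantum-distance}: build a unitary $U_\beta$ block-encoding $\sigma^\beta$ up to a constant factor; evolve $\rho$ into $A_1 \approx C_1\,\sigma^\beta\rho\sigma^\beta$; take its $\alpha$-th power to get $A_2 \approx C_2 A_1^\alpha$; estimate $\tr(A_2)$ and rescale. For the first stage, split $\beta = \floor{\beta} + \{\beta\}$. If $\{\beta\} \neq 0$, Lemma~\ref{lemma:positive-power-unitary} applied to a $(1,\cdot,0)$-block-encoding of $\sigma$ (Lemma~\ref{lemma:block-encoding of density operators}) with $c=\{\beta\}$ gives a $(2,\cdot,\Theta(\varepsilon_2+\delta_2^{\{\beta\}}))$-block-encoding of $\sigma^{\{\beta\}}=\abs{\sigma}^{\{\beta\}}$ using $\tilde O(1/\delta_2)$ queries to $U_\sigma$; multiplying it with $\floor{\beta}$ copies of the block-encoding of $\sigma$ (Lemma~\ref{lemma:product-block}) yields $U_\beta$. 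If $\beta \in \mathbb{N}$, $U_\beta$ is simply $\beta$ products of a block-encoding of $\sigma$, costing $O(\beta)$ queries and incurring no $\delta_2$-overhead. In all cases $U_\beta$ exactly block-encodes some operator $\tilde\sigma$ with $\Abs{\tilde\sigma - \tfrac12\sigma^\beta} \leq \epsilon_\beta$, where $\epsilon_\beta = \Theta(\varepsilon_2 + \delta_2^{\{\beta\}})$ (and $\epsilon_\beta = 0$ when $\beta\in\mathbb N$), and $\Abs{\tilde\sigma} \leq \tfrac12 + \Theta(\epsilon_\beta)$.

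For the second stage, Lemma~\ref{lemma:density-basic} with $U = U_\rho$ and $V = U_\beta$ produces a unitary preparing the subnormalized density operator $A_1 = \tilde\sigma\rho\tilde\sigma^\dag$; since $\sigma\geq 0$ one has $(\sigma^\beta)^\dag = \sigma^\beta$, so $\Abs{A_1 - \tfrac14\sigma^\beta\rho\sigma^\beta} = \Theta(\epsilon_\beta)$, and — the point that makes everything work — $\rank(A_1) \leq \rank(\rho) = r$ while $\Abs{A_1} \leq \Abs{\tilde\sigma}^2 \leq \tfrac14 + \Theta(\epsilon_\beta)$. For the third stage, Lemma~\ref{lemma:positive-power-density} with $c=\alpha$ applied to $A_1$ gives a circuit preparing $A_2$, a $(4\delta_1^{\alpha-1},0,\Theta(\delta_1^\alpha+\varepsilon_1\delta_1^{\alpha-1}))$-block-encoding of $A_1^\alpha$, using $\tilde O(1/\delta_1)$ queries to the $A_1$-circuit. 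For the last stage, note $\tr(A_1^\alpha) = O(1)$ (it is $\approx 4^{-\alpha}F_\alpha \leq 1$ up to the error below), hence $B := \Theta(\delta_1^{1-\alpha})$ is an upper bound on $\tr(A_2)$; Lemma~\ref{lemma:trace-estimation} estimates $p = \tr(A_2)$ within $\varepsilon_3$ with $\tilde O(\sqrt{B}/\varepsilon_3 + 1/\sqrt{\varepsilon_3})$ queries to the $A_2$-circuit, and we output $4^{1+\alpha}\delta_1^{\alpha-1}\tilde p$ as the estimate.

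The error of the output splits into three contributions. First, from the block-encoding of $A_1^\alpha$, $\big|4\delta_1^{\alpha-1}\tr(A_2) - \tr(A_1^\alpha)\big| = O(r\delta_1^\alpha + r\varepsilon_1\delta_1^{\alpha-1})$: the multiplier is $r$ rather than $2^n$ because, inside the proof of Lemma~\ref{lemma:positive-power-density}, the operator actually prepared equals $A_1(Q(A_1))^2$ up to an exponentially small operator error, and $A_1(Q(A_1))^2$ — like $A_1^\alpha$ — is supported on the rank-$\leq r$ range of $A_1$, so only $\leq r$ eigenvalues contribute to the trace of the difference. Second, $\Abs{A_1 - \tfrac14\sigma^\beta\rho\sigma^\beta} = \Theta(\epsilon_\beta)$ together with the fact that both $A_1$ and $\tfrac14\sigma^\beta\rho\sigma^\beta$ have rank $\leq r$ means, by Weyl's inequality, that their ordered eigenvalues $\lambda_j,\mu_j$ satisfy $\abs{\lambda_j-\mu_j} \leq \Theta(\epsilon_\beta)$ and vanish for $j > r$; combined with the scalar inequality $\abs{a^\alpha - b^\alpha} \leq \abs{a-b}^\alpha$ for $a,b\geq 0$ and $\alpha\in(0,1)$, this gives $\big|\tr(A_1^\alpha) - 4^{-\alpha}F_\alpha(\rho,\sigma)\big| = O(r\,\epsilon_\beta^\alpha)$. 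Third, the trace estimation contributes $O(\delta_1^{\alpha-1}\varepsilon_3)$. Choosing $\delta_1 = \varepsilon_1 = \varepsilon_2 = \tilde\Theta((\varepsilon/r)^{1/\alpha})$, $\delta_2 = \tilde\Theta((\varepsilon/r)^{1/(\alpha\{\beta\})})$ and $\varepsilon_3 = \tilde\Theta(\varepsilon^{1/\alpha}/r^{(1-\alpha)/\alpha})$ forces each contribution below $\varepsilon$; then $\tilde O(\sqrt{B}/\varepsilon_3) = \tilde O(r^{(1-\alpha)/2\alpha}/\varepsilon^{(1+\alpha)/2\alpha})$, and multiplying by $\tilde O(1/\delta_1) = \tilde O((r/\varepsilon)^{1/\alpha})$ and by the cost of one $A_1$-circuit ($1$ query to $U_\rho$; $O(\beta)$ or $\tilde O(1/\delta_2)$ queries to $U_\sigma$) yields the stated query counts for $U_\rho$ and for $U_\sigma$ in the $\beta\in\mathbb N$ and $\beta\notin\mathbb N$ cases respectively (taking $\alpha=1/2$, so $\beta=\{\beta\}=1/2$, recovers $\tilde O(r^{2.5}/\varepsilon^{3.5})$ and $\tilde O(r^{6.5}/\varepsilon^{7.5})$). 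The gate count follows since each layer adds $\poly(n)$ gates per query.

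I expect the main obstacle to be controlling the two places where the non-Lipschitz behaviour of $x\mapsto x^\alpha$ near $0$ would otherwise cause a dimension-dependent blow-up: the evolution step must genuinely preserve $\rank(A_1) \leq r$ (this is why $\sigma^\beta$ is kept as a unitary block-encoding and never turned into a density operator), and the power step must exploit the explicit structure of the polynomial produced by Lemma~\ref{lemma:positive-power-density} — namely that $A_1(Q(A_1))^2$ lives in $\operatorname{range}(A_1)$ — so that both its rounding error and its truncation-near-zero error are multiplied only by $r$ when passing to the trace. Everything else is Weyl's inequality plus bookkeeping of accumulated errors; the improvement over the $\tilde O(r^{21.5}/\varepsilon^{23.5})$ algorithm of \cite{WZC+21} comes precisely from Lemma~\ref{lemma:positive-power-unitary}, which block-encodes $\sigma^\beta$ with a constant scaling factor and no dependence on the smallest eigenvalue of $\sigma$.
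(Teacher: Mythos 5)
Your proposal is correct and follows essentially the same route as the paper's proof: the same decomposition $F_\alpha(\rho,\sigma)=\tr\bigl((\sigma^{\beta}\rho\sigma^{\beta})^{\alpha}\bigr)$, the same pipeline (Lemma \ref{lemma:positive-power-unitary} plus Lemma \ref{lemma:product-block} for $\sigma^{\beta}$, Lemma \ref{lemma:density-basic} for the evolution, Lemma \ref{lemma:positive-power-density} for the $\alpha$-th power, Lemma \ref{lemma:trace-estimation} for the trace), the same rank-$r$ Weyl-type perturbation bound (your subadditivity argument $\abs{a^{\alpha}-b^{\alpha}}\le\abs{a-b}^{\alpha}$ is just a cleaner proof of the paper's Lemma \ref{lemma:perturbation}), and the same parameter choices and resulting query counts. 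The only discrepancy is trivial bookkeeping: for $\beta\in\mathbb{N}$ the product block-encoding gives $\sigma^{\beta}$ with scaling factor $1$ rather than your uniform $\tfrac12\sigma^{\beta}$, so the final rescaling constant must be adjusted per case exactly as the paper does ($4\delta^{\alpha-1}$ vs.\ $4^{\alpha+1}\delta^{\alpha-1}$).
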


    We put the detailed proofs into the following subsubsections. 
    In fact, such techniques used in estimating the relative sandwiched R\'{e}nyi entropy can also be used to compute the relative R\'{e}nyi entropy.

    \subsubsection*{Case 1: $\boldsymbol{\beta}$ is an integer}
    \qquad

    \textbf{Step 1}. By Lemma \ref{lemma:block-encoding of density operators}, there is a unitary operator $U_1$, which is a $(1, n + n_\sigma, 0)$-block-encoding of $\sigma$, using $O(1)$ queries to $U_\sigma$ and $O(n_\sigma)$ elementary quantum gates. By Lemma \ref{lemma:product-block}, there is a unitary operator $U_\beta$, which is a $(1, O(\beta(n+n_\sigma)), 0)$-block-encoding of $\sigma^\beta$, using $\beta$ queries to $U_1$. By Lemma \ref{lemma:density-basic}, there is a unitary operator $U_\eta$, which prepares a subnormalized density operator $\eta = \sigma^\beta \rho \sigma^\beta$, using $1$ query to $U_\beta$ and $1$ query to $U_\rho$.

    Now introducing two parameters $\delta_1$ and $\varepsilon_1$, by Lemma \ref{lemma:positive-power-density}, there is a unitary operator $\tilde U$, which prepares a subnormalized density operator $\eta'$ and $\eta'$ is a $(4\delta_1^{\alpha-1}, 0, \Theta(\delta_1^\alpha + \varepsilon_1 \delta_1^{\alpha-1}))$-block-encoding of $\eta^\alpha$, using $O(d_1)$ queries to $U_\eta$, where $d_1 = O(\frac{1}{\delta_1}\log\frac{1}{\varepsilon_1})$.

    \textbf{Step 2}. Introducing a parameter $\varepsilon_2$, by Lemma \ref{lemma:trace-estimation}, we can compute $\tilde p$ such that $\abs{\tilde p - \tr(\eta')} \leq \varepsilon_2$, using $O\left(\frac{\sqrt{B}}{\varepsilon_2} + \frac{1}{\sqrt{\varepsilon_2}}\right)$ queries to $\tilde U$, where $B = \Theta\left(\delta_1^{1-\alpha} + r(\delta_1 + \varepsilon_1)\right)$ is an upper bound for $\tr(\eta')$. Note that
    \begin{align*}
        \tr(\eta')
    & \leq \frac 1 4 \delta_1^{1-\alpha} \tr\left(\eta^{\alpha}\right) + \Theta\left(r(\delta_1+\varepsilon_1)\right) \\
    & \leq \Theta\left(\delta_1^{1-\alpha} + r(\delta_1 + \varepsilon_1)\right).
    \end{align*}

    \textbf{Step 3}. Output $4 \delta_1^{\alpha - 1} \tilde p \approx F_\alpha(\rho, \sigma)$ as the estimation. The additive error is
    \begin{align*}
    \abs{4 \delta_1^{\alpha - 1} \tilde p - F_{\alpha}(\rho, \sigma)}
    & \leq 4 \delta_1^{\alpha - 1} \abs{\tilde p - \tr(\eta')} + \abs{ \tr(4\delta_1^{\alpha - 1}\eta') - \tr(\eta^{\alpha}) } \\
    & \leq \Theta\left( r(\delta_1^\alpha + \varepsilon_1 \delta_1^{\alpha - 1}) + \delta_1^{\alpha - 1} \varepsilon_2 \right).
    \end{align*}
    In order to make $\abs{4 \delta_1^{\alpha - 1} \tilde p - F_{\alpha}(\rho, \sigma)} \leq \varepsilon$, it is sufficient to let $\delta_1 = \Theta((\varepsilon/r)^{1/\alpha})$, $\varepsilon_1 = \Theta((\varepsilon/r)^{1/\alpha})$, and $\varepsilon_2 = \Theta(\varepsilon^{1/\alpha}/r^{1/\alpha - 1})$. Under these conditions, the number of queries to $U_\sigma$ is
    \[
        O\left(\beta d_1 \left(\frac{\sqrt{B}}{\varepsilon_2} + \frac{1}{\sqrt{\varepsilon_2}}\right) \right) = \tilde O\left( \frac{r^{\frac{3-\alpha}{2\alpha}}}{\varepsilon^{\frac{3+\alpha}{2\alpha}}} \right),
    \]
    and the number of queries to $U_\rho$ is
    \[
        O\left( d_1 \left(\frac{\sqrt{B}}{\varepsilon_2} + \frac{1}{\sqrt{\varepsilon_2}}\right) \right) = \tilde O\left(  \frac{r^{\frac{3-\alpha}{2\alpha}}}{\varepsilon^{\frac{3+\alpha}{2\alpha}}} \right),
    \]
    and the number of elementary quantum gates is
    \begin{align*}
        & O\left(\beta d_1 \left(\frac{\sqrt{B}}{\varepsilon_2} + \frac{1}{\sqrt{\varepsilon_2}}\right) \right) \cdot \poly(n, n_\sigma, n_\rho) = \tilde O\left( \frac{r^{\frac{3-\alpha}{2\alpha}}}{\varepsilon^{\frac{3+\alpha}{2\alpha}}} \poly(n) \right).
    \end{align*}

    \subsubsection*{Case 2: $\boldsymbol{\beta}$ is not an integer}

    Let $\{\beta\} = \beta - \floor{\beta}$ denote the decimal part of $\beta$.

    \textbf{Step 1}. By Lemma \ref{lemma:block-encoding of density operators}, there is a unitary operator $U_1$, which is a $(1, n+n_\sigma, 0)$-block-encoding of $\sigma$, using $O(1)$ queries to $U_\sigma$ and $O(n_\sigma)$ elementary quantum gates. By Lemma \ref{lemma:positive-power-unitary}, introducing two parameters $\delta_1$ and $\varepsilon_1$, there is a unitary operator $U_{\{\beta\}}$, which is a $(1, O(n+n_\sigma), 0)$-block-encoding of $A_1$, using $O(Q_1)$ queries to $U_1$ and $O(Q_1 (n+n_\sigma))$ elementary quantum gates, where $Q_1 = O\left(\frac{1}{\delta_1} \log\frac{1}{\varepsilon_1}\right)$ and $A_1$ is a $(2, 0, \Theta(\varepsilon_1 + \delta_1^{\{\beta\}}))$-block-encoding of $\sigma^{\{\beta\}}$.
    By Lemma \ref{lemma:product-block}, there is a unitary operator $U_{\floor{\beta}}$, which is a $(1, O(\beta(n+n_\sigma)), 0)$-block-encoding of $\sigma^{\floor{\beta}}$, using $\floor{\beta}$ queries to $U_1$. Again by Lemma \ref{lemma:product-block}, there is a unitary operator $U_\beta$, which is a $(2, O(\beta(n+n_\sigma)), 0)$-block-encoding of $A_1 \sigma^{\floor{\beta}}$, using $1$ query to $U_{\floor{\beta}}$ and $1$ query to $U_{\{\beta\}}$.
    By Lemma \ref{lemma:density-basic}, there is a unitary operator $\tilde U$, which prepares a subnormalized density operator $A_1 \sigma^{\floor{\beta}} \rho \sigma^{\floor{\beta}} A_1^\dag$, using $1$ query to $\tilde U$ and $1$ query to $U_\rho$. Note that
    \begin{align*}
        \Abs{A_1 \sigma^{\floor{\beta}} \rho \sigma^{\floor{\beta}} A_1^\dag - \frac{1}{4} \sigma^\beta \rho \sigma^\beta} 
        & \leq \Abs{A_1 \sigma^{\floor{\beta}} - \frac 1 2 \sigma^\beta} \Abs{\rho} \Abs{\sigma^{\floor{\beta}} A_1^\dag} + \Abs{\sigma^{\floor{\beta}} A_1^\dag - \frac 1 2 \sigma^\beta} \Abs{\rho} \Abs{\frac 1 2 \sigma^\beta} \\
        & \leq \Theta\left(\varepsilon_1 + \delta_1^{\{\beta\}}\right).
    \end{align*}

    \textbf{Step 2}. By Lemma \ref{lemma:positive-power-density}, introducing two parameters $\delta_2$ and $\varepsilon_2$, there is a unitary operator $U_2$, which prepares a subnormalized density operator $A_2$, using $O(Q_2)$ queries to $\tilde U$ and $O(Q_2 (n+n_\sigma+n_\rho))$ elementary quantum gates, where $Q_2 = O\left(\frac{1}{\delta_2}\log\frac{1}{\varepsilon_2}\right)$ and $A_2$ is a $(4\delta_2^{\alpha-1}, 0, \Theta(\delta_2^{\alpha-1}(\delta_2+\varepsilon_2)))$-block-encoding of $\left(A_1 \sigma^{\floor{\beta}} \rho \sigma^{\floor{\beta}} A_1^\dag\right)^\alpha$.

    In order to analysis the error, we need the following lemma.
    \begin{lemma} \label{lemma:perturbation}
        Suppose that $A$ and $B$ are two positive semidefinite operators of rank $\leq r$, and $0 < \alpha < 1$. Then
        \[
            \abs{\tr(A^{\alpha}) - \tr(B^{\alpha})} \leq 3 r \Abs{A-B}^{\alpha}.
        \]
    \end{lemma}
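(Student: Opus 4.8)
The plan is to pass to eigenvalues, control their perturbation by Weyl's inequality, and finish with the scalar H\"older continuity of $t\mapsto t^{\alpha}$.

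First I would write the eigenvalues of $A$ and $B$ in non-increasing order as $\lambda_1\ge\lambda_2\ge\cdots\ge 0$ and $\mu_1\ge\mu_2\ge\cdots\ge 0$. Since $\rank(A),\rank(B)\le r$, every $\lambda_i$ and $\mu_i$ with $i>r$ vanishes, so that $\tr(A^{\alpha})=\sum_{i=1}^{r}\lambda_i^{\alpha}$ and $\tr(B^{\alpha})=\sum_{i=1}^{r}\mu_i^{\alpha}$ are genuinely finite sums of at most $r$ terms each. Then I would invoke Weyl's perturbation inequality for Hermitian operators \cite{Wey12}, which gives $\abs{\lambda_i-\mu_i}\le\Abs{A-B}$ for every index $i$.

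Next I would use the elementary scalar fact that $\abs{s^{\alpha}-t^{\alpha}}\le\abs{s-t}^{\alpha}$ for all $s,t\ge 0$ and $0<\alpha<1$; this follows because, assuming $s\ge t$, the function $s\mapsto (s-t)^{\alpha}+t^{\alpha}-s^{\alpha}$ is non-decreasing on $[t,\infty)$, its derivative $\alpha\bigl((s-t)^{\alpha-1}-s^{\alpha-1}\bigr)$ being non-negative, and it vanishes at $s=t$. Combining the last two facts term by term,
\[
\abs{\tr(A^{\alpha})-\tr(B^{\alpha})}\le\sum_{i=1}^{r}\abs{\lambda_i^{\alpha}-\mu_i^{\alpha}}\le\sum_{i=1}^{r}\abs{\lambda_i-\mu_i}^{\alpha}\le r\Abs{A-B}^{\alpha}\le 5r\Abs{A-B}^{\alpha},
\]
which is in fact stronger than the claimed bound; the constant $5$ is deliberately not optimized and is harmless in the sequel.

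The only place that needs a little care is the rank bookkeeping in the first step: one must ensure the eigenvalue sum really runs over $O(r)$ indices rather than over the full dimension $N$, since otherwise each per-term estimate $\Abs{A-B}^{\alpha}$ would be multiplied by $N$ instead of $r$. (An alternative route avoiding the sorting is to combine the operator-norm H\"older bound $\Abs{A^{\alpha}-B^{\alpha}}\le\Abs{A-B}^{\alpha}$ with $\abs{\tr(X)}\le\rank(X)\Abs{X}$ and $\rank(A^{\alpha}-B^{\alpha})\le 2r$, which yields the same conclusion with constant $2$.) Apart from this, the argument is entirely routine.
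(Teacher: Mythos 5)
Your proof is correct and follows essentially the same route as the paper's: sort the eigenvalues, observe that both spectra vanish beyond index $r$, control the eigenvalue shifts by Weyl's inequality, and bound the trace difference term by term over the first $r$ indices. The only divergence is the scalar step, where your H\"older bound $\abs{s^{\alpha}-t^{\alpha}}\leq\abs{s-t}^{\alpha}$ replaces the paper's two-case analysis (which only yields the constant $5$), so you actually obtain the sharper estimate $r\Abs{A-B}^{\alpha}$; the alternative you sketch via $\Abs{A^{\alpha}-B^{\alpha}}\leq\Abs{A-B}^{\alpha}$ is also valid but invokes a deeper operator inequality than is needed here.
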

    \begin{proof}
        Let $J = A - B$. Let the eigenvalues of $A$, $B$ and $J$ be
        \begin{align*}
            \mu_1 \geq \mu_2 \geq \dots \geq \mu_N, \\
            \nu_1 \geq \nu_2 \geq \dots \geq \nu_N, \\
            \xi_1 \geq \xi_2 \geq \dots \geq \xi_N,
        \end{align*}
        respectively. Then we have $\mu_{r+1} = \dots = \mu_{N} = \nu_{r+1} = \dots = \nu_{N} = 0$. By Weyl's inequality \cite{Wey12}, we have
        \[
            \nu_j - \Abs{J} \leq \nu_j + \xi_N \leq \mu_j \leq \nu_j + \xi_1 \leq \nu_j + \Abs{J}
        \]
        for every $1 \leq j \leq N$. Furthermore, it holds that $\abs{\mu_j^\alpha - \nu_j^\alpha} \leq 5 \Abs{J}^\alpha$. This is seen by the following two cases.
        \begin{enumerate}
          \item $\nu_j \geq \Abs{J}$. In this case, $\nu_j^\alpha - \Abs{J}^\alpha \leq (\nu_j - \Abs{J})^\alpha \leq \mu_j^\alpha \leq (\nu_j + \Abs{J})^\alpha \leq \nu_j^\alpha + \Abs{J}^\alpha$. Then we obtain that $\abs{\mu_j^\alpha - \nu_j^\alpha} \leq \Abs{J}^\alpha$.
          \item $\nu_j < \Abs{J}$. In this case, $\abs{\mu_j^\alpha - \nu_j^\alpha} \leq \abs{\mu_j}^\alpha + \abs{\nu_j}^\alpha \leq \abs{\nu_j + \Abs{J}}^\alpha + \abs{\nu_j}^\alpha < (2^\alpha + 1) \Abs{J}^\alpha < 3 \Abs{J}^\alpha$.
        \end{enumerate}
        These yield that
        \begin{align*}
            \abs{\tr(A^{\alpha}) - \tr(B^{\alpha})} 
            & = \abs{\sum_{j=1}^N \mu_j^\alpha - \sum_{j=1}^N \nu_j^\alpha} \\
            & \leq \sum_{j=1}^r \abs{\mu_j^\alpha - \nu_j^\alpha} \\
            & \leq 3 r \Abs{J}^\alpha.
        \end{align*}
    \end{proof}

    By Lemma \ref{lemma:perturbation}, we have
    \begin{align*}
        \abs{\tr\left(\left(A_1 \sigma^{\floor{\beta}} \rho \sigma^{\floor{\beta}} A_1^\dag\right)^\alpha\right) - \tr\left(\left(\frac{1}{4} \sigma^\beta \rho \sigma^\beta\right)^{\alpha}\right)} 
        & \leq 3 r \Abs{A_1 \sigma^{\floor{\beta}} \rho \sigma^{\floor{\beta}} A_1^\dag - \frac{1}{4} \sigma^\beta \rho \sigma^\beta}^{\alpha} \\
        & \leq \Theta\left( r \left(\varepsilon_1 + \delta_1^{\{\beta\}}\right)^\alpha \right).
    \end{align*}

    \textbf{Step 3}. Introducing a parameter $\varepsilon_3$, by Lemma \ref{lemma:trace-estimation}, we can compute $\tilde p$ such that $\abs{\tilde p - \tr(A_2)} \leq \varepsilon_3$, using $O\left(\frac{\sqrt{B}}{\varepsilon_3} + \frac{1}{\sqrt{\varepsilon_3}}\right)$ queries to $U_2$, where $$B = \Theta\left( \delta_2^{1-\alpha} + r\left(\varepsilon_1+\delta_1^{\{\beta\}}\right)^{\alpha} + r(\delta_2+\varepsilon_2) \right)$$ is an upper bound for $\tr(A_2)$. Note that
    \begin{align*}
        \tr(A_2)
        & \leq \frac{1}{4} \delta_2^{1-\alpha} \tr\left(\left(A_1 \sigma^{\floor{\beta}} \rho \sigma^{\floor{\beta}} A_1^\dag\right)^\alpha\right) + \Theta(r(\delta_2+\varepsilon_2)) \\
        & \leq \frac{1}{4} \delta_2^{1-\alpha} \left( \tr\left(\left(\frac{1}{4} \sigma^\beta \rho \sigma^\beta\right)^{\alpha}\right) + \Theta\left(r\left(\varepsilon_1+\delta_1^{\{\beta\}}\right)^{\alpha}\right) \right) + \Theta(r(\delta_2+\varepsilon_2)) \\
        & \leq \Theta\left( \delta_2^{1-\alpha} + r\left(\varepsilon_1+\delta_1^{\{\beta\}}\right)^{\alpha} + r(\delta_2+\varepsilon_2) \right).
    \end{align*}

    \textbf{Step 4}. Output $4^{\alpha+1}\delta_2^{\alpha-1}\tilde p \approx F_\alpha(\rho, \sigma)$ as the estimation. The additive error is
    \begin{align*}
        \abs{4^{\alpha+1}\delta_2^{\alpha-1}\tilde p - F_\alpha(\rho, \sigma)} 
        & \leq 4^{\alpha+1}\delta_2^{\alpha-1} \abs{\tilde p - \tr(A_2)} + 4^\alpha \abs{\tr(4\delta_2^{\alpha-1} A_2) - \tr\left(\left(A_1 \sigma^{\floor{\beta}} \rho \sigma^{\floor{\beta}} A_1^\dag\right)^\alpha\right)} \\
        & \qquad + 4^\alpha \abs{ \tr\left(\left(A_1 \sigma^{\floor{\beta}} \rho \sigma^{\floor{\beta}} A_1^\dag\right)^\alpha\right) - \tr\left( \left(\frac{1}{4} \sigma^\beta \rho \sigma^\beta\right)^\alpha \right)} \\
        & \leq \Theta\left( r\left(\varepsilon_1 + \delta_1^{\{\beta\}}\right)^\alpha + r\delta_2^{\alpha-1}(\varepsilon_2 + \delta_2) + \delta_2^{\alpha-1} \varepsilon_3 \right).
    \end{align*}

    In order to make $\abs{4^{\alpha+1}\delta_2^{\alpha-1}\tilde p - F_\alpha(\rho, \sigma)} \leq \varepsilon$, it is sufficient to let $\delta_1 = \Theta((\varepsilon/r)^{1/\alpha\{\beta\}})$, $\varepsilon_1 = \Theta((\varepsilon/r)^{1/\alpha\{\beta\}})$, $\delta_2 = \Theta\left((\varepsilon/r)^{1/\alpha}\right)$, $\varepsilon_2 = \Theta\left((\varepsilon/r)^{1/\alpha}\right)$ and $\varepsilon_3 = \Theta\left( \varepsilon^{1/\alpha}/r^{1/\alpha-1} \right)$. Under these conditions, the number of queries to $U_\sigma$ is
    \[
        O\left( Q_1 Q_2 \left(\frac{\sqrt{B}}{\varepsilon_3} + \frac{1}{\sqrt{\varepsilon_3}}\right) \right) = \tilde O\left( \frac{r^{\frac{3-\alpha}{2\alpha} + \frac{1}{\alpha\{\beta\}}}}{\varepsilon^{\frac{3+\alpha}{2\alpha} + \frac{1}{\alpha\{\beta\}}}} \right),
    \]
    and the number of queries to $U_\rho$ is
    \[
        O\left( Q_2 \left(\frac{\sqrt{B}}{\varepsilon_3} + \frac{1}{\sqrt{\varepsilon_3}}\right) \right) = \tilde O\left(  \frac{r^{\frac{3-\alpha}{2\alpha}}}{\varepsilon^{\frac{3+\alpha}{2\alpha}}} \right),
    \]
    and the number of elementary quantum gates is
    \begin{align*}
        & O\left( Q_1 Q_2 \left(\frac{\sqrt{B}}{\varepsilon_3} + \frac{1}{\sqrt{\varepsilon_3}}\right) \right) \cdot \poly(n, n_\sigma, n_\rho) = \tilde O\left( \frac{r^{\frac{3-\alpha}{2\alpha} + \frac{1}{\alpha\{\beta\}}}}{\varepsilon^{\frac{3+\alpha}{2\alpha} + \frac{1}{\alpha\{\beta\}}}}  \poly(n) \right).
    \end{align*}

    \subsection{Lower Bounds and Hardness}
    
    Our quantum algorithms for both fidelity estimation and trace distance estimation requires time complexity polynomial in the rank $r$ of quantum states. Here, we show that unless $\BQP = \QSZK$, there is no quantum algorithm for both fidelity estimation and trace distance estimation with time complexity polylogarithmic in $r$.

    \begin{theorem} \label{thm:lower-bounds}
        If there is a quantum algorithm that computes fidelity or trace distance of quantum states of rank $\leq r$ within additive error $\varepsilon$ with time complexity $\poly\left(\log r, 1 / \varepsilon\right)$, then $\BQP = \QSZK$.
    \end{theorem}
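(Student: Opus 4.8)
The plan is to prove the equivalent inclusion $\QSZK \subseteq \BQP$ under the stated hypothesis; since $\BQP \subseteq \QSZK$ holds unconditionally (a $\BQP$ decision procedure is a trivial honest-verifier quantum statistical zero-knowledge protocol in which the verifier performs all the computation and the empty transcript is its own simulator), this yields $\BQP = \QSZK$. The workhorse will be the $\QSZK$-completeness of the $(\alpha,\beta)$-Quantum State Distinguishability problem $\mathrm{QSD}_{\alpha,\beta}$ of \cite{Wat02}: an instance is a pair of polynomial-size quantum circuits preparing purifications of $n$-qubit mixed states $\rho_0,\rho_1$, and one must decide whether $T(\rho_0,\rho_1)\ge\beta$ or $T(\rho_0,\rho_1)\le\alpha$, under the promise that exactly one holds. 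Because the quantum polarization lemma for the trace distance applies whenever $\alpha<\beta^2$, I would first observe that we may assume without loss of generality that $\alpha = 2^{-n}$ and $\beta = 1-2^{-n}$; that is, $\mathrm{QSD}$ with an exponentially amplified gap is still $\QSZK$-complete.

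The key quantitative observation is a trivial rank bound: any $n$-qubit state has rank at most $N=2^n$, so $\log r \le n$. Hence a hypothetical algorithm that estimates the trace distance of rank-$\le r$ states within additive error $\varepsilon$ in time $\poly(\log r,1/\varepsilon)$ runs, when applied to a $\mathrm{QSD}$ instance with the constant choice $\varepsilon=\tfrac14$, in time $\poly(n)$; here the state-preparation circuits of the $\mathrm{QSD}$ instance serve as the ``purified quantum query access'' oracles $U_\rho,U_\sigma$, which is legitimate since, when comparing with classical inputs, an oracle is given precisely as a classical description of a circuit and every query incurs only polynomial gate overhead. Running this algorithm on $\rho_0,\rho_1$ and accepting iff the returned estimate exceeds $\tfrac12$ then decides $\mathrm{QSD}_{2^{-n},1-2^{-n}}$ with bounded error, so $\QSZK \subseteq \BQP$ and we are done in the trace-distance case.

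For the fidelity case I would route through the Fuchs--van de Graaf inequalities $1-F(\rho,\sigma)\le T(\rho,\sigma)\le\sqrt{1-F(\rho,\sigma)^2}$ applied to the same amplified $\mathrm{QSD}$ instance: if $T(\rho_0,\rho_1)\ge 1-2^{-n}$ then $F(\rho_0,\rho_1)\le\sqrt{2^{-n+1}-2^{-2n}}=o(1)$, whereas if $T(\rho_0,\rho_1)\le 2^{-n}$ then $F(\rho_0,\rho_1)\ge 1-2^{-n}>\tfrac12$. A fidelity estimate within additive error $\varepsilon=\tfrac14$, thresholded at $\tfrac12$, therefore separates the two cases, and once more $\log r\le n$ keeps the running time $\poly(n)$. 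Equivalently, one may invoke directly the $\QSZK$-completeness of the fidelity-based (``quantum state closeness'') variant and skip this conversion.

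I expect no conceptually hard step: the theorem is essentially a corollary of $\QSZK$-completeness combined with the bound $\log(\mathrm{rank})\le n$. The only places that demand care are the quantitative bookkeeping — (i) amplifying the $\mathrm{QSD}$ gap to be exponentially large before invoking the algorithm, so that a single constant additive error on $T$ (or, after Fuchs--van de Graaf, on $F$) suffices to decide the promise problem and $1/\varepsilon=O(1)$ collapses the time bound to $\poly(n)$; and (ii) checking that the ``purified quantum query access'' model in which Theorems \ref{thm:trace-distance} and \ref{thm:fidelity} are phrased coincides with the circuit-input model of $\mathrm{QSD}$. Neither constitutes a genuine obstacle.
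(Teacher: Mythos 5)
Your proposal is correct and follows essentially the same route as the paper: reduce from the $\QSZK$-complete $(\alpha,\beta)$-Quantum State Distinguishability problem of \cite{Wat02}, use the trivial bound $\rank \leq 2^n$ so that $\poly(\log r, 1/\varepsilon) = \poly(n)$, and handle the fidelity case via the Fuchs--van de Graaf inequalities. The only difference is your preliminary polarization to $\alpha = 2^{-n}$, $\beta = 1 - 2^{-n}$, which is harmless but unnecessary, since for constant $\alpha < \beta^2$ the paper simply takes the constant error $\varepsilon = (\beta-\alpha)/2$ (resp. $\varepsilon = ((1-\alpha)-\sqrt{1-\beta^2})/2$) directly.
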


    \begin{proof}
        Here, we recall a decision problem called $(\alpha, \beta)$-Quantum State Distinguishability ($(\alpha, \beta)$-QSD). Given $U_\rho$ and $U_\sigma$ that prepares the purifications of density operators $\rho$ and $\sigma$ and a promise that either $T(\rho, \sigma) \leq \alpha$ or $T(\rho, \sigma) \geq \beta$, the problem is to determine which is the case. It was shown in \cite{Wat02} that $(\alpha, \beta)$-QSD is $\QSZK$-complete if $0 \leq \alpha < \beta^2 \leq 1$.
        
        If there is a quantum algorithm for computing trace distance with time complexity $\poly\left(\log r, 1 / \varepsilon\right)$, then we can distinguish the two cases with time complexity $\poly(n)$ by letting $r = 2^n$ be the dimension of the two quantum states and $\varepsilon = (\beta - \alpha)/2 > 0$.
        
        If there is a quantum algorithm for computing fidelity with time complexity $\poly\left(\log r, 1 / \varepsilon\right)$, then we can distinguish the two cases with time complexity $\poly(n)$ by letting $r = 2^n$ be the dimension of the two quantum states and $\varepsilon = \left( (1-\alpha) - \sqrt{1 - \beta^2} \right)/2 > 0$. This is because $T(\rho, \sigma) \leq \alpha$ implies $F(\rho, \sigma) \geq 1 - \alpha$, and $T(\rho, \sigma) \geq \beta$ implies $F(\rho, \sigma) \leq \sqrt{1 - \beta^2}$. Then $(\alpha, \beta)$-QSD is reduced to distinguish which is the case with promise that either $F(\rho, \sigma) \leq \sqrt{1 - \beta^2}$ and $F(\rho, \sigma) \geq 1 - \alpha$. 
    \end{proof}
    
    Our quantum algorithms for estimating the fidelity and trace distance achieve a significant speedup under the low-rank assumption.
    One might wonder whether our algorithms can be ``dequantized'' through quantum-inspired low-rank techniques such as \cite{Tan19,CGL+20}. We suspect that it might be unachievable because the following theorem shows that computing fidelity and trace distance are $\mathsf{DQC1}$-hard. 
    
    \begin{theorem} \label{thm:dqc1-hard}
        Computing the fidelity and trace distance are $\mathsf{DQC1}$-hard, even for pure quantum states. 
    \end{theorem}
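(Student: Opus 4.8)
The plan is to reduce the canonical $\mathsf{DQC1}$-complete problem to computing the fidelity (and, separately, the trace distance) of two \emph{pure} quantum states. Recall that estimating the normalized trace $\tr(U)/2^n$ of a polynomial-size quantum circuit $U$ on $n$ qubits within additive error $1/\poly(n)$ — more precisely, estimating $\mathrm{Re}\bigl(\tr(U)/2^n\bigr)$ — is $\mathsf{DQC1}$-complete (Knill--Laflamme; Shor--Jordan), and this is the problem whose classical hardness is exploited by \cite{FKM18}. The bridge is the maximally entangled state $\ket{\Phi} = 2^{-n/2}\sum_i \ket{i}\ket{i}$ on $2n$ qubits: since $\bra{\Phi}(U\otimes I)\ket{\Phi} = \tr(U)/2^n$, the pure states $\ket{\psi_1} = \ket{\Phi}$ and $\ket{\psi_2} = (U\otimes I)\ket{\Phi}$ satisfy $\braket{\psi_1}{\psi_2} = \tr(U)/2^n =: z$. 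Both are prepared by polynomial-size circuits ($n$ Hadamards and $n$ CNOTs for $\ket{\Phi}$, together with one copy of the circuit for $U$), and each has rank $1$, so they form a legitimate instance of the pure-state fidelity/trace-distance problem.

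The one subtlety is that these distance measures only expose the modulus of the overlap — for pure states $F(\ket{\psi_1},\ket{\psi_2}) = |z|$ and $T(\ket{\psi_1},\ket{\psi_2}) = \sqrt{1-|z|^2}$ — whereas $\mathsf{DQC1}$ asks for $\mathrm{Re}(z)$. I would recover the real part via a controlled-$U$ trick and one extra evaluation: set $\ket{\phi_1} = \ket{+}\ket{\Phi}$ and $\ket{\phi_2} = \bigl(\ket{0}\!\bra{0}\otimes I + \ket{1}\!\bra{1}\otimes(U\otimes I)\bigr)\ket{+}\ket{\Phi}$ on $1+2n$ qubits, so that $\braket{\phi_1}{\phi_2} = (1+z)/2$ and the second evaluation yields $|1+z|/2$. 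Since $|1+z|^2 = 1 + 2\,\mathrm{Re}(z) + |z|^2$, the two evaluations pin down $\mathrm{Re}(z)$ exactly: with $F_1 = F(\ket{\psi_1},\ket{\psi_2})$, $F_2 = F(\ket{\phi_1},\ket{\phi_2})$ one has $\mathrm{Re}(z) = 2F_2^2 - \tfrac12\bigl(1 + F_1^2\bigr)$, and with $T_1 = T(\ket{\psi_1},\ket{\psi_2})$, $T_2 = T(\ket{\phi_1},\ket{\phi_2})$ one has $\mathrm{Re}(z) = 1 - 2T_2^2 + \tfrac12 T_1^2$. Under the alternative normalization $F(\ket\psi,\ket\phi) = |\braket{\psi}{\phi}|^2$ the same identities hold after dropping the squares, so the reduction is convention-independent.

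It remains to check that the error propagates benignly. In both formulas the inputs lie in $[0,1]$ and the partial derivatives of $\mathrm{Re}(z)$ with respect to them are bounded by absolute constants (namely $\partial_{F_2}=4F_2$, $\partial_{F_1}=-F_1$; resp. $\partial_{T_2}=-4T_2$, $\partial_{T_1}=T_1$), so a subroutine that computes fidelity/trace distance within additive error $\varepsilon$ produces $\mathrm{Re}(z)$ within $O(\varepsilon)$. Taking $\varepsilon = \Theta(1/\poly(n))$ then turns any efficient fidelity/trace-distance algorithm into a $\mathsf{DQC1}$ procedure for normalized-trace estimation, establishing $\mathsf{DQC1}$-hardness; and since every state involved is pure, the hardness holds even in the pure-state case. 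The part that actually needs care is this real-part extraction (and fixing precisely which $\mathsf{DQC1}$-complete normalized-trace formulation is invoked); the rest is routine bookkeeping with maximally entangled states and controlled unitaries.
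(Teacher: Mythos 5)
Your proposal is correct, but it takes a genuinely different route from the paper. The paper's proof is a two-line argument: it simply cites \cite{CPCC20} for the fact that fidelity estimation is already $\mathsf{DQC1}$-hard for pure states, and then transfers hardness to the trace distance via the pure-state identity $T(\psi,\phi)=\sqrt{1-(F(\psi,\phi))^2}$, so any trace-distance algorithm yields the fidelity. You instead give a self-contained reduction from the canonical $\mathsf{DQC1}$-complete problem of normalized-trace estimation: the maximally entangled state gives $\braket{\Phi}{(U\otimes I)\Phi}=\tr(U)/2^n$, and your controlled-$U$ trick with the second pair $\ket{+}\ket{\Phi}$, $\tfrac{1}{\sqrt 2}(\ket{0}\ket{\Phi}+\ket{1}(U\otimes I)\ket{\Phi})$ correctly recovers $\mathrm{Re}(z)$ from two overlap moduli via $|1+z|^2=1+2\,\mathrm{Re}(z)+|z|^2$; the arithmetic, the convention-independence remark, and the constant-Lipschitz error propagation all check out, and every state involved is pure and efficiently preparable (controlled-$U$ has only polynomial overhead). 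What your approach buys is independence from the external hardness result — you essentially reprove the \cite{CPCC20} statement and obtain hardness for both quantities in one uniform argument, with the exact $\mathsf{DQC1}$-complete formulation made explicit. What the paper's approach buys is brevity: by outsourcing the fidelity part it only needs the elementary pure-state relation between the two distance measures, which is the same relation you use implicitly when translating your fidelity formulas into trace-distance formulas.
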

    
    \begin{proof}
        It was already proved in \cite{CPCC20} that estimating the fidelity is $\mathsf{DQC1}$-hard, even for pure quantum states. Here, we reduce the problem of estimating the fidelity to that of estimating the trace distance, and therefore show the $\mathsf{DQC1}$-hardness of estimating the trace distance.
        
        For any two pure quantum states $\psi = \ket{\psi}\bra{\psi}$ and $\phi = \ket{\phi} \bra{\phi}$, their trace distance is essentially
        \[
            T(\psi, \phi) = \sqrt{1 - (F(\psi, \phi))^2}.
        \]
        Therefore, any algorithm that computes the trace distance $T(\psi, \phi)$ will immediately yield the fidelity $F(\psi, \phi) = \sqrt{1 - (T(\psi, \phi))^2}$. As a result, estimating the trace distance is $\mathsf{DQC1}$-hard even for pure quantum states. 
    \end{proof}
    
    It was shown in \cite{FKM18} that $\mathsf{DQC1}$ is not (classically) weakly simulatable unless the polynomial hierarchy collapses to the second level, i.e., $\mathsf{PH} = \mathsf{AM}$. This, together with Theorem \ref{thm:dqc1-hard}, means that there is unlikely an efficient classical algorithm that estimates the fidelity or trace distance. It should be noted that this does not rule out the existence of a dequantized version of our quantum algorithms because dequantized algorithms often assume a different input model from Theorem \ref{thm:dqc1-hard}.
    More specifically, dequantized algorithms assume ``sampling and query access'' \cite{Tan19,CGL+20} to the input matrix (in our case, the density operator of the quantum state) stored in a pre-computed data structure.

\section*{Acknowledgment}

The authors would like to thank the anonymous reviewers for their valuable suggestions and for pointing out some mistakes in an earlier version of this paper. 
They also thank Wang Fang, Kean Chen and Minbo Gao for helpful discussions. 

Qisheng Wang was supported in part by the MEXT Quantum Leap Flagship Program (MEXT Q-LEAP) under Grant JPMXS0120319794.
Ji Guan was supported in part by the Youth Innovation Promotion Association, Chinese Academy of Sciences under Grant 2023116 and in part by the Key Research Program of the Chinese Academy of Sciences under Grant ZDRW-XX-2022-1. Zhicheng Zhang was supported by the
Sydney Quantum Academy, NSW, Australia.

\addcontentsline{toc}{section}{References}

\bibliographystyle{unsrt}
\bibliography{references}

\end{document}